\documentclass[aps,pra,reprint,longbibliography,superscriptaddress,showkeys,10pt]{revtex4-1}
\usepackage{lipsum}
\usepackage{hyperref} % For hyperlinks in the PDF
\hypersetup{colorlinks=true,linkcolor=blue,urlcolor=blue,citecolor=blue}
\usepackage[textsize=scriptsize, backgroundcolor=yellow, textwidth=1.9\columnsep]{todonotes}
\usepackage{mathtools,amsmath,amsfonts,amssymb,amsbsy}  % Math packages
\usepackage{graphicx}
\usepackage{multirow}
\usepackage[noend]{algpseudocode}
\usepackage{newfloat}
\usepackage{setspace}
\usepackage{enumitem}

\usepackage{cleveref}

\usepackage{qtree}

\usepackage{xpatch}
\xpretocmd{\eqref}{Eq.~}{}{}

\DeclareFloatingEnvironment[
	fileext=loa,
	listname=List of Algorithms,
	name=ALG.,
	placement=tbhp,
]{algorithm}

\usepackage{braket}
\newcommand{\ketbra}[1]{\ket{#1}\bra{#1}}

\newcommand{\bbra}[1]{\langle \! \bra{#1}}
\newcommand{\kket}[1]{\ket{#1} \! \rangle}
\newcommand{\bbraket}[1]{\langle \! \braket{#1} \! \rangle}
\newcommand{\kketbra}[1]{\kket{#1}\bbra{#1}}

\usepackage{amsthm}
\usepackage{aliascnt}

\newtheoremstyle{plain}
{10pt}   % ABOVESPACE
{10pt}   % BELOWSPACE
{\itshape}  % BODYFONT
{0pt}       % INDENT (empty value is the same as 0pt)
{\bfseries} % HEADFONT
{.}         % HEADPUNCT
{5pt plus 1pt minus 1pt} % HEADSPACE
{}

\newtheoremstyle{definition}
{10pt}   % ABOVESPACE
{10pt}   % BELOWSPACE
{}  % BODYFONT
{0pt}       % INDENT (empty value is the same as 0pt)
{\bfseries} % HEADFONT
{.}         % HEADPUNCT
{5pt plus 1pt minus 1pt} % HEADSPACE
{}  

\makeatletter
\renewenvironment{proof}[1][\proofname]{\par
	\vspace{-5pt}% remove the space after the theorem
	\pushQED{\qed}%
	\normalfont
	\topsep0pt \partopsep0pt % no space before
	\trivlist
	\item[\hskip\labelsep
	\itshape
	#1\@addpunct{.}]\ignorespaces
}{%
	\popQED\endtrivlist\@endpefalse
	\addvspace{5pt plus 0pt} % some space after
}
\makeatother
\makeatletter
\def\@endtheorem{\endtrivlist}
\makeatother

\theoremstyle{plain}
\newtheorem{thm}{Theorem}

%%%%% Lemma's and props have same couter, diffrent from thm --------
\newtheorem{lemma}{Lemma}

\newaliascnt{proposition}{lemma}
\newtheorem{proposition}[proposition]{Proposition}

\aliascntresetthe{proposition} 
%-------------------------------------------------------------------

\newtheorem*{corollary}{Corollary}

\theoremstyle{definition}

\theoremstyle{definition}

\newenvironment{example}{\pushQED{\qed}\exmp}{\popQED\endexmp}

\usepackage{tabularx}
\newcolumntype{L}[1]{>{\hsize=#1\hsize\raggedright\arraybackslash}X}%
\newcolumntype{R}[1]{>{\hsize=#1\hsize\raggedleft\arraybackslash}X}%
\newcolumntype{C}[1]{>{\hsize=#1\hsize\centering\arraybackslash}X}%

\usepackage{booktabs}
\newcommand{\ra}[1]{\renewcommand{\arraystretch}{#1}}  % for adjusting row spacing
\AtBeginDocument{
	\heavyrulewidth=.1em
	\lightrulewidth=.05em
	\cmidrulewidth=.03em
	\belowrulesep=.65ex
	\belowbottomsep=0pt
	\aboverulesep=.4ex
	\abovetopsep=0pt
	\cmidrulesep=\doublerulesep
	\cmidrulekern=.5em
	\defaultaddspace=.5em
}

\graphicspath{{./figs/}}

%% Caligraphic shortcuts
\newcommand{\Bc}{\ensuremath{\mathcal{B}}}     	% B
		% C
		% D
\newcommand{\Ec}{\ensuremath{\mathcal{E}}}		% E
\newcommand{\Gc}{\ensuremath{\mathcal{G}}}		% G 
\newcommand{\Hc}{\ensuremath{\mathcal{H}}}		% H
\newcommand{\Ic}{\ensuremath{\mathcal{I}}}		% I
\newcommand{\Mc}{\ensuremath{\mathcal{M}}}		% N
\newcommand{\Nc}{\ensuremath{\mathcal{N}}}		% N
\newcommand{\Oc}{\ensuremath{O}}		% O -- Regular O according to PRA style
		% Q
		% S
		% T
\newcommand{\Uc}{\ensuremath{\mathcal{U}}}		% U
\newcommand{\Zc}{\ensuremath{\mathcal{Z}}}		% Z

%% Bold front shortcuts
\newcommand{\Gcbf}{\ensuremath{\pmb{\Gc}}}		        
\newcommand{\Gavgbf}[1]{\ensuremath{\pmb{\Gc}_\mathrm{avg}^{(#1)}}}
\newcommand{\Mcbf}{\ensuremath{\pmb{\Mc}}}
\newcommand{\Ncbf}{\ensuremath{\pmb{\Nc}}}
\newcommand{\Lambdabf}{\ensuremath{\pmb{\Lambda}}}

%% Blackboard shortcuts
\newcommand{\C}{\ensuremath{\mathbb{C}}}
\newcommand{\Ebb}{\ensuremath{\mathbb{E}}}
\newcommand{\N}{\ensuremath{\mathbb{N}}}
\newcommand{\Mbb}{\ensuremath{\mathbb{M}}}
\newcommand{\Pbb}{\ensuremath{\mathbb{P}}}
\newcommand{\R}{\ensuremath{\mathbb{R}}}
\newcommand{\Vbb}{\ensuremath{\mathbb{V}}}

%% Serif front shortcuts

%% Generic shortcus
\newcommand{\jbf}{\ensuremath{\mathbf{j}}}
\newcommand{\ideal}{\ensuremath{\mathrm{id}}}
\newcommand{\spam}{\ensuremath{\mathrm{err}}}

\newcommand{\diff}{\ensuremath{\; \mathrm{d}}}
\newcommand{\urm}{\ensuremath{{\! \mathrm{u}}} }
\newcommand{\half}{\ensuremath{\frac{1}{2}}}
\newcommand{\Gavg}[1]{\ensuremath{\Gc_\mathrm{avg}^{(#1)}}}
\newcommand{\tp}[1]{{\otimes #1}}
\newcommand{\adj}{\ensuremath{\mathrm{[adj]}}}

\DeclareMathOperator{\Tr}{Tr}

\DeclareMathOperator{\Rge}{Rge}

\DeclareMathOperator{\Span}{Span}

%Changed notation, all sets are SF now.
\newcommand{\Cliff}[1]{\ensuremath{{\mathsf{C}(#1)}}}	% Cliff to C sf
\newcommand{\GLc}{\ensuremath{\mathsf{GL}}}	            % GL Caligraphic to sf
\newcommand{\Gbb}{\ensuremath{\mathsf{G}}}              % G Blackboard to sf (used for groups)
\newcommand{\Unitary}[1]{\ensuremath{{\mathsf{U}(#1)}}} % Unitary(d) to U(d) sf
\newcommand{\Pc}{\ensuremath{\mathsf{P}}}				% P Caligraphic to sf (pauli matrices)
\newcommand{\Lc}{\ensuremath{\mathsf{L}}}				% L	Caligraphic to sf (linear operators)

\allowdisplaybreaks

\begin{document}
%----------------------------------------------------------------------------------------
%	TITLE SECTION
%----------------------------------------------------------------------------------------
\title{Efficient Unitarity Randomized Benchmarking of Few-qubit Clifford Gates}
\date{\today}

\author{Bas Dirkse}
\email{b.dirkse@tudelft.nl}
\affiliation{QuTech, Delft University of Technology, Lorentzweg 1, 2628 CJ Delft, The Netherlands}
\affiliation{QuSoft, CWI and University of Amsterdam, Science Park 123 1098 XG Amsterdam, The Netherlands}
\author{Jonas Helsen}
\affiliation{QuTech, Delft University of Technology, Lorentzweg 1, 2628 CJ Delft, The Netherlands}
\author{Stephanie Wehner}
\affiliation{QuTech, Delft University of Technology, Lorentzweg 1, 2628 CJ Delft, The Netherlands}

\begin{abstract}
	Unitarity randomized benchmarking (URB) is an experimental procedure for estimating the coherence of implemented quantum gates independently of state preparation and measurement errors. These estimates of the coherence are measured by the unitarity. A central problem in this experiment is relating the number of data points to rigorous confidence intervals. 
	In this work we provide a bound on the required number of data points for Clifford URB as a function of confidence and experimental parameters. This bound has favorable scaling in the regime of near-unitary noise and is asymptotically independent of the length of the gate sequences used. We also show that, in contrast to standard randomized benchmarking, a nontrivial number of data points is always required to overcome the randomness introduced by state preparation and measurement errors even in the limit of perfect gates. Our bound is sufficiently sharp to benchmark small-dimensional systems in realistic parameter regimes using a modest number of data points. For example, we show that the unitarity of single-qubit Clifford gates can be rigorously estimated using few hundred data points under the assumption of gate-independent noise. This is a reduction of orders of magnitude compared to previously known bounds.
\end{abstract}

%\keywords{Randomized Benchmarking; Unitarity; Noise characterization}

\maketitle

%----------------------------------------------------------------------------------------
%	SECTION 1: Introduction
%----------------------------------------------------------------------------------------

\section{Introduction}
In order to further advance the efforts in building large-scale quantum computers, it is essential to characterize the errors of elementary quantum gates in practical implementations. Randomized benchmarking (RB) \cite{Emerson2005, Knill2008, Magesan2011, Magesan2012a} has in the past years become the standard for assessing the quality of quantum gates \cite{Knill2008,Chow2010,Olmschenk2010,Gaebler2012,Barends2014,Muhonen2015,Xia2015}. This is because RB has a simple and efficiently scalable implementation that characterizes gates errors independently of any state preparation and measurement (SPAM) errors. Since the introduction of randomized benchmarking, several variants have been developed \cite{Magesan2012,Wallman2015a,Wallman2015b,Wallman2016,Combes2017}. One of these variants is unitarity randomized benchmarking (URB) \cite{Wallman2015a,Feng2016}.

This paper is concerned with the URB protocol proposed in \cite{Wallman2015a}. It provides a method to characterize the coherence of errors in implemented quantum gates that is robust against SPAM errors. This characterization of coherence is quantified by the unitarity, a quantity that is independent of the average gate fidelity measured by standard RB. Being able to estimate the unitarity experimentally provides an extra source of information when optimizing experimental implementations of quantum gates \cite{Feng2016}. In particular, the unitarity can help to discriminate whether the dominant error process is coherent (i.e., overrotation or calibration errors) or incoherent (i.e., depolarizing or dephasing noise). This information is useful since these two different types of noise are generally reduced in different ways \cite{Feng2016, Sheldon2016}.
Additionally, knowing the unitarity of a gate or gate set can be used to get sharper bounds on the credible interval of an interleaved randomized benchmarking experiment \cite{Carignan-Dugas2016} and also get improved bounds on the diamond norm error \cite{Sanders2016,Kueng2016,Wallman2015}, which is the relevant metric in the setting of fault-tolerant quantum computing. 

The URB protocol is similar to the standard RB protocol and they share many characteristics, like SPAM independent estimation of its figure of merit. It aims only to provide a partial characterization of the gate set (by estimating the unitarity), instead of characterizing the noise completely, which is what channel or gate set tomography for instance aim to do. Since full tomography with rigorous confidence intervals is very resource-intensive \cite{Thinh2018}, in situations where partial noise characterization suffices, more lightweight solutions like RB and URB may be the choice of preference. 

In RB-type protocols, the noise-characterizing figure of merit is obtained from the exponential decay rate of the average survival probability with the length of the sequence of gates. For fixed sequence length, the average survival probability is estimated by averaging over a number of randomly sampled gate sequences. An important problem for RB-type procedures is then determining a number of random gate sequences that is practical yet yields a confident estimate of the figure of merit. This problem was realized in the first concrete proposal of RB \cite{Magesan2012a}. Subsequent work focused on resolving this problem in two different, complementary ways. First, statistical tools were applied to allow for confident estimation of the RB decay rate with fewer random gate sequences \cite{Epstein2014,Granade2015,Hincks2018}. Second, the underlying distribution from which the RB protocol samples data was analyzed. In particular a sharp bound on the variance of this distribution was derived, which also allows for more resource-efficient estimation of the RB decay rate from measurement data \cite{Wallman2014,Helsen2017}. However, no such analysis exists for the related URB protocol. 

Here we analyze the statistics of unitarity randomized benchmarking. The aim of this work is to contribute a solution to the following central question: How many random sequences of gates are required in the URB protocol to get a confident estimate of the unitarity from the obtained measurement data? We proceed along the lines of \cite{Wallman2014,Helsen2017} by providing a sharp bound on the variance of the underlying distribution from which the URB protocol samples. This additional knowledge of the URB sampling distribution allows for more resource-efficient estimation of the unitarity from experimental data. Concretely we demonstrate how our variance bound can be used to bound the required number of random sequences as a function of desired confidence parameters. 

In this work, we derive a bound on the variance of the distribution induced by the random sampling of gate sequences in a modified version of the Clifford URB protocol. This modification is based on the adapted RB protocol of \cite{Helsen2017}. It requires no experimental overhead while leading to a sharper variance bound (and hence fewer required gate sequences) as well as a simpler fit model for extracting the unitarity. In addition, our statistical analysis reveals the optimal input state and output measurement for minimizing the variance and maximizing the signal strength. 
We then apply this variance bound using standard concentration inequalities to relate the number of random sequences to desired confidence intervals. 
Our result is sufficiently sharp to perform the modified URB protocol on few-qubit systems with a modest number of sequences in realistic parameter regimes. It is an improvement of several orders of magnitude in the number of sequences required for fixed confidence, compared to a concentration inequality that does not use the variance (as was first done for RB in \cite{Magesan2012a}). 
We show that the variance, and thus number of required gate sequences, scales favorably  in the regime of large unitarity, which is the relevant regime for high quality gates. We also show that, in contrast to standard RB \cite{Helsen2017}, a nontrivial number of sequences is always required to overcome the randomness introduced by state preparation and measurement errors even in the limit of perfect gates.

This paper is organized as follows. In the remainder of this section we review the concept of unitarity and the URB protocol to estimate the unitarity of a gate set. We introduce a modification of the protocol based on \cite{Helsen2017} for the purpose of improved statistics. Furthermore we explicitly distinguish the two different implementations of the URB protocol and emphasize their benefits and drawbacks.  In \autoref{sec:results} we present our main result (\eqref{eq:hoeffding2} and \eqref{eq:var_bound}) and illustrate how to apply it using a simulated example. In \autoref{sec:discussion} we examine the behavior of our bound in various parameter regimes and discuss the different features of our bound. A brief overview of the proof techniques used to derive our main result is presented in \autoref{sec:methods}. All technical details of the proof  have been delegated to the appendices. In \autoref{sec:Outlook} we summarize the main conclusions of our work and provide suggestions for future research.

\subsection{Unitarity}
Let us begin with defining the figure of merit that URB estimates. For a quantum channel $\Ec$ (here a quantum channel will refer to a completely positive and trace-preserving (CPTP) superoperator), the unitarity  is defined as \cite{Wallman2015a}
	\begin{equation}
u(\Ec) = \frac{d}{d - 1} \int \diff \psi \Tr \left[ \left(\Ec\Big(\ketbra{\psi} - \frac{I}{d}\Big)\right)^2 \right],
\end{equation}
where the integration is with respect to the uniform Haar measure on the state space $\Hc$. The prefactor is chosen such that $0 \leq u \leq 1$. An equivalent definition of the unitarity can be given as \cite[Proposition 1]{Wallman2015a}
\begin{equation}
\label{eq:unitary_definition2}
u(\Ec) = \frac{1}{d^2 - 1} \sum_{\sigma,\tau \in \Pc^*} \Tr[\tau \Ec(\sigma)]^2,
\end{equation}
where the summation is over the set of all nonidentity, normalized Pauli matrices $\Pc^*$. The normalization is with respect to the Hilbert-Schmidt norm $\| \sigma \|_2 = \sqrt{\Tr[\sigma^\dagger \sigma]}$. 
This alternative definition of the unitarity is often more pleasant to work with. In \autoref{ex:unitarity} the unitarity of a depolarizing channel is calculated.

The unitarity has some properties that one would intuitively expect a good measure of the coherence of gates to have \cite[Proposition 7]{Wallman2015a}. First, $u = 1$ if and only if $\Ec$ is a unitary quantum channel. Second, the unitarity is invariant under unitary transformation. That is, if $\Uc,\mathcal{V}$ are unitary quantum channels, then $u(\Ec) = u(\Uc \Ec \mathcal{V})$. The unitarity is independent of but related to the average gate fidelity. In fact, the unitarity provides an upper bound on the average gate fidelity \cite[Proposition 8]{Wallman2015a},
\begin{equation}
\label{eq:agf}
\left(\frac{d F_\mathrm{avg} - 1}{d-1}\right)^2 \leq u.
\end{equation}
Here $F_\mathrm{avg}$ is the average gate fidelity between the implemented gate and the ideal target gate. This relation expresses the fact that a perfect gate ($F_\mathrm{avg} = 1$) must be unitary ($u=1$). However, the converse does not hold. Indeed, a unitary gate ($u=1$) can have arbitrary average gate fidelity by considering purely unitary noise (i.e., overrotation). The inequality \eqref{eq:agf} is tight, since it holds with equality for a depolarizing channel. 

\begin{example}
	\label{ex:unitarity}
	Let $\Ec$ be a depolarizing quantum channel with depolarizing parameter $p$
	$$\Ec: A \mapsto p A + \frac{1-p}{d} \Tr [A] I.$$
	Then the unitarity $u$ of $\Ec$ is computed using \eqref{eq:unitary_definition2} as
	\begin{equation*}
	u = \frac{1}{d^2 - 1} \sum_{\sigma,\tau \in \Pc^*} \Tr[p \tau^\dagger  p \sigma]^2 = p^2,
	\end{equation*}
	since $\Tr[\tau^\dagger \sigma] = \delta_{\sigma,\tau}$. Note that $F_\mathrm{avg}(\Ec) = p + \frac{1-p}{d}$, so that the inequality \eqref{eq:agf} is saturated by the depolarizing channel.
\end{example}

\subsection{The URB protocol}
\label{subsec:protocol}

\begin{algorithm*}
	\framebox[0.95\linewidth]{
		\linespread{1.15}
		\begin{minipage}{0.9\linewidth}
			\hspace{5pt}
			\normalsize
			\algrenewcommand\algorithmicloop{\textbf{repeat }}
			\algrenewcommand\algorithmicindent{1.0em}
			\setlength{\abovedisplayskip}{0pt}
			\setlength{\belowdisplayskip}{0pt}
			\begin{algorithmic}[1]
				%				\Statex{\textbf{data:} Let $\Gbb \subset U(d)$ be a finite subset of the unitary group that is also a unitary 2-design on a $d$-dimensional Hilbert space $\mathcal{H}$ . Let the noise model be $\tilde{\Gc} = \tilde{\Gc} \Lambda$ for all $\Gc \in \Gbb$, where $\Lambda$ is a CPTP map on $\Lc(\Hc)$.}
				%				\Statex{\textbf{input:} Choose $\Mbb \subset \N$ and a $N_m \in \N$ for each $m \in \Mbb$. Pick states and an observable $\rho, \hat{\rho}, E \in \Lc(\Hc \otimes \Hc)$ for the two-copy implementation or pick states and an observable $\rho_\Hc, \hat{\rho}_\Hc, E_\Hc \in \Lc(\Hc)$ for the single-copy implementation.  }
				%				\Statex{\textbf{output:} An estimate of the unitarity of the noise map $u(\Lambda)$.}
				\Statex{Fix a gate set $\Gbb$, choose a set of sequence lengths $\Mbb$ to use and determine the number of random sequences $N_m$ per sequence length $m \in \Mbb$.}
				\Procedure{URB}{$\Gbb, \Mbb, \{N_m\}$}
				\ForAll{sequence lengths $m \in \Mbb$}
				\Loop{$N_m$ times}
				\State{Sample $m$ random gates $\Gc_{j_1}, ..., \Gc_{j_m}$ independently and uniformly at random from $\Gbb$;}
				\State{Compose the sequence $\Gc_\jbf =  \Gc_{j_m} \cdots \Gc_{j_2}\Gc_{j_1}$;}
				\If{Two-copy implementation}
				\State{Prepare states $\rho \approx \frac{I + S}{d(d+1)}$ and $\hat{\rho} \approx \frac{I - S}{d(d-1)}$, apply $\Gc_\jbf^{\otimes 2}$ to each state and measure $E \approx S$ a large number of times (where $S$ denotes the Swap gate);}
				\State{From this data, estimate the average sequence purity as
					$$q_\jbf^{(2)} = (\Tr[E  \Gc_\jbf^{\otimes 2} (\rho)] - \Tr[E \Gc_\jbf^{\otimes 2} (\hat{\rho})]) =  \Tr[E\Gc_\jbf^{\otimes 2}(\bar{\rho})];$$}
				\EndIf
				\If{Single-copy implementation}
				\ForAll{nonidentity Pauli's $P,Q \neq I$}
				\State{Prepare states $\rho_\Hc^{(P)} \approx \frac{I + P}{d}$ and $\hat{\rho}_\Hc^{(P)} \approx \frac{I - P}{d}$, apply $\Gc_\jbf$ to each state and measure $E_\Hc^{(Q)} \approx Q$ a large number of times; }
				\EndFor
				\State{From this data, estimate the average sequence purity as
					$$q_\jbf^{(1)} = \frac{1}{d^2-1}\sum_{P,Q \neq I} \left(\Tr[E_\Hc^{(Q)}  \Gc_\jbf({\rho}_\Hc^{(P)})] - \Tr[E_\Hc^{(Q)}  \Gc_\jbf(\hat{\rho}_\Hc^{(P)})]\right)^2;$$}
				
				\EndIf
				\EndLoop
				\State{Compute the empirical average over the sampled sequences $\bar{q}_{m} = \frac{1}{N_m} \sum_{\jbf} q_\jbf$;}
				\EndFor
				\State{Fit $\bar{q}_{m} = B u^{m-1}$, where $B$ is a constant absorbing SPAM errors and $u$ is the unitarity of the noise map.}
				\EndProcedure
			\end{algorithmic}
			\vspace*{-10pt}
			\caption{Outline of the modified unitarity randomized benchmarking protocol.}
			\label{Alg:URB_protocol}
		\end{minipage}
	}
\end{algorithm*}

This section gives an overview of the URB protocol of \cite{Wallman2015a} and gives a small modification based on \cite{Helsen2017}. The protocol is described for any gate set $\Gbb$ that is a unitary 2-design \cite{Gross2007}. Note that even though the protocol works for all these gate sets, our result of the confidence analysis is only applicable to the Clifford group. In \autoref{Alg:URB_protocol} we present an outline of the URB protocol, where we distinguish two different implementations (discussed later in this section).

The URB protocol works similar to the standard RB protocol. First one draws a uniformly distributed random sequence of gates (with length $m$) from the gate set $\Gbb$. Denote such a sequence 
\begin{equation}
\Gc_\jbf = \Gc_{j_m} \cdots \Gc_{j_{2}} \Gc_{j_{1}},
\end{equation}
where each $j_s$ denotes the randomly drawn gate from $\Gbb$ at position $s$. The subscript $\jbf$ denotes the multi-index $(j_1, j_2,...,j_m)$ and therefore indexes the entire sequence. Such a randomly sampled sequence $\Gc_\jbf$ is then applied to a state $\rho$ , after which a two-outcome measurement is performed (in this work the operator $E$ denotes the Hermitian observable associated with a two-outcome measurement $\{M, I-M\}$ with outcomes $\pm1$). However, there are two differences here with respect to the RB protocol. First, there is no global inverse applied at the end of each sequence and second, the expectation value of the measurement outcome is squared. So the URB random variable of interest then becomes $q_\jbf = \Tr[E \Gc_\jbf({\rho})]^2$. Throughout this work, we shall call the URB random variable $q_\jbf$ the sequence purity (in standard RB, the random variable of interest is typically referred to as the survival probability).  The rest of the procedure is then similar: estimate the mean of the sequence purity $q_\jbf$ using $N$ random sequences of fixed length, repeat for various sequence lengths and fit to the model  
\begin{equation}
\Ebb[q_\jbf] = Bu^{m-1} + A
\end{equation}
to obtain the unitarity. 

Here we analyze a slightly modified version of the protocol of \cite{Wallman2015a}, based on ideas of \cite{Helsen2017,Granade2015,Knill2008}. Every sequence of randomly sampled gates $\Gc_\jbf$ is applied to two different input states $\rho$ and $\hat{\rho}$, and half of the difference of their expectation values is taken before squaring. By linearity of quantum mechanics, this is equivalent to performing URB with the traceless input operator
\begin{equation}
\label{eq:rhobar}
\bar{\rho} := \half(\rho - \hat{\rho}).
\end{equation}
The factor $\half$ is strictly not necessary but is added for better statistical comparison.
The key idea behind this is that one effectively works with a traceless input operator $\bar{\rho}$. There are two main benefits of this modification. First, it improves the fitting procedure, because the modified fit model for the mean of the sequence purity becomes (see \eqref{eq:fitmodel} in \autoref{subsec:fit_model_variance_expression})
\begin{equation}
\label{eq:fit}
\Ebb[q_\jbf] = B u^{m-1},
\end{equation}
where the constant $B$ only depends on the input operator $\bar{\rho}$ and the measurement observable $E$. 
This is a linear fitting problem in $u$ by taking the logarithm and can therefore be performed more easily. Second, this modification narrows the distribution of the sequence purity $q_\jbf$, improving the confidence in our point estimate $\bar{q}_m = \frac{1}{N} \sum q_\jbf$ of the exact $\Ebb[q_\jbf]$. 
In the next section we discuss the implementation of the protocol in more detail and emphasize that there are two possible methods to estimate $q_\jbf$.

\subsubsection{The two different implementations}
In this section we discuss two different possible implementations of the URB protocol (as briefly discussed in \cite{Wallman2015a}), which are illustrated in \autoref{fig:urb}. The choice of implementation depends on whether the experimenter has access to two identical copies of the system or not. The implementations differ in the way the sequence purity $q_\jbf$ is computed and what the ideal input operator $\bar{\rho}$ and measurement $E$ are. By ideal operators, we mean the operators that maximize the signal strength (the proportionality factor $B$ in the fit model \eqref{eq:fit}) from which the unitarity is estimated. We will then show that the two implementations are closely related.

\begin{figure}
	\centering
	\includegraphics[width=0.95\linewidth]{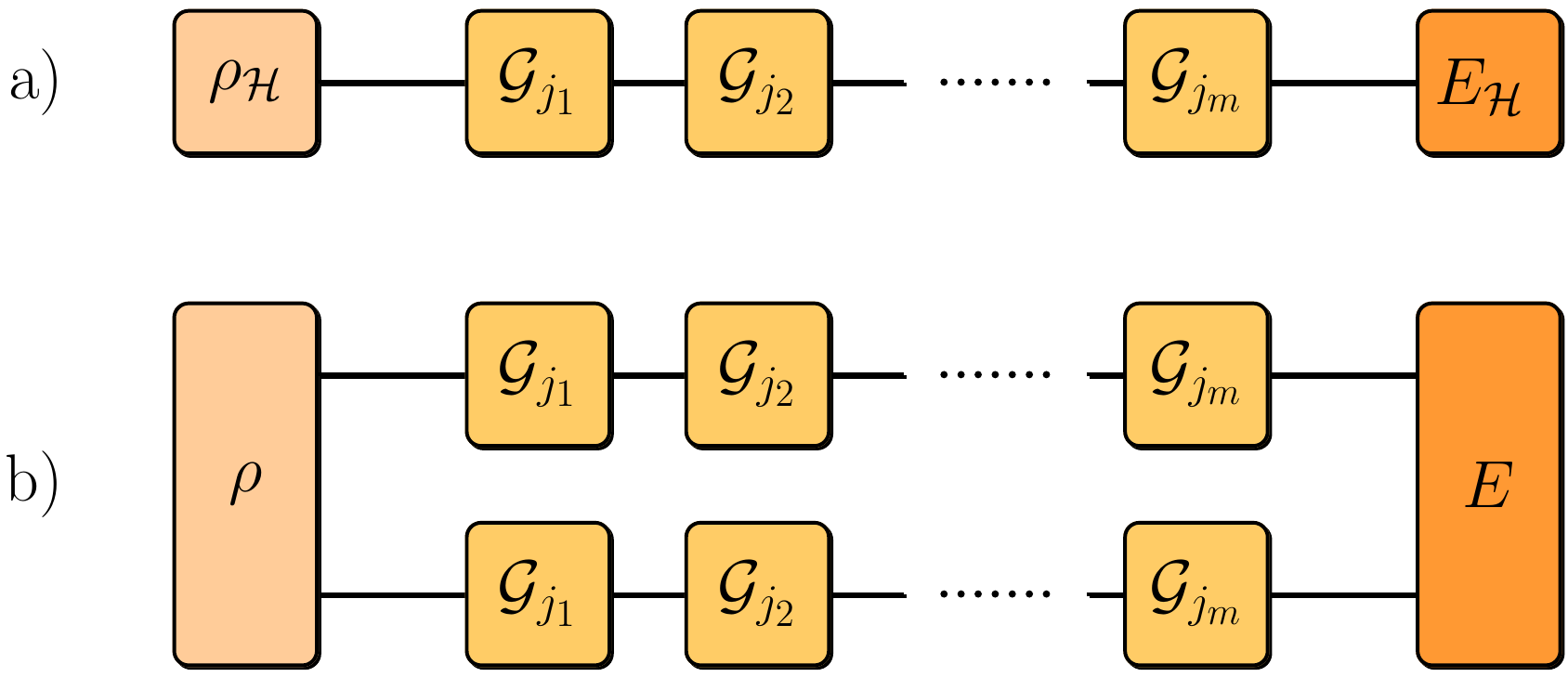}
	\caption{Schematic difference between the single-copy implementation (a) and the two-copy implementation (b) of the unitarity randomized benchmarking protocol. Each line represents a system on the base Hilbert space $\Hc$. In the single-copy implementation, the expected value of the measurement $\Tr[E_\Hc  \Gc_\jbf(\bar{\rho}_\Hc)]$ needs to be squared to obtain $q_\jbf$, whereas in the two-copy implementation $q_\jbf = \Tr[E  \Gc_\jbf^{\otimes 2}(\bar{\rho})]$ yields the direct outcome. 	}
	\label{fig:urb}
\end{figure}

Let us start by discussing the two-copy implementation (\autoref{fig:urb}.b). As the name suggests, this requires two copies of the system $\Hc$ under investigation. The use of two copies follows from the mathematical equivalence
\begin{equation}
\label{eq:7}
q_\jbf = \Tr[E  \Gc_\jbf(\bar{\rho})]^2 = \Tr[E^{\otimes 2} \Gc_\jbf^{\otimes 2}\left(\bar{\rho}^{\otimes 2}\right)].
\end{equation}
If the experimenter has access to two identical copies of the system $\Hc$, the input and measurement operator can be entangled across the two copies of the system. The sequence $\Gc_\jbf$ is then applied to each half of the system $\Hc \otimes \Hc$. This yields the sequence purity of the two-copy implementation as
\begin{equation}
\label{eq:q_jbf^2}
q_\jbf^{(2)} = \Tr[E \Gc_\jbf^{\otimes 2}\left(\bar{\rho}\right)],
\end{equation}
where $\bar{\rho}, E \in \Lc(\Hc \otimes \Hc)$ are now operators on the two copies of the system. 
Since $E$ is a two-valued measurement with outcomes ($\pm1$) and $\bar{\rho}$ is half the difference between two physical states, it is not hard to show that the sequence purity lies in the interval $q_\jbf^{(2)} \in [-1,1]$. In \autoref{subsec:interval_length} we show that this interval can be narrowed under mild assumptions. In the two-copy implementation it is implicitly assumed that the experimenter can operate identically on each subsystem without any cross-talk between the two subsystems.  Moreover, the experimenter should be able to prepare and measure over the two copies of the system.
Experimentally the input and measurement operators $\bar{\rho}, E \in \Lc(\Hc \otimes \Hc)$ should be as close to the ideal operators as possible. The ideal operators are given by (see Appendix~\ref{app_sub:statistical_results_proof} for more details and proof)
\begin{align}
\label{eq:ideal_operators}
\rho_\ideal &= \frac{I + S}{d(d+1)}, & \hat{\rho}_\ideal &= \frac{I - S}{d(d-1)}, & E_\ideal  &= S,
\end{align}
where $I$ is the identity and $S$ is the Swap operator on $\Hc\otimes \Hc$, and $d$ is the dimension of $\Hc$. The state $\rho_\ideal$  ($\hat{\rho}_\ideal$) is the maximally mixed state on the symmetric (anti-symmetric) subspace of $\Hc \otimes \Hc$. Note that the maximally mixed state on a subspace can be prepared by uniformly sampling pure states from an orthonormal basis of this subspace. The operator $E_\ideal$ is the Hermitian observable associated with a two-valued measurement that discriminates between symmetric (outcome $1$) and anti-symmetric states (outcome $-1$). 

In the single-copy implementation, the experimenter must obtain an estimate of the sequence purity $q_\jbf$ using only a single copy of the system $\Hc$. From \eqref{eq:7}, it can be seen that $q_\jbf = \Tr[E_\Hc \Gc_\jbf(\bar{\rho}_\Hc)]^2$ is the sequence purity given the operators $\bar{\rho}_\Hc, E_\Hc \in \Lc(\Hc)$. Here the subscript $\Hc$ is to emphasize that the operators are on a single copy of $\Hc$. Throughout this paper we will just write $\bar{\rho}$ and $E$ for operators on $\Hc \otimes \Hc$ and indicate operators on a single copy explicitly by adding a subscript $\Hc$.  
There are two disadvantages in defining the single-copy sequence purity using one pair of input and measurement operators $\bar{\rho}_\Hc, E_\Hc \in \Lc(\Hc)$. First, the proportionality factor $B$ in \eqref{eq:fit} is upper bounded by $\frac{1}{d^2-1}$, where $d$ is the dimension of $\Hc$ \cite{Wallman2015a}. This means that the signal strength decreases exponentially with the system size. Second, the variance of the sequence purity is large. This leads to large uncertainty in the estimated average sequence purity $\bar{q}_m$. These disadvantages can be resolved by using multiple different pairs of input and measurement operators \cite{Wallman2015a}. The ideal set of operators is chosen in such a way that summing the expectation values squared for each pair of operators leads to effectively simulating the ideal operators of \eqref{eq:ideal_operators}. Let us make this more precise. Define the single-copy sequence purity as
\begin{equation}
\label{eq:q_jbf^1}
q_\jbf^{(1)} = \frac{1}{d^2-1}\sum_{P,Q \neq I} \Tr[E_\Hc^{(Q)}  \Gc_\jbf(\bar{\rho}_\Hc^{(P)})]^2,
\end{equation}
where the sum is over all nonidentity multiqubit Pauli operators $P,Q$. Each $\bar{\rho}_\Hc^{(P)}$ and $E_\Hc^{(Q)}$ are different input and measurement operator settings indexed by the nonidentity Pauli operators $P$ and $Q$ respectively. For each pair $P, Q$, the expectation value $\Tr[E_\Hc^{(Q)}  \Gc_\jbf(\bar{\rho}_\Hc^{(P)})]$ is to be estimated experimentally. This expectation can be shown to lie in the interval $[-1,1]$ by definition of $E$ and $\bar{\rho}$, so that the expectation value squared lies in the unit interval. Therefore the single-copy sequence purity can in principle lie anywhere in the interval $q_\jbf^{(1)} \in [0, d^2-1]$, since each summand lies in the unit interval and the summation runs over $(d^2-1)^2$ terms. However in \autoref{subsec:interval_length} we show that this interval can be narrowed significantly under mild assumptions. Since the sum runs twice over all nonidentity Pauli operators, estimating the sequence purity $q_\jbf^{(1)}$ requires  $(d^2-1)^2$ different settings. This is a number that grows exponentially in the number of qubits comprising the system. We also emphasize that simply squaring and summing up estimates of $\Tr[E_\Hc^{(Q)}  \Gc_\jbf(\bar{\rho}_\Hc^{(P)})]$ to obtain an estimate of $q_\jbf^{(1)}$ yields a positively biased estimator for $q_\jbf^{(1)}$. This may lead to overestimating the unitarity. See \autoref{subsubsec:unbiased_estimator} for more details on how to correctly estimate $q_\jbf^{(1)}$. The states ${\rho}_\Hc^{(P)}, \hat{\rho}_\Hc^{(P)}$ and measurement $E_\Hc^{(Q)}$ should be implemented as closely as possible to the ideal operators
\begin{align}
\label{eq:ideal_single_copy}
\rho_{\Hc,\ideal}^{(P)} &= \frac{I + P}{d}, & \hat{\rho}_{\Hc,\ideal}^{(P)} &= \frac{I - P}{d}, & E_{\Hc,\ideal}^{(Q)} &= Q.
\end{align}
The ideal state $\rho_{\Hc,\ideal}^{(P)}$ ($\hat{\rho}_{\Hc,\ideal}^{(P)}$) is the maximally mixed state on the positive (negative) eigenspace of the Pauli operator $P$, and the measurement $E_{\Hc,\ideal}^{(Q)}$ is the two-valued measurement that discriminates between the positive (outcome $1$) and negative (outcome $-1$) eigenspace of the Pauli operator $Q$. 

Next we show that the single-copy can be interpreted as a special case of the two-copy implementation (this is not surprising in view of \eqref{eq:7}). To do so, we show that in the single-copy implementation, one effectively works with two-copy operators of the form
\begin{equation}
\label{eq:state_12}
\begin{aligned}
\bar{\rho}_\mathrm{eff} &= \frac{d}{d^2 - 1} \sum_{P \neq I} \bar{\rho}_\Hc^{(P)} \otimes \bar{\rho}_\Hc^{(P)}, \\
\bar{E}_\mathrm{eff} &= \frac{1}{d} \sum_{Q\neq I} \bar{E}_\Hc^{(Q)} \otimes \bar{E}_\Hc^{(Q)}.
\end{aligned}
\end{equation}
Here $\bar{E}$ ($\bar{E}_\Hc^{(Q)}$) is the traceless part of the observable $E$ (${E}_\Hc^{(Q)}$) , defined as
\begin{equation}
\label{eq:traceless_measurement}
\bar{E} := E - \Tr[E] \frac{I}{d^2}, \qquad \bar{E} := E_\Hc - \Tr[E_\Hc] \frac{I_\Hc}{d}.
\end{equation}
The key point is that replacing the observable $E$ with $\bar{E}$ makes no difference, since $\Tr[E \Gc_\jbf^\tp{2}(\bar{\rho})] = \Tr[\bar{E} \Gc_\jbf^\tp{2}(\bar{\rho})]$. This follows directly from \eqref{eq:traceless_measurement}, since $\Tr[I \Gc_\jbf^\tp{2}(\bar{\rho})] = 0$ by the tracelessness of $\bar{\rho}$ and the trace-preserving property of $\Gc_\jbf^\tp{2}$. Analogously, in the single-copy implementation, the traceless measurement $\bar{E}_\Hc^{(Q)}$ can be used instead of the observable ${E}_\Hc^{(Q)}$. Throughout the paper, a bar over the measurement operator will mean the traceless component as defined by \eqref{eq:traceless_measurement}.

The key idea of \eqref{eq:state_12} is that $\bar{\rho}_\mathrm{eff}$ and $\bar{E}_\mathrm{eff}$ are constructed such  that computing $q_\jbf^{(1)}$ with \eqref{eq:q_jbf^1} is mathematically equivalent to computing $q_\jbf^{(2)}$ with \eqref{eq:q_jbf^2} using the effective operators \eqref{eq:state_12},
\begin{multline}
q_\jbf^{(1)} = \frac{1}{d^2-1}\sum_{P,Q \neq I} \Tr[\bar{E}_\Hc^{(Q)}  \Gc_\jbf(\bar{\rho}_\Hc^{(P)})]^2 \\
=\Tr\left[ \bar{E}_\mathrm{eff}   \Gc_\jbf^\tp{2}(\bar{\rho}_\mathrm{eff})\right] 
=  q_\jbf^{(2)}.
\end{multline}
In particular the ideal effective operators $\bar{\rho}_{\mathrm{eff},\ideal}$ and $\bar{E}_{\mathrm{eff},\ideal}$ (defined by \eqref{eq:state_12} for the ideal single-copy  operators \eqref{eq:ideal_single_copy}) are equal to the ideal two-copy operators \eqref{eq:ideal_operators},
\begin{equation}
\bar{\rho}_{\mathrm{eff},\ideal} = \bar{\rho}_{\ideal} \qquad \mbox{and} \qquad \bar{E}_{\mathrm{eff},\ideal} = \bar{E}_{\ideal}.
\end{equation}
This follows from the fact that \cite{Wallman2015a}
\begin{equation}
S = \frac{1}{d} \sum_P P \otimes P.
\end{equation}
Note that the sum is here over all Pauli matrices including the identity. 
As a result of this, the rest of the paper will exclusively deal with the two-copy operators $\bar{\rho}$,  $E \in \Lc(\Hc \otimes \Hc)$. The results can be interpreted for the single-copy protocol by considering the effective operators \eqref{eq:state_12}.

The two-copy implementation of the protocol as previously discussed, can only be implemented if the experimenter has access to two different, but identical copies of the system under examination. These two systems must be simultaneously accessible for entangled state preparation and measurements, but the unitary control on each subsystem needs to be fully disjoint (i.e., without crosstalk) and identical (meaning noise must be identical on each subsystem). These assumptions are hard if not impossible to fulfill in any experimental system. We emphasize however that the two-copy implementation is introduced as a mathematical tool for the analysis of the URB protocol and its equivalence to the more realistic single-copy protocol was shown. 

This concludes our review of the URB protocol, including the proposed modification of traceless input operators and emphasizing the two different implementations (which we have named the single- and two-copy implementation, respectively). Next, we will present our main result. We will show how a concentration inequality can be used to relate the required resources (the number of sequences $N$) to parameters that quantify the confidence in the estimate of the average sequence purity $\bar{q}_m$. To do so, we will present a sharp bound $\sigma^2$ on the variance of the sequence purity $\Vbb[q_\jbf^{(K)}]$ and present a bound $L$ on the length of the interval in which the sequence purity $q_\jbf^{(K)}$ lies. These bounds are independent of $K$ (the choice between single or two-copy implementation). Therefore, if no implementation-specific details are discussed, the sequence purity is just denoted $q_\jbf$.

%----------------------------------------------------------------------------------------
%	SECTION 2: Results
%----------------------------------------------------------------------------------------
\section{Summary of results}
\label{sec:results}
In this section the main contribution of the paper is summarized. The main result is a sharp bound on the number of sequences $N$ required to obtain the average sequence purity $\bar{q}_m$ given fixed sequence length $m$ with a certain \textit{a priori} determined confidence. In \autoref{subsec:hoeffding} we review a result from statistics to quantify the relation between the number of sequences $N$ and the confidence. This relation requires some knowledge on the distribution of the sequence purity $q_\jbf$. A bound on the variance and a bound on the interval length of the sequence purity are needed. In \autoref{subsec:var_bound} we present a bound on the variance of the URB sequence purity ${q}_\jbf$ for benchmarking the Clifford gate set. This is the main contribution of this work. In \autoref{subsec:interval_length} we present a bound on the length of the interval in which $q_\jbf$ must lie. Finally in \autoref{subsec:examples} we give some examples on how to use our results.

\subsection{Relation between the confidence parameters and the number of sequences}
\label{subsec:hoeffding}
Using concentration inequalities from statistics, the confidence in the estimate $\bar{q}_m$ can be expressed as the probability that it deviates at most $\epsilon$ from the exact mean $\Ebb[q_\jbf]$. If this probability $\Pbb[|\bar{q}_m - \Ebb[q_\jbf] | \geq \epsilon] \leq \delta$ is to be bounded by $\delta$, then the number of required data points $N$ is related to the confidence parameters $\epsilon, \delta$ by \cite{Hoeffding1963}
\begin{equation}
\label{eq:hoeffding2}
 2 \left(\left( \frac{L}{L -\epsilon}\right)^{\frac{L^2 -\epsilon L}{{\sigma}^2 + L^2}} \left(\frac{{\sigma}^2}{{\sigma}^2 + \epsilon L}\right)^{\frac{{\sigma}^2+\epsilon L}{{\sigma}^2+L^2}}\right)^N \leq \delta.
\end{equation}
In this expression $\sigma^2$ is a bound on the variance $\Vbb[q_\jbf]$ and $L$ is a bound on the length of the interval in which $q_\jbf$ lies. Given $\sigma^2$ and $L$, there are two ways to apply this inequality. It can either be solved (numerically) for $\epsilon$, given fixed $N$ and $\delta$, or it can be solved for $N$ given $\epsilon, \delta$. In any case, it provides a direct relation between the number of required sequences $N$ and the confidence parameters $\epsilon, \delta$, given $L$ and $\sigma^2$. So in order to apply \eqref{eq:hoeffding2}, the bounds $L$ and $\sigma^2$ are needed.

In the next section we will present a sharp bound $\sigma^2$ on the variance of the sequence purity $\Vbb[q_\jbf]$. This bound is the key ingredient in using \eqref{eq:hoeffding2} and it is the main contribution of this paper.
 
\subsection{Bound on the variance of the sequence purity}
\label{subsec:var_bound}
In this section we present a bound $\sigma^2$ on the variance of the sequence purity $\Vbb[q_\jbf]$ that is valid under the following assumptions:
\begin{enumerate}[leftmargin=1.5\parindent,rightmargin=0.5\parindent,itemsep=2pt]
	\item The gate set under investigation is the $d$-dimensional Clifford group, denoted $\Cliff{d}$. Here $d = 2^q$ for a $q$-qubit system. This assumption is necessary for deriving a variance bound. Even though the expected value $\Ebb[q_\jbf]$ of the URB sequence purity is independent of the chosen gate set (as long as it is a unitary 2-design), the variance is not. The Clifford group was chosen as the default gate set.
	
	\item Gate errors are independent of the gate. This is known as the gate-independent error model. In this model, the implemented noisy gate is $\tilde{\Gc} = \Gc \Lambda$, where $\Gc \in \Cliff{d}$ is the ideal Clifford gate and $\Lambda$ is an arbitrary quantum channel describing the noise. Crucially, $\Lambda$ does not depend on the specific gate $\Gc \in \Cliff{d}$. This is assumption is necessary for deriving the fit model for URB \cite{Wallman2015a}. Consequently our variance bound also employs this assumption. The URB protocol has not been analyzed in a gate dependent noise setting.
	
	\item The noise map $\Lambda$ is assumed to be unital if $q \geq 2$ (or equivalently if $d \geq 4$). A quantum channel $\Lambda$ is unital if the maximally mixed state is a fixed point of the map, $\Lambda(I) = I$. If the system under investigation $\Hc$ is a single-qubit system ($d = 2$), than this assumption is not necessary. Our result thus holds for any single-qubit quantum channel $\Lambda$. This assumption enters in our derivation of the variance bound. It is not a fundamental assumption but rather a condition under which we were able to derive a useful, sharp bound.
\end{enumerate}
At this point, we emphasize that $\Vbb[q_\jbf]$ is the between-sequence variance, i.e., the variance of $q_\jbf$ due to the randomly sampled sequence indexed by $\jbf$. In particular this means that given a sequence $\jbf$, we assume that $q_\jbf$ can be determined with arbitrary precision. In reality $q_\jbf$ can only be estimated due to the probabilistic nature of quantum mechanics by taking many single-shot measurements of the same sequence $\jbf$. In \autoref{subsec:estimation} we relax this assumption by splitting the total variance into the sum of the between-sequence variance (the variance due to randomly sampled $\jbf$) and the within-sequence variance (the variance due to uncertainty in $q_\jbf$ for fixed $\jbf$). 

Under the assumptions stated above, the following bound on the variance $\Vbb[q_\jbf]$ is derived (see \autoref{thm:var_bound} in Appendix~\ref{app:variance_bound})

\begin{equation}
\label{eq:var_bound}
\begin{split}
&\Vbb[q_\jbf^{(K)}] \leq  \sigma^2 \\ &\quad=  \frac{1-u^{2(m-1)}}{1-u^2} (1 - u)^2 \Big[ c_1(d) 
 +  c_2(d) \| \bar{E}_\spam \|_\infty^2 \\ &\qquad+ c_3(d) \| \bar{\rho}_\spam \|_1^2 \Big]    +    \| \bar{\rho}_\spam \|_1^2 \| \bar{E}_\spam \|_\infty^2 ,   
\end{split}
\end{equation}
which is independent of the used implementation (single or two-copy, corresponding to $K = 1,2$). Here $u$ is the unitarity of $\Lambda$, $m$ is the sequence length,  $\| \bar{E}_\spam \|_\infty^2$, $\| \bar{\rho}_\spam \|_1^2$ are quantities depending on the quality of state preparation and measurement and $c_i$ are constants that solely depend on the dimension $d$. The values of $c_i$ for small $d$ are tabulated in \autoref{tab:const_dims}. For precise definitions of these quantities, see \autoref{thm:var_bound} in Appendix~\ref{app:variance_bound}. The error operators have the following definitions:
\begin{equation}
\label{eq:alpha}
\begin{aligned}
\bar{\rho}_\spam &= \bar{\rho} - \frac{\Tr[\bar{\rho}_\ideal  \bar{\rho}]}{\|\bar{\rho}_\ideal\|_2^2} \bar{\rho}_\ideal = \bar{\rho} - (d^2-1)\Tr[\bar{\rho}_\ideal  \bar{\rho}] \bar{\rho}_\ideal, \\
\bar{E}_\spam &= \bar{E} - \frac{\Tr[\bar{E}_\ideal  \bar{E} ]}{\| \bar{E}_\ideal \|_2^2} \bar{E}_\ideal = \bar{E} - \frac{\Tr[\bar{E}_\ideal  \bar{E} ]}{d^2-1} \bar{E}_\ideal,
\end{aligned}
\end{equation}
where the ideal operators $\bar{\rho}_\ideal, \bar{E}_\ideal$ are defined in \eqref{eq:ideal_operators} and a bar over the measurement operator indicates its traceless component $\bar{E} = E - \frac{\Tr[E]}{d^2} I$ (as defined in \eqref{eq:traceless_measurement}). Recall that $\bar{\rho}$ was defined as the difference between two states (\eqref{eq:rhobar}). The error operators are defined in such a way that they are orthogonal to the ideal operators with respect to the Hilbert-Schmidt inner product
\begin{equation}
\Tr[\bar{\rho}_\spam \bar{\rho}_\ideal] = \Tr[\bar{E}_\spam \bar{E}_\ideal] = 0.
\end{equation}
The norms on the error operators are the trace norm and operator norm respectively, defined for all $A \in \Lc(\Hc\otimes \Hc)$ as 
\begin{equation}
\begin{split}
\| A \|_1 &=      \Tr[ \sqrt{A^\dagger A}] = \sum_i s_i(A),\\
\| A \|_\infty &= \sup_{0 \neq x \in \Hc^\tp{2}}  \frac{\|Ax\|_2}{\|x\|_2} = \max_i\{ s_i(A)\},
\end{split}
\end{equation}
with $s_i(A)$ the $i$-th singular value of $A$ and $\| x \|_2$ the euclidean norm on $\Hc^\tp{2}$. Note that in the single-copy case the quantities $\| \bar{\rho}_\spam \|_1^2, \| \bar{E}_\spam \|_\infty^2$ as defined in \eqref{eq:alpha} are to be estimated  using $\bar{\rho}_\mathrm{eff}$ and $\bar{E}_\mathrm{eff}$ as defined in \eqref{eq:state_12}.

\begin{table}
	\caption{Evaluation of the constants $c_i(d)$ for various small-dimensional systems. The last row indicates the asymptotic behavior.}
	\ra{1.5}
	\centering
	\begin{tabularx}{\linewidth}{ C{1} C{1} C{1} C{1}}
		\toprule
		$d$  & $c_1(d)$ & $c_2(d)$ & $c_3(d)$ \\ 
		\midrule
		$2$ & $\frac{11}{12}$   & $\frac{13}{9}$  & $\frac{5}{2}$  \\ 

		$4$  & $\frac{179}{60}$ & $54.675$ & $48.053$  \\ 

		$8$  & $1.6322$ & $81.445$ & $119.31$  \\ 

		$16$  & $1.1443$ & $110.64$ & $296.88$  \\ 
	
		$32$  & $1.0354$  & $173.80$  & $891.69$  \\ 

		$\rightarrow \infty$ & $\Oc(1)$  & $\Oc(d)$  & $\Oc(d^2)$  \\ 
		\bottomrule 
	\end{tabularx}
	\label{tab:const_dims}
\end{table}

The variance bound of \eqref{eq:var_bound} has some appealing qualitative features. The first feature is that the first term is proportional to $(1-u)^2$. This means that the first term goes to zero quadratically as the unitarity $u$ of the error map $\Lambda$ approaches 1. The fact that the second term is constant with respect to both $u$ and $m$ is unavoidable, as will be discussed in \autoref{subsec:spam_dependence}. The second appealing feature is the fact that the bound is asymptotically independent of the sequence length $m$. Thus the variance bound is useful in any regime of $m$. In \autoref{sec:discussion} the dependence of the variance bound and the resulting number of sequences on various parameters is discussed in greater detail. 

In the next section we present a bound $L$ in the length of the interval in which the sequence purity $q_\jbf$ lies. This is the final ingredient needed in order to apply \eqref{eq:hoeffding2}. 

\subsection{Bound on the interval of the sequence purity}
\label{subsec:interval_length}
In this section we present the improved bound $L$ on the length of the interval in which the sequence purity $q_\jbf^{(K)}$ lies. Even though the actual interval depends on $K$, the length of these intervals is the same. Thus the bound $L$ on the interval length of the sequence purity is independent of the implementation indexed by $K$. The improved bound is derived under the mild assumption that
the experimental control is sufficiently good such that $\Tr[\bar{\rho}_\ideal  \bar{\rho}] \geq 0$ and $\Tr[\bar{E}_\ideal  \bar{E}] \geq 0$ (analogous assumption holds for the single-copy input and measurement operators). These conditions are satisfied only if the conditions
\begin{align}
\label{eq:26}
\Tr[ \rho_\ideal \rho] &\geq \Tr[ \hat{\rho}_\ideal \rho], & \Tr[ \hat{\rho}_\ideal \hat{\rho}] \geq \Tr[ \rho_\ideal \hat{\rho}], \\
\label{eq:27}
\Tr[E \bar{\rho}_\ideal] &\geq 0
\end{align}
are satisfied. \eqref{eq:26} can be interpreted as requiring that the implemented states $\rho$, $\hat{\rho}$ have more overlap with their corresponding ideal state than with the noncorresponding ideal states. \eqref{eq:27} is equivalent to $\Tr[\bar{E} \bar{E}_\ideal] \geq 0$ since $\bar{E}_\ideal = (d^2-1) \bar{\rho}_\ideal$ and $\Tr[\bar{\rho}_\ideal \bar{E}] = \Tr[E \bar{\rho}_\ideal]$. \eqref{eq:27} has the interpretation that the measurement $\{M, I-M\}$ associated with the observable $E = 2M - I$ assigns the correct outcome ($+1$ for $\rho_\ideal$ and $-1$ for $\hat{\rho}_\ideal$) with at least probability $\half$, or alternatively, that the measurement can correctly discriminate the maximally mixed state on the symmetric subspace ($\rho_\ideal$) from the maximally mixed state on the anti-symmetric subspace ($\hat{\rho}_\ideal$). These are very reasonable assumptions for any practical quantum information device.

In \autoref{lem:interval_length} of Appendix~\ref{app_sub:statistical_results_proof} we show that under the stated assumption, the sequence purity lies in the interval
\begin{align}
q_\jbf^{(1)} &\in [0, 1 +  \| \bar{\rho}_\spam \|_1 +  \| \bar{E}_\spam \|_\infty + \| \bar{\rho}_\spam \|_1\| \bar{E}_\spam \|_\infty], \\
q_\jbf^{(2)} &\in [- \| \bar{\rho}_\spam \|_1 - \| \bar{E}_\spam \|_\infty - \| \bar{\rho}_\spam \|_1\| \bar{E}_\spam \|_\infty, 1].
\end{align}
Therefore it follows that
\begin{equation}
\label{eq:interval_length}
L = 1 + \| \bar{\rho}_\spam \|_1 +  \| \bar{E}_\spam \|_\infty + \| \bar{\rho}_\spam \|_1\| \bar{E}_\spam \|_\infty
\end{equation}
for both implementations. The idea of the proof of \autoref{lem:interval_length} is to decompose the input and measurement operators $\bar{\rho}$ and $\bar{E}$ into their ideal and error components according to \eqref{eq:alpha}. This gives rise to four terms. The ideal term $\Tr[E_\ideal \Gc_\jbf^\tp{2} (\bar{\rho}_{\ideal})]$ can be bounded in the interval $[0,1]$. The other terms are then bounded in magnitude using H\"older's inequality, which contributes the last three terms in \eqref{eq:interval_length}. 

\subsection{Examples}
\label{subsec:examples}
Perhaps the best way to gain insight in the use of \eqref{eq:hoeffding2}, \eqref{eq:var_bound} and \eqref{eq:interval_length} is by example. In \autoref{ex:2} we calculate the required number of sequences for a fixed choice of all relevant parameters. In \autoref{ex:3} we simulate a URB experiment using fixed number of sequences and compute the confidence interval around each estimate $\bar{q}_m$. We compare the results of these examples with a previously known bound (first used in \cite{Magesan2012a}). This bound does not use the variance, but just uses the boundedness of the sequence purity $q_\jbf$.  It claims that $\Pbb[| \bar{q}_m - \Ebb[q_\jbf]| \geq \epsilon] \leq \delta$, whenever \cite{Hoeffding1963}
\begin{equation}
\label{eq:Hoeffding1}
2 e^{-2N\frac{\epsilon^2}{L^2}} \leq \delta.
\end{equation}
The number of sequences $N$ is merely a function of the confidence parameters $\epsilon$,  $\delta$ and the interval length $L$. In particular it does not depend on the variance of $q_\jbf$.

\begin{example}
	\label{ex:2}
	Suppose that a URB experiment is performed on the single-qubit Clifford group ($d=2$). The choice of implementation (single-copy or two-copy) is irrelevant for this example since both the variance bound \eqref{eq:var_bound} and the interval length bound \eqref{eq:interval_length} are independent of the choice of implementation. The only difference in practice is how to estimate the SPAM parameters $\| \bar{\rho}_\spam \|_1^2, \| E_\spam \|_\infty^2$. Furthermore suppose that an priori estimate of the unitarity is $u = 0.98$ and an estimate for the SPAM parameters is $\| \bar{\rho}_\spam \|_1^2 = \| E_\spam \|_\infty^2 = 0.02$. Then, after choosing appropriate sequence lengths to use in the experiment, an upper bound on the variance as a function of the sequence length can be computed using \eqref{eq:var_bound}. The interval length can be bounded using \eqref{eq:interval_length}. Using $\| \bar{\rho}_\spam \|_1^2 = \| E_\spam \|_\infty^2 = 0.02$, this yields $L = 1.02 + 0.2\sqrt{2} \approx 1.303$. 
	Finally, choosing an interval $\epsilon$ and confidence $\delta$, \eqref{eq:hoeffding2} gives the required number of sequences $N$ (at fixed length $m$). Concretely, setting $\epsilon = 0.02$, $\delta = 0.01$ and all other parameters as discussed, the number of sequences required for sequences of length $m = 10$, is $N = 242$. For sequence length $m = 30$, the required number is $N = 366$, whereas $m = 100$ requires $N = 452$. The long sequence length limit (when $u^{2(m-1)} \ll 1$), yields $N = 457$. 
	
	Let us compare these numbers with the previously known bound \eqref{eq:Hoeffding1} that does not use the variance of $q_\jbf$. Given our choices of $\epsilon = 0.02$, $\delta = 0.01$ and $\| \bar{\rho}_\spam \|_1^2 = \| E_\spam \|_\infty^2 = 0.02$  (from which $L = 1.02 + 0.2\sqrt{2} \approx 1.303$ is computed using \eqref{eq:interval_length}), the bound \eqref{eq:Hoeffding1} yields $N = 11242$ required sequences. We emphasize that this number is independent of $u$ or $m$. In this scenario, our bound gives approximately two orders of magnitude improvement.
\end{example}

\begin{figure}
	\centering
	\includegraphics[width=\linewidth]{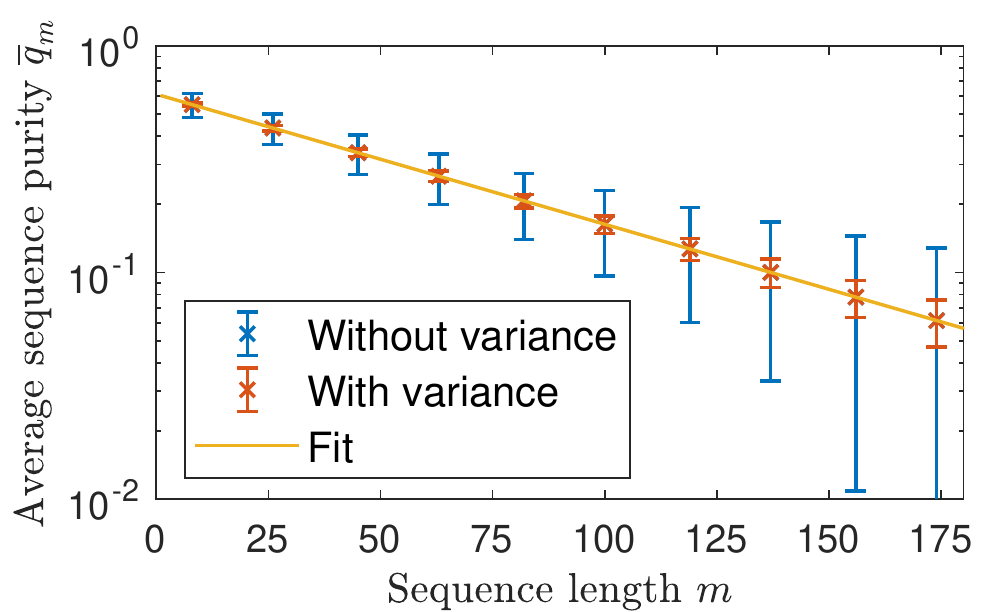}
	\caption{Comparison of the $99\%$ confidence intervals around the average sequence purity $\bar{q}_m$ calculated with and without our variance bound at several different sequence lengths. The plot is based on a simulated URB experiment of the single-qubit Clifford group with $N = 250$ samples per sequence length $m$. The empirical average sequence purity $\bar{q}_m$ (marked with a cross) is plotted versus the sequence length $m$ on a semilogarithmic scale. The larger (blue) bars indicate the $99\%$ confidence interval without variance (\eqref{eq:Hoeffding1}) and the smaller (red) bars indicate the $99\%$ confidence interval of \eqref{eq:hoeffding2} based on our sharp variance bound \eqref{eq:var_bound}. Here we used an \textit{a priori} estimates of the unitarity and SPAM parameters of $u = 0.98$ and $\| \bar{\rho}_\spam \|_1^2 = \| E_\spam \|_\infty^2 = 0.02$ respectively. Then \eqref{eq:interval_length} yields $L = 1.02 + 0.2\sqrt{2}$.
	For completeness, a least-squares fit according to the model $\bar{q}_m = B u^{m-1}$ (see \eqref{eq:fit}) is shown in the yellow solid line. This yields $u \approx 0.987$. }	  
	\label{fig:simulation}
\end{figure}

\begin{example}
	\label{ex:3}
	In \autoref{fig:simulation} we compare the $99\%$ confidence intervals $\epsilon$ (for fixed $N = 250$ and $\delta = 0.01$) around the empirical average sequence purity $\bar{q}_m$ calculated with and without our variance bound at several different sequence lengths. The empirical average sequence purity $\bar{q}_m$ data is based on a simulated single-qubit Clifford URB experiment.  
	The length of the confidence interval $\epsilon$ without variance (larger blue bars) is computed from \eqref{eq:Hoeffding1}. Then the choice of $N = 250$ and $\delta = 0.01$ yields $\epsilon = 0.134$. On the other hand, the length of the confidence interval $\epsilon$ with variance (smaller red bars in the plot) is computed from \eqref{eq:hoeffding2} by solving the equation for $\epsilon$, using our sharp variance bound \eqref{eq:var_bound}. In the evaluation of \eqref{eq:var_bound}, the \textit{a priori} estimates $u = 0.98$ and $\| \bar{\rho}_\spam \|_1^2 = \| E_\spam \|_\infty^2 = 0.02$ were used. Then \eqref{eq:interval_length} yields $L = 1.02 + 0.2\sqrt{2}$. Using our sharp variance bound, the values of the confidence interval vary between $\epsilon = 0.019$ (for $m = 8$) and $\epsilon = 0.029$ (for $m = 174$). This is approximately an order of magnitude larger than the confidence interval without variance $\epsilon = 0.134$.
	
	In this simulated experiment the Clifford gates are implemented with a fixed error channel $\Lambda$ that is generated by taking a convex combination of the identity channel (with high weight) and a random CPTP map (sampled using QETLAB \cite{QETLAB}). Similarly, the noisy input states and measurement operator are simulated by taking a convex combination of the ideal operators and randomly generated operators (generated using QETLAB). 
	For this particular realization of an error map $\Lambda$, the data points seem to be even more accurate than our confidence interval might suggest based on their proximity to the fit. This is due to the fact that this particular error channel is well-behaved. We emphasize that our bound is valid for any unital or single-qubit error map. In particular this means that our bound is valid for the worst case realizations of $\Lambda$. It is unclear what error map $\Lambda$ maximizes the variance of the sequence purity.
	
	We emphasize that the point of this simulated example is not to prescribe a direct method for extracting the confidence in the unitarity, as this generally depends on the fitting model and the way the uncertainty in the average sequence purity are propagated into the uncertainty of the unitarity. Moreover, more advanced statistical tools may be used to extract the unitarity from the obtained (in this case simulated) data, like \cite{Epstein2014,Hincks2018}. The goal of this example is to illustrate the significant gain in confidence of the average sequence purity when the simple concentration inequalities of Hoeffding are applied \cite{Hoeffding1963}. 
	The point is that the additional knowledge of a variance bound on the underlying distribution of the sequence purity $q_\jbf$ can be used by statistical tools to extract the unitarity with improved confidence. 
\end{example}

In the next section we explore the behavior of our bound in various parameter regimes.

\begin{figure}
	\centering
	\includegraphics[width=\linewidth]{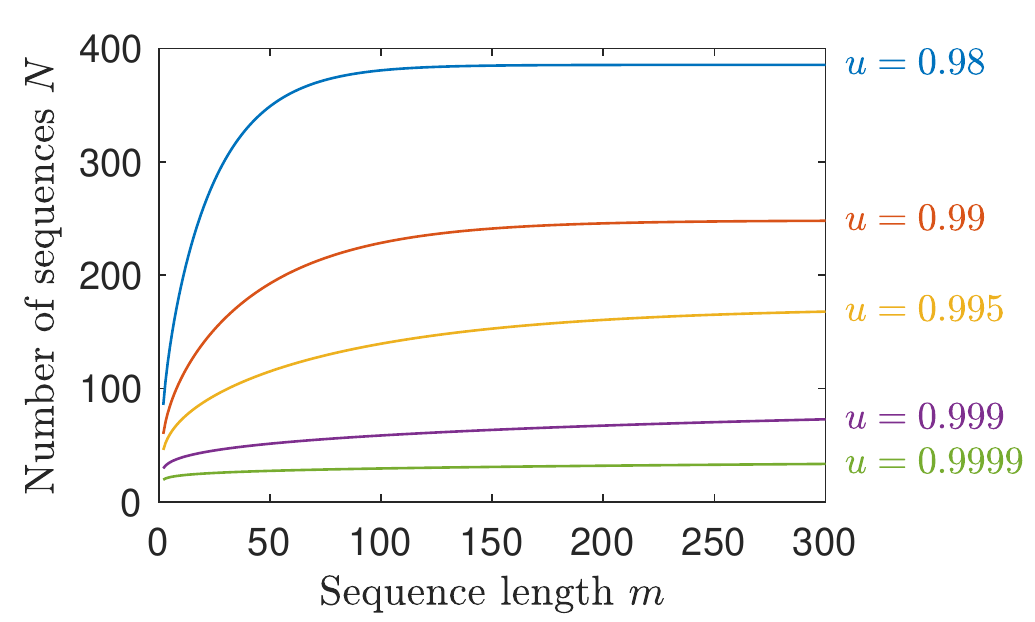}
	\caption{Number of sequences $N$ versus the sequence length $m$ for various values of the unitarity $u$ when benchmarking the single-qubit Clifford group ($d=2$). Confidence parameters are $\epsilon = 0.02$ and $\delta = 0.01$. The SPAM parameters are  $\| \bar{\rho}_\spam \|_1^2 = \| E_\spam \|_\infty^2 = 0$. By \eqref{eq:interval_length} then $L = 1$ is used. The number of sequences is asymptotically independent of the sequence length. This is consistent with our variance bound \eqref{eq:var_bound}.  }
	\label{fig:N_vs_m}
\end{figure}

%----------------------------------------------------------------------------------------
%	SECTION 3: Discussion
%----------------------------------------------------------------------------------------
\section{Discussion}
\label{sec:discussion}
This section is devoted to discussing the variance bound and the interval length of the sequence purity in more detail. In particular we discuss the variance bound in several different parameter regimes in more detail and aim to provide a better understanding of the parameters that ultimately determine the statistical confidence of the measurements. In \autoref{subsec:dependence_on_u_m} we discuss the dependence of the variance bound \eqref{eq:var_bound} on the unitarity $u$ and the sequence length $m$. In \autoref{subsec:spam_dependence} we discuss the dependence on the SPAM parameters $\| \bar{\rho}_\spam \|_1^2 $ and $\| \bar{E}_\infty \|_1^2 $. Here we also show by example that the variance of the sequence purity does not go to zero in the presence of SPAM errors. In \autoref{subsec:dim_dependence} the dependence of the variance bound on the system size is discussed.

\subsection{Dependence on unitarity and sequence length}
\label{subsec:dependence_on_u_m}

First, we discuss the dependence of the number of required sequences $N$ on the sequence length $m$. In \autoref{fig:N_vs_m} this dependence is plotted for various values of $u$ in the absence of SPAM errors (that is, $\| \bar{\rho}_\spam \|_1^2 = \| E_\spam \|_\infty^2 = 0$). The confidence parameters were fixed at $\delta = 0.01$ and $\epsilon = 0.02$. It can be seen from the figure that $N$ approaches a constant as $m$ increases. This is consistent with our variance bound \eqref{eq:var_bound}, where the factor depending on $m$ is 
\begin{equation}
\frac{1 - u^{2(m-1)}}{1 - u^2} (1-u)^2.
\end{equation}
This approaches a constant in the limit of large sequence lengths. This limit is approximately achieved when $u^{2(m-1)} \ll 1$. The exact limit is given by
\begin{equation}
\lim\limits_{m \rightarrow \infty} \frac{1 - u^{2(m-1)}}{1 - u^2} (1-u)^2 = \frac{1-u}{1+u}.
\end{equation}
In the presence of SPAM errors, the asymptotic constant is larger than in its absence, but the behavior is similar. Since the variance approaches a constant, so does the required number of sequences for fixed values of the confidence parameters. From here on out, the `large sequence limit' means the regime of $m$ where $u^{2(m-1)} \ll 1$ so that the variance bound (and thus the number of sequences) is approximately independent of $m$.

\begin{figure}
	\centering
	\includegraphics[width=1\linewidth]{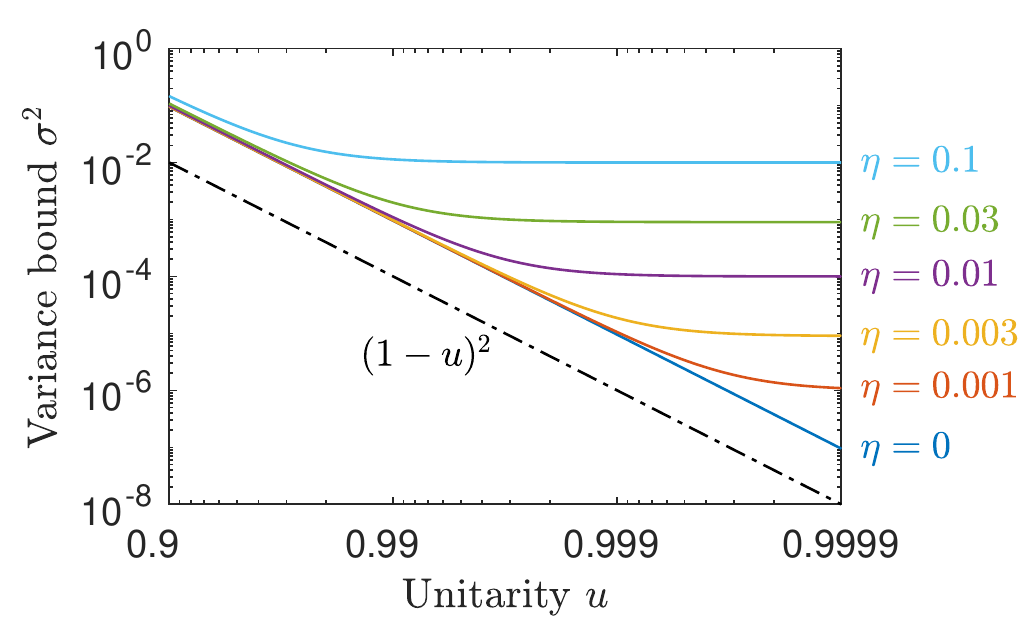}
	\caption{Semilogarithmic plot of the variance bound $\sigma^2$ as a function of the unitarity $u$ for various magnitudes of SPAM errors in the large sequence limit for single-qubit Clifford URB ($d=2$). The black dash-dotted line is a reference line plotting $\sigma^2 = (1-u)^2$. The differently colored solid lines indicate the various magnitudes of SPAM errors, where $\| \bar{\rho}_\spam \|_1^2 = \| \bar{E}_\spam \|_\infty^2 = \eta$. There are two regimes. For small SPAM errors and small $u$, the variance scales as $(1-u)^2$, whereas for nonzero SPAM errors and large $u$, the variance approaches a constant. }
	\label{fig:var_vs_u}
\end{figure}

Second we discuss the dependence of the variance bound on the unitarity $u$. In \autoref{fig:var_vs_u} the variance bound $\sigma^2$ is plotted as a function of the unitarity $u$ for various values of SPAM errors in the long sequence length limit. This figure shows two regimes. In the regime of low unitarity and small SPAM error, the variance is proportional to $(1-u)^2$. This is consistent with \eqref{eq:var_bound}, where the variance is dominated by the first term in this regime. However, for nonzero SPAM error and large unitarity, this behavior transitions into a constant variance. In this regime, the variance is dominated by the second, constant term (independent of $u$) in \eqref{eq:var_bound}.

The number of required sequences $N$ shows qualitatively similar behavior, but there are differences. This is due to the fact that $N$ is a nonlinear function of $\sigma^2$. In the regime of constant variance, the number of sequences is also constant. In the regime where the variance bound is proportional to $(1-u)^2$, the number of sequences also decreases as $N$ increases, but the rate depends also on the choice of $\epsilon$.

\subsection{Dependence on SPAM parameters}
\label{subsec:spam_dependence}
\begin{figure}
	\centering
	\includegraphics[width=1\linewidth]{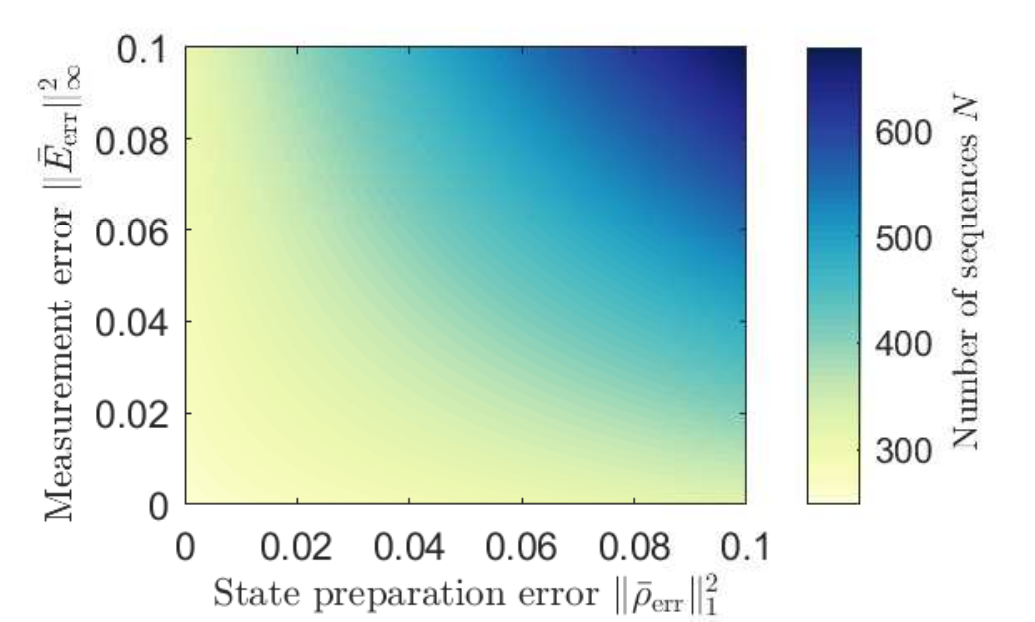}
	\caption{Color plot of the number of sequences $N$ as a function of the SPAM parameters $\| \bar{\rho}_\spam \|_1^2$ and $\| E_\spam \|_\infty^2$ in the large sequence length limit for single-qubit Clifford URB ($d=2$). The parameters $u = 0.99$ and $\epsilon = 0.02$, $\delta = 0.01$ were used. This plot illustrates the sensitivity of our result to SPAM errors. In particular, the number of sequences increases most significantly when both state preparation and measurement errors are large. }
	\label{fig:spam_colorplot}
\end{figure}
In \autoref{fig:spam_colorplot} we show a color plot of the number of sequences $N$ as a function of the SPAM parameters $\| \bar{\rho}_\spam \|_1^2$ and $\| E_\spam \|_\infty^2$ for fixed unitarity $u$ in the limit of large sequences. The plot illustrates the qualitative dependence of $N$ on the magnitude of these SPAM parameters. There are two ways that the SPAM parameters contribute to the number of required sequences $N$. First, the variance bound $\sigma^2$ depends on the SPAM parameters $\| \bar{\rho}_\spam \|_1^2$ and $\| E_\spam \|_\infty^2$ (see \eqref{eq:var_bound}). Second, the interval length bound $L$ depends on the square root of these parameters, $\| \bar{\rho}_\spam \|_1$ and $\| E_\spam \|_\infty$ (see \eqref{eq:interval_length}. Both these bounds increase as the SPAM parameters increase. From the concentration inequality \eqref{eq:hoeffding2}, it follows that the required number of sequences $N$ for fixed confidence parameters grows with increasing variance and interval length.
Both these effects have qualitatively similar behavior. This translate into the illustrated dependence of the number of sequences $N$ on the SPAM parameters in \autoref{fig:spam_colorplot}. In particular, the number of sequences most strongly depends on the product between the two, showing a larger required number in the area where the product $\| \bar{\rho}_\spam \|_1^2 \| E_\spam \|_\infty^2$ is largest.

The variance bound of \eqref{eq:var_bound} has a constant term $\| \bar{\rho}_\spam \|_1^2 \| E_\spam \|_\infty^2$, independent of the unitarity $u$ and sequence length $m$. In particular this means that the variance bound is nonzero in the presence of SPAM error for all sequence lengths $m$ even in the limit of ideal gates $\Lambda \rightarrow \Ic$. This behavior is also seen in \autoref{fig:var_vs_u}. We argue that this is fundamental to the URB protocol, by showing that the actual variance of the sequence purity $\Vbb[q_\jbf]$ also has this behavior even when ideal gates are considered. This is done in \autoref{ex:nonzero_variance}. In this example we construct noisy operators $\bar{\rho}$ and $\bar{E}$ such that the average sequence purity $q_\jbf$ is not constant over all possible ideal gate sequences $\Gc_\jbf$ (i.e., sequences with $\Lambda = \Ic$). Thus there exists an error channel (namely $\Lambda = \Ic$) and noisy operators (namely those constructed in \autoref{ex:nonzero_variance}) such that the variance, and thus the required number of sequences, is nonzero. This behavior is in contrast with standard RB, where all RB gate sequences compose to the identity when $\Lambda = \Ic$  (in the RB protocol, a global inverse gate is applied after each sequence). Therefore in standard RB, the survival probability does not depend on the sequence in the absence of gate errors and hence the variance is zero.

\begin{example}
	\label{ex:nonzero_variance}
	Consider a URB experiment where the gate set under investigation is the single-qubit Clifford group  $\Cliff{2}$. Suppose that the gates are implemented perfectly, i.e, $\Lambda = \Ic$. Furthermore assume that the state and measurement operators are given by
	\begin{equation}
	\rho, \hat{\rho} = \frac{I\otimes I \pm X \otimes X}{4}, \quad\quad\mbox{and}\quad\quad E = {X \otimes X},
	\end{equation}
	where $I$ is the identity and $X$ is the Pauli-$X$ matrix on the single-qubit Hilbert space $\Hc \simeq \C^2$. Since $\Lambda = \Ic$, the sequence $\Gc_\jbf$ of $m$ independently and uniformly distributed Clifford gates reduces to a single Clifford gate $\Gc_{i}$ uniformly drawn from $\Cliff{2}$. The group $\Cliff{2}$ has 24 elements, eight of which map $X \mapsto \pm X$. Whether such a map sends $X$ to $+X$ or $-X$ is irrelevant, since if $\Gc$ maps $X \mapsto \pm X$ then $\Gc^\tp{2}$ maps $X^\tp{2} \mapsto X^\tp{2}$ in either case. The other 16 Clifford gates send $X \mapsto \pm Y$ or $X \mapsto \pm Z$, where again the sign is irrelevant. Thus, given that $\bar{\rho} = \frac{X\otimes X}{4}$, a fraction $\frac{8}{24}$ of all sequences $\Gc_\jbf$ will satisfy $\Gc_\jbf^\tp{2}(\bar{\rho}) = \frac{X\otimes X}{4}$ while the others will send $\bar{\rho}$ either to $\frac{Y \otimes Y}{4}$ or $\frac{Z\otimes Z}{4}$.	
	Since $\Tr[E  (\frac{X\otimes X}{4})] = 1$ and $\Tr[E (\frac{Y\otimes Y}{4})] = \Tr[E  (\frac{Z\otimes Z}{4})] = 0$, the following probability distribution on $q_\jbf^{(2)}$ is obtained:
	\begin{equation}
	\Pbb\left[q_\jbf^{(2)} = 1\right] = \frac{1}{3} \quad\quad\mbox{and}\quad\quad \Pbb\left[q_\jbf^{(2)} = 0\right] = \frac{2}{3}.
	\end{equation}
	Clearly then $\Ebb[q_\jbf^{(2)}] = \frac{1}{3}$ and $\Vbb[q_\jbf^{(2)}] = \frac{2}{9} > 0$. This example shows that the variance $\Vbb[q_\jbf]$ of the sequence purity can not go to zero as the unitarity $u \rightarrow 1$.
\end{example}

Given noisy implementations $\bar{\rho}$ and $E$ in the two-copy implementation, the SPAM parameters $\| \rho_\spam \|_1^2$ and $\| \bar{E}_\spam \|_\infty^2$ defined in \eqref{eq:alpha} can in principle be estimated by relating them to  the ideal states and measurements of \eqref{eq:ideal_operators}. In practice, this requires (partial) knowledge of the noisy operators $\bar{\rho}$ and $E$. If a full (tomographic) description of $\rho,\hat{\rho}, E$ is available, then $\| \bar{\rho}_\spam \|_1^2$ and $\| \bar{E}_\spam \|_\infty^2$ can be calculated from the definition \eqref{eq:alpha}. However, if only partial knowledge is available (e.g., a lower bound on state preparation fidelity), then the SPAM quantities need to be bounded.  For example $\| \bar{\rho}_\spam \|_1^2$ can be upper bounded if the fidelity between $\rho$ ($\hat{\rho}$) and $\rho_\ideal$ ($\hat{\rho}_\ideal$) is known, by application of the Fuchs-Van de Graaff inequality \cite{Fuchs1999}. 
In the single-copy implementation, slightly more work is needed. The SPAM parameters are then defined with respect to $\bar{\rho}_\mathrm{eff}$ and $\bar{E}_\mathrm{eff}$ (\eqref{eq:state_12}). However, only (partial) knowledge of the physical operators $\rho_\Hc$ and $E_\Hc$ are available. Noise on these physical operators needs to be translated to noise on the effective operators $\bar{\rho}_\mathrm{eff}$ and $\bar{E}_\mathrm{eff}$.

\subsection{Dimension-dependent constants}
\label{subsec:dim_dependence}
\begin{figure}
	\centering
	\includegraphics[width=1\linewidth]{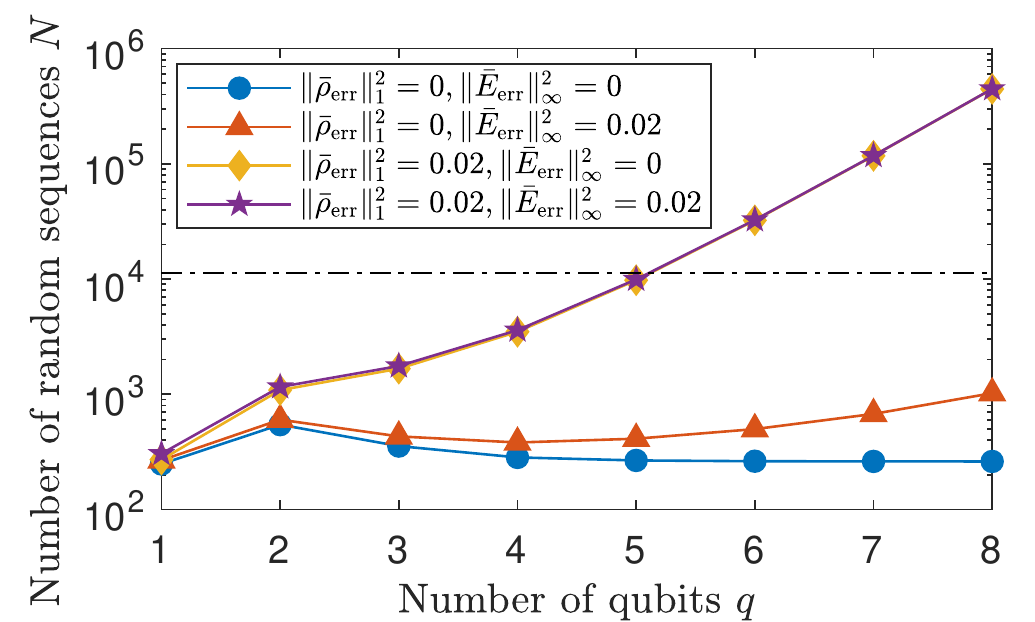}
	\caption{Number of sequences $N$ as a function of the number of qubits $q$ comprising the system for different values of the SPAM parameters. A fixed unitarity $u = 0.99$ and the large sequence length limit are used. The interval bound $L$ is computed using \eqref{eq:interval_length} as a function of the SPAM quantities (see legend). The confidence parameters $\epsilon = 0.02$, $\delta = 0.01$ were used. The dashed line indicates the first-order bound (\eqref{eq:Hoeffding1}) corresponding to $\| \bar{\rho}_\spam \|_1^2 = \| \bar{E}_\spam \|_\infty^2 = 0.02$. For the given confidence and SPAM parameters, our bound gives an improvement of the required number of sequences up to five-qubit systems.  }
	\label{fig:N_vs_q}
\end{figure}

In this section, the dependence of the variance bound \eqref{eq:var_bound} and consequently the number of sequences on the system size  is examined. An undesirable feature of the variance bound is the asymptotic growth of the constants $c_2(d)$ and $c_3(d)$ with the dimension $d = 2^q$ of the $q$-qubit system. This means that for large systems, the bound becomes loose and ultimately vacuous. This is illustrated in \autoref{fig:N_vs_q}, where the number of sequences $N$ is plotted as a function of the system size $q$ on a semilogarithmic scale (for fixed unitarity $u$ and large sequence length $m$). The number of sequences is plotted in the absence of SPAM error, with state preparation or measurement error only and with both errors simultaneously. This is done to distinguish the different contributions of the constants $c_1$, $c_2$ and $c_3$ in \eqref{eq:var_bound}. In the absence of SPAM error, only $c_1$ is relevant. This constant takes its maximum at $q=2$ and asymptotically goes to 1. However with measurement error, the number of sequences needed grows exponentially with the system size. With state preparation error, this expectational growth is even faster. This is consistent with the asymptotic limits of the constants $c_2 = \Oc(d)$ and $c_3 = \Oc(d^2)$, since $d = 2^q$. In particular, this figure shows that our variance bound is prohibitively loose for $q \geq 6$ (assuming $u=0.99$ and large $m$), since the first order bound \eqref{eq:Hoeffding1} yields a smaller number of sequences $N$ as indicated by the black dash-dotted line in the figure.

We believe that the unbounded growth of our variance bound with the system size is an artifact of the proof rather than a fundamental property. The sequence purity $q_\jbf$ is a bounded, discrete random variable, where the bound $L$ does not depend on the dimension $d$. Therefore the exact variance $\Vbb[q_\jbf]$ can not asymptotically grow with the system dimension $d$. The bound of \eqref{eq:var_bound} is, however, sharp enough for practical use in few-qubit systems.

%----------------------------------------------------------------------------------------
%	SECTION 4: Exact statement and proof overview
%----------------------------------------------------------------------------------------
\section{Methods}
\label{sec:methods}
This section gives an high level overview of the methods used for deriving our main result \eqref{eq:hoeffding2} and \eqref{eq:var_bound}. In \autoref{subsec:estimation} we focus on the statistical aspect of our result related to \eqref{eq:hoeffding2}. We also relate the between-sequence variance $\Vbb[q_\jbf]$ (the quantity which we bounded in this work) to the within-sequence variance that arises due to the fact that $q_\jbf$ can be estimated by only collecting a finite sample of single-shot measurements for a given sequence. In \autoref{subsec:fit_model_variance_expression} we discuss the derivation of the fit model (as derived in \cite{Wallman2015a}) and derive an expression for the variance $\Vbb[q_\jbf]$. In \autoref{subsec:variance_bound_proof_sketch} we give an outline of the proof of our variance bound \eqref{eq:var_bound}.

\subsection{Estimation theory}
\label{subsec:estimation}
Ultimately, the URB protocol leads to the complex statistical estimation problem of determining $u$ and the confidence thereof, given a large set of realizations of the sequence purity $q_\jbf$ (for multiple sequence lengths $m$). There are several ways one can go about this problem (see e.g., \cite{Hincks2018} for a Bayesian inference approach). In this paper we take a frequentist approach and determine a confidence interval for the point estimates $\bar{q}_m$ of $\Ebb[q_\jbf]$. These confidence intervals (for different values of $m$) can then be taken into account when fitting the point estimates $\bar{q}_m = B u^{m-1}$ to the fit model. The main contribution of this work is improving the confidence interval of $\bar{q}_m$ by bounding the variance of the sequence purity $q_\jbf$. This variance bound provides strictly more information on the distribution of $q_\jbf$ than what was known before \cite{Wallman2015a} and could therefore also be of value when using other estimation techniques to extract the unitarity $u$ from the set of measurement outcomes.

The intuitive idea is that estimating the mean of a bounded distribution of random variables requires fewer samples when the distribution is narrowly peaked around the mean. Since the variance is a measure of the spread of the distribution, it is intuitive that having knowledge of the variance improves the confidence in the estimate of the mean. This idea is made precise in statistics by concentration inequalities. Here we use a concentration inequality due to Hoeffding \cite{Hoeffding1963}. Given a collection of $N$ independent and identically distributed (i.i.d.) random variables $X_i$, sampled from a distribution on a length $L$ interval with mean $\mu$ and variance $\sigma^2$, the following statement holds for all $0 \leq \epsilon \leq L$
\begin{equation}
\Pbb\left[|\bar{X} - \mu| \geq \epsilon \right] \leq 2 \left[\left[ \frac{L}{L -\epsilon}\right]^{\frac{L^2 -\epsilon L}{{\sigma}^2 + L^2}} \left[\frac{{\sigma}^2}{{\sigma}^2 + \epsilon L}\right]^{\frac{{\sigma}^2+\epsilon L}{{\sigma}^2+L^2}}\right]^N,
\end{equation}
where $\bar{X} = \frac{1}{N} \sum_i X_i$ is the empirical mean. This is essentially \eqref{eq:hoeffding2} using the fact that $q_\jbf$ are i.i.d. random variables. The point is that if one wishes to bound this probability by $\delta$, then upper bounding the right-hand-side by $\delta$ gives a means to relate $N$, $\delta$ and $\epsilon$. Instead of the exact (unknown) variance of the distribution of $q_\jbf$, an upper bound is used.

The fact that our variance bound \eqref{eq:var_bound} depends on the unitarity $u$, the quantity that one ultimately attempts to estimate, may seems strange and circular. But this is actually a feature of statistics, which is more apparent in the Bayesian view. One may have an \textit{a priori} distribution of the unitarity $u$ of the gate set and given some experimental data (the complete URB data set) one can construct a more concentrated \textit{a posteriori} distribution on the unitarity. In the frequentist view, an \textit{a priori} lower bound to the unitarity can be known with very high confidence. Then performing URB will improve the estimate of the unitarity and increase the confidence in this estimate. In principle this procedure can be done by doing several successive URB experiments, further increasing the confidence in the outcome. Note that a first lower bound can always be obtained from the average gate fidelity (by application of \eqref{eq:agf}), which is estimated using standard RB.

Finally there is one subtlety that deserves some attention. The protocol requires the experimenter to measure $\Tr[E \Gc_\jbf^{\otimes 2}({\rho})]$, but actually this is an expectation value of the measurement operator $E$ (a Hermitian observable) given the state $\Gc_\jbf^{\otimes 2}({\rho})$. This expectation value must be learned from multiple single-shot measurements of preparing the state, apply gates and measure. The outcome is inherently probabilistic (with a Bernoulli distribution) by the laws of quantum mechanics and either a click or no click is observed with the probability given by Born's rule. To estimate the expectation value $\Tr[E \Gc_\jbf^{\otimes 2}({\rho})]$, a large number of single-shot measurements must be taken and the proportion of clicks is an estimate $\Tr[E  \Gc_\jbf^{\otimes 2}({\rho})]$. In reality then, there is also some uncertainty in each data point $q_\jbf$, which propagates into increased uncertainty in the average $\bar{q}_m$. So far we have assumed the uncertainty in $\bar{q}_m$ is dominated by the uncertainty due to the randomly sampled sequences and not due to the uncertainty in determining each sequence purity $q_\jbf$. This assumption is motivated by experiments in which it is hard to store many sequences, but easy to repeat single-shot measurements of the same sequence. In these experiments it is then easy to do enough single-shot measurements of each $q_\jbf$, such that the uncertainty in $\bar{q}_m$ is dominated by the uncertainty due to the randomly sampled sequences. This assumption is however not fundamental but is related to classical hardware control of the experimenter. In the next section we will discuss the validity of this assumption, estimate the number of required single-shot measurements and show how this assumption can be dropped if one wishes to explicitly take into account finite sampling uncertainty.

\subsubsection{Finite sampling statistics}
\label{subsubsec:finite_sampling}
In the previous section it was discussed that the quantity $q_\jbf$ is actually not directly accessible, but must be estimated by performing a large number of single-shot measurements. Born's rule states that given a (two-valued) POVM  measurement $\{M, I-M\}$ and a state $\rho$, the probability of getting outcome $1$ (associated with $M$) is given by $\Tr[M \rho]$ and outcome $0$ (associated with $I-M$) is $1-\Tr[M \rho]$. This can be used to construct a probability distribution for a single-shot measurement of $q_\jbf^{(K)}$, given a fixed sequence indexed by $\jbf$. The distribution is determined by the definition of $q_\jbf^{(K)}$ and depends on the choice of implementation. Recall that $q_\jbf$ is calculated using the difference of two states $\bar{\rho} = \half(\rho - \hat{\rho})$. 

Let us denote $\bar{q}_\jbf$ an unbiased estimator for the exact $q_\jbf$ given a fixed sequence indexed by $\jbf$. Then there is uncertainty in $\bar{q}_\jbf$ due to the uniformly distributed random sequences $\jbf$ and due to the fact that $\bar{q}_\jbf$ is itself a random variable for fixed $\jbf$ (since it is an estimator for the exact $q_\jbf$). The contribution of each source of uncertainty can be quantified by the law of total variance \cite{Weiss2005}, which states that
\begin{equation}
\label{eq:law_of_total_variance}
\begin{split}
\Vbb[\bar{q}_\jbf] &= \Ebb[ \Vbb[\bar{q}_\jbf | \jbf] ] + \Vbb[ \Ebb[\bar{q}_\jbf | \jbf] ] \\
&= \Ebb[ \Vbb[\bar{q}_\jbf | \jbf] ] + \Vbb[ q_\jbf ].
\end{split}
\end{equation}
Here the quantity $\Vbb[\bar{q}_\jbf | \jbf]$ is referred to as the within-sequence variance (for the given sequence $\jbf$). It is the variance of the sequence purity $\bar{q}_\jbf$ given fixed $\jbf$ solely due to the finite sampling statistics. The quantity $\Vbb[ q_\jbf ]$ is the between-sequence variance of $q_\jbf$ and is solely due to the fact that the sequences $\jbf$ are sampled from a uniform distribution. This equation expresses that the total variance is the sum of the expected within-sequence variance (expected over the uniformly distributed random sequences) and the between-sequence variance. The quantity $\Vbb[q_\jbf]$ was bounded in this work (\eqref{eq:var_bound}). 

To examine the term $\Ebb[ \Vbb[\bar{q}_\jbf | \jbf] ]$ in \eqref{eq:law_of_total_variance}, an expression or bound on the within-sequence variance $\Vbb[\bar{q}_\jbf | \jbf]$ as a function of the number of single-shot repetitions is required. We will show how this is done for the two-copy implementation, leaving the more cumbersome (but in principle not more difficult) single-copy implementation as an open problem. Define the single-shot random variable by $x_r$, where the subscript $r$ indexes the different single-shot realizations (for $r = 1,..., R$ for some large $R$), by the following distribution:
\begin{equation}
\label{eq:distr}
\Pbb[x_r = y | \jbf] = \begin{cases}
a(1-b), & \mbox{if } y = 1, \\
ab + (1-a)(1-b),  & \mbox{if } y = 0, \\
(1-a)b, & \mbox{if } y = -1. \\
\end{cases}
\end{equation}
Here $a = \Tr[M \Gc_\jbf^\tp{2}(\rho)]$, $b = \Tr[M \Gc_\jbf^\tp{2}(\hat{\rho})]$ and $M = \half (I +  E)$ is the POVM element associated with the two-valued measurement $E$. The outcome $x_r = 1$ is interpreted as measuring a click only for $\rho$, outcome $x_r = 0$ corresponds to a click for both or neither states and outcome $x_r = -1$ is associated with a click only for $\hat{\rho}$. This is indeed the single-shot outcome measurement outcome of a $q_\jbf^{(2)}$ measurement, since 
\begin{equation}
q_\jbf^{(2)} = \Ebb[x_r | \jbf] = a-b = \Tr[E \Gc_\jbf^\tp{2}(\bar{\rho})].
\end{equation}
The natural unbiased estimator of $q_\jbf^{(2)}$ is then given by
\begin{equation}
\bar{q}_\jbf^{(2)} = \frac{1}{R} \sum_{r=1}^R x_r.
\end{equation}
The within-sequence variance $\Vbb[\bar{q}_\jbf^{(2)} | \jbf]$ is related to the variance of $x_r$ (which can be computed given the probability distribution \eqref{eq:distr}) using the fact that $x_r$ are i.i.d. and mutually uncorrelated random variables
\begin{equation*}
\Vbb[\bar{q}^{(2)}_\jbf | \jbf] = \Vbb\left[\frac{1}{R} \sum_{r=1}^R x_r \bigg| \jbf\right] = \frac{1}{R^2}\sum_{r=1}^R \Vbb\left[  x_r | \jbf\right] =  \frac{1}{R} \Vbb[x_r | \jbf].
\end{equation*}
This follows the definition of the variance and linearity of the expected value. The variance of $x_r$ (computed from the distribution \eqref{eq:distr}) is then
\begin{equation}
\Vbb[x_r | \jbf] = (a(1-a) + b(1-b)) \leq \frac{1}{2},
\end{equation}
where the upper bound is trivially obtained by maximizing over $0 \leq a,b \leq 1$. The within-sequence variance thus satisfies
\begin{equation}
\label{eq:within-sequence-var}
\Vbb[\bar{q}^{(2)}_\jbf | \jbf] = \frac{1}{R} (a(1-a) + b(1-b)) \leq \frac{1}{2R}.
\end{equation}
Hence for the two-copy implementation, the total variance is bounded by
\begin{equation}
\Vbb[\bar{q}^{(2)}_\jbf] \leq \sigma^2 + \frac{1}{2R},
\end{equation}
where $R$ is the number of single-shot measurements taken per sequence and $\sigma^2$ is the variance bound of \eqref{eq:var_bound}.

It may seem that the modification of the protocol to use the difference of two states $\bar{\rho}$ means that twice as many single-shot measurements must be taken. This is however not the case \cite{Helsen2017}. To see this, let $\Vbb_\rho$ be the variance associated with a single measurement setting on the state $\rho$. Then for the difference of two states, the variance associated with that measurement satisfies 
\begin{equation}
\Vbb_{\bar{\rho}} = \Vbb_{\half(\rho-\hat{\rho})} \leq \frac{1}{4}(\Vbb_\rho+\Vbb_{\hat{\rho}}) \leq \half \max(\Vbb_\rho, \Vbb_{\hat{\rho}}).
\end{equation}
So to the contrary, fewer sequences are required to get an accurate estimate of $\Tr[E \Gc_\jbf(\bar{\rho})]$ than of $\Tr[E \Gc_\jbf({\rho})]$. This can explicitly be seen in the two-copy implementation, where the within-sequence variance $\Vbb_{\bar{\rho}}[\bar{q}_\jbf | \jbf]$ was computed in \eqref{eq:within-sequence-var}. However, if only a single state $\rho$ were used, then $\Pbb[x_r = 1] = a$ and $\Pbb[x_r = -1] = 1-a$. Therefore the variance $\Vbb_\rho[\bar{q}_\jbf | \jbf] = \frac{1}{R} \Vbb_\rho[x_r | \jbf] = \frac{4a(1-a)}{R} \leq \frac{1}{R}$, which is indeed a factor 2 larger than in \eqref{eq:within-sequence-var}.

\subsubsection{The unbiased estimator of the sequence purity in the single-copy implementation}
\label{subsubsec:unbiased_estimator}
In the single-copy implementation care must be taken in defining an appropriate estimator of $q_\jbf^{(1)}$. Analogously to the above, one can define a random variable $x_r^{PQ}$ associated with a single-shot measurement of $\Tr[E_\Hc^{(Q)} \Gc_\jbf(\bar{\rho}_\Hc^{(P)})]$ for a fixed sequence indexed by $\jbf$, depending on the Pauli's $P$ and $Q$. Then
\begin{equation}
\Ebb[x_{r}^{PQ}|\jbf] = \Tr[E_\Hc^{(Q)} \Gc_\jbf(\bar{\rho}_\Hc^{(P)})],
\end{equation}
so that
\begin{equation}
q_\jbf^{(1)} = \frac{1}{d^2-1} \sum_{P,Q \neq I} \Ebb[x_r^{PQ}|\jbf]^2.
\end{equation}
If we denote $\bar{x}_{PQ} = \frac{1}{R} \sum_{r=1}^R x_r^{PQ}$, then one could try to estimate $q_\jbf^{(1)}$ by 
$\bar{q}_\jbf^{(1)} = \frac{1}{d^2-1} \sum_{P,Q \neq I} \bar{x}_{PQ}^2$. However, this estimate is biased and overestimates the actual value of $q_\jbf^{(1)}$, since
\begin{equation}
\begin{split}
\Ebb[\bar{x}_{PQ}^2|\jbf] &= \Ebb[\bar{x}_{PQ}|\jbf]^2 + \Vbb[\bar{x}_{PQ}|\jbf] \\
&= \Ebb[\bar{x}_{PQ}|\jbf]^2 + \frac{1}{R}\Vbb[x_r^{PQ}|\jbf]. \\
\end{split}
\end{equation}
To remedy this, one can make use of the unbiased estimator
\begin{equation}
\bar{q}_\jbf^{(1)} = \frac{1}{d^2-1} \sum_{P,Q \neq I} \bar{x}_{PQ}^2 - \frac{1}{R} s^2_{PQ},
\end{equation}
where
\begin{equation}
s^2_{PQ} = \frac{1}{R-1} \sum_{r=1}^R (x_r^{PQ} - \bar{x}_{PQ})
\end{equation}
is the unbiased estimate of $\Vbb[x_r^{PQ}|\jbf]$. It is important to take this into consideration when performing a Clifford URB experiment using the single-copy implementation, since overestimating $q_\jbf^{(1)}$ can lead to an overestimate of the unitarity obtained from the experiment.

\subsection{Fit model and variance expression}
\label{subsec:fit_model_variance_expression}
In this section we first briefly review the derivation of the fit model of URB (as derived in \cite{Wallman2015a}), slightly adapted with our modification of a traceless input operator $\bar{\rho}$.  Then we derive an expression for the variance of the sequence purity. We do so using slightly different notation, picking an orthonormal basis for the space of linear operators $\Lc(\Hc)$ (in particular we use the normalized Pauli operators). We can then vectorize any operator with respect to that basis, which we will denote with a braket-like notation $\rho \rightarrow \kket{\rho}$ and $E \rightarrow \bbra{E}$. Quantum channels can then be viewed as matrices on these vectors, $\Ec(\rho) \rightarrow \kket{\Ec(\rho)} = \pmb{\Ec} \kket{\rho}$, where we use boldface notation for the matrix representation of a quantum channel. The Hilbert-Schmidt inner product $\Tr[E^\dagger \rho]$, carries over as the vector inner product with respect to any basis, so that $\Tr[E^\dagger \rho] = \bbraket{E | \rho}$. Finally composition of channels $\Ec_1 \Ec_2 \rightarrow \pmb{\Ec_1} \pmb{\Ec_2}$ carries over as matrix multiplication. This notation is known as the natural representation, Liouville representation, or Pauli transfer matrix representation \cite{Wallman2014,Watrous2017}. See Appendix~\ref{app_subsub:liouville_rep} for more details.

Using this notation, the expected value of the sequence purity $\Ebb[q_\jbf]$ can be written as
\begin{equation}
\begin{split}
\Ebb[q_\jbf] &= \frac{1}{| \Cliff{d} |^m } \sum_\jbf \bbraket{E | \Gcbf_\jbf^\tp{2}  | \bar{\rho}} \\ &=  \bbraket{E | \left( \Gavgbf{2} \Lambdabf^\tp{2} \right)^m |\bar{\rho}},
\end{split}
\end{equation}
where
\begin{equation}
\Gavg{n} = \frac{1}{|\Cliff{d}|} \sum_{\Gc \in \Cliff{d}} \Gc^\tp{n}.
\end{equation}
The key idea behind deriving the fitting model is that $\Gavg{2}$ is the orthogonal projection onto the vector space $W = \Span \{ I, S \} \subset \Lc(\Hc \otimes \Hc)$. This is a result from representation theory of finite groups, see \autoref{lem:projection_onto_trivial_subreps} in Appendix~\ref{app_sub:rep_theory} for details. It is for this reason that the ideal state and measurement operators of \eqref{eq:ideal_operators} are elements of the subspace $W$. The operators $I$ and $S$ do not form an orthogonal basis for $W$, but the following orthonormal basis can be constructed:
\begin{align}
B_1 &= \frac{I}{d} = \sigma_0 \otimes \sigma_0,   \\
B_2 &= \frac{S - \frac{I}{d}}{\sqrt{d^2 -1}}  = \frac{1}{\sqrt{d^2 -1}} \sum_{\sigma \in \Pc^*} \sigma \otimes \sigma,
\end{align}
where $\sigma_0$ is the Hilbert-Schmidt normalized identity on $\Hc$ and $\sigma \in \Pc^*$ are the $d^2-1$ traceless normalized Pauli operators on $\Hc$. Since $\Gavg{2}$ is an orthogonal projection, it follows that $(\Gavg{2})^2 = \Gavg{2}$. Therefore we can rewrite
\begin{equation}
\label{eq:fitmodel_Mc}
\Ebb[q_\jbf] =  \bbraket{E | \Mcbf^{m-1}\Lambdabf^\tp{2}  |\bar{\rho}},
\end{equation}
where $\Mc = \Gavg{2} \Lambda^\tp{2} \Gavg{2}$. It can be shown that $\Mcbf$ (which as only support on $W$) has the following matrix entries \cite{Wallman2015a}
\begin{equation}
\label{eq:M_entries}
\pmb{\Mc} = \begin{bmatrix}
1 & 0 \\
\frac{\|\alpha(\Lambda)\|^2}{\sqrt{d^2-1}} & u(\Lambda) \\
\end{bmatrix},
\end{equation}
in the basis $\{B_1, B_2\}$, with $\alpha$ the nonunitality vector of $\Lambda$ (see \eqref{eq:Liouville_block_decomp} in Appendix~\ref{app_subsub:liouville_rep} for details). In particular this means that $u(\Lambda) = \bbraket{B_2 | \Lambdabf^\tp{2} |B_2}$, which might not be too surprising in view of \eqref{eq:unitary_definition2}. Since the input state $\bar{\rho}$ is traceless and quantum channels are trace preserving, \eqref{eq:fitmodel_Mc} is evaluated as
\begin{equation}
\label{eq:fitmodel}
\Ebb[q_\jbf] =  \bbraket{E | B_2} \bbraket{ B_2 | \bar{\rho}} u^{m-1} = B u^{m-1},
\end{equation}
where the final channel $\Lambda^\tp{2}$ has been absorbed into the state as state preparation error.  
The robustness to state preparation and measurement errors stems from the fact that every component of $\bar{\rho}$ and $E$ outside the subspace $W$ is projected out by the procedure.

In very similar fashion the variance, defined as $\Vbb[q_\jbf] = \Ebb[q_\jbf^2] - \Ebb[q_\jbf]^2$, can be computed. Using $\Tr[A]^2 = \Tr[A^\tp{2}]$, the mixed-product property of the tensor product [i.e., $(A\otimes B)(C \otimes D) = (AC) \otimes (BD)$] and linearity, we write 
\begin{equation}
\begin{split}
\Ebb[q_\jbf^2] &= \frac{1}{| \Cliff{d} |^m } \sum_\jbf \bbraket{E^\tp{2} | \Gcbf_\jbf^\tp{4}  | \bar{\rho}^\tp{2}} \\ 
&=  \bbraket{E^\tp{2} | \left( \Gavgbf{4} \Lambdabf^\tp{4} \right)^m |\bar{\rho}^\tp{2}}\\
&=  \bbraket{E^\tp{2} | \Ncbf^{m-1} \Lambdabf^\tp{4} |\bar{\rho}^\tp{2}},
\end{split}
\end{equation}
where $\Nc = \Gavg{4} \Lambda^\tp{4} \Gavg{4}$, using the fact that $\Gavg{4}$ is also an orthogonal projection (\autoref{lem:projection_onto_trivial_subreps} of Appendix~\ref{app_sub:rep_theory}), and
\begin{equation}
\Ebb[q_\jbf]^2 = \bbraket{E^\tp{2} | (\Mcbf^\tp{2})^{m-1} \Lambdabf^\tp{4}  | \bar{\rho}^\tp{2}}.
\end{equation}
 Putting it together yields the following expression for the variance
\begin{equation}
\label{eq:var_expr1}
\Vbb[q_\jbf] = \bbraket{E^\tp{2} | \Ncbf^{m-1} - (\Mcbf^\tp{2})^{m-1} | \bar{\rho}^\tp{2}},
\end{equation}
where the final channel $\Lambda^\tp{4}$ has again been absorbed into the state as state preparation error.
One of the key ingredients of understanding this expression is finding the subspace of $\Lc(\Hc^\tp{4})$ onto which $\Gavg{4}$ projects. The next section elaborates on this idea.

\subsection{Sketch of proof on variance bound}
\label{subsec:variance_bound_proof_sketch}
In this section we discuss and sketch the main ideas for the proof of our variance bound \eqref{eq:var_bound}. A complete proof is given in Appendix~\ref{app:variance_bound}, \autoref{thm:var_bound}. We actually prove a slightly stronger statement
\begin{equation}
\label{eq:54}
\begin{split}
\Vbb[q_\jbf] \leq& \, \| \bar{\rho}_\spam \|_1^2 \| \bar{E}_\spam \|_\infty^2 +  \frac{1-u^{2(m-1)}}{1-u^2} (1 - u)^2 \times \\
&\Big( \alpha^2 \beta^2 c_1(d) + \alpha^2 c_2(d) \| \bar{E}_\spam \|_\infty^2 + \beta^2 c_3(d) \| \bar{\rho}_\spam \|_1^2 \Big),    \\
\end{split}
\end{equation} 
where 
\begin{align}
\alpha &= \frac{\Tr[\bar{\rho}_\ideal  \bar{\rho}]}{\|\bar{\rho}_\ideal\|_2^2} = (d^2-1)\Tr[\bar{\rho}_\ideal  \bar{\rho}] \\
\beta &= \frac{\Tr[\bar{E}_\ideal  \bar{E} ]}{\| \bar{E}_\ideal \|_2^2} = \frac{\Tr[\bar{E}_\ideal  \bar{E} ]}{d^2-1}.
\end{align}
These quantities arise in the decomposition of the operators $\bar{\rho}, \bar{E}$ into an ideal and error parts as
\begin{equation}
\bar{\rho} = \alpha \bar{\rho}_\ideal + \bar{\rho}_\spam \quad\quad \mbox{and} \quad\quad \bar{E} = \beta \bar{E}_\ideal + \bar{E}_\spam.
\end{equation}
It can be shown that $-1 \leq \alpha, \beta \leq 1$ (see Appendix~\ref{app:variance_bound}, \autoref{lem:optimality_operators}), so that \eqref{eq:54} indeed implies \eqref{eq:var_bound}. The quantities $\alpha,\beta$ are generally unknown to the experimenter and therefore easily eliminated from the variance bound. Finally we remark that the bound on the interval length $L$ (given in \eqref{eq:interval_length}) can also be slightly improved if additional information on $\alpha$ or $\beta$ is known. See Appendix~\ref{app:variance_bound}, \autoref{lem:interval_length} for a precise statement.

Our analysis departs from the expression of the variance \eqref{eq:var_expr1}. First, let us note that fully characterizing the operator $\Nc$ seems infeasible. This was possible for the operator $\Mc$, since it only has support on the two-dimensional subspace $W$. However, the dimension of the support of $\Nc$ (the dimension of the space onto which $\Gavg{4}$ projects) is given by \cite{Zhu2016,Zhu2017,Helsen2018}
\begin{equation}
\label{eq:dim_rge_N}
| \Rge(\Nc) | = 
\begin{cases}
15 & \mbox{if } d=2; \\
29 & \mbox{if } d=4; \\
30 & \mbox{otherwise}. 
\end{cases}
\end{equation}
Therefore calculating the $|\Rge(\Nc)|^2$  matrix entries of $\Ncbf$ seems infeasible. A different approach is thus needed. We use a telescoping series expansion (see \autoref{lem:telescoping} in Appendix~\ref{app_sub:lemma_literature} for a proof) 
\begin{equation}
\label{eq:telescoping}
\Ncbf^{m} - (\Mcbf^\tp{2})^{m} = \sum_{s=1}^{m}  \Ncbf^{m-s}[\Ncbf - \Mcbf^\tp{2}](\Mcbf^\tp{2})^{s-1}
\end{equation}
in \eqref{eq:var_expr1}. The main idea of this is to study the middle operator $\Ncbf - \Mcbf^\tp{2}$ carefully and sharply bound the relevant matrix entries of this operator. The action of $(\Mcbf^\tp{2})^{s-1}$ is well understood because the full 2-by-2 matrix description of $\Mcbf$ is known (given in \eqref{eq:M_entries}). Finally the action of the remaining higher powers $\Ncbf^{m-s-1}$ are bounded more trivially, since less information in computed about $\Ncbf$. Let us make these ideas more precise now.

In the previous it was discussed that the operator $\Mcbf$ only has support on the subspace $W = \Span\{I, S\} = \Span\{B_1, B_2\}$. Therefore the analysis of the variance expression is quite different for the components of $\bar{\rho}$ and $E$ on the subspace $W$ and its orthogonal complement. In fact, this lead to the decomposition of the operators $\bar{\rho}, \bar{E}$ into an ideal and error parts as
\begin{equation}
\bar{\rho} = \alpha \bar{\rho}_\ideal + \bar{\rho}_\spam \quad\quad \mbox{and} \quad\quad \bar{E} = \beta \bar{E}_\ideal + \bar{E}_\spam,
\end{equation}
where the bar over $E$ indicates its traceless component. In fact, the identity component of $E$ does not contribute at all to $q_\jbf$ (and therefore to its mean and variance), because the input operator is traceless and all applied maps $\Gc_\jbf$ are trace preserving. So the traceless ideal components are in the traceless subspace of $W$ (spanned by $B_2$) and the error components are in the orthogonal complement $W^\perp$. In principle, plugging the above expansion into \eqref{eq:var_expr1} yields 16 different terms after distributing the tensor powers in $\bar{\rho}$ and $E$ over the sum. However, 12 factors containing mixed tensor products of ideal and error components (e.g., $\bar{E}_\ideal \otimes \bar{E}_\spam$) vanish. This is due to the structure of the space onto which $\Gavg{4}$ projects (see Appendix~\ref{app_sub:trivial_subreps_Liouville_4} for more details).
Thus we expand \eqref{eq:var_expr1} as
\begin{align}
\label{eq:ideal_ideal}
\Vbb[q_\jbf ] =  \alpha^2 \beta^2 \bbraket{\bar{E}_\ideal^{\otimes 2} | \pmb{\Nc}^{m-1} - (\pmb{\Mc}^{\otimes2})^{m-1} | \bar{\rho}_\ideal^{\otimes 2}}& \\ 
\label{eq:spam_ideal}
+ \alpha^2 \bbraket{\bar{E}_\spam^{\otimes 2} | \pmb{\Nc}^{m-1} - (\pmb{\Mc}^{\otimes2})^{m-1} | \bar{\rho}_\ideal^{\otimes 2}}& \\ 
\label{eq:ideal_spam}
+ \beta^2 \bbraket{\bar{E}_\ideal^{\otimes 2} | \pmb{\Nc}^{m-1} - (\pmb{\Mc}^{\otimes2})^{m-1} | \bar{\rho}_\spam^{\otimes 2}}& \\ 
\label{eq:spam_spam}
+ \bbraket{\bar{E}_\spam^{\otimes 2} | \pmb{\Nc}^{m-1} - (\pmb{\Mc}^{\otimes2})^{m-1} | \bar{\rho}_\spam^{\otimes 2}}&.
\end{align} 
Each of these terms is bounded separately. Here we will demonstrate the ideas of our proof using the term of \eqref{eq:spam_ideal}. The two terms \eqref{eq:ideal_ideal} and \eqref{eq:ideal_spam} are similar (only a few technical details are different; see \autoref{thm:var_bound} in Appendix~\ref{app_sub:statistical_results_proof} for precise treatment of all terms). Using the telescoping series \eqref{eq:telescoping} term \eqref{eq:spam_ideal} can be written as
\begin{equation}
\begin{split}
(\ref{eq:spam_ideal}) &= \alpha^2\sum_{s=1}^{m-1}  \bbraket{\bar{E}_\spam^\tp{2} | \Ncbf^{m-s-1}[\Ncbf - \Mcbf^\tp{2}](\Mcbf^\tp{2})^{s-1} | \bar{\rho}_\ideal^\tp{2}} \\
&=\alpha^2\sum_{s=1}^{m-1} u^{2(s-1)}  \bbraket{\bar{E}_\spam^\tp{2} | \Ncbf^{m-s-1}[\Ncbf - \Mcbf^\tp{2}] | \bar{\rho}_\ideal^\tp{2}},
\end{split}
\end{equation}
where the second line follows from the fact that $\Mcbf \kket{B_2} = u \kket{B_2}$ and $\bar{\rho}_\ideal = \frac{1}{\sqrt{d^2-1}} B_2$ (see \eqref{eq:ideal_operators_app} in Appendix~\ref{app:variance_bound}). 
The next step is analyzing 
\begin{equation}
\label{eq:expansion}
\pmb{\Nc} - \pmb{\Mc}^{\otimes2}  \kket{ \frac{1}{d^2-1}  B_2^{\otimes 2}} = \frac{1}{d^2-1}  \sum_i a_i \kket{A_i}
\end{equation}
where $a_i = \bbraket{A_i | \pmb{\Nc} - \pmb{\Mc}^{\otimes2} | B_2^{\otimes 2}}$ and $\kket{A_i}$ is a basis for the space $\Rge(\Gavgbf{4})$ on which $\Ncbf$ has support. To find the basis $\kket{A_i}$ explicitly, the following ideas from representation theory are used (see Appendix~\ref{app_sub:trivial_subreps_Liouville_4} for details). 

The map $\Gc \mapsto \Gcbf^\tp{n}$ is a group representation of the Clifford group $\Cliff{d}$ for any $n$. A fundamental result in group representation theory \cite{Fulton2004} (\autoref{lem:projection_onto_trivial_subreps} in Appendix~\ref{app_sub:rep_theory}) is that $\Gavg{n}$ is the orthogonal projection onto the trivial subspace of the representation $\Gc \mapsto \Gcbf^\tp{n}$. For $n = 2$, the trivial subspace was found to be the space $W$ \cite{Wallman2015a}, giving rise to the fit model of \eqref{eq:fitmodel}. The task at hand here is to find the trivial subspace for $n = 4$. To do so, the following is used. If $(V,R)$ is an irreducible, real representation of a group $\Cliff{d}$, then \cite{Fulton2004}
\begin{equation}
\label{eq:trivial_subpreps}
(\Span \{\sum_{v \in V} v \otimes v\}, R \otimes R)
\end{equation}
is the only trivial representation of $V \otimes V$ of the group $\Cliff{d}$ (see \autoref{lem:trivial_subrep_lemma} in Appendix~\ref{app_sub:rep_theory}). This allows us to calculate all trivial subrepresentations of $\Gc \mapsto \Gcbf^\tp{4}$, using a complete description of the irreducible representations of $\Gc \mapsto \Gcbf^\tp{2}$. These were found in \cite{Helsen2018,Zhu2016}. Therefore \eqref{eq:trivial_subpreps} provides a method to compute the $\kket{A_i}$ using the explicit description of the irreducible spaces of $\Gc \mapsto \Gcbf^\tp{2}$ found in \cite{Helsen2018}.

Hence, the following expression is obtained for \eqref{eq:spam_ideal}, using the expansion \eqref{eq:expansion}:
\begin{equation}
(\ref{eq:spam_ideal}) = \frac{\alpha^2}{d^2 - 1} \sum_{s=1}^{m-1} u^{2(s-1)} \sum_i a_i \bbraket{\bar{E}_\spam^\tp{2} | \Ncbf^{m-s-1} | A_i },
\end{equation}
where $a_i = \bbraket{A_i | \pmb{\Nc} - \pmb{\Mc}^{\otimes2} | B_2^{\otimes 2}}$ are the coefficients of the expansion. 
The factor $\frac{1}{d^2-1}$ is later absorbed into the constant $c_2(d)$ in the final result. Up until this point, equality still holds. Now we are finally in a position to start bounding the term \eqref{eq:spam_ideal}. To do so, we upper bound each $a_i$. These bounds involve constants depending on the dimension $d$ (which are all absorbed into $c_2(d)$) and are proportional to $(1-u)^2$. Finally the inner product containing $\Ncbf^{m-s-1}$ is upper bounded by a constant  depending on the dimension and proportional to $\|\bar{E}_\spam \|_\infty^2$ (and in particular independent of $m$ or $s$). 
This then gives a total bound on the term \eqref{eq:spam_ideal}, 
\begin{equation}
(\ref{eq:spam_ideal}) \leq \frac{1-u^{2(m-1)}}{1-u^2} (1 - u)^2 \alpha^2 c_2(d) \| \bar{E}_\spam \|_\infty^2,
\end{equation}
where we used the geometric series
\begin{equation}
\sum_{s=1}^{m-1} u^{2(s-1)} = \frac{1-u^{2(m-1)}}{1-u^2}.
\end{equation}
The terms \eqref{eq:ideal_ideal} and \eqref{eq:ideal_spam} can be bounded by repeating all these steps, using a different telescoping series expansion where the factors $(\Mcbf^\tp{2})^{s-1}$ and $\Ncbf^{m-s}$ are interchanged in \eqref{eq:telescoping}. The analysis is then performed by simplifying the inner product from left to right.  This involves a few technicalities, but no new ideas. In the end, only the bound on the final inner product with $\Nc^{m-s-1}$ and the proportionality constants $c_1(d), c_3(d)$ differ, as can be seen from the result \eqref{eq:54}.
Finally for the final term \eqref{eq:spam_spam}, there is not much more to do than
\begin{equation}
(\ref{eq:spam_spam}) = \bbraket{\bar{E}_\spam^{\otimes 2} | \pmb{\Nc}^{m-1} | \bar{\rho}_\spam^{\otimes 2}} \leq \| \bar{E}_\spam \|_\infty^2  \| \bar{\rho}_\spam \|_1^2,
\end{equation}
using H\"older's inequality and the fact that $\Nc$ is contractive in the induced trace norm \cite{Perez-Garcia2006}, i.e., $\| \Nc \|_{1\rightarrow 1} \leq 1$ (see \autoref{prop:inner_holder} in Appendix~\ref{app:technical_lemmas}).

%----------------------------------------------------------------------------------------
%	SECTION: Conclusions
%----------------------------------------------------------------------------------------
\section{Conclusion and future work}
\label{sec:Outlook}
In this work we have shown a significant reduction in the required number of random sequences for unitarity randomized benchmarking (URB) than previously could be justified. This reduction is achieved by analyzing the statistics of the protocol. In particular, we have provided a bound on the variance of the sequence purity. Application of a concentration inequality yields the reduction in number of sequences, provided that the variance bound is sharp enough. We have shown that in realistic parameter regimes, the required number of sequences is in the order of hundreds, when benchmarking few-qubit Clifford gates. This brings benchmarking the unitarity of few-qubit Clifford gates into the realm of experimental feasibility.

The main ingredient of this result was a sharp bound on the variance of the sequence purity. The analysis was done for a slightly modified version of the protocol. This modification leads to better guarantees on the confidence and additionally yields a linear fitting problem. Our variance bound has the attractive property that it scales quadratically in $1-u$, where $u$ is the unitarity, up to constant contribution due to state preparation and measurement (SPAM) errors. This implies that fewer sequences are required to estimate highly coherent gates. We show that the constant contribution due to SPAM errors is a fundamental property of URB (and therefore not an artifact of our bound). Furthermore our bound is asymptotically independent of the sequence length and is therefore applicable in both short and long sequence lengths. Finally our bound grows exponentially in the number of qubits comprising the system. We argue that this is an artifact of the bound, which could be improved upon. As a result, our bound becomes vacuous for large systems. However, we have shown that our bound is sharp enough to benchmark few-qubit systems (say, up to five qubits).

During the analysis of the URB protocol, we have emphasized two different implementation techniques. We have explicitly shown their optimal state preparation and measurement settings for practical implementation. We highlighted the benefits and drawbacks of each implementation and showed the statistical difference between the two.

\paragraph*{Future work.} There are a few caveats in the analysis of this work, which arise from the assumptions under which the bound holds. Each of these assumptions as summarized in \autoref{sec:results} is an open avenue for future research. First and foremost, the assumption of the gate independent error model is rather strong and never completely satisfied in practical implementations of gates. 
The analysis of the URB protocol so far has been restricted to the gate-independent noise model \cite{Wallman2015a}. There are three somewhat independent open problems with the URB protocol when one wants to generalize the model to (Markovian) gate-dependent errors. First, the behavior of the protocol must be studied. This means that the validity and deviation of the fit model must be studied under this more general noise model. Second, the statistics of the protocol can be studied in the gate-dependent error model. This aims to provide an answer to the question how many resources are required to extract the unitarity from measurement data in this more general noise model, provided that a generalized fit model is found. Finally one can attempt to relate the URB decay rate(s) in the gate-dependent setting to physically relevant quantities (like the unitarity) of the gates comprising the gate set. All three of these problems relating to gate-dependent errors are tough problems and many research focused on answering analogous questions for standard RB. For standard RB, progress has been made in terms of understanding the fit model and relating the decay rate to a physically interpretable infidelity in the gate-dependent error model  \cite{Chasseur2015,Proctor2017,Wallman2018}. However, statistical analyses of standard RB only apply to the gate-independent error model \cite{Wallman2014,Helsen2017,Granade2015}. We suspect that some of the progress made in analyzing gate-dependent RB can be modified and applied to URB, but we have left this for future work.

A second interesting avenue is exploring how unitarity randomized benchmarking behaves when the assumption of unitary 2-design is relaxed \cite{Dankert2006}. This would give rise to a protocol that can benchmark the unitarity of different gate sets that do not form a 2-design. Interesting examples are the Dihedral group \cite{Carignan-Dugas2015, Cross2015}, subgroups of monomial unitary matrices \cite{Franca2018} and subgroups of the Clifford group \cite{Hashagen2018,Brown2018}, where progress have been made for standard RB. Note that the first two of these gate sets are particularly interesting since they contain the $T$-gate. A general framework for standard RB given an arbitrary gate set is provided in \cite{Helsen2018a}. An interesting open question is whether these techniques can be applied to URB.

Finally it is interesting if the current limitations of our bound can be improved upon. In particular an open question is how to improve this bound to be asymptotically independent of the dimension, a caveat that currently renders our bound impractical for large system ($q \gg 5$). Similarly we wonder if our bound can be generalized to general multiqubit noise models that need not be unital. These lines of future work could improve the applicability of our bound.

%----------------------------------------------------------------------------------------
%	ACKNOWLEDGEMENTS
%----------------------------------------------------------------------------------------
\begin{acknowledgments}
	The authors would like to thank Michael Walter for inspiring discussions on the topic. B.D., J.H. and S.W. are funded by NWA, a NWO VIDI grant, an ERC Starting Grant QINTERNET, and NWO Zwaartekracht QSC.
\end{acknowledgments}

%----------------------------------------------------------------------------------------
%	APPENDICES
%----------------------------------------------------------------------------------------

\onecolumngrid
\appendix
\section{Preliminaries}
The appendices are devoted to proving the upper bound \eqref{eq:var_bound} (actually we prove \eqref{eq:54}, which implies \eqref{eq:var_bound}) on the variance of the sequence purity for Clifford Unitarity Randomized Benchmarking. To do so, this appendix first provides an overview of the preliminaries and sets the formal notation used in the rest of the appendices. The material covered in this appendix is not a new result. In Appendix~\ref{app:variance_bound} then the variance bound of \eqref{eq:var_bound} is proven. It also contains the proof of the interval of the sequence purity (\eqref{eq:interval_length}). Finally, all technical lemmas used in the proof of the variance bound are collected in Appendix~\ref{app:technical_lemmas}. The material in Appendices~\ref{app:variance_bound} and \ref{app:technical_lemmas} is the main result of this work.

\subsection{Notation and definitions}
In this subsection we summarize all notation used in the paper and the appendices. Suppose our principle system under investigation is a $q$-qubit system. Its state space is then represented by a $d$-dimensional Hilbert space $\Hc$, where $d=2^q$. Typically $\Hc$ is identified with $\C^d$. General vector spaces are typically denoted $V$. The dimension of a vector space is denoted $|V| = \dim(V)$. Hence $d = 2^q = | \Hc |$. 
The set of linear operators between two vector spaces $V_1, V_2$ is denoted $\Lc(V_1, V_2)$ (some references write $\mathrm{Hom}(V_1,V_2)$). We write $\Lc(V)$ as shorthand for $\Lc(V,V)$ (in the literature also written as $\mathrm{End}(V)$). It is convenient to think of $\Lc(\Hc)$ as a Hilbert space in itself, equipped with the Hilbert-Schmidt inner product. This inner product is defined as $\braket{A, B}_\mathrm{HS} = \Tr[A^\dagger B]$ for any $A,B \in \Lc(\Hc)$. It induces the Hilbert-Schmidt norm $\| A \|_2 = \sqrt{\braket{A,A}_\mathrm{HS}}$. This is in fact a special case of the more general Schatten $p$-norms (for $1 \leq p \leq \infty$), which are defined as
\begin{equation}
\| A \|_p^p = \Tr \left[(A^\dagger A)^\frac{p}{2}\right]  = \| s(A) \|_p^p = \sum_i s_i(A)^p.
\end{equation}
Here $s(A)$ denotes the vector of singular values $s_i(A)$ of $A$. The Hilbert-Schmidt norm corresponds to $p=2$. Other important special cases are the trace norm ($p=1$) and the operator norm to ($p=\infty$).

%\subsubsection{The normalized Pauli matrices}
The normalized Pauli-matrices form an orthonormal basis of $\Lc(\Hc)$ with respect to the Hilbert-Schmidt inner product. The set of normalized Pauli's is denoted
\begin{equation}
\Pc := \left\{ \frac{P}{\sqrt{d}} \bigg|  P \in  \{I,X,Y,Z\}^{\otimes q}  \right\},
\end{equation}
where $I,X,Y,Z$ denote the usual (unnormalized) Pauli matrices. The set of traceless Pauli-matrices is denoted $\Pc^* = \Pc \setminus \{ \sigma_0 \}$, where $\sigma_0 := \frac{1}{\sqrt{d}}  I^{\otimes q}$ is the normalized identity. Elements of $\Pc$ are denoted by the Greek symbols $\sigma, \tau$. For two normalized Pauli matrices $\sigma, \tau \in \Pc$, we define the normalized matrix product $\sigma \cdot \tau := \frac{1}{\sqrt{d}} \sigma\tau$. This ensures that $\| \sigma \cdot \tau \|_2 = 1$ so that $\sigma \cdot \tau \in \pm \Pc$. The tensor product between two Pauli matrices can then be conveniently omitted, so that $\sigma \tau := \sigma \otimes \tau$. This is used for brevity when writing many tensor products of normalized Pauli matrices. From here on out, we will omit the tensor product.
Finally for every normalized Pauli $\tau \in \Pc$, we define $C_\tau$ as the set of all elements of $\Pc^*$ that commute with $\tau$, except for $\tau$ itself \cite{Helsen2018}:
\begin{equation}
\label{eq:def_Ctau}
C_\tau := \{  \sigma \in \Pc^* :   \sigma \cdot \tau = \tau \cdot \sigma)   \}.
\end{equation}
In \cite{Helsen2018} it is shown that $| C_\tau | = \frac{d^2 - 4}{2}$.

The Clifford group, denoted $\Cliff{d}$, has a natural action by conjugation on the set of Pauli matrices $\Pc$. Informally speaking, the Clifford group sends Pauli matrices to Pauli matrices under conjugation. More formally speaking, the Clifford group is the normalizer of the Pauli group (the group generated by $\Pc$) in the unitary group, up to global phase:
\begin{equation}
\Cliff{d} := \{ U \in \Unitary{d} :  U \sigma U^\dagger \in \pm \Pc, \, \forall \sigma \in \Pc  \} \, / \,  \Unitary{1}.
\end{equation}
An alternative description of the Clifford group is given in terms of its generators. The group is generated as
\begin{equation}
\Cliff{d} = \braket{\{H_i, S_i, CNOT_{ij} | i,j = 1,...,q, \quad i \neq j \}} \, / \,  \Unitary{1},
\end{equation}
where $H_i$ is the Hadamard gate and $S_i$ is the $\frac{\pi}{4}$-phase gate on qubit $i$, and $CNOT_{ij}$ is the CNOT gate on qubits $i,j$.
For a more detailed introduction into the Pauli and Clifford group, see \cite{Farinholt2014} and references therein.
The size of the Clifford group is \cite{Ozols2008}
\begin{equation}
|\Cliff{d}| = \prod_{j=1}^{q} 2(4^j - 1)4^j = 2^{\Oc(q^2)}.
\end{equation}

\subsubsection{States, measurements and quantum channels}
In quantum mechanics, quantum states are described by density operators. A density operator $\rho \in \Lc(\Hc)$ satisfies two properties. It is positive semidefinite (denoted $\rho \geq 0$) and has $\Tr[\rho] = 1$. POVM elements $M \in \Lc(\Hc)$ are positive semidefinite operators with all eigenvalues smaller than one. This means that $I-M$ is also positive semidefinite and a POVM therefore satisfies $0 \leq M \leq I$. A general POVM measurement is described by a colleaction of POVM elements $\{ M_1, ..., M_n  \}$ that satisfy $\sum_{i=1}^n M_i = I$. Denote the measurement outcome associated with $M_i$ as $m_i$. Then given a state $\rho$, the probability to observe outcome $m_i$ is $\Tr[M_i \rho]$. The Hermitian observable $E \in \Lc(\Hc)$ associated with this measurement is then $E = \sum_{i=1}^n m_i M_i$. Therefore the expectation value of the measurement, given the state $\rho$, is $\braket{E}_\rho = \Tr[E \rho]$. In this work, we will only consider two-valued measurements, with associated outcomes $\pm 1$. Such a measurement is thus described by the POVM measurement $M, I-M$ and the corresponding observable is $E = M - (I-M) = 2M - I$. 

Operations on quantum states that transform one state into the other are described by quantum channels. In general, transformations of linear operators $A \in \Lc(\Hc)$ are described by a linear operator $\Ec: \Lc(\Hc) \rightarrow \Lc(\Hc)$. These linear operators are sometimes called superoperators, to distinguish them from linear operators $A \in \Lc(\Hc)$. A quantum channel is a superoperator $\Ec: \Lc(\Hc) \rightarrow \Lc(\Hc)$ that is 
\begin{itemize}
	\item completely positive (CP), i.e., $(\Ec\otimes \Ic)(A) \geq 0$ for all $0 \leq A \in \Lc(\Hc^\tp{2})$, where $\Ic$ is the identity channel; and    
	\item trace preserving (TP), i.e., $\Tr[\Ec(A)] = \Tr[A]$ for all $A \in \Lc(\Hc)$.
\end{itemize} 
Intuitively, this means that density operators are mapped to density operators. Thus quantum channels (CPTP superoperators) are indeed the operators that map quantum states to quantum states. Here generic quantum channels are denoted $\Ec$ or $\Lambda$. A quantum channel is said to be unitary (denoted $\Gc$) if $\Gc(A) = GAG^\dagger$ for some unitary $G \in \Lc(\Hc)$ and for all $A \in \Lc(\Hc)$. So unitary quantum channels (also called unitaries or gates) are denoted with a calligraphic $\Gc$ and their counterparts in $\Lc(\Hc)$ are denoted $G$. Unital maps are superoperators $\Ec$ that satisfy $\Ec(I) = I$. Note that all unitaries are unital, but the converse is not true (consider the completely depolarizing channel $\Ec(A) = \frac{\Tr[A]}{d} I$).
The space of superoperators is typically equipped with the induced Schatten-norms, defined as
\begin{equation}
\label{eq:induced_schatten_norms}
\| \Ec \|_{p \rightarrow q} = \sup_{A \in \Lc(\Hc)} \{ \| \Ec(A) \|_q : \| A \|_p = 1   \}.
\end{equation}
Important special cases are $p = q = 1$, which yields the induced trace norm and $p = q = 2$ which results in the operator norm ($\| \Ec \|_\infty = \| \Ec \|_{2 \rightarrow 2}$). 
For more details on states, measurements and quantum channels, the reader is referred to text books like \cite{Nielsen2002, Watrous2017}. In the next section, we will discuss the Liouville representation of states, measurements, and quantum channels.

\subsubsection{Liouville representation}
\label{app_subsub:liouville_rep}
Here we expand on the definition of the Liouville representation (also known as the natural or affine representation or the Pauli transfer matrix) \cite{Wallman2014, Watrous2017} introduced in the main text. This representation exploits the fact that the Pauli matrices form an orthogonal basis for the set of linear operators with respect to the Hilbert-Schmidt inner product. We can then think of linear operators $A \in \Lc(\C^d)$ as column vectors or row vectors with entries determined by the inner product with respect to a Pauli basis operator. Formally, we introduce a linear map $\kket{\cdot}: \Lc(\C^d) \rightarrow \C^{d^2}$ defined by $\kket{\sigma_i} = e_i$, where $\sigma_i$ is the $i$-th normalized Pauli matrix in $\Pc$ and $e_i$ is the $i$-th canonical basis vector of $\C^{d^2}$. The map is then extended to $\Lc(\Hc)$ by linearity, so that
\begin{equation}
\kket{A} = \sum_{\sigma_i \in \Pc} \braket{\sigma_i, A}_\mathrm{HS} \kket{\sigma_i}.
\end{equation}
The adjoint is then defined via $\bbra{A} = \kket{A}^\dagger$. As a result, the inner product carries over as
\begin{equation}
\bbraket{A | B} = \braket{A, B}_\mathrm{HS} = \Tr[A^\dagger B], \quad \forall A,B \in \Lc(\C^d).
\end{equation}
Quantum channels $\Ec: \Lc(\C^d) \rightarrow \Lc(\C^d)$ can then be viewed as matrices acting on the vectors $\kket{A}$. This matrix, called the Liouville matrix, is a map $\pmb{\Ec}: \C^{d^2} \rightarrow \C^{d^2}$ defined by $\pmb{\Ec}_{ij} = \bbraket{\sigma_i | \Ec(\sigma_j)}$ (with $\sigma_i,\sigma_j \in \Pc$). The Liouville matrix $\pmb{\Ec}$ corresponding to the quantum channel $\Ec$ is denoted in bold font to distinguish the two. The Liouville matrix representation of quantum channels naturally respects the vectorization $\kket{\cdot}$, the product (channel composition is identified with matrix multiplication), the adjoint and the tensor product. That is, for superoperators $\Ec_1, \Ec_2: \Lc(\C^d) \rightarrow \Lc(\C^d)$ and linear operators $A, B, Q \in \Lc(\C^d)$, the following relations hold:
\begin{equation}
\begin{split}
&\kket{\Ec_2 \Ec_1(A)} = \pmb{\Ec_2} \kket{\Ec_1(A)} = \pmb{\Ec_2} \pmb{\Ec_1} \kket{A}, \\
&\kket{\Ec_2 \otimes \Ec_1(A \otimes B)} = \pmb{\Ec_2} \otimes \pmb{\Ec_1} \kket{A \otimes B} = \pmb{\Ec_2} \otimes \pmb{\Ec_1} \kket{A}\kket{B} = \pmb{\Ec_2} \kket{A} \otimes \pmb{\Ec_1} \kket{B}, \\
&\kket{\Ec_1^\dagger(A)} = \pmb{\Ec_1}^\dagger \kket{A},  \\
&\Tr[Q^\dagger \Ec_1(A)] = \bbraket{Q|\Ec_1(A)} = \bbraket{Q | \pmb{\Ec_1} | A}.
\end{split}
\end{equation}
Note that with slight Dirac-notation-like ambiguity, the (not necessarily Hermitian operator) $\pmb{\Ec_1}$ is always applied to the ket $\kket{A}$ and not to the bra $\bbra{Q}$ in the last line. 
A quantum channel has a special block form of its Liouville matrix by imposing the trace-preserving property. If the first basis element of $\Pc$ is $\sigma_0 = \frac{I}{\sqrt{d}}$, a quantum channel can be written as
\begin{equation}
\label{eq:Liouville_block_decomp}
\pmb{\Ec} = \begin{bmatrix}
1 & 0 \\
\alpha(\Ec) & \pmb{\Ec}_\urm
\end{bmatrix},
\end{equation}
where $\alpha(\Ec)$ is the nonunitality vector (of length $d^2-1$) and $\pmb{\Ec}_\urm$ is the unital block (of size $d^2-1$ by $d^2-1$) of $\pmb{\Ec}$. The trace-preserving property implies that no traceless Pauli matrix in $\Pc^*$ can be mapped to $\sigma_0$, since $\bbraket{\sigma_0 | \Ec(\tau)} = \frac{\Tr[\Ec(\tau)]}{\sqrt{d}} = 0$ for all $\tau \in \Pc$. Similarly $\bbraket{\sigma_0 | \Ec(\sigma_0)} = \frac{\Tr[\Ec(\sigma_0)]}{\sqrt{d}} = 1$. This justifies the first row of \eqref{eq:Liouville_block_decomp}. 
In terms of this decomposition, the definition of the unitarity \eqref{eq:unitary_definition2} can be rewritten as 
\begin{equation}
\label{eq:def_unitarity_app}
u(\Ec) = \frac{1}{d^2 - 1} \sum_{\sigma,\tau \in \Pc^*}\bbraket{\tau | \pmb{\Ec} | \sigma}^2 =\frac{1}{d^2 - 1} \sum_{\sigma \in \Pc^*}\bbraket{\sigma | \pmb{\Ec}_\urm^\dagger \pmb{\Ec}_\urm | \sigma}  =  \frac{1}{d^2 - 1} \sum_{\sigma \in \Pc^*}\bbraket{\sigma | \pmb{\Ec}_\urm \pmb{\Ec}_\urm^\dagger | \sigma} = \frac{\Tr[\pmb{\Ec}_\urm^\dagger \pmb{\Ec}_\urm]}{d^2-1}  = \frac{\Tr[\pmb{\Ec}_\urm \pmb{\Ec}_\urm^\dagger]}{d^2-1} ,
\end{equation}
where $\pmb{\Ec}_\urm$ is slight abuse of notation for $1 \oplus \pmb{\Ec}_\urm$.

\subsection{Representation theory}
\label{app_sub:rep_theory}
Here we give a brief overview of the required representation theory of finite groups. This section will briefly provide some definitions and the results used in this work. For more details the reader can refer to textbooks like \cite{Fulton2004,Etingof2009}. 
Let $\Gbb$ denote a finite group, $V$ some finite-dimensional complex vector space. Let $\GLc(V)$ denote the general linear group on $V$ (i.e., the set of invertible linear operators on $V$). Then a representation $(V,R)$ is a map $R: \Gbb \rightarrow \GLc(V)$ that satisfies $R(g) R(h) = R(gh)$ for all $g,h \in \Gbb$. If $V$ is equipped with an inner product (making it a Hilbert space) and $R(g)$ is unitary for all $g \in \Gbb$, then $(V,R)$ is called a unitary representation of $\Gbb$. If $R$ is an injective map, then the representation is faithful. If the map $R$ is clear from the context, the representation is just referred to as $V$.

A subspace $W \subseteq V$ is called a subrepresentation of $V$ if $R(g) W \subseteq W$ for all $g \in \Gbb$. If $W = 0$ and $W = V$ are the only subrepresentations of $V$, then $V$ is an irreducible representation (often called irrep). Consider two representations $V_1, V_2$ of $\Gbb$. Then a mapping $\varphi: V_1 \rightarrow V_2$ is called an intertwining operator if $\varphi R_1(g) = R_2(g) \varphi$. Intuitively, an intertwining operator preserves the structure of a representation. The representations $V_1$ and $V_2$ are called equivalent (denoted $V_1 \cong V_2$) if there exists an intertwining operator $\varphi$ that is an isomorphism between the vector spaces. A fundamental result in representation theory of finite groups is that a representation $(V,R)$ can always be written as the direct sum of irreps.
\begin{lemma}[Maschke's Theorem \cite{Fulton2004}]
	\label{lem:maschke}
	Let $(V, R)$ be a finite-dimensional, nonzero representation of a finite group $\Gbb$. Then $(V, R)$ decomposes uniquely (up to isomorphisms and ordering) as
	\begin{equation}
	V = \bigoplus_{i=1}^k  \left(\C^{n_i} \otimes V_i\right) = \bigoplus_{i=1}^k  V_i^{\oplus n_i} \hspace{20pt} \mbox{and} \hspace{20pt} R = \bigoplus_{i=1}^k \left(I_{n_i} \otimes R_i\right) = \bigoplus_{i=1}^k R_i^{\oplus n_i},
	\end{equation}
	where the set $\{(V_i, R_i) : i=1,...,k \}$ contains mutually inequivalent, nonzero, irreducible representations occurring with multiplicity $n_i$ in the decomposition of $(V, R)$ and $I_{n_i}$ is the identity on a $n_i$-dimensional vector space. 
\end{lemma}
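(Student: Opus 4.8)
The plan is to prove the two assertions separately: first the existence of the decomposition (complete reducibility), then its uniqueness, both ultimately resting on the invertibility of $|\Gbb|$ over $\C$. For existence, I would begin by producing a $\Gbb$-invariant inner product on $V$ by averaging. Starting from any inner product $\langle\cdot,\cdot\rangle$ on $V$, define
\begin{equation}
\langle v, w\rangle_\Gbb := \frac{1}{|\Gbb|}\sum_{g\in\Gbb}\langle R(g)v, R(g)w\rangle.
\end{equation}
Each $R(h)$ is unitary with respect to $\langle\cdot,\cdot\rangle_\Gbb$, since right-multiplication by $h$ merely permutes the summation over $\Gbb$. The decisive consequence is that if $W\subseteq V$ is a subrepresentation, then its orthogonal complement $W^\perp$ taken with respect to $\langle\cdot,\cdot\rangle_\Gbb$ is again a subrepresentation, because unitarity of each $R(g)$ preserves orthogonality. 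Hence $V = W\oplus W^\perp$ as representations. I would then induct on $\dim V$: if $V$ is irreducible we are done, and otherwise a proper nonzero subrepresentation $W$ yields the splitting $V = W\oplus W^\perp$, to each summand of which the inductive hypothesis applies. Grouping together mutually isomorphic irreducible summands and recording their multiplicities produces the stated form $V = \bigoplus_i V_i^{\oplus n_i}$ with the corresponding block form of $R$.

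For uniqueness, the essential tool is Schur's Lemma, which I would state and prove beforehand: an intertwiner between two irreducibles is either zero or an isomorphism, and, using that $\C$ is algebraically closed, any self-intertwiner of an irreducible is a scalar multiple of the identity. Granting this, I would show that each multiplicity $n_i$ is an intrinsic invariant of $(V,R)$ by identifying it with $\dim\mathrm{Hom}_\Gbb(V_i,V)$, the dimension of the space of intertwiners from the irreducible $V_i$ into $V$. Indeed, for any decomposition $V\cong\bigoplus_j V_j^{\oplus m_j}$, additivity of $\mathrm{Hom}$ together with Schur's Lemma gives $\dim\mathrm{Hom}_\Gbb(V_i,V) = \sum_j m_j\,\dim\mathrm{Hom}_\Gbb(V_i,V_j) = m_i$, since the intertwiner space is one-dimensional for $V_j\cong V_i$ and zero otherwise. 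Thus the set of irreducibles occurring and each of their multiplicities is forced, independent of the chosen decomposition, which is exactly uniqueness up to isomorphism and reordering.

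The main obstacle — and the single point where the finiteness of $\Gbb$ and the choice of ground field genuinely enter — is the averaging step. The invariant inner product above, or equivalently the invariant projection $\tilde P = |\Gbb|^{-1}\sum_{g\in\Gbb} R(g)PR(g)^{-1}$ onto $W$ whose kernel supplies the complement, exists precisely because $|\Gbb|$ is invertible in $\C$. Over a field of characteristic dividing $|\Gbb|$ this construction collapses and complete reducibility can genuinely fail, so the delicate part is to isolate cleanly where the hypothesis is used rather than to carry out any difficult computation. Since we work throughout over $\C$ in characteristic zero, the averaging is unconditional and both the existence and uniqueness arguments proceed without further subtlety.
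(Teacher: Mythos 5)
Your proof is correct. Note that the paper gives no proof of this lemma at all: it is quoted as a standard result directly from \cite{Fulton2004}, so there is no in-paper argument to compare against. Your two-step argument --- existence via the averaged $\Gbb$-invariant inner product (equivalently, the averaged projection $\tilde{P} = |\Gbb|^{-1}\sum_{g} R(g)PR(g)^{-1}$) together with induction on $\dim V$, and uniqueness via Schur's Lemma through the identification $n_i = \dim \mathrm{Hom}_{\Gbb}(V_i, V)$ --- is precisely the standard textbook proof found in the cited reference, and your remark that finiteness of $\Gbb$ and characteristic zero enter only through the invertibility of $|\Gbb|$ is accurate.
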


As an example consider the Clifford group $\Gbb = \Cliff{d} \subset \Lc(\Hc)$. Then the map $R_1: G \mapsto \Gc$ that associates the quantum channel $\Gc$ with the abstract group element $G \in \Cliff{d}$ is a representation of $\Cliff{d}$ on the space $V_1 = \Lc(\Hc)$. In fact $G$ is itself a representation (the defining representation) on $\Hc$. The Liouville representation is also a representation on the space $V_2 = \C^{d^2}$ via the map $R_2: G \mapsto \Gcbf$. The Liouville representation $(V_2, R_2)$ and the quantum channel representation $(V_1,R_1)$ are equivalent representations of the Clifford group $\Cliff{d}$. The intertwining operator that establishes this equivalence is given by $\varphi = \kket{\cdot}: V_1 \rightarrow V_2$ (defined in Appendix~\ref{app_subsub:liouville_rep}), the map that sends a linear operator $A \in V_1$ to the corresponding Liouville vector $\kket{A} \in V_2$. The intertwining property $R_2 \varphi = \varphi R_1$ is then explicitly expressed as $\Gcbf \kket{A} = \kket{\Gc(A)}$ for all $A \in V_1$ and $G \in \Cliff{d}$.

A crucial ingredient to the URB protocol is constructing the projector onto the trivial subrepresentations of a representation $(V,R)$. This is achieved in the following result.
\begin{lemma}[Projection onto trivial subrepresentations \cite{Fulton2004}]
	\label{lem:projection_onto_trivial_subreps}
	Let $(V,R)$ be any representation of a group $\Gbb$ and let $V^\Gbb := \{ v \in V : R(g)v = v, \, \forall g \in \Gbb \}$ denote the subspace on which $\Gbb$ acts trivially. Define the map $\phi: V \rightarrow V$ by
	\begin{equation}
	\phi = \frac{1}{|\Gbb|} \sum_{g \in \Gbb} R(g).
	\end{equation}
	Then $\phi$ is an intertwining operator and moreover $\phi$ is the orthogonal projection onto $V^\Gbb$.
\end{lemma}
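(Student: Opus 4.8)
The plan is to verify the three assertions of the lemma — that $\phi$ intertwines the representation with itself, that it is idempotent with range exactly $V^\Gbb$, and that it is self-adjoint (hence an \emph{orthogonal} projection) — each of which follows from the single elementary observation that left or right multiplication by a fixed group element merely permutes the summation index over $\Gbb$.

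First I would establish the key absorption identity. For any fixed $h \in \Gbb$,
\begin{equation}
R(h)\,\phi = \frac{1}{|\Gbb|}\sum_{g \in \Gbb} R(h)R(g) = \frac{1}{|\Gbb|}\sum_{g \in \Gbb} R(hg) = \phi,
\end{equation}
where the homomorphism property $R(h)R(g)=R(hg)$ is used and the last equality is the reindexing $g \mapsto hg$, a bijection of $\Gbb$. The identical argument applied on the right gives $\phi\,R(h) = \phi$. In particular $R(h)\phi = \phi = \phi R(h)$, so $\phi R(h) = R(h)\phi$ for all $h$ and $\phi$ is an intertwining operator. Idempotency is then immediate: $\phi^2 = \frac{1}{|\Gbb|}\sum_{h} R(h)\phi = \frac{1}{|\Gbb|}\sum_{h}\phi = \phi$, so $\phi$ is a projection.

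Next I would identify the range. On the one hand, if $v \in V^\Gbb$ then $R(g)v = v$ for every $g$, so $\phi v = \frac{1}{|\Gbb|}\sum_g v = v$; thus $\phi$ fixes $V^\Gbb$ pointwise and $V^\Gbb \subseteq \Rge(\phi)$. On the other hand, the absorption identity $R(h)\phi = \phi$ shows that $R(h)(\phi v) = \phi v$ for every $h$ and every $v \in V$, whence $\Rge(\phi) \subseteq V^\Gbb$. Combining these two inclusions with idempotency yields $\Rge(\phi) = V^\Gbb$, with $\phi$ restricting to the identity on $V^\Gbb$.

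Finally, orthogonality. Since a finite-group representation on a Hilbert space may always be taken unitary — averaging any inner product over $\Gbb$ produces a $\Gbb$-invariant one — I would assume $R(g)^\dagger = R(g)^{-1} = R(g^{-1})$. Then $\phi^\dagger = \frac{1}{|\Gbb|}\sum_g R(g)^\dagger = \frac{1}{|\Gbb|}\sum_g R(g^{-1}) = \phi$, once more by the reindexing $g \mapsto g^{-1}$. A self-adjoint idempotent is precisely an orthogonal projection, which completes the argument. There is no genuine obstacle here; the only point meriting care is the implicit unitarity underlying the word \emph{orthogonal}, which is exactly the structure present for the Liouville representation $G \mapsto \Gcbf^{\otimes n}$ to which this lemma is subsequently applied.
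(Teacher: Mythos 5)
Your proof is correct. The paper itself states this lemma without proof, citing it directly from the representation-theory literature \cite{Fulton2004}, and your argument is precisely the standard one behind that citation: the absorption identity $R(h)\phi = \phi R(h) = \phi$ via reindexing, which yields the intertwining property and idempotency at once, followed by the two range inclusions and self-adjointness under unitarity. Your closing caveat is also well placed: the word \emph{orthogonal} tacitly requires a $\Gbb$-invariant inner product (equivalently, a unitary representation), which is exactly the structure available in the paper's application, since the Liouville representation $G \mapsto \Gcbf^{\otimes n}$ of the Clifford group is unitary with respect to the Hilbert--Schmidt inner product.
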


The next lemma is crucial for the variance analysis, as it provides a method to identify the subspace of trivial representations $(V \otimes V^*)^\Gbb$, given a decomposition of $V$ into irreps.
\begin{lemma}
	\label{lem:trivial_subrep_lemma}
	Let $(V,R_V)$ and $(W,R_W)$ be unitary, irreducible finite-dimensional representations of a finite-dimensional group $\Gbb$ and let $\{v_i\}$, $\{w_i\}$ be an orthonormal basis for $V$, $W$ respectively. If $V \cong W$ are equivalent representations (and the basis vectors are labeled such that the intertwining map $\varphi$ between $V$ and $W$ maps $v_i \mapsto w_i$), then the $(V\otimes W^*, R_{V \otimes W^*})$ has one and only one trivial subrepresentation
	\begin{equation}
	(V\otimes W^*)^\Gbb = \Span \left\{  \sum_i v_i \otimes w_i^\dagger \right\}.
	\end{equation}
	 If $V$ and $W$ are not equivalent, then
	\begin{equation}
	(V\otimes W^*)^\Gbb = \emptyset.
	\end{equation}
\end{lemma}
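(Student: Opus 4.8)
My plan is to prove the statement by reducing it to Schur's lemma through the canonical identification of $V \otimes W^*$ with the space of linear maps $\mathrm{Hom}(W,V)$. First I would fix the isomorphism $\Phi : V \otimes W^* \to \mathrm{Hom}(W,V)$ that sends a simple tensor $v \otimes w^\dagger$ to the rank-one map $u \mapsto \langle w, u\rangle\, v$. Writing the representation on $V \otimes W^*$ as $R_V(g) \otimes R_{W^*}(g)$, where unitarity yields the contragredient action $R_{W^*}(g)\, w^\dagger = (R_W(g)\, w)^\dagger$, a direct computation shows $\Phi$ is $\Gbb$-equivariant once $\mathrm{Hom}(W,V)$ is endowed with the conjugation action $\phi \mapsto R_V(g)\,\phi\, R_W(g)^{-1}$. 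Consequently $\Phi$ carries the trivial subspace $(V\otimes W^*)^\Gbb$ bijectively onto the fixed points of this conjugation action, which are precisely the intertwining operators $\mathrm{Hom}_\Gbb(W,V) = \{\phi : \phi\, R_W(g) = R_V(g)\,\phi \text{ for all } g \in \Gbb\}$.

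With this identification in hand, both cases follow from Schur's lemma \cite{Fulton2004}. If $V$ and $W$ are inequivalent irreducibles, there is no nonzero intertwiner, so $\mathrm{Hom}_\Gbb(W,V) = \{0\}$ and hence $(V\otimes W^*)^\Gbb = \{0\}$, which is the second assertion (written $\emptyset$ in the statement). If instead $V \cong W$ are equivalent irreducibles, then, because $\C$ is algebraically closed, Schur's lemma forces $\dim_\C \mathrm{Hom}_\Gbb(W,V) = 1$, so $(V\otimes W^*)^\Gbb$ is a single line.

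It then remains to exhibit an explicit nonzero vector on that line and to check it is the stated one. I would verify directly that $t := \sum_i v_i \otimes w_i^\dagger$ is fixed: applying $R_V(g)\otimes R_{W^*}(g)$ gives $\sum_i R_V(g) v_i \otimes (R_W(g) w_i)^\dagger$, and since $\varphi(v_i) = w_i$ intertwines, the matrices of $R_V$ in $\{v_i\}$ and of $R_W$ in $\{w_i\}$ coincide, so the coefficient of $v_j \otimes w_k^\dagger$ collapses to $(R_V(g) R_V(g)^\dagger)_{jk} = \delta_{jk}$ by unitarity, returning $t$. Equivalently, $\Phi(t)$ is exactly the isomorphism $\varphi^{-1}$, a nonzero intertwiner, so $t \neq 0$; being a nonzero element of the one-dimensional space $(V\otimes W^*)^\Gbb$, it spans it, giving $(V\otimes W^*)^\Gbb = \Span\{t\}$. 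The computations are routine; the one step to set up with care is the equivariance of $\Phi$, matching the dual action on $W^*$ (with the $\dagger$ normalization afforded by unitarity) to the conjugation action on $\mathrm{Hom}(W,V)$. The only genuinely nontrivial input is Schur's lemma, and specifically its consequence over $\C$ that equivalent irreducibles admit a one-dimensional space of intertwiners — it is this algebraic closedness that pins the trivial subspace down to a single line rather than merely bounding its dimension.
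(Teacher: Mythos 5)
Your proof is correct and follows essentially the same route as the paper's: both identify $V\otimes W^*$ with $\Lc(W,V)$ via the canonical map $v\otimes w^\dagger \mapsto vw^\dagger$, recognize the invariant subspace as the space of intertwiners, and invoke Schur's lemma to get dimension $0$ or $1$. The only cosmetic difference is that you verify equivariance of the isomorphism and the invariance of $\sum_i v_i \otimes w_i^\dagger$ by direct computation, where the paper cites these facts and instead expands the unique intertwiner $\phi$ as $\sum_i \phi(w_i) w_i^\dagger$; the content is identical.
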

\begin{proof}
	The proof makes use of the canonical isomorphism $\alpha: V \otimes W^* \rightarrow \Lc(W,V)$ defined by $v \otimes w^\dagger \mapsto vw^\dagger$ (extended by linearity), where $V^*$ is the dual space of $V$ (carrying the dual representation) and $vw^\dagger$ acts on $x\in W$ by $vw^\dagger x := v \braket{w,x}$ (with $\braket{\cdot , \cdot}$ the inner product on $W$). Now $\alpha$ is an intertwining operator \cite{Fulton2004}. Therefore it follows that
	\begin{equation}
	\alpha\left( (V\otimes W^*)^\Gbb \right) = \left(\Lc(W,V)\right)^\Gbb,
	\end{equation}
	since $\alpha$ preserve the structure of the representation. The subspace $\left(\Lc(W,V)\right)^\Gbb$ of trivial subrepresentations of $\Lc(W,V)$ is precisely the space of intertwining operators between the representations $W$ and $V$ \cite{Fulton2004}.
	Thus a trivial representation of $V \otimes W^*$ corresponds to an intertwining operator from $W$ to $V$. Schur's Lemma states that \cite{Fulton2004}
	\begin{equation}
	| \left(\Lc(W,V)\right)^\Gbb | = \begin{cases}
	1 & \mbox{if } V \cong W \\
	0 & \mbox{otherwise.}
	\end{cases}
	\end{equation}
	So if $V$ and $W$ are inequivalent $\alpha\left( (V\otimes W^*)^\Gbb \right) = \left(\Lc(W,V)\right)^\Gbb = \emptyset$. And if $V \cong W$, let $\phi \in \left(\Lc(W,V)\right)^\Gbb$ be the intertwining isomorphism with $\| \phi \|_\infty = 1$. Then letting $v_i = \phi(w_i)$, we can write $\phi = \sum_i v_i w_i^\dagger$, so that
	\begin{equation}
	\alpha\left( (V\otimes W^*)^\Gbb \right) = \Span \left\{\sum_i v_i w_i^\dagger \right\},
	\end{equation}
	which yields the result after applying $\alpha^{-1}$.
\end{proof}
\begin{corollary}
	If moreover the representation $V = W$ is real and thus orthogonal, then (using $V^* \cong V$) it follows that
	\begin{equation}
	(V\otimes V)^\Gbb = \Span \left\{  \sum_i v_i \otimes v_i \right\}.
	\end{equation}
\end{corollary}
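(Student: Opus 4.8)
The plan is to obtain the Corollary directly from \autoref{lem:trivial_subrep_lemma} by taking $W = V$ and then transporting the resulting invariant line from $V \otimes V^*$ to $V \otimes V$ along the canonical identification $V^* \cong V$ supplied by the real (orthogonal) structure. No new representation-theoretic input is required beyond the Lemma and the behaviour of the dual representation under orthogonality.

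First I would instantiate the Lemma with $W = V$. Since $V$ and $W$ are literally the same representation they are trivially equivalent, the intertwining map $\varphi$ may be taken to be the identity, and the basis-labelling hypothesis $v_i \mapsto w_i$ holds with $w_i = v_i$. The Lemma then yields at once
\begin{equation}
(V \otimes V^*)^\Gbb = \Span\left\{ \sum_i v_i \otimes v_i^\dagger \right\}.
\end{equation}
Next I would use reality to identify $V^*$ with $V$ equivariantly. Working in the real orthonormal basis $\{v_i\}$, each $R_V(g)$ is an orthogonal matrix, so the dual representation satisfies $R_{V^*}(g) = (R_V(g)^{-1})^\top = R_V(g)$, the last equality being orthogonality $R_V(g)^{-1} = R_V(g)^\top$. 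Hence the linear map $\iota\colon V^* \to V$ sending the dual basis vector $v_i^\dagger$ to $v_i$ obeys $\iota\, R_{V^*}(g) = R_V(g)\, \iota$ and is therefore an intertwining isomorphism. Consequently $\mathrm{id}_V \otimes \iota\colon V \otimes V^* \to V \otimes V$ is an intertwining isomorphism; it maps invariant subspaces to invariant subspaces and carries $\sum_i v_i \otimes v_i^\dagger$ to $\sum_i v_i \otimes v_i$, giving the claimed $(V \otimes V)^\Gbb = \Span\{\sum_i v_i \otimes v_i\}$.

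The only delicate point, and the place where the hypothesis is genuinely used, is the equivariance of $\iota$. For a general complex unitary irrep the identification $V^* \cong V$ induced by the inner product is conjugate-linear and realises $V^* \cong \overline{V}$ rather than $V$ itself, so the invariant vector would instead read $\sum_i v_i \otimes \overline{v_i}$. It is exactly the assumption that $V$ is real, so that the representation matrices may be chosen real orthogonal and $\overline{V} = V$, that removes the complex conjugation and makes $v_i^\dagger \mapsto v_i$ a legitimate equivariant linear isomorphism. Once this equivariance is secured, the rest is the routine transport of a one-dimensional invariant subspace under an intertwining isomorphism, so I do not expect any further difficulty.
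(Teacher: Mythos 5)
Your proposal is correct and matches the paper's intended argument: the paper states this corollary without a separate proof, asserting it follows from \autoref{lem:trivial_subrep_lemma} precisely ``using $V^* \cong V$,'' which is exactly the equivariant identification $\iota\colon v_i^\dagger \mapsto v_i$ you construct from orthogonality of the real representation matrices. Your added remark on why reality is essential (avoiding the conjugate-linear identification $V^* \cong \overline{V}$ of the general unitary case) is a faithful elaboration of the same route, not a different one.
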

\begin{corollary}
	Let $V$ be a finite-dimensional vector space carrying a group representation. By \autoref{lem:maschke} there exists a decomposition $V = \bigoplus_{i=1}^k  V_i^{\oplus n_i}$ into mutually inequivalent irreducible representations. Denote $V_{i_s}$ the $s$-th copy of the space $V_i$ ($s=1,...,n_i$) and denote $\{v_j^{(i_s)} : j = 1,..., |V_{i}| \}$ an orthonormal basis of $V_{i_s}$ that respect the isomorphisms between equivalent spaces (meaning that $v_j^{(i_s)} \mapsto v_j^{(i_{s'})}$ under the intertwining isomorphism between $V_{i_s}$ and $V_{i_{s'}}$). Then the trivial subrepresentations of $V \otimes V$ are given by
	\begin{equation}
	(V \otimes V)^\Gbb = \Span\left\{\sum_{j=1}^{|V_i|}  v_j^{(i_s)} \otimes v_j^{(i_{s'})} \bigg| \, \forall s,s' = 1,..., n_i, \, \forall i=1,...,k \right\}.
	\end{equation}
\end{corollary}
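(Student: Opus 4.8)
The plan is to reduce the computation to the single-block case already settled in \autoref{lem:trivial_subrep_lemma} and its real-representation corollary, by exploiting that both the tensor product and the invariant-subspace operation interact well with the direct-sum decomposition supplied by \autoref{lem:maschke}. First I would distribute the tensor product over the two copies of the Maschke decomposition, writing
\begin{equation}
V \otimes V = \Bigl( \bigoplus_{i,s} V_{i_s} \Bigr) \otimes \Bigl( \bigoplus_{i',s'} V_{i'_{s'}} \Bigr) = \bigoplus_{i,s} \bigoplus_{i',s'} V_{i_s} \otimes V_{i'_{s'}},
\end{equation}
where each summand $V_{i_s} \otimes V_{i'_{s'}}$ is itself a subrepresentation of $V \otimes V$ under $R \otimes R$.

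Next I would observe that the invariant-subspace operation $(\cdot)^\Gbb$ commutes with this direct sum. Concretely, by \autoref{lem:projection_onto_trivial_subreps} the orthogonal projection onto $(V \otimes V)^\Gbb$ is $\phi = \frac{1}{|\Gbb|}\sum_{g} (R \otimes R)(g)$, and since each summand $V_{i_s} \otimes V_{i'_{s'}}$ is invariant under every $(R \otimes R)(g)$, the map $\phi$ is block diagonal with respect to the above decomposition. Hence
\begin{equation}
(V \otimes V)^\Gbb = \bigoplus_{i,s} \bigoplus_{i',s'} \bigl( V_{i_s} \otimes V_{i'_{s'}} \bigr)^\Gbb,
\end{equation}
which reduces the problem to determining the invariants of each pair of irreducible blocks.

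Then I would apply \autoref{lem:trivial_subrep_lemma} (in the real form of its immediate corollary) to each block. Since the $V_i$ are mutually inequivalent, one has $V_{i_s} \cong V_{i'_{s'}}$ if and only if $i = i'$, with $s,s'$ arbitrary, as two distinct copies of the same irrep remain equivalent. For $i \neq i'$ the lemma gives $\bigl( V_{i_s} \otimes V_{i'_{s'}} \bigr)^\Gbb = \emptyset$, so these blocks drop out. For $i = i'$, the copies $V_{i_s}$ and $V_{i_{s'}}$ are equivalent real (orthogonal) irreps, and by hypothesis their orthonormal bases $\{v_j^{(i_s)}\}$ and $\{v_j^{(i_{s'})}\}$ are labeled compatibly with the intertwining isomorphism $v_j^{(i_s)} \mapsto v_j^{(i_{s'})}$; the corollary then yields the one-dimensional invariant space spanned by $\sum_{j=1}^{|V_i|} v_j^{(i_s)} \otimes v_j^{(i_{s'})}$. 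Collecting one such generator for every triple $(i,s,s')$ gives exactly the claimed spanning set, and taking the direct sum over all blocks completes the argument.

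I expect the main obstacle to be bookkeeping rather than anything conceptual: one must verify that the orthonormal bases across equivalent copies can genuinely be chosen to respect the intertwining isomorphisms, so that the invariants take the stated diagonal form, and that the resulting generators are linearly independent, making the total invariant dimension equal to $\sum_i n_i^2$ as the block decomposition predicts. A secondary point requiring care is the passage to the real case: \autoref{lem:trivial_subrep_lemma} is phrased for $V \otimes W^*$, so one should invoke the identification $V_i^* \cong V_i$ valid for real orthogonal representations (precisely the setting of the Liouville representation used here) to replace $w_j^\dagger$ by $w_j$ and thereby land on $V \otimes V$ rather than $V \otimes V^*$.
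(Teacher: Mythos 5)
Your proposal is correct and follows essentially the same route as the paper's proof: distribute the tensor product over the Maschke decomposition into blocks $V_{i_s} \otimes V_{i'_{s'}}$, then apply \autoref{lem:trivial_subrep_lemma} (in its real-representation form) blockwise, using that $V_{i_s} \cong V_{i'_{s'}}$ if and only if $i = i'$. The only difference is that you spell out details the paper leaves implicit, namely the block-diagonality of the averaging projector from \autoref{lem:projection_onto_trivial_subreps} and the identification $V_i^* \cong V_i$ for real orthogonal representations, both of which are handled correctly.
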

\begin{proof}
	Let us start by writing
	\begin{equation}
	V \otimes V = \bigoplus_{i,i'=1}^k \bigoplus_{s}^{n_i} \bigoplus_{s'}^{n_{i'}}  (V_{i_s} \otimes V_{{i'}_{s'}}).
	\end{equation}
	Each trivial subrepresentation is found by application of \autoref{lem:trivial_subrep_lemma} to each term in this decomposition. This makes use of the fact that $V_{i_s} \cong V_{i'_{s'}}$ are equivalent if and only if $i'= i$ by virtue of the decomposition.
\end{proof}

In Appendix~\ref{app:variance_bound} this machinery is used to find the trivial subrepresentations of the Liouville tensor-4 representation of the Clifford group $\Cliff{d}$. But first a section is given with some preliminary technical lemmas from literature that are required in the proof of our variance bound.

\subsection{Technical lemmas from literature}
\label{app_sub:lemma_literature}
In this section we review a few lemmas from literature that are required for our variance bound. Some lemmas are stated without proof and the reader is then referred to the reference for a proof. The first lemma is a telescoping series for expanding the variance expression. It is applied to quantum channels, but here presented in more general form.
\begin{lemma}[Telescoping Series \cite{Helsen2017}]
	\label{lem:telescoping}
	Let $A$ be an associative algebra with unit. Then for $a, b \in A$ and $m \in \mathbb{N}_+$,
	\begin{equation}
	a^m - b^m = \sum_{s=1}^{m} a^{m-s} (a-b) b^{s-1} = \sum_{s=1}^{m} b^{m-s} (a-b) a^{s-1}.
	\end{equation}
\end{lemma}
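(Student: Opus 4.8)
The plan is to prove the identity by a direct telescoping argument that is valid in any associative unital algebra $A$ without assuming commutativity; the only thing that requires care is keeping the noncommuting factors in their correct left-to-right order. I would first establish the left-hand equality $a^m - b^m = \sum_{s=1}^m a^{m-s}(a-b)b^{s-1}$ by distributing the bracket $(a-b)$ through each summand and then recognizing the result as a collapsing sum.

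Concretely, for each $s$ I would write $a^{m-s}(a-b)b^{s-1} = a^{m-s+1}b^{s-1} - a^{m-s}b^{s}$, multiplying the $a$ into the left block and the $b$ into the right block so that every power of $a$ stays to the left of every power of $b$. Introducing the shorthand $c_s := a^{m-s}b^{s}$ for $s = 0,1,\dots,m$, this summand is exactly $c_{s-1} - c_s$. Summing over $s$ then gives the telescoping sum $\sum_{s=1}^m (c_{s-1} - c_s) = c_0 - c_m = a^m - b^m$, since $c_0 = a^m$ and $c_m = b^m$. The key observation is that the boundary term $-a^{m-s}b^{s}$ produced by summand $s$ cancels exactly against the term $+a^{m-s}b^{s}$ produced by summand $s+1$, and this cancellation holds verbatim for noncommuting $a,b$ because both terms carry the identical monomial ordering.

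For the right-hand equality $a^m - b^m = \sum_{s=1}^m b^{m-s}(a-b)a^{s-1}$ I would repeat the computation with the two factors of the bracket interchanged in their placement: distributing gives $b^{m-s}(a-b)a^{s-1} = b^{m-s}a^{s} - b^{m-s+1}a^{s-1}$, and setting $e_s := b^{m-s}a^{s}$ this equals $e_s - e_{s-1}$, so that the sum telescopes to $e_m - e_0 = a^m - b^m$. Since the whole argument is elementary, there is no genuine obstacle beyond careful bookkeeping; the single point at which to stay alert is that one must \emph{not} simply invoke the scalar factorization $x^m - y^m = (x-y)\sum_{s} x^{m-s}y^{s-1}$, but instead verify the monomial cancellation explicitly in the noncommutative setting, which is precisely what the telescoping via $c_s$ (respectively $e_s$) delivers.
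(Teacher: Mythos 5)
Your proof is correct and follows essentially the same route as the paper's: distribute $(a-b)$ in each summand and let the sum collapse, which is exactly the paper's index-shift computation written in the equivalent shorthand $c_s = a^{m-s}b^{s}$ (resp.\ $e_s = b^{m-s}a^{s}$). No gaps; the noncommutative ordering is handled correctly in both.
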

\begin{proof}
	By direct computation, it follows that
	\begin{equation*}
	\sum_{s=1}^{m} a^{m-s} (a-b) b^{s-1} = \sum_{s=1}^{m} a^{m-s+1} b^{s-1} - a^{m-s} b^s = \sum_{s=0}^{m-1} a^{m-s} b^{s} - \sum_{s=1}^{m} a^{m-s} b^s = a^m b^0 - a^0 b^m = a^m - b^m
	\end{equation*}
	and
	\begin{equation*}
	\sum_{s=1}^{m} b^{s-1} (a-b) a^{m-s} = \sum_{s=1}^{m} b^{s-1}a^{m-s+1} - b^{s}a^{m-1} = \sum_{s=0}^{m-1} b^{s} a^{m-s} - \sum_{s=1}^{m} b^{s} a^{m-1} = b^0a^m - a^0 b^m  = a^m - b^m. \qedhere
	\end{equation*}
\end{proof}
Note that the set of quantum channels form an associative algebra with unit, so that this lemma indeed applies to quantum channels.

Next we present a lemma that bounds the induced schatten $p\rightarrow p$ norm of a quantum channel.
\begin{lemma}[Perez-Garcia \textit{et al}. \cite{Perez-Garcia2006}]
	\label{lem:Perez-Garcia}
	Let $\Ec$ be a CPTP quantum channel on a $d$-dimensional Hilbert space $\Hc$, with $d = 2^q$ for a $q$-qubit system. Then for all $p \in [1, \infty]$,
	\begin{equation}
	\| \Ec \|_{p\rightarrow p} = \max_{A\in\Lc(\Hc)} \left\{\| \Ec(A)  \|_p : \|A\|_p = 1 \right\} \leq d^{1-\frac{1}{p}}
	\end{equation}
	and
	\begin{equation}
	\label{eq:A25}
	\| \Ec \|_{p\rightarrow p}^H:= \max_{A\in\Lc(\Hc)} \left\{\| \Ec(A)  \|_p : \|A\|_p = 1, \Tr[A] = 0, A = A^\dagger \right\}  \leq \left(\frac{d}{2}\right)^{1-\frac{1}{p}}. 
	\end{equation}
%	\begin{equation}
%	\label{eq:A2xx5}
%	\| \Ec \|_{p\rightarrow p}^H:= \max_{\substack{0 \neq A\in\Lc(\Hc) \\ \Tr[A] = 0, A = A^\dagger}} \left\{   \frac{\| \Ec(A)  \|_p}{\|A\|_p}  \right\}  \leq \left(\frac{d}{2}\right)^{1-\frac{1}{p}}. 
%	\end{equation}
	If in addition $\Ec$ is unital ($\Ec(I) = I$), then $\| \Ec \|_{p \rightarrow p} \leq 1$ for all $p \in [1,\infty]$.
\end{lemma}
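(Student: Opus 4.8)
The plan is to reduce everything to the two endpoint cases $p=1$ and $p=\infty$ and then fill in the intermediate values of $p$. For the general bound $\| \Ec \|_{p\rightarrow p} \leq d^{1-\frac1p}$ I would actually bypass interpolation altogether with a direct norm chain. The ingredients are the monotonicity $\| B \|_p \leq \| B \|_1$ of Schatten norms for $p \geq 1$, the trace-norm contractivity $\| \Ec(A) \|_1 \leq \| A \|_1$ of any CPTP map (which follows from $\| \Ec \|_{1\rightarrow 1} \leq \| \Ec \|_\diamond = 1$, or from a positive/negative-part argument), and H\"older's inequality on singular values in the form $\| A \|_1 \leq d^{1-\frac1p}\| A \|_p$. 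Chaining these gives $\| \Ec(A) \|_p \leq \| \Ec(A) \|_1 \leq \| A \|_1 \leq d^{1-\frac1p}\| A \|_p$, which is exactly the claim, and it is saturated by a replacement channel $A \mapsto \Tr[A]\,\sigma$ with $\sigma$ pure, confirming sharpness.

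For the refined Hermitian--traceless bound I would instead establish the two sharp endpoints and interpolate. At $p=1$ the contraction already supplies the factor $1$. The crux is the operator-norm endpoint $\| \Ec \|_{\infty\rightarrow\infty}^H \leq \frac d2$. Here I would take $A$ Hermitian and traceless with $\| A \|_\infty = 1$, split $A = A_+ - A_-$ into positive and negative parts, and note that tracelessness forces $\Tr[A_+] = \Tr[A_-] = \frac12 \| A \|_1 \leq \frac12 d \| A \|_\infty = \frac d2$. For any unit vector $\psi$ the number $\langle \psi | \Ec(A) | \psi \rangle = \langle \psi | \Ec(A_+) | \psi \rangle - \langle \psi | \Ec(A_-) | \psi \rangle$ is a difference of two nonnegative quantities (positivity of $\Ec$), so its modulus is at most their maximum, hence at most $\max(\| \Ec(A_+) \|_\infty, \| \Ec(A_-) \|_\infty) \leq \max(\Tr[A_+], \Tr[A_-]) \leq \frac d2$, using $\| \cdot \|_\infty \leq \Tr[\cdot]$ for PSD operators together with trace preservation. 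Since $\Ec(A)$ is Hermitian, maximizing over $\psi$ yields $\| \Ec(A) \|_\infty \leq \frac d2$. Riesz--Thorin interpolation between the $p=1$ value $1$ and the $p=\infty$ value $\frac d2$ then gives $(\frac d2)^{1-\frac1p}$.

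The unital case follows the same interpolation scheme with improved endpoints. One still has $\| \Ec \|_{1\rightarrow 1} \leq 1$, while $\| \Ec \|_{\infty\rightarrow\infty} \leq 1$ because a positive unital map attains its operator norm at the identity, $\| \Ec \|_{\infty\rightarrow\infty} = \| \Ec(I) \|_\infty = 1$, by the Russo--Dye theorem. Interpolating $1$ against $1$ gives $\| \Ec \|_{p\rightarrow p} \leq 1$ for every $p \in [1,\infty]$.

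The main obstacle is not any of the endpoints, which are elementary, but making the interpolation rigorous for the refined bound. Riesz--Thorin is naturally phrased on the full complex Schatten scale, whereas the $\frac d2$ endpoint is genuinely a statement about the \emph{real} subspace of Hermitian traceless operators: a complexification $A = H_1 + iH_2$ only recovers the weaker constant $d$, since $\| A \|_\infty$ controls $H_1$ and $H_2$ separately and costs a factor $2$. I would therefore invoke a version of the interpolation theorem adapted to this setting --- either arguing that $\Ec$ preserves the Hermitian traceless subspace and that the Schatten norms restrict to a compatible interpolation couple there, or running Stein's interpolation on an analytic operator family built from $A_+$ and $A_-$. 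Checking this compatibility, rather than any single inequality, is the delicate point; everything else is bookkeeping with H\"older's inequality and the operator-norm/trace estimate for PSD operators.
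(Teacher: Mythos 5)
A preliminary remark: the paper contains no proof of this lemma at all. It is imported from Perez-Garcia \textit{et al}.\ \cite{Perez-Garcia2006}, and the appendix states explicitly that such lemmas are quoted without proof. So your attempt can only be judged on its own terms, not against an in-paper argument.

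Most of your proposal is correct. The chain $\| \Ec(A) \|_p \leq \| \Ec(A) \|_1 \leq \| A \|_1 \leq d^{1-\frac{1}{p}} \| A \|_p$ settles the first bound; the operator-norm endpoint for Hermitian traceless $A$, namely $\| \Ec(A) \|_\infty \leq \max(\Tr[A_+], \Tr[A_-]) \leq \tfrac{d}{2}\|A\|_\infty$, is a clean and valid use of positivity plus trace preservation; and the unital case is sound, because there both endpoint estimates ($\|\Ec\|_{1\to 1}\le 1$ and, by Russo--Dye, $\|\Ec\|_{\infty\to\infty} = \|\Ec(I)\|_\infty = 1$) hold on the \emph{full complex} matrix space, where Riesz--Thorin on the Schatten scale is standard. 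The genuine gap is exactly the one you flag yourself: for the refined bound \eqref{eq:A25} at intermediate $p$, Riesz--Thorin does not apply to the real subspace of Hermitian traceless operators, complexification costs a factor $2$ (recovering only $d^{1-\frac{1}{p}}$), and neither of your two proposed repairs (a real interpolation couple, or a Stein family built from $A_\pm$) is carried out. Since \eqref{eq:A25} is precisely the part of the lemma the paper invokes in \autoref{prop:ideal_inner}, the proof as written is incomplete at its most important point.

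The gap can be closed elementarily, with no interpolation theorem whatsoever. Observe that your endpoint argument actually proves the stronger statement $\| \Ec(A) \|_\infty \leq \max(\Tr[A_+], \Tr[A_-]) = \tfrac{1}{2} \| A \|_1$ for Hermitian traceless $A$: it controls the output operator norm by the input \emph{trace} norm, not the input operator norm. Combine this with the elementary singular-value inequality $\| B \|_p^p = \sum_i s_i(B)^{p-1} s_i(B) \leq \| B \|_\infty^{p-1} \| B \|_1$, i.e., $\| B \|_p \leq \| B \|_1^{1/p} \| B \|_\infty^{1-1/p}$, applied to $B = \Ec(A)$:
\begin{equation*}
\| \Ec(A) \|_p \;\leq\; \| \Ec(A) \|_1^{\frac{1}{p}} \, \| \Ec(A) \|_\infty^{1-\frac{1}{p}} \;\leq\; \| A \|_1^{\frac{1}{p}} \left( \tfrac{1}{2} \| A \|_1 \right)^{1-\frac{1}{p}} \;=\; \left( \tfrac{1}{2} \right)^{1-\frac{1}{p}} \| A \|_1 \;\leq\; \left( \tfrac{d}{2} \right)^{1-\frac{1}{p}} \| A \|_p,
\end{equation*}
where the last step is your own H\"older bound $\|A\|_1 \le d^{1-\frac{1}{p}}\|A\|_p$. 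This yields \eqref{eq:A25} directly, and the ``delicate compatibility check'' you defer turns out to be unnecessary.
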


The following three lemmas are used to bound the quantities $a_i$ (\eqref{eq:expansion}). First, we state a technical lemma used in \cite{Helsen2017}, which can be restated as
\begin{lemma}[Helsen \textit{et al}. \cite{Helsen2017}]
	\label{lem:diagonals_squared}
	Let $\Ec$ be a CPTP map on a $d$-dimensional Hilbert space. Then
	\begin{equation}
	0 \leq \frac{1}{d^2 - 1} \sum_{\sigma \in \Pc^*} \bbraket{\sigma| \pmb{\Ec} |\sigma }^2 - f^2  \leq  \frac{d^2-2}{d^2} (1-f)^2,
	\end{equation}
	where
	\begin{equation}
	f = \frac{1}{d^2 - 1} \sum_{\sigma \in \Pc^*} \bbraket{\sigma| \pmb{\Ec} |\sigma }
	\end{equation}
	is the randomized benchmarking decay parameter of $\Ec$.
\end{lemma}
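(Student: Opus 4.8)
The plan is to read the quantity $\frac{1}{d^2-1}\sum_{\sigma\in\Pc^*}\bbraket{\sigma|\pmb{\Ec}|\sigma}^2 - f^2$ as the empirical variance (with uniform weights) of the $n := d^2-1$ diagonal Pauli-transfer-matrix entries $x_\sigma := \bbraket{\sigma|\pmb{\Ec}|\sigma}$, whose mean is exactly $f$. With this reading the lower bound $0 \leq \ldots$ is immediate: it is just the nonnegativity of a variance, equivalently Cauchy--Schwarz / Jensen applied to $\frac1n\sum_\sigma x_\sigma^2 \geq \big(\frac1n\sum_\sigma x_\sigma\big)^2$. All the work is in the upper bound.

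For the upper bound I would first reduce to a Pauli channel. The diagonal entries $x_\sigma$ are invariant under the Pauli twirl $\Ec \mapsto \frac{1}{d^2}\sum_P \mathcal{P}^\dagger \Ec\, \mathcal{P}$, since conjugation by a Pauli acts on each Pauli basis vector as a sign and averaging these signs annihilates every off-diagonal transfer entry while fixing the diagonal. The twirled map is CPTP and Pauli-diagonal, hence a Pauli channel with probabilities $p_T \geq 0$, $\sum_T p_T = 1$, and $x_\sigma = \sum_T p_T (-1)^{\langle T,\sigma\rangle}$, where $\langle\cdot,\cdot\rangle$ is the symplectic commutation form. In particular $|x_\sigma|\leq 1$, so the deficits $\delta_\sigma := 1 - x_\sigma$ are nonnegative, which is what makes the rearrangement below legitimate.

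The heart of the argument is a family of complete-positivity triangle inequalities. Identifying $\Pc^*$ with $\mathbb{F}_2^{2q}\setminus\{0\}$, take any triple of distinct traceless Paulis $\{P,Q,R\}$ with $R=PQ$ (a two-dimensional $\mathbb{F}_2$-subspace, so $P+Q+R=0$ as bit-strings). Bilinearity gives $(-1)^{\langle T,R\rangle}=(-1)^{\langle T,P\rangle}(-1)^{\langle T,Q\rangle}$, whence $1 + x_R - x_P - x_Q = \sum_T p_T\big(1-(-1)^{\langle T,P\rangle}\big)\big(1-(-1)^{\langle T,Q\rangle}\big)\geq 0$, i.e. $\delta_R \leq \delta_P + \delta_Q$, and by symmetry of the triple all three such inequalities hold. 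Multiplying $\delta_\sigma \leq \delta_{\sigma'}+\delta_{\sigma''}$ by $\delta_\sigma\geq 0$ and summing over the triple yields $\sum_{\sigma\in t}\delta_\sigma^2 \leq 2\sum_{\{\sigma,\sigma'\}\subset t}\delta_\sigma\delta_{\sigma'}$. I would then sum this over all triples and exploit the incidence structure of $\Pc^*$: every element lies in $(n-1)/2$ triples and every unordered pair in exactly one, giving $\frac{n-1}{2}\sum_\sigma\delta_\sigma^2 \leq 2\sum_{\sigma<\sigma'}\delta_\sigma\delta_{\sigma'} = \big(\sum_\sigma\delta_\sigma\big)^2 - \sum_\sigma\delta_\sigma^2$, hence $\sum_\sigma\delta_\sigma^2 \leq \frac{2}{n+1}\big(\sum_\sigma\delta_\sigma\big)^2 = \frac{2}{d^2}\big(\sum_\sigma\delta_\sigma\big)^2$.

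Translating back through $\delta_\sigma = 1-x_\sigma$ and $\frac1n\sum_\sigma\delta_\sigma = 1-f$ converts this last inequality into exactly $\frac1n\sum_\sigma x_\sigma^2 - f^2 \leq \frac{n-1}{n+1}(1-f)^2 = \frac{d^2-2}{d^2}(1-f)^2$, which is the claim. The hard part, I expect, is not the (routine) combinatorial bookkeeping but locating the right complete-positivity inequalities: one must first recognize that only the diagonal of $\pmb{\Ec}$ enters, justifying the reduction to a Pauli channel, and then see that the relevant CP constraints organize into triangle inequalities over the triples $\{P,Q,PQ\}$, whose incidence counts conspire to produce precisely the constant $\frac{2}{d^2}$ (and thus the factor $\frac{d^2-2}{d^2}$), with the single-qubit case $n=3$ recovering the familiar Fujiwara--Algoet triangle on $\{X,Y,Z\}$.
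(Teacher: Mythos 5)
Your proof is correct, but there is nothing in this paper to compare it against: \autoref{lem:diagonals_squared} is imported from Helsen \textit{et al}.~\cite{Helsen2017} and is explicitly stated without proof (see the preamble of Appendix~\ref{app_sub:lemma_literature}), so the relevant comparison is with the argument in that reference. Your steps all check out: the Pauli twirl preserves CPTP, fixes every diagonal transfer entry and annihilates the off-diagonal ones; a Pauli-diagonal CPTP map is a genuine Pauli channel ($p_T\geq 0$, $\sum_T p_T=1$, $x_\sigma=\sum_T p_T(-1)^{\langle T,\sigma\rangle}$); additivity of commutation signs over products gives $1+x_{PQ}-x_P-x_Q=\sum_T p_T\bigl(1-(-1)^{\langle T,P\rangle}\bigr)\bigl(1-(-1)^{\langle T,Q\rangle}\bigr)\geq 0$; and since every pair of distinct nonidentity Paulis lies in exactly one triple $\{P,Q,PQ\}$ while every element lies in $(d^2-2)/2$ of them, your summation yields $\sum_\sigma\delta_\sigma^2\leq\tfrac{2}{d^2}\bigl(\sum_\sigma\delta_\sigma\bigr)^2$, which after substituting $\delta_\sigma=1-x_\sigma$ gives exactly the stated constant $\tfrac{d^2-2}{d^2}$. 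Your route is, however, genuinely different from the standard one (essentially that of the cited reference): after the same twirl reduction, the standard proof is Fourier-analytic rather than combinatorial. Parseval gives $\sum_\sigma x_\sigma^2=d^2\sum_T p_T^2$ (sum over all Paulis, with $x_I=1$), the only inequality invoked is $\sum_{T\neq I}p_T^2\leq\bigl(\sum_{T\neq I}p_T\bigr)^2=(1-p_I)^2$, valid because $p_T\geq 0$, and substituting $p_I=\bigl((d^2-1)f+1\bigr)/d^2$ reproduces $\tfrac{d^2-2}{d^2}(1-f)^2$ in one line of algebra. That argument is shorter and makes the extremal channel transparent (all nonidentity probability weight on a single Pauli), whereas your version needs the incidence structure of the triples; what your version buys is conceptual: it isolates the Fujiwara--Algoet-type constraints $\delta_{PQ}\leq\delta_P+\delta_Q$ as the \emph{only} complete-positivity input and shows the constant is a purely combinatorial consequence of them. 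Both inequalities are saturated by the same channels, so neither approach loses sharpness.
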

Here this lemma is applied to channels of the form
\begin{equation}
\label{eq:unital_part}
\pmb{\Ec_1} = \begin{bmatrix}
1 & 0 \\
0 & \pmb{\Lambda}_\urm \pmb{\Lambda}_\urm^\dagger,
\end{bmatrix}
\quad\quad\mbox{and}\quad\quad
\pmb{\Ec_2} = \begin{bmatrix}
1 & 0 \\
0 & \pmb{\Lambda}_\urm^\dagger \pmb{\Lambda}_\urm,
\end{bmatrix}
\end{equation}
where $\Lambdabf_\urm$ is the unital block of the error map $\pmb{\Lambda}$ under investigation, since then $f(\Ec_1) = f(\Ec_2) = u(\Lambda)$. It is not clear that these superoperators are even a quantum channel (in particular, that they are CPTP). Therefore the following lemma provides a necessary condition on $\Lambda$ for which \eqref{eq:unital_part} are CPTP maps.
\begin{lemma}
	\label{lem:unital_part}
	Let $\Lambda$ be a CPTP quantum channel on a $d$-dimensional Hilbert space. Then the channels $\Ec_1$, $\Ec_2$ defined in \eqref{eq:unital_part} are CPTP if either $d = 2$ or if $\Lambda$ is unital (or both). Moreover $ \| \Ec_1 \|_{2 \rightarrow 2}, \| \Ec_2 \|_{2 \rightarrow 2} \leq 1$.
\end{lemma}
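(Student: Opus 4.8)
The plan is to check trace preservation, Hermiticity preservation and the operator-norm bound uniformly, and then to establish complete positivity separately in the unital case and in the $d=2$ case. The easy parts come first. Because $\Lambda$ is Hermiticity preserving, its Liouville matrix $\pmb{\Lambda}$ is real, so $\pmb{\Lambda}_\urm$ is a real $(d^2-1)\times(d^2-1)$ matrix, $\pmb{\Lambda}_\urm\pmb{\Lambda}_\urm^\dagger=\pmb{\Lambda}_\urm\pmb{\Lambda}_\urm^T$ and $\pmb{\Lambda}_\urm^\dagger\pmb{\Lambda}_\urm=\pmb{\Lambda}_\urm^T\pmb{\Lambda}_\urm$ are real symmetric; hence $\pmb{\Ec_1},\pmb{\Ec_2}$ are real, i.e.\ $\Ec_1,\Ec_2$ are Hermiticity preserving, and the first row $(1,0,\dots,0)$ of \eqref{eq:unital_part} is exactly the trace-preservation condition of \eqref{eq:Liouville_block_decomp}. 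For the norm, the vectorization $\kket{\cdot}$ is an isometry from $(\Lc(\Hc),\|\cdot\|_2)$ to $\C^{d^2}$, so $\|\Ec_i\|_{2\rightarrow2}$ equals the largest singular value of the block-diagonal matrix $\pmb{\Ec_i}$, namely $\max\{1,\max_j s_j(\pmb{\Lambda}_\urm)^2\}$, and it suffices to prove $\|\pmb{\Lambda}_\urm\|_\infty\le1$. If $\Lambda$ is unital, then $\pmb{\Lambda}=1\oplus\pmb{\Lambda}_\urm$ and $\|\pmb{\Lambda}_\urm\|_\infty\le\|\pmb{\Lambda}\|_\infty=\|\Lambda\|_{2\rightarrow2}\le1$ by the unital bound of \autoref{lem:Perez-Garcia}; if $d=2$, the action of the real matrix $\pmb{\Lambda}_\urm$ on real vectors is the action of $\Lambda$ on traceless Hermitian operators measured in Hilbert--Schmidt norm, so $\|\pmb{\Lambda}_\urm\|_\infty=\|\Lambda\|_{2\rightarrow2}^H\le(d/2)^{1/2}=1$ by \eqref{eq:A25}. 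In both regimes $\|\Ec_1\|_{2\rightarrow2}=\|\Ec_2\|_{2\rightarrow2}=1$.

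For complete positivity when $\Lambda$ is unital, note that $\alpha(\Lambda)=0$, so $\pmb{\Lambda}=1\oplus\pmb{\Lambda}_\urm$, and multiplying blocks gives $\pmb{\Lambda}\pmb{\Lambda}^\dagger=1\oplus\pmb{\Lambda}_\urm\pmb{\Lambda}_\urm^T=\pmb{\Ec_1}$ and $\pmb{\Lambda}^\dagger\pmb{\Lambda}=1\oplus\pmb{\Lambda}_\urm^T\pmb{\Lambda}_\urm=\pmb{\Ec_2}$, i.e.\ $\Ec_1=\Lambda\Lambda^\dagger$ and $\Ec_2=\Lambda^\dagger\Lambda$. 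The Hilbert--Schmidt adjoint of a CP map is CP (its Kraus operators are the adjoint Kraus operators), while the adjoint of a unital map is trace preserving and the adjoint of a trace-preserving map is unital; thus $\Lambda^\dagger$ is again CPTP, and $\Ec_1,\Ec_2$ are CPTP as composites of CPTP maps. This settles all $d$ under the unitality hypothesis.

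The remaining case---complete positivity for $d=2$ with $\Lambda$ possibly non-unital---is where the real work lies, and I would reduce it to a Pauli channel. Take a singular-value decomposition $\pmb{\Lambda}_\urm=O_L\Sigma O_R^T$ with $\Sigma=\mathrm{diag}(s_1,s_2,s_3)$ and $s_1\ge s_2\ge s_3\ge0$, so the symmetric positive block diagonalizes as $\pmb{\Lambda}_\urm\pmb{\Lambda}_\urm^T=O_L\Sigma^2O_L^T$. Flipping the sign of an eigenvector leaves $O_L\Sigma^2O_L^T$ unchanged, so $O_L$ may be chosen in $\mathrm{SO}(3)$, and then $1\oplus O_L$ is the Liouville matrix of a genuine single-qubit unitary channel $\mathcal{U}$. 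Hence $\Ec_1=\mathcal{U}\circ\mathcal{D}\circ\mathcal{U}^\dagger$, where $\mathcal{D}$ has Liouville matrix $1\oplus\mathrm{diag}(s_1^2,s_2^2,s_3^2)$, and $\Ec_1$ is CP if and only if the Pauli channel $\mathcal{D}$ is. Writing out the four Pauli weights of $\mathcal{D}$ and using $s_1\ge s_2\ge s_3$, the only nonvacuous nonnegativity condition is $s_1^2+s_2^2-s_3^2\le1$; the identical computation applies to $\Ec_2$.

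It remains to prove $s_1^2+s_2^2-s_3^2\le1$ for every single-qubit CPTP map $\Lambda$, and this is the main obstacle. The bound $s_j\le1$ alone does not suffice---for instance $\mathrm{diag}(1,1,0)$ obeys $s_j\le1$ yet violates the inequality and is indeed not the unital block of any channel---so complete positivity of $\Lambda$ itself must be used. I would extract the inequality from nonnegativity of the Choi matrix of $\Lambda$, that is, from the King--Ruskai-type conditions for qubit channels, which couple the singular values of $\pmb{\Lambda}_\urm$ to the non-unitality vector $\alpha(\Lambda)$. The delicate point is that a crude estimate yields only $s_1^2+s_2^2-s_3^2\le1+\|\alpha(\Lambda)\|_2^2$; one must invoke the full $4\times4$ Choi positivity rather than only its Pauli-diagonal part, so that the presence of $\alpha(\Lambda)$ contracts rather than enlarges the admissible singular values and the $\|\alpha(\Lambda)\|_2^2$ slack is absorbed. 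Establishing this coupling cleanly for a general non-unital qubit channel is the technically delicate heart of the lemma.
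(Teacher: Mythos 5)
Your proposal is correct and self-contained for the trace-preservation, Hermiticity, and norm statements, and for complete positivity in the unital case (there your argument coincides with the paper's: $\Ec_1=\Lambda\Lambda^\dagger$, $\Ec_2=\Lambda^\dagger\Lambda$, and the adjoint of a unital CPTP map is unital and CPTP). However, the $d=2$ non-unital case has a genuine gap, and you say so yourself: after the (correct) reduction via the singular value decomposition $\pmb{\Lambda}_\urm=O_L\Sigma O_R^T$ with $O_L\in\mathrm{SO}(3)$ to the Pauli-channel condition $s_1^2+s_2^2-s_3^2\le 1$, you do not prove this inequality. Invoking ``King--Ruskai-type conditions'' and asserting that the non-unitality vector $\alpha(\Lambda)$ ``contracts rather than enlarges'' the admissible singular values is a plan, not an argument; extracting $s_1^2+s_2^2-s_3^2\le1$ from Choi positivity of a general non-unital qubit channel is exactly the hard content of the lemma, and as written the proof does not close.

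For comparison, the paper sidesteps this entirely by citing a known structural result [Braun \emph{et al.}, Theorem IV.1]: for $d=2$ the unital part $\hat\Lambda$ (Liouville matrix $1\oplus\pmb{\Lambda}_\urm$) of any CPTP qubit channel is itself CPTP. Granting that, both cases unify: $\hat\Lambda$ is CPTP and unital, hence so is $\hat\Lambda^\dagger$, hence so are $\Ec_1=\hat\Lambda\hat\Lambda^\dagger$ and $\Ec_2=\hat\Lambda^\dagger\hat\Lambda$, and the norm bound follows from \autoref{lem:Perez-Garcia} applied to these unital CPTP maps. Note that this cited result is precisely the missing inequality in your approach: if $\hat\Lambda$ is CP, then $\hat\Lambda\hat\Lambda^\dagger=\Ec_1$ is CP, which by your own reduction is equivalent to $s_1^2+s_2^2-s_3^2\le1$ (given $s_i\le1$). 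So either quote such a reference at that point, or supply the Choi-matrix computation for a general non-unital qubit channel; without one or the other, the proof is incomplete exactly where the lemma is nontrivial. A positive note: your derivation of the $2\to2$ norm bound is independent of the CP claim (it uses only \eqref{eq:A25}), which is slightly more self-contained than the paper's route, where the norm bound is obtained only after $\Ec_1,\Ec_2$ are known to be unital CPTP.
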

\begin{proof}
	If $d = 2$ (that is, if $\Lambda$ is a single-qubit channel), then the unital part of $\Lambda$, defined as 
	\begin{equation}
	\pmb{\hat{\Lambda}} = \begin{bmatrix}
	1 & 0 \\
	0 & \pmb{\Lambda}_\urm
	\end{bmatrix},
	\end{equation}
	is CPTP \cite[Theorem IV.1]{Braun2014}. For the general $d$-dimensional case, it is assumed that $\Lambda$ is unital, so that $\Lambda = \hat{\Lambda}$. So in either case, $\hat{\Lambda}$ is CPTP and unital. It can be shown that the adjoint of a CPTP and unital map is also CPTP and unital \cite[Proposition 2.18 and Theorem 2.26]{Watrous2017}. This means that $\hat{\Lambda}^\dagger$ is CPTP and unital. Therefore $\Ec_1 = \hat{\Lambda}\hat{\Lambda}^\dagger$ and $\Ec_2 = \hat{\Lambda}^\dagger\hat{\Lambda}$ are also CPTP and unital. 
	\autoref{lem:Perez-Garcia} then ensures that $\| \Ec_1 \|_{2 \rightarrow 2} \leq 1$ and $\| \Ec_2 \|_{2 \rightarrow 2} \leq 1$.
\end{proof}
Third is a lemma from matrix analysis. It is a characterization of positive semidefinite matrices in terms of its principal minors. This lemma was used on $\mathbf{I} - \pmb{\hat{\Lambda}}\pmb{\hat{\Lambda}}^\dagger$ to bound its off-diagonal terms.
\begin{lemma}[Sylvester's criterion]
	\label{lem:Sylvester}
	Let $A \in \Lc(\C^{d^2})$ be a Hermitian matrix. Then $A$ is positive semidefinite if and only if all of its principal minors are nonnegative.
\end{lemma}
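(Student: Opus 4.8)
The plan is to reduce the statement to a fact about the eigenvalues of $A$ and then exploit the classical relationship between the principal minors of $A$ and the coefficients of its characteristic polynomial. Since $A$ is Hermitian, its eigenvalues $\lambda_1,\dots,\lambda_{d^2}$ are real, and positive semidefiniteness is equivalent to $\lambda_i \geq 0$ for all $i$; thus the whole proof amounts to translating ``all principal minors are nonnegative'' into ``all eigenvalues are nonnegative.'' The forward implication is the easy half: suppose $A \geq 0$ and let $S \subseteq \{1,\dots,d^2\}$ index a principal submatrix $A_S$. For any $x$ supported on $S$, padding it with zeros to $y \in \C^{d^2}$ gives $x^\dagger A_S x = y^\dagger A y \geq 0$, so $A_S$ is itself Hermitian positive semidefinite; its determinant is the product of its real nonnegative eigenvalues and hence $\det A_S \geq 0$. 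Every principal minor is therefore nonnegative.

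The reverse implication is where the content lies. The key ingredient I would invoke is the identity expressing the characteristic polynomial coefficients as sums of principal minors:
\[ \det(tI - A) = \sum_{k=0}^{d^2} (-1)^k \Big(\sum_{|S| = k} \det A_S\Big)\, t^{d^2 - k}, \]
where the inner sum over all $k \times k$ principal minors equals the $k$-th elementary symmetric polynomial $e_k(\lambda_1,\dots,\lambda_{d^2})$ of the eigenvalues. This follows either from the multilinearity of the determinant expanded in the columns of $tI - A$, or directly from $\det(tI - A) = \prod_i (t-\lambda_i)$. Under the hypothesis that every $\det A_S \geq 0$, each coefficient satisfies $e_k \geq 0$.

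To close the argument I would evaluate at negative arguments. Substituting $t = -s$ and multiplying through by $(-1)^{d^2}$ yields
\[ \prod_{i=1}^{d^2} (s + \lambda_i) = \sum_{k=0}^{d^2} e_k(\lambda)\, s^{d^2 - k}, \]
a polynomial in $s$ with all nonnegative coefficients and leading term $s^{d^2}$. It is therefore strictly positive for every $s > 0$ and has no positive real root. But its roots are exactly $s = -\lambda_i$, so no $\lambda_i$ can be negative; hence $\lambda_i \geq 0$ for all $i$ and $A \geq 0$.

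The step I expect to carry all the nontrivial weight is the principal-minor/elementary-symmetric-polynomial identity; once it is in hand, the remaining deductions are short. Since it is entirely standard, it could alternatively be cited directly from a matrix analysis reference such as Horn and Johnson rather than reproven. I would also emphasize one subtlety worth flagging: the positive \emph{semidefinite} version genuinely requires \emph{all} principal minors to be nonnegative, not merely the leading ones, in contrast to the positive definite case where leading principal minors suffice.
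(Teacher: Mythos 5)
Your proof is correct. Note, however, that the paper does not actually prove this lemma: its ``proof'' is a one-line citation to Horn and Johnson (Corollary 7.1.5 and Theorem 7.2.5 of that reference), since Sylvester's criterion is treated there as standard background material. What you have written is a legitimate self-contained argument for the semidefinite version: the easy direction via restriction of the quadratic form to coordinate subspaces, and the hard direction via the identity
\begin{equation*}
\det(tI - A) \;=\; \sum_{k=0}^{d^2} (-1)^k \Bigl(\sum_{|S|=k} \det A_S\Bigr) t^{d^2-k} \;=\; \sum_{k=0}^{d^2} (-1)^k e_k(\lambda)\, t^{d^2-k},
\end{equation*}
followed by the observation that a monic polynomial with nonnegative coefficients has no positive real roots, so no eigenvalue can be negative. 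This is one of the two standard textbook proofs (the other perturbs $A \mapsto A + \epsilon I$, applies the leading-principal-minor criterion for positive \emph{definiteness}, and lets $\epsilon \to 0$; the perturbation route also rests on the same principal-minor expansion). Your closing remark is also well taken and relevant to how the lemma is used in the paper: the semidefinite statement genuinely needs \emph{all} principal minors, not just the leading ones --- $\mathrm{diag}(0,-1)$ is the standard counterexample --- and indeed the paper invokes nonnegativity of arbitrary $2\times 2$ principal minors of $\mathbf{I} - \pmb{\Lambda}_\urm \pmb{\Lambda}_\urm^\dagger$ to bound off-diagonal entries by diagonal ones, which the leading-minor version would not support.
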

\begin{proof}
	See \cite[Corollary 7.1.5 and Theorem 7.2.5]{Horn2013}
\end{proof}

Next we present two results, also from matrix analysis, that are used several times to bound inner products. The first is a trace inequality and the second is H\"older's inequality.
\begin{lemma}
	\label{lem:NeumannHolder}
	Let $A, B \in \Lc(\Hc)$ be two linear operators on a $d$-dimensional Hilbert space $\Hc$. Denote their singular values as $s_i(A), s_i(B)$ respectively with $i = 1,..., d$, both in decreasing order. Finally let $s(A)$ and $s(B)$ denote vectors with entries $s_i(A)$ and $s_i(B)$. Then
	\begin{enumerate}
		\item (Von Neumann's trace inequality) $\mathrm{Re}(\Tr[A B]) \leq \sum_{i=1}^d s_i(A) s_i(B)$, and 
		\item (H\"older's inequality) $\sum_{i=1}^d | s_i(A) s_i(B) | \leq \| s(A) \|_p \| s(B) \|_q = \| A \|_p \| B \|_q$, for any pair $p,q \in [1, \infty]$ such that $p^{-1} + q^{-1} = 1$.
	\end{enumerate}
	Since singular values are nonnegative, combining the statements yields $\mathrm{Re}(\Tr[A B]) \leq \| A \|_p \| B \|_q$ for any pair $p,q \in [1, \infty]$ such that $p^{-1} + q^{-1} = 1$.
\end{lemma}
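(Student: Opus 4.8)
The plan is to prove the three claims in the order: the elementary H\"older bound (2), its immediate consequence (the combined inequality), and finally the genuinely nontrivial Von Neumann trace inequality (1).

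First I would dispatch part (2). Since $\| A \|_p = \| s(A) \|_p$ by the very definition of the Schatten norm, the claim $\sum_i |s_i(A) s_i(B)| \le \| s(A) \|_p \| s(B) \|_q$ is just the classical H\"older inequality for the finite nonnegative sequences $s(A), s(B) \in \R^d$. I would obtain it in the standard way from Young's inequality $xy \le x^p/p + y^q/q$ (valid for $x,y \ge 0$ and conjugate exponents $p,q$): after normalizing so that $\|s(A)\|_p = \|s(B)\|_q = 1$, summing Young's inequality over $i$ gives $\sum_i s_i(A) s_i(B) \le 1/p + 1/q = 1$, and rescaling restores the general statement. The boundary cases $p=1$, $q=\infty$ follow by bounding each $s_i(B)$ by $\max_i s_i(B)$. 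The combined inequality is then immediate: because all singular values are nonnegative, $\mathrm{Re}(\Tr[AB]) \le \sum_i s_i(A) s_i(B) = \sum_i |s_i(A) s_i(B)| \le \|A\|_p \|B\|_q$, where the first step is exactly part (1).

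The real work is part (1). I would start from singular value decompositions $A = U_A \Sigma_A V_A^\dagger$ and $B = U_B \Sigma_B V_B^\dagger$, with $\Sigma_A, \Sigma_B$ the diagonal matrices of singular values in decreasing order. Writing $P = V_B^\dagger U_A$ and $Q = V_A^\dagger U_B$ (both unitary) and using cyclicity of the trace, one expands
\[
\Tr[AB] = \Tr[P \Sigma_A Q \Sigma_B] = \sum_{i,j} s_j(A)\, s_i(B)\, P_{ij}\, Q_{ji},
\]
so that $\mathrm{Re}(\Tr[AB]) \le \sum_{i,j} s_j(A)\, s_i(B)\, |P_{ij}|\,|Q_{ji}|$. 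The key structural observation is that the nonnegative matrix $D_{ij} := |P_{ij}|\,|Q_{ji}|$ is \emph{doubly substochastic}: by Cauchy--Schwarz together with the unitarity of $P$ and $Q$, one has $\sum_j D_{ij} \le (\sum_j |P_{ij}|^2)^{1/2}(\sum_j |Q_{ji}|^2)^{1/2} = 1$, and symmetrically $\sum_i D_{ij} \le 1$.

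Finally I would finish with a majorization argument. Any nonnegative doubly substochastic matrix is dominated entrywise by a doubly stochastic matrix $\tilde{D}$; since the singular values are nonnegative, replacing $D$ by $\tilde{D}$ only increases the bilinear form. By Birkhoff's theorem $\tilde{D} = \sum_\pi c_\pi \Pi_\pi$ is a convex combination of permutation matrices, and for each permutation $\pi$ the rearrangement inequality gives $\sum_i s_i(B)\, s_{\pi(i)}(A) \le \sum_i s_i(A)\, s_i(B)$ because both singular-value vectors are sorted in decreasing order. Averaging over $\pi$ with weights $c_\pi$ yields $\sum_{i,j} s_j(A)\, s_i(B)\, D_{ij} \le \sum_i s_i(A)\, s_i(B)$, completing the proof. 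I expect this last majorization step---extending a substochastic matrix to a stochastic one and invoking Birkhoff together with rearrangement---to be the main obstacle, since it is where all the combinatorial content of Von Neumann's inequality is concentrated.
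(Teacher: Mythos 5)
Your proposal is correct, but it takes a genuinely different route from the paper for a simple reason: the paper does not prove this lemma at all. Its entire ``proof'' is a pointer to the literature---Von Neumann's trace inequality is cited as \cite[Theorem 8.7.6]{Horn2013} and the singular-value H\"older inequality as \cite[Theorem 31.3]{Aliprantis1998}---since the lemma plays a purely auxiliary role (it is only used to bound inner products of the form $\Tr[X\Ec(Y)]\leq\|X\|_\infty\|\Ec(Y)\|_1$ in \autoref{prop:inner_holder}). What you have written is essentially the standard self-contained textbook argument, and it is sound: part (2) via Young's inequality with the $p=1,q=\infty$ boundary case handled separately is routine; the combined statement follows exactly as you say; and your proof of part (1)---expanding $\Tr[AB]=\sum_{i,j}s_j(A)\,s_i(B)\,P_{ij}Q_{ji}$ via two singular value decompositions, showing by Cauchy--Schwarz and unitarity that $D_{ij}=|P_{ij}|\,|Q_{ji}|$ is doubly substochastic, dominating $D$ entrywise by a doubly stochastic matrix, and finishing with Birkhoff's theorem plus the rearrangement inequality---is the classical route, with the nonnegativity of singular values correctly invoked at both places where it is needed (the entrywise domination step and the rearrangement bound). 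The trade-off is the expected one: the paper's citation buys brevity and keeps the appendix focused on the genuinely new variance estimates, while your version buys self-containedness at the cost of leaning on three other named classical facts (substochastic-to-stochastic completion, Birkhoff, rearrangement), each of which is elementary and standard, so no gap results.
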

\begin{proof}
	Statement 1 is proven for example in \cite[Theorem 8.7.6]{Horn2013} and statement 2 is proven in \cite[Theorem 31.3]{Aliprantis1998}.
\end{proof}
\begin{corollary}
	If $A, B \in \Lc(\Hc)$ are Hermitian, then $\Tr[AB]^* = \Tr[(AB)^\dagger] = \Tr[B^\dagger A^\dagger] = \Tr[BA] = \Tr[AB]$, so that $\Tr[AB]$ is real. Therefore $\Tr[AB] \leq \| A \|_p \| B \|_q$ for any $p,q \in [1,\infty]$ satisfying $p^{-1} + q^{-1} = 1$.
\end{corollary}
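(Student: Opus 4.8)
The plan is to establish the claim in two short movements, exactly mirroring the two assertions in the statement: first that $\Tr[AB]$ is a real number, and then that the asserted inequality follows by specializing \autoref{lem:NeumannHolder} to Hermitian arguments.

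For the reality claim, I would compute the complex conjugate of the trace directly. The key elementary bookkeeping identity is that for any $M \in \Lc(\Hc)$ one has $\overline{\Tr[M]} = \Tr[M^\dagger]$, which holds because conjugating the diagonal entries is the same as taking the conjugate transpose and then tracing, and the (implicit) transpose leaves the trace invariant. Applying this with $M = AB$ gives $\overline{\Tr[AB]} = \Tr[(AB)^\dagger] = \Tr[B^\dagger A^\dagger]$. Next I would invoke the Hermiticity hypotheses $A^\dagger = A$ and $B^\dagger = B$ to rewrite this as $\Tr[BA]$, and finally use cyclicity of the trace to conclude $\Tr[BA] = \Tr[AB]$. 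Thus $\Tr[AB]$ equals its own conjugate and is therefore real; this is precisely the starred chain already displayed in the statement.

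For the inequality, I would simply observe that the combined conclusion of \autoref{lem:NeumannHolder} (Von Neumann's trace inequality followed by H\"older's inequality) already yields $\mathrm{Re}(\Tr[AB]) \leq \| A \|_p \| B \|_q$ for every conjugate pair satisfying $p^{-1} + q^{-1} = 1$. Since the first movement shows $\Tr[AB]$ is real, its real part equals itself, and the stated bound $\Tr[AB] \leq \| A \|_p \| B \|_q$ follows immediately.

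There is no genuine obstacle here: the corollary is an immediate specialization of the preceding lemma to Hermitian operators. The only point requiring a moment's care is the identity $\overline{\Tr[M]} = \Tr[M^\dagger]$ used to pass from $\overline{\Tr[AB]}$ to $\Tr[(AB)^\dagger]$; once that is in hand, Hermiticity and cyclicity close the reality argument and \autoref{lem:NeumannHolder} supplies the inequality with no further work.
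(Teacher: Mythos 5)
Your proposal is correct and follows essentially the same route as the paper: the paper's corollary is proved by exactly the conjugation chain $\Tr[AB]^* = \Tr[(AB)^\dagger] = \Tr[B^\dagger A^\dagger] = \Tr[BA] = \Tr[AB]$ (Hermiticity plus cyclicity) to establish reality, after which the combined conclusion of \autoref{lem:NeumannHolder}, $\mathrm{Re}(\Tr[AB]) \leq \| A \|_p \| B \|_q$, gives the inequality since the real part equals the trace itself. No gaps; nothing further is needed.
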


Finally some of our bounds use the fact that the mean of squares is larger than the square of the mean. We show this well-known fact below.
\begin{lemma}[Mean of squares is larger than square of mean]
	\label{lem:meansquares}
	Let $\{x_i\} \subset \R$ be a collection of $N$ real numbers. Then 
	\begin{equation}
	\left(\frac{1}{N} \sum_{i=1}^N x_i\right)^2 \leq \frac{1}{N} \sum_{i=1}^N x_i^2.
	\end{equation}
\end{lemma}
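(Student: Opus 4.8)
The plan is to recognize this as an immediate consequence of either the Cauchy--Schwarz inequality or, equivalently, the nonnegativity of an empirical variance. Writing $\bar{x} = \frac{1}{N}\sum_{i=1}^N x_i$ for the arithmetic mean, I would first observe that the claimed inequality is equivalent to
\begin{equation*}
\frac{1}{N}\sum_{i=1}^N x_i^2 - \bar{x}^2 \geq 0,
\end{equation*}
and then exhibit the left-hand side as a manifest sum of squares.

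Concretely, I would compute $\frac{1}{N}\sum_{i=1}^N (x_i - \bar{x})^2$, which is nonnegative term by term since each summand is the square of a real number. Expanding the square and using $\sum_i x_i = N\bar{x}$ gives $\frac{1}{N}\sum_i x_i^2 - 2\bar{x}^2 + \bar{x}^2 = \frac{1}{N}\sum_i x_i^2 - \bar{x}^2$, which is exactly the displayed quantity; rearranging yields the claim. Alternatively, and equally directly, one applies the Cauchy--Schwarz inequality to the vectors $(x_1,\ldots,x_N)$ and $(1,\ldots,1)$ in $\R^N$ to obtain $\left(\sum_i x_i\right)^2 \leq N \sum_i x_i^2$, and dividing both sides by $N^2$ produces the stated bound. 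A third route is Jensen's inequality for the convex map $t \mapsto t^2$ under the uniform weights $\tfrac{1}{N}$.

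There is no genuine obstacle here; the only choice is which of these equivalent one-line arguments to record. I would favor the sum-of-squares computation, since it is entirely self-contained (requiring no external inequality) and makes the equality condition transparent: equality holds precisely when all the $x_i$ coincide, i.e.\ when $x_i = \bar{x}$ for every $i$.
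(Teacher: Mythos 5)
Your proposal is correct and your preferred route---expanding $\frac{1}{N}\sum_i (x_i-\bar{x})^2$ and recognizing it as the difference between the mean of squares and the square of the mean---is exactly the argument the paper gives, just run in the opposite direction (the paper starts from the difference and collects it into the sum of squared deviations). The Cauchy--Schwarz and Jensen alternatives you mention are fine but unnecessary; no further comment is needed.
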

\begin{proof} By direct computation, it follows that
	\begin{equation}
	\begin{split}
	\left(\frac{1}{N}\sum_{i=1}^N x_i^2\right) - \left(\frac{1}{N}\sum_{i=1}^N x_i \right)^2
	&= \left(\frac{1}{N}\sum_{i=1}^N x_i^2\right) - 2\left(\frac{1}{N}\sum_{i=1}^N x_i \right)\left(\frac{1}{N}\sum_{k=1}^N x_k \right) + \left(\frac{1}{N}\sum_{k=1}^N x_k \right)^2\\
	&=\frac{1}{N}\sum_{i=1}^N \left(x_i^2 - 2 x_i \left(\frac{1}{N}\sum_{k=1}^N x_k \right) + \left(\frac{1}{N}\sum_{k=1}^N x_k \right)^2 \right) \\
	&= \frac{1}{N}\sum_{i=1}^N\left( x_i - \left(\frac{1}{N}\sum_{k=1}^N x_k\right) \right)^2  \geq 0,\\
	\end{split}  
	\end{equation}
	since it is the sum of real numbers squared, proving the result.
\end{proof}

\section{Variance bound and interval length bound}
\label{app:variance_bound}
This section is devoted to rigorously proving the variance bound \eqref{eq:var_bound}. Along the way we also prove the interval length bound \eqref{eq:interval_length}. The key ingredient of the variance bound proof is finding the trivial subrepresentations of the Liouville tensor-4 representation of the Clifford group $\Cliff{d}$. This is done in the first subsection. Then the variance bound \eqref{eq:var_bound} is proven. The technical lemmas used in this proof are collected in Appendix~\ref{app:technical_lemmas}. 

\subsection{Trivial subrepresentations of the tensor-4 Liouville representation of the Clifford group}
\label{app_sub:trivial_subreps_Liouville_4}
This section is concerned with presenting the trivial subrepresentations of the representation $G \mapsto \Gc^\tp{4}$ of the Clifford group $\Cliff{d}$. This representation is equivalent to $G \mapsto \Gcbf^\tp{4}$ by the intertwining isomorphism $\kket{\cdot}$. Therefore both are considered the same and with slight abuse of notation we refer to them both as the same representation, which we will call the tensor-4 Liouville representation. 

The key idea is to apply \autoref{lem:trivial_subrep_lemma} and its corollaries to find the trivial subrepresentations of the tensor-4 representation $G \mapsto \Gc^\tp{4}$. This requires a full description of the Liouville tensor-2 representation $G \mapsto \Gc^\tp{2}$ in terms of its irreducible components. This was studied in \cite{Zhu2017,Helsen2018}. Let us denote $V = \Lc(\Hc \otimes \Hc)$ as the space that carries the tensor-2 representation. The present problem is therefore to find the trivial subrepresentations of $V \otimes V$, given a decomposition of $V$ into irreducible representations.

In an earlier result \cite{Zhu2016} the multiplicity of the trivial representation in $V \otimes V$ was calculated. They found that
\begin{equation}
| (V \otimes V)^\Cliff{d} | =  
\begin{cases}
15 & \mbox{if } d=2; \\
29 & \mbox{if } d=4; \\
30 & \mbox{otherwise}, 
\end{cases}
\end{equation}
which is a justification of \eqref{eq:dim_rge_N}. First we will discuss the decomposition of $V$ into irreducible representations \cite{Helsen2018}, and next we will apply \autoref{lem:trivial_subrep_lemma} to find $(V \otimes V)^\Cliff{d}$ explicitly.

The full decomposition of the Liouville tensor-2 representation $(V, R)$ given by $R: \Cliff{d} \rightarrow \GLc(V) : G \mapsto \Gc^\tp{2}$ is studied in \cite{Helsen2018}. We will review the result of this work here, following their notation. A summary of the relevant subspaces is given in \autoref{fig:V_TS}. First, the representation $V$ is decomposed in the following subrepresentations, defined by
\begin{equation}
\label{eq:bases}
\begin{split}
V_S &:= \Span \{  \frac{\sigma \tau + \tau \sigma}{\sqrt{2}} :  \sigma, \tau \in \Pc^*, \sigma \neq \tau \}, \\
V_A &:= \Span \{  \frac{\sigma \tau - \tau \sigma}{\sqrt{2}} :  \sigma, \tau \in \Pc^*, \sigma \neq \tau \}, \\
V_d &:= \Span \{  \sigma \sigma :  \sigma\in \Pc^* \}, \\
V_{r,l} &:= \Span \{ \sigma_0 \sigma, \sigma \sigma_0 : \sigma \in \Pc^*  \}, \\
V_{\ideal} &:= \Span \{  B_1 = \sigma_0 \sigma_0 \}. \\
\end{split}
\end{equation}
Recall that the tensor symbol is omitted for brevity (so $\sigma\tau$ means $\sigma \otimes \tau$ here). Each of these spaces carries a subrepresentation and furthermore $V = V_\ideal \oplus V_{r,l} \oplus V_d \oplus V_S \oplus V_A$.
Finally let us define the traceless, symmetric subspace as 
\begin{equation}
\label{eq:V_TS}
V_{TS} := V_S \oplus V_d.
\end{equation}
Since the ideal input and measurement operators for the URB protocol $\bar{\rho}_\ideal, \bar{E}_\ideal$ (as defined in \eqref{eq:ideal_operators}, see also \eqref{eq:ideal_operators_app}) are elements of $V_{TS}$ and since $\Lambda^\tp{2}(V_{TS}) \subseteq V_{TS}$ by the trace-preserving property of $\Lambda$ and the symmetry with respect to swapping the two copies of $\Hc$, the only relevant subspace of $V$ is $V_{TS}$. Therefore we continue our analysis of $V_{TS}$.

The space $V_d$ can be broken up into the two subrepresentations
\begin{equation}
\label{eq:bases2}
V_0 := \Span \left\{ B_2 = \frac{1}{d^2-1} \sum_{\sigma \in \Pc^*} \sigma\sigma \right\}  \quad\quad\mbox{and}\quad\quad V_{1,2} := V_d \setminus V_0. \\
\end{equation}
In the single-qubit case ($q=1$), the spaces $V_S$ and $V_{1,2}$ are irreducible, therefore fully characterizing $V_{TS} = V_0 \oplus V_{1,2} \oplus V_S$. However, if $q \geq 2$ the space $V_{1,2}$ breaks into two irreps, indexed by the index set $\Zc_{1,2}$. For $q = 2$, $V_S$ breaks into four irreps, while for $q \geq 3$ it breaks into five irreps, which will be indexed by $\Zc_S$. So the space $V_{TS}$ breaks up into the following number of irreps
\begin{equation}
| \Zc_{TS} | = 
\begin{cases}
3 & \mbox{if } q=1; \\
7 & \mbox{if } q=2; \\
8 & \mbox{if } q\geq 3,
\end{cases}
\end{equation}
where $\Zc_{TS} := \Zc_{d} \cup \Zc_S = \{ 0 \} \cup \Zc_{1,2} \cup \Zc_S$.  A summary of all the subspaces of $V_{TS}$ that carry subrepresentations is given in \autoref{fig:V_TS}, together with the dimensions of the spaces. In \cite{Helsen2018} it is shown that all irreducible representations contained in $V_{TS} = V_d \oplus V_S$ indexed by $\Zc_{TS}$ are mutually inequivalent. Therefore it follows from \autoref{lem:trivial_subrep_lemma} that there are precisely $|\Zc_{TS}|$ trivial subrepresentations contained in $V_{TS} \otimes V_{TS}$. The lemma also provides an explicit method of finding them, given a basis for $V_i$ from \cite{Helsen2018}. 

\begin{figure}
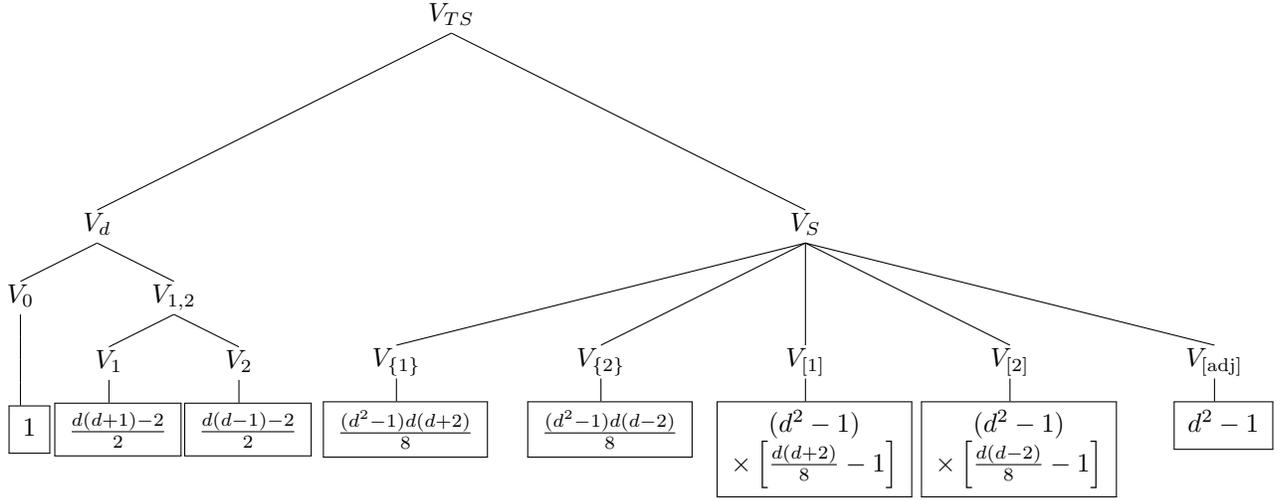

	\qtreecenterfalse
	\Tree[
		.$V_{TS}$ 
			[.$V_d$ 
				[.$V_0$  [ [ [ $1$ !{\qframesubtree} ]]]] !\qsetw{0.9cm}
				[.$V_{1,2}$ 
					[.$V_1$ $\frac{d(d+1)-2}{2}$ !{\qframesubtree} ] !\qsetw{2cm}
					[.$V_2$ $\frac{d(d-1)-2}{2}$ !{\qframesubtree} ]
		]] 
		[.$V_S$ !\qsetw{6cm}
			[.$V_{\{1\}}$ $\frac{(d^2-1)d(d+2)}{8}$ !{\qframesubtree} ]
			[.$V_{\{2\}}$ $\frac{(d^2-1)d(d-2)}{8}$ !{\qframesubtree} ]
			[.$V_{[1]}$ {$(d^2-1)$\\ $\times\left[\frac{d(d+2)}{8}-1\right]$} !{\qframesubtree} ] !\qsetw{3cm}
			[.$V_{[2]}$ {$(d^2-1)$\\ $\times\left[\frac{d(d-2)}{8}-1\right]$} !{\qframesubtree} ] 
			[.$V_{\adj}$ $d^2-1$ !{\qframesubtree} ]]]
	\caption{Hierarchy of subspaces contained within the traceless, symmetric subspace $V_{TS}$, carrying the relevant subrepresentation of the Liouville tensor-4 representation $G \mapsto \Gcbf^\tp{4}$. Every child node is a subspace (that also carries a subrepresentation) of its parent node and all child nodes direct sum to their parent. Leaf nodes represent the final irreducible subspaces and their dimension are shown in the box below each leave node. 	
	Definitions of the composite spaces are given in the main text (\eqref{eq:bases}, \eqref{eq:V_TS} and \eqref{eq:bases2}; for the definitions of the irreducible spaces, see \cite{Helsen2018}). 
	For $d=2$ or $d=4$ there are certain subspaces with $|V_i| \leq 0$. This means that such a subspace is empty and therefore not present in the decomposition. Summing the dimensions of the child nodes together, yields the following sizes for the decomposable spaces: $|V_{1,2}| = d^2-2$, $|V_{d}| = d^2-1$, $|V_{S}| = \half(d^2-1)(d^2-2)$ and $|V_{TS}| = \half d^2(d^2-1)$.}
	\label{fig:V_TS}
\end{figure}

Let $\Bc_i$ denote an orthonormal basis for $V_i$, for $i \in \Zc_{TS}$. Then since all irreps indexed by $\Zc_{TS}$ are mutually inequivalent, \autoref{lem:trivial_subrep_lemma} gives an explicit way to compute the trivial subreps of $(V_{TS} \otimes V_{TS})$ as
\begin{equation}
\label{eq:A_i}
A_i = \frac{1}{\sqrt{|V_i|}}\sum_{v_i \in \Bc_i} v_i  v_i, \quad \forall i \in \Zc_{TS},
\end{equation}
where the normalization constant is to normalize $A_i$ with respect to the Hilbert-Schmidt norm $\| A_i \|_2 = 1$. In the multiqubit case where $V_{1,2}$ and $V_S$ are not irreducible, it is still useful to define
\begin{equation}
\label{eq:A_j_not_irrep}
A_{j} = \frac{1}{\sqrt{|V_j|}} \sum_{i \in \Zc_j} \sqrt{|V_i|} A_i, \quad \quad j \in \{ S; \, d; \, 1,2 \}.
\end{equation}
In fact, this allows us to explicitly find $A_{1,2}$ from $A_d$ and $A_0$. Using the basis for $V_0$, $V_d$ and $V_S$ (in \eqref{eq:bases} and \eqref{eq:bases2}), we therefore explicitly find
\begin{align}
\label{eq:A0}
A_0 &:= B_2 B_2 = \frac{1}{d^2 - 1} \sum_{\sigma,\tau \in \Pc^*} \sigma\sigma\tau\tau, \\
\label{eq:A12}
A_{1,2} &:= \frac{1}{\sqrt{d^2-2}} \left( \sum_{\sigma \in \Pc^*} \sigma^\tp{4} - A_0  \right), \\
\label{eq:AS}
A_S &:= \sqrt{\frac{1}{2 (d^2-1)(d^2-2)}}  \sum_{\substack{\sigma, \tau  \in \Pc^* \\ \sigma \neq \tau}} \sigma\tau\sigma\tau + \sigma\tau\tau\sigma.
\end{align}
No explicit expression is needed for any $i \in \Zc_S$ or $i \in \Zc_{1,2}$ if $V_S$ and $V_{1,2}$ are reducible (which happens in the multiqubit case), because bounds are defined in terms of $A_S$ and $A_{1,2}$. The only exception to this is $i = \adj \in \Zc_S$. The space $V_\adj \subset V_S$, which carries an irrep, is defined by \cite{Helsen2018}
\begin{equation}
V_\adj = \Span \left\{  v_\tau^\adj = \frac{1}{2 \sqrt{|C_\tau|}} \sum_{\sigma \in C_\tau}  \sigma(\sigma \cdot \tau) + (\sigma \cdot \tau)\sigma    \Big| \tau \in \Pc^*  \right\},
\end{equation}
where $\cdot$ indicates the normalized matrix product and where $C_\tau$ is the set of all elements of $\Pc^*$ that commute with $\tau$ as defined in \eqref{eq:def_Ctau}.
The corresponding trivial subrepresentation, as computed using \eqref{eq:A_i}, is
\begin{equation}
\label{eq:Aadj}
A_\adj = \frac{1}{2(d^2-4)\sqrt{d^2 - 1}} \sum_{\tau \in \Pc^*} \left(\sum_{\sigma \in C_\tau} (\sigma \cdot \tau) \sigma  + \sigma (\sigma \cdot \tau)\right)^\tp{2}.
\end{equation}
In the next section, we use the trivial subrepresentations of the Liouville tensor-4 representation to prove our variance bound.

\subsection{Statement and proof of the variance bound and interval length bound}
\label{app_sub:statistical_results_proof}
In this section we will state and prove our main theorem on the variance bound and prove the interval in which the average sequence purity is found. We also show the optimality of the ideal input and measurement operators. First, we will recapture some of the most important definitions and results discussed in the main text. The point of departure is the expression for the variance of \eqref{eq:var_expr1} 
\begin{equation}
\label{eq:var_expr1_appendix}
\Vbb[q_\jbf] = \bbraket{\bar{E}^\tp{2} | \Ncbf^{m-1} - (\Mcbf^\tp{2})^{m-1} | \bar{\rho}^\tp{2}},
\end{equation}
where the operators are defined as
\begin{equation}
\label{eq:N_def}
{\Mc} := \Gavg{2} {\Lambda}^{\otimes 2} \Gavg{2}, \quad\quad
{\Nc} := \Gavg{4} {\Lambda}^{\otimes 4} \Gavg{4}, \quad\quad
\Gavg{n} := \frac{1}{|\Cliff{d}|} \sum_{\Gc \in \Cliff{d}} \Gc^\tp{n}.
\end{equation}
Here $q_\jbf$ is the sequence purity due to the sequence $\jbf$. As discussed in the main text, $\Mcbf$ only has support on the space $W = \Span\{ B_1, B_2 \} \subset \Lc(\Hc \otimes \Hc)$ \cite{Wallman2015a}, where 
\begin{align}
\label{def:def_B1}
B_1 &= \frac{I}{d} = \sigma_0  \sigma_0,   \\
\label{def:def_B2}
B_2 &= \frac{S - B_1}{\sqrt{d^2 -1}}  = \frac{1}{\sqrt{d^2 -1}} \sum_{\sigma \in \Pc^*} \sigma  \sigma.
\end{align}
In particular the matrix elements of $\Mcbf$ with respect to this basis (see also \eqref{eq:M_entries}) as
\begin{equation}
\label{eq:M_entries_app}
\pmb{\Mc} = \begin{bmatrix}
1 & 0 \\
\frac{\|\alpha(\Lambda)\|^2}{\sqrt{d^2-1}} & u(\Lambda) \\
\end{bmatrix}.
\end{equation}
From this it follows that (see also \eqref{eq:def_unitarity_app})
\begin{equation}
\label{eq:unitarity_eigenvalue}
\Mcbf \kket{B_2} = u \kket{B_2},
\end{equation}
which implies that $\bbraket{B_2 | \Mcbf | B_2} = \bbraket{B_2 | \Lambdabf^\tp{2} | B_2} = u$, since $\Gavgbf{2} \kket{B_2} = \kket{B_2}$ and $B_2$ is normalized. This is used in the analysis of \eqref{eq:var_expr1_appendix}.

In \eqref{eq:var_expr1_appendix} the measurement $E$ is replaced with its the traceless counterpart $\bar{E}$, which is defined as
\begin{equation}
\bar{E} := E - \frac{\Tr[E]}{d^2} I = E - \bbraket{B_1 | E} B_1.
\end{equation}
Since $\bar{\rho}$ is traceless by construction and $\Gc_\jbf$ is trace-preserving, it follows that $q_\jbf = \bbraket{E | \Gc_\jbf^\tp{2} | \bar{\rho}} = \bbraket{\bar{E} | \Gc_\jbf^\tp{2} | \bar{\rho}}$. This justifies the replacement of $E$ by $\bar{E}$ is all expectation value and variance expressions. In our analysis it is advantageous to think of $\bar{E}$ instead of $E$, since then
$\bar{E}_\ideal, \bar{\rho}_\ideal \propto B_2$. The ideal state and measurement operators were defined in \eqref{eq:ideal_operators}. For completeness, they are
\begin{align}
\label{eq:ideal_operators_appendix}
E_\ideal &= S =  B_1 + \sqrt{d^2-1} B_2, \\
\rho_\ideal &= \frac{I + S}{d(d+1)} = \frac{1}{d} B_1 + \frac{\sqrt{d^2-1}}{d(d+1)} B_2, \\
\hat{\rho}_\ideal &= \frac{I - S}{d(d-1)} = \frac{1}{d} B_1 - \frac{\sqrt{d^2-1}}{d(d-1)} B_2, 
\end{align}
from which it follows that
\begin{equation}
\label{eq:ideal_operators_app}
\bar{E}_\ideal = \sqrt{d^2-1} B_2 \quad\quad \mbox{and}\quad\quad \bar{\rho}_\ideal = \frac{\rho_\ideal - \hat{\rho}_\ideal}{2}= \frac{1}{\sqrt{d^2-1}} B_2.
\end{equation}
The implemented operators $\bar{\rho}$ and $E$ can then be decomposed into an ideal part and an error part as
\begin{align}
\label{eq:alpha_app}
\alpha &:= \frac{\bbraket{\bar{\rho}_\ideal | \bar{\rho}}}{\bbraket{\bar{\rho}_\ideal|\bar{\rho}_\ideal}} = (d^2-1) \bbraket{\bar{\rho}_\ideal | \bar{\rho}} , &  \bar{\rho}_\spam &:= \bar{\rho} - \alpha \bar{\rho}_\ideal, \\
\label{eq:beta_app}
\beta &:= \frac{\bbraket{\bar{E}_\ideal | \bar{E} }}{\bbraket{\bar{E}_\ideal|\bar{E}_\ideal}} = \frac{1}{d^2-1} \bbraket{\bar{E}_\ideal | \bar{E} }, &  \bar{E}_\spam &:= \bar{E} - \beta \bar{E}_\ideal.
\end{align}
This decomposition is chosen such that $\Tr[\bar{\rho}_\ideal \bar{\rho}_\spam] = \Tr[\bar{E}_\ideal \bar{E}_\spam] = 0$. It can be shown that the ideal operators $\bar{\rho}_\ideal$, $\bar{E}_\ideal$ are in fact ideal, in the sense that they maximize the prefactor $B$ in the fit model $\Ebb[q_\jbf] = B u^{m-1}$ (and also minimize the variance as we will see). The prefactor $B$ is given by (see \eqref{eq:fitmodel})
\begin{equation}
\label{eq:B21}
B = \bbraket{E | \Gavgbf{2} | \bar{\rho}} = \bbraket{\bar{E} | B_2} \bbraket{B_2 | \bar{\rho}} = \alpha \beta.
\end{equation}
The ideal operators $\bar{\rho}_\ideal$, $\bar{E}_\ideal$ will yield $B = 1$. The following lemma shows that this is in fact optimal.
\begin{lemma}[Optimality of ideal operators]
	\label{lem:optimality_operators}
	The prefactor $B$ in the fit model for URB as given in \eqref{eq:B21} satisfies $| B | \leq 1$ for all input and measurement operators $\bar{\rho}, E$. 
\end{lemma}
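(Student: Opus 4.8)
The plan is to exploit the factorization $B = \alpha\beta$ recorded in \eqref{eq:B21}, so that it suffices to prove the two bounds $|\alpha| \le 1$ and $|\beta| \le 1$ separately; the desired estimate $|B| = |\alpha|\,|\beta| \le 1$ then follows at once. Both bounds will come from rewriting $\alpha$ and $\beta$ as Hilbert--Schmidt inner products against the swap operator $S$ and then invoking the operator-norm/trace-norm duality of H\"older's inequality (\autoref{lem:NeumannHolder}), using that the physical operators $\rho,\hat\rho$ are states and that $E$ is a two-outcome observable with $\|E\|_\infty \le 1$.

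First I would handle $\alpha$. Substituting $\bar\rho_\ideal = \frac{1}{\sqrt{d^2-1}}B_2$ from \eqref{eq:ideal_operators_app} into the definition \eqref{eq:alpha_app} gives $\alpha = \sqrt{d^2-1}\,\bbraket{B_2 | \bar\rho}$. Since $B_2 = (S - I/d)/\sqrt{d^2-1}$ by \eqref{def:def_B2} and $\bar\rho$ is traceless, the identity contribution drops out and $\alpha = \Tr[S\bar\rho]$. Recalling $\bar\rho = \half(\rho - \hat\rho)$, I would bound each term using that $S$ is Hermitian with $\|S\|_\infty = 1$ (its eigenvalues are $\pm 1$ on the symmetric and antisymmetric subspaces), so that $|\Tr[S\rho]| \le \|S\|_\infty \|\rho\|_1 = 1$ and likewise for $\hat\rho$. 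Hence $\alpha = \half(\Tr[S\rho] - \Tr[S\hat\rho]) \in [-1,1]$.

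Next I would treat $\beta$ analogously. Because $\bar E_\ideal = \sqrt{d^2-1}\,B_2$ is traceless, replacing $\bar E$ by $E$ is harmless inside the inner product, and \eqref{eq:beta_app} becomes $\beta = \frac{1}{d^2-1}\bbraket{\bar E_\ideal | E}$ with $\bar E_\ideal = S - I/d$. Applying H\"older's inequality (\autoref{lem:NeumannHolder}) with the conjugate pair $(1,\infty)$ gives $|\bbraket{\bar E_\ideal | E}| \le \|\bar E_\ideal\|_1 \|E\|_\infty \le \|\bar E_\ideal\|_1$, using $\|E\|_\infty \le 1$. It then remains to evaluate the trace norm $\|S - I/d\|_1$: the eigenvalues of $S-I/d$ are $1 - 1/d$ with multiplicity $d(d+1)/2$ and $-1-1/d$ with multiplicity $d(d-1)/2$, so that $\|S-I/d\|_1 = \frac{d(d+1)}{2}\cdot\frac{d-1}{d} + \frac{d(d-1)}{2}\cdot\frac{d+1}{d} = d^2-1$. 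This yields $|\beta| \le 1$.

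Combining the two estimates, $|B| = |\alpha|\,|\beta| \le 1$, which is the claim. The only genuinely computational step — and the one I expect to be the main obstacle to state cleanly — is the trace-norm evaluation $\|S - I/d\|_1 = d^2-1$, which hinges on the precise dimension count of the symmetric and antisymmetric subspaces of $\Hc \otimes \Hc$; everything else is a direct substitution followed by the standard norm duality.
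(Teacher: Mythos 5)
Your proof is correct, but it follows a genuinely different route from the paper's. The paper never uses the factorization $B = \alpha\beta$ inside the proof: it writes $E = 2M - I$ with $0 \leq M \leq I$, notes that $\Gavg{2}$ is CPTP so that $\Gavg{2}(\rho), \Gavg{2}(\hat{\rho}) \geq 0$, and bounds $0 \leq \bbraket{M | \Gavgbf{2} | \rho} \leq 1$ as a Born probability; hence $\bbraket{E | \Gavgbf{2} | \rho}$ and $\bbraket{E | \Gavgbf{2} | \hat{\rho}}$ each lie in $[-1,1]$, and $B$, being half their difference, does too. You instead bound the two factors separately: $\alpha = \Tr[S\bar{\rho}] \in [-1,1]$ from $\|S\|_\infty = 1$ and $\|\rho\|_1 = \|\hat{\rho}\|_1 = 1$, and $|\beta| \leq \frac{1}{d^2-1}\|S - I/d\|_1 \|E\|_\infty \leq 1$ from the explicit spectrum of the swap operator. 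Your computations check out: the traceless replacements are legitimate (both $\bar{\rho}$ and $S - I/d$ are traceless), and $\|S - I/d\|_1 = \frac{d(d+1)}{2}\cdot\frac{d-1}{d} + \frac{d(d-1)}{2}\cdot\frac{d+1}{d} = d^2 - 1$ is exactly right. What your route buys: it establishes the stronger individual bounds $|\alpha| \leq 1$ and $|\beta| \leq 1$ directly, which the paper only obtains afterwards as a corollary via the indirect device of fixing $\bar{\rho} = \bar{\rho}_\ideal$ (forcing $\alpha = 1$) to isolate $\beta$, and vice versa — your argument makes that corollary immediate. What the paper's route buys: it needs no explicit description of the invariant subspace $W = \Span\{I, S\}$ or of the spectrum and subspace dimensions of $S$ — only positivity, trace preservation, and the fact that $E$ is a $\pm 1$-valued observable — so it is shorter, avoids the spectral bookkeeping your argument hinges on, and would carry over verbatim to any setting where one knows only that $\Gavg{2}$ is a CPTP projection, without knowing onto what.
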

\begin{proof}
	Let us write the two-valued measurement $E$ with outcomes $\pm1$ in terms of its POVM elements $\{M, I - M\}$, so that $E = M - (I-M) = 2M - I$. By definition $M$ satisfies $0 \leq M \leq I$. Since $\Gavg{2}$ is a CPTP map  and $\rho, \hat{\rho} \geq 0$ are quantum states, it follows that $\Gavg{2}(\rho), \Gavg{2}(\hat{\rho}) \geq 0$. Using the fact that $\Tr[AB] \geq 0$ for all positive semidefinite operators $A,B \geq 0$, it follows that
	\begin{equation}
	0 = \bbraket{ 0 |  \Gavgbf{2} | \rho} \leq \bbraket{M | \Gavgbf{2} | \rho} \leq \bbraket{ I | \Gavgbf{2}  |\rho} = 1.
	\end{equation}
	In terms of the measurement $E$, this means that $-1 \leq \bbraket{E | \Gavgbf{2} | \rho} \leq 1$. Analogously, this holds for $\hat{\rho}$. Since $\bar{\rho} = \half(\rho-\hat{\rho})$ is follows that $-1\leq B = \bbraket{E | \Gavgbf{2} | \bar{\rho}} \leq 1$.
\end{proof}
\begin{corollary}
	The quantities $\alpha, \beta$ as defined in \eqref{eq:alpha_app} and \eqref{eq:beta_app} satisfy $-1 \leq \alpha, \beta \leq 1$.
\end{corollary}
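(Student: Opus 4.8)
The plan is to leverage \autoref{lem:optimality_operators}, which already establishes $|B| \le 1$ for \emph{every} pair of physical input and measurement operators. The key observation is that the prefactor factorizes as $B = \alpha\beta$ by \eqref{eq:B21}, with $\alpha$ depending only on the input operator $\bar\rho$ (via \eqref{eq:alpha_app}) and $\beta$ depending only on the measurement $\bar E$ (via \eqref{eq:beta_app}). Since the bound of \autoref{lem:optimality_operators} holds for all admissible choices of $\bar\rho$ and $E$ \emph{independently}, I can decouple the two factors by freezing one of them at its ideal value and reading off the bound on the other.

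First I would bound $\alpha$. Choose the measurement to be ideal, $E = E_\ideal = S$; this is a legitimate two-valued observable with eigenvalues $\pm 1$, so the hypotheses of \autoref{lem:optimality_operators} are met. For this choice $\bar E = \bar E_\ideal = \sqrt{d^2-1}\,B_2$ by \eqref{eq:ideal_operators_app}, and \eqref{eq:beta_app} gives
\begin{equation}
\beta = \frac{1}{d^2-1}\bbraket{\bar E_\ideal | \bar E_\ideal} = \frac{1}{d^2-1}\|\bar E_\ideal\|_2^2 = 1,
\end{equation}
using that $B_2$ is normalized. Hence $B = \alpha\beta = \alpha$, and \autoref{lem:optimality_operators} immediately yields $-1 \le \alpha \le 1$. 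The argument for $\beta$ is symmetric: fix the input at $\bar\rho = \bar\rho_\ideal = \frac{1}{\sqrt{d^2-1}} B_2$, which arises from the genuine states $\rho_\ideal, \hat\rho_\ideal$ of \eqref{eq:ideal_operators_appendix}, so that \eqref{eq:alpha_app} gives $\alpha = (d^2-1)\|\bar\rho_\ideal\|_2^2 = 1$. Then $B = \beta$, and the lemma forces $-1 \le \beta \le 1$.

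I do not expect a genuine obstacle here; the one point requiring care is recognizing that \autoref{lem:optimality_operators} is a statement over \emph{all} operator pairs rather than the specific implemented ones, which is precisely what licenses substituting the ideal operators to isolate each factor. The only computational content is the pair of normalization identities $\|\bar E_\ideal\|_2^2 = d^2-1$ and $\|\bar\rho_\ideal\|_2^2 = (d^2-1)^{-1}$, both immediate from $\|B_2\|_2 = 1$ together with \eqref{eq:ideal_operators_app}.
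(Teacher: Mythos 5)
Your proposal is correct and follows essentially the same route as the paper: both exploit the factorization $B=\alpha\beta$ from \eqref{eq:B21} together with \autoref{lem:optimality_operators}, and decouple the two factors by fixing one operator at its ideal value (forcing that factor to equal $1$) to read off the bound on the other. Your additional checks that $E_\ideal$ is a legitimate $\pm1$ observable and that the normalization identities hold are fine but add nothing beyond what the paper's proof already implicitly uses.
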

\begin{proof}
	\autoref{lem:optimality_operators} and \eqref{eq:B21} show that $-1 \leq \alpha \beta \leq 1$ for all $\bar{\rho},E$. Note that $\alpha$ only depends on $\bar{\rho}$ and $\beta$ only on $E$. Therefore if we fix $\bar{\rho} = \bar{\rho}_\ideal$ (which implies $\alpha = 1$), then we have $-1 \leq \beta \leq 1$. Analogously fixing $E = E_\ideal$ (which implies $\beta = 1$) yields $-1 \leq \alpha \leq 1$. 
\end{proof}
Very similar reasoning also gives the bound on the interval in which the sequence purity $q_\jbf^{(K)}$ lies (see \eqref{eq:interval_length}). This bound will be proven in the following lemma.
\begin{lemma}[Bound on interval lengths]
	\label{lem:interval_length}
	Let $q_\jbf^{(K)}$ denote the sequence purity of the $K$-copy implementation due to the random sequence $\jbf$ as defined in \eqref{eq:q_jbf^1} and \eqref{eq:q_jbf^2}: 
	\begin{equation}
	q_\jbf^{(1)} = \frac{1}{d^2-1}\sum_{P,Q \neq I} \bbraket{E_\Hc^{(Q)}  |\Gcbf_\jbf| \bar{\rho}_\Hc^{(P)}}^2 \qquad \mbox{and} \qquad q_\jbf^{(2)} = \bbraket{E | \Gc_\jbf^\tp{2} | \bar{\rho}}.
	\end{equation}
	Assume that $\alpha,\beta \geq 0$ (equivalent to $\Tr[\bar{\rho}_\ideal  \bar{\rho}] \geq 0$ and $\Tr[\bar{E}_\ideal  \bar{E}] \geq 0$ stated in \autoref{subsec:interval_length}). Then for all operators $\bar{\rho}, E$ (which are the effective operators in the single-copy implementation, see \eqref{eq:state_12}), all CPTP error maps $\Lambda$ and all sequences of Clifford gates indexed by $\jbf$,
	\begin{align}
	q_\jbf^{(1)} &\in [0, \alpha\beta + \beta \| \bar{\rho}_\spam \|_1 + \alpha \| \bar{E}_\spam \|_\infty + \| \bar{\rho}_\spam \|_1\| \bar{E}_\spam \|_\infty], \\
	q_\jbf^{(2)} &\in [-\beta \| \bar{\rho}_\spam \|_1 - \alpha\| \bar{E}_\spam \|_\infty - \| \bar{\rho}_\spam \|_1\| \bar{E}_\spam \|_\infty, 1].
	\end{align}
\end{lemma}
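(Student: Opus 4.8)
The plan is to reduce both implementations to a single bilinear form and then control that form by splitting the operators into ideal and error parts. Since the effective-operator construction \eqref{eq:state_12} gives $q_\jbf^{(1)} = q_\jbf^{(2)} = \bbraket{\bar{E} | \Gcbf_\jbf^\tp{2} | \bar{\rho}}$, I would first record the two cheap one-sided bounds that come for free. On one side, $q_\jbf^{(1)}\geq 0$ since it is manifestly a sum of squares. On the other, $q_\jbf^{(2)} = \half\big(\bbraket{E|\Gcbf_\jbf^\tp{2}|\rho} - \bbraket{E|\Gcbf_\jbf^\tp{2}|\hat{\rho}}\big)$ is half the difference of the expectation values of the $\pm1$-valued observable $E$ on the states $\Gcbf_\jbf^\tp{2}(\rho)$ and $\Gcbf_\jbf^\tp{2}(\hat{\rho})$, each lying in $[-1,1]$, so $q_\jbf^{(2)}\leq \half(1-(-1)) = 1$. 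These supply the trivial endpoints ($0$ for $q^{(1)}$ and $1$ for $q^{(2)}$); the opposite endpoint in each interval is what must be worked for.

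Next I would insert the decompositions $\bar{\rho} = \alpha\bar{\rho}_\ideal + \bar{\rho}_\spam$ and $\bar{E} = \beta\bar{E}_\ideal + \bar{E}_\spam$ from \eqref{eq:alpha_app}--\eqref{eq:beta_app} into $\bbraket{\bar{E}|\Gcbf_\jbf^\tp{2}|\bar{\rho}}$, producing four terms. The ideal--ideal term $\bbraket{\bar{E}_\ideal|\Gcbf_\jbf^\tp{2}|\bar{\rho}_\ideal}$ I would bound into $[0,1]$ using two independent representations of the same number. Because $\bar{E}_\ideal,\bar{\rho}_\ideal\propto B_2$ (see \eqref{eq:ideal_operators_app}) and the product structure $\bbraket{\sigma\otimes\sigma|\Gcbf_\jbf^\tp{2}|\tau\otimes\tau} = \bbraket{\sigma|\Gcbf_\jbf|\tau}^2$, it equals $\tfrac{1}{d^2-1}\sum_{\sigma,\tau\in\Pc^*}\bbraket{\sigma|\Gcbf_\jbf|\tau}^2\geq 0$; while rewriting it through $E_\ideal = S$ and the states $\rho_\ideal,\hat{\rho}_\ideal$ bounds it above by $1$ exactly as in the cheap bound on $q^{(2)}$.

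For the three remaining terms, each containing at least one error operator, I would invoke H\"older's inequality (\autoref{lem:NeumannHolder}) together with the trace-norm contractivity of the CPTP map $\Gcbf_\jbf^\tp{2}$, namely $\|\Gcbf_\jbf^\tp{2}\|_{1\to1}\leq 1$ (\autoref{lem:Perez-Garcia}). The one subtlety needed to get the constants right is that, when the ideal measurement is paired against a traceless argument, I would replace $\bar{E}_\ideal$ by the full swap $S$; this is legitimate since $\bar{E}_\ideal = S - I/d$ and the $I$-part annihilates the traceless $\Gcbf_\jbf^\tp{2}(\bar{\rho}_\spam)$, and it yields $\|S\|_\infty = 1$ instead of the larger $\|\bar{E}_\ideal\|_\infty = 1+1/d$. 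Using also $\|\bar{\rho}_\ideal\|_1 = 1$ (orthogonal supports of $\rho_\ideal,\hat{\rho}_\ideal$), the three terms are bounded in absolute value by $\beta\|\bar{\rho}_\spam\|_1$, $\alpha\|\bar{E}_\spam\|_\infty$, and $\|\bar{\rho}_\spam\|_1\|\bar{E}_\spam\|_\infty$, where the hypothesis $\alpha,\beta\geq 0$ lets me carry the nonnegative prefactors outside the absolute values. Assembling the four pieces produces the decomposition-based endpoint: an upper bound $\alpha\beta + \beta\|\bar{\rho}_\spam\|_1 + \alpha\|\bar{E}_\spam\|_\infty + \|\bar{\rho}_\spam\|_1\|\bar{E}_\spam\|_\infty$, paired with the free lower bound $0$ for $q^{(1)}$, and a lower bound $-\beta\|\bar{\rho}_\spam\|_1 - \alpha\|\bar{E}_\spam\|_\infty - \|\bar{\rho}_\spam\|_1\|\bar{E}_\spam\|_\infty$, paired with the free upper bound $1$ for $q^{(2)}$.

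I expect the main obstacle to be the ideal--ideal term rather than the error terms: one must realize that its nonnegativity is cleanest from the sum-of-squares form, whereas its upper bound of $1$ is cleanest from the swap/difference-of-states form, and that having these two representations of a single quantity is exactly what lets $q^{(1)}$ retain the free lower endpoint $0$ and $q^{(2)}$ the free upper endpoint $1$. The secondary pitfall is purely bookkeeping, namely choosing $S$ over $\bar{E}_\ideal$ against traceless arguments so that the H\"older constants come out as stated without picking up spurious $1/d$ factors.
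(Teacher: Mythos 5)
Your proposal is correct and follows essentially the same route as the paper's proof: the free endpoints ($0$ for $q_\jbf^{(1)}$ from the sum-of-squares form, $1$ for $q_\jbf^{(2)}$ from the $\pm1$-valued measurement on states), the $\alpha,\beta$ ideal-plus-error decomposition into four terms, the two-sided $[0,1]$ bound on the ideal--ideal term via its two representations, and H\"older's inequality combined with trace-norm contractivity of CPTP maps for the three error terms. Your bookkeeping step of replacing $\bar{E}_\ideal$ by $S$ against traceless arguments is in fact slightly more careful than the paper, which simply writes $\|\bar{E}_\ideal\|_\infty\|\bar{\rho}_\spam\|_1 = \|\bar{\rho}_\spam\|_1$ even though $\|\bar{E}_\ideal\|_\infty = 1+1/d$; your observation that the $I/d$ part annihilates the traceless $\Gc_\jbf^\tp{2}(\bar{\rho}_\spam)$ is exactly what justifies that equality.
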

\begin{corollary}
	The interval length for $q_\jbf^{(1)}$ and $q_\jbf^{(2)}$ can be bounded independent of $\alpha,\beta$ by using that $\alpha,\beta \leq 1$ (\autoref{lem:optimality_operators}) as $L = 1 +  \| \bar{\rho}_\spam \|_1 +  \| \bar{E}_\spam \|_\infty + \| \bar{\rho}_\spam \|_1\| \bar{E}_\spam \|_\infty$.
\end{corollary}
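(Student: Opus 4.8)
The plan is to prove both bounds by analyzing the two-copy quantity $q_\jbf^{(2)} = \bbraket{E | \Gc_\jbf^\tp{2} | \bar{\rho}}$, since the single-copy quantity equals it once the effective operators \eqref{eq:state_12} are substituted for $\bar{\rho}, E$ (with the SPAM norms read off from those effective operators). First I would use that $\bar{\rho}$ is traceless and $\Gc_\jbf^\tp{2}$ is trace preserving to replace $E$ by its traceless part $\bar{E}$, and then insert the orthogonal decompositions $\bar{\rho} = \alpha\bar{\rho}_\ideal + \bar{\rho}_\spam$ and $\bar{E} = \beta\bar{E}_\ideal + \bar{E}_\spam$ of \eqref{eq:alpha_app} and \eqref{eq:beta_app}. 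Expanding the bilinear form gives
\begin{equation}
q_\jbf^{(2)} = \alpha\beta\,\bbraket{\bar{E}_\ideal | \Gc_\jbf^\tp{2} | \bar{\rho}_\ideal} + \beta\,\bbraket{\bar{E}_\ideal | \Gc_\jbf^\tp{2} | \bar{\rho}_\spam} + \alpha\,\bbraket{\bar{E}_\spam | \Gc_\jbf^\tp{2} | \bar{\rho}_\ideal} + \bbraket{\bar{E}_\spam | \Gc_\jbf^\tp{2} | \bar{\rho}_\spam},
\end{equation}
and I would bound each of the four pieces separately.

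The key step is the ideal term. Using $\bar{E}_\ideal = \sqrt{d^2-1}\, B_2$ and $\bar{\rho}_\ideal = \frac{1}{\sqrt{d^2-1}} B_2$ from \eqref{eq:ideal_operators_app}, it collapses to $\bbraket{B_2 | \Gc_\jbf^\tp{2} | B_2}$, which by \eqref{eq:def_unitarity_app} is exactly the unitarity of the (CPTP) sequence channel $\Gc_\jbf$ and therefore lies in $[0,1]$. Rather than invoking the unitarity property, I would pin down the endpoints elementarily: writing the term as $\frac{1}{d^2-1}\sum_{\sigma,\tau\in\Pc^*}\bbraket{\tau | \Gc_\jbf | \sigma}^2$ exhibits it as a sum of squares, hence $\ge 0$; and writing it (using tracelessness of $\bar{\rho}_\ideal$ to restore $S=E_\ideal$, and $\bar{\rho}_\ideal=\half(\rho_\ideal-\hat{\rho}_\ideal)$) as $\half\bbraket{S | \Gc_\jbf^\tp{2} | \rho_\ideal} - \half\bbraket{S | \Gc_\jbf^\tp{2} | \hat{\rho}_\ideal}$ bounds it above by $\half(1-(-1))=1$, since $S$ is a $\pm1$ observable evaluated on the states $\Gc_\jbf^\tp{2}(\rho_\ideal)$ and $\Gc_\jbf^\tp{2}(\hat{\rho}_\ideal)$. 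This $[0,1]$ localization of the ideal term is what ultimately yields the clean endpoints $0$ and $1$; the naive estimate only gives $[-1,1]$, so recognizing the sum-of-squares (equivalently, unitarity) structure is the crux.

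For the three error terms I would apply H\"older's inequality (\autoref{lem:NeumannHolder}, with $p=\infty$, $q=1$) together with the fact that $\Gc_\jbf^\tp{2}$ is a trace-norm contraction, $\|\Gc_\jbf^\tp{2}\|_{1\rightarrow1}\le1$ (\autoref{lem:Perez-Garcia} at $p=1$). The term pairing $\bar{E}_\ideal$ with $\bar{\rho}_\spam$ is handled by first dropping the identity part of $\bar{E}_\ideal$, which contributes nothing against the traceless $\bar{\rho}_\spam$, so that $|\bbraket{\bar{E}_\ideal | \Gc_\jbf^\tp{2} | \bar{\rho}_\spam}| \le \|S\|_\infty\|\Gc_\jbf^\tp{2}(\bar{\rho}_\spam)\|_1 \le \|\bar{\rho}_\spam\|_1$, using $\|S\|_\infty=1$. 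The term pairing $\bar{E}_\spam$ with $\bar{\rho}_\ideal$ gives $|\bbraket{\bar{E}_\spam | \Gc_\jbf^\tp{2} | \bar{\rho}_\ideal}| \le \|\bar{E}_\spam\|_\infty\|\bar{\rho}_\ideal\|_1 = \|\bar{E}_\spam\|_\infty$, since $\rho_\ideal$ and $\hat{\rho}_\ideal$ have orthogonal support and hence $\|\bar{\rho}_\ideal\|_1 = \half\|\rho_\ideal-\hat{\rho}_\ideal\|_1 = 1$. The purely erroneous term gives $\|\bar{\rho}_\spam\|_1\|\bar{E}_\spam\|_\infty$ directly.

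Finally I would assemble the pieces. For $q_\jbf^{(2)}$ the upper endpoint is the trivial bound $q_\jbf^{(2)} = \half\bbraket{E|\Gc_\jbf^\tp{2}|\rho}-\half\bbraket{E|\Gc_\jbf^\tp{2}|\hat{\rho}}\le1$ coming from $E$ being a $\pm1$ observable, while the lower endpoint is obtained by discarding the nonnegative ideal term and subtracting the three error bounds weighted by $\beta,\alpha,1$, which requires $\alpha,\beta\ge0$. For $q_\jbf^{(1)}$ the lower endpoint $0$ is immediate from its manifest sum-of-squares form, and the upper endpoint follows from the same decomposition applied to the effective operators, bounding the ideal term by $1$ to give $q_\jbf^{(1)}\le \alpha\beta + \beta\|\bar{\rho}_\spam\|_1+\alpha\|\bar{E}_\spam\|_\infty+\|\bar{\rho}_\spam\|_1\|\bar{E}_\spam\|_\infty$; the corollary then follows from $\alpha,\beta\le1$ (\autoref{lem:optimality_operators}). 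I expect the main obstacle to be bookkeeping rather than ideas: keeping track of which operator's identity component may be dropped so that the constants remain $\|S\|_\infty=1$ and $\|\bar{\rho}_\ideal\|_1=1$ (instead of the larger norms of the traceless parts), and matching the four decomposition terms to the correct prefactors and signs in the two asymmetric intervals.
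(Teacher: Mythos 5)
Your proposal is correct and follows essentially the same route as the paper's proof of \autoref{lem:interval_length} and its corollary: the same orthogonal decomposition $\bar{\rho} = \alpha\bar{\rho}_\ideal + \bar{\rho}_\spam$, $\bar{E} = \beta\bar{E}_\ideal + \bar{E}_\spam$, the same $[0,1]$ localization of the ideal term (sum of squares below, $\pm 1$-observable argument above), and the same H\"older-plus-CPTP-contraction bounds (the paper's \autoref{prop:inner_holder}) on the three error terms, assembled using $\alpha,\beta \geq 0$ and then $\alpha,\beta \leq 1$. If anything, your explicit step of restoring $E_\ideal = S$ against the traceless $\bar{\rho}_\spam$ before applying H\"older is cleaner than the paper's shorthand $\|\bar{E}_\ideal\|_\infty = 1$ (literally $\|S - I/d\|_\infty = 1 + 1/d$), but the mathematical content is identical.
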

\begin{proof}
	Starting with the two-copy implementation, let us write $E =  M -  (I - M) = 2M -  I$, where $0 \leq M \leq I$ is a POVM element (the measurement $E$ is described by the POVM set $\{M,I-M\}$, assigning outcome $1$ to $M$ and $-1$ to $I-M$). Then using the fact that $\Gc^\tp{2}_\jbf(\rho) \geq 0$ is positive semidefinite, it follows that
	\begin{equation*}
	0 = \Tr[0 \Gc^\tp{2}_\jbf(\rho)]\leq \Tr[M \Gc^\tp{2}_\jbf(\rho)] \leq \Tr[I \Gc^\tp{2}_\jbf(\rho)] = 1,
	\end{equation*} 
	expressing that $\Tr[M \Gc^\tp{2}_\jbf(\rho)]$ is indeed the probability associated with obtaining outcome $M$. Therefore $-1 \leq \Tr[E \Gc^\tp{2}_\jbf(\rho)] \leq 1$. Exactly the same argument holds for $\hat{\rho}$, so that (recall that $\bar{\rho} = \half(\rho - \hat{\rho})$)
	\begin{equation}
	\label{eq:B29}
	-1 \leq q_\jbf^{(2)} = \Tr[E \Gc_\jbf^\tp{2}(\bar{\rho})] \leq 1.
	\end{equation}
	The lower bound can be improved by using the decomposition \eqref{eq:alpha_app} and \eqref{eq:beta_app} to write $\bar{\rho} = \alpha \bar{\rho}_\ideal  + \bar{\rho}_\spam$ and $\bar{E} = \beta \bar{E}_\ideal  + \bar{E}_\spam$. Then 
	\begin{equation}
	\label{eq:B30}
	q_\jbf^{(2)} = \alpha \beta \Tr[\bar{E}_\ideal \Gc_\jbf^\tp{2}(\bar{\rho}_\ideal)] + \alpha \Tr[\bar{E}_\spam \Gc_\jbf^\tp{2}(\bar{\rho}_\ideal)] + \beta \Tr[\bar{E}_\ideal \Gc_\jbf^\tp{2}(\bar{\rho}_\spam)] +  \Tr[\bar{E}_\spam \Gc_\jbf^\tp{2}(\bar{\rho}_\spam)].
	\end{equation}
	The first term satisfies $\Tr[\bar{E}_\ideal \Gc_\jbf^\tp{2}(\bar{\rho}_\ideal)] \leq 1$ by \eqref{eq:B29} (which holds for all $E$, $\bar{\rho}$ so in particular for $E_\ideal$, $\bar{\rho}_\ideal$). However, we also find that 
	\begin{equation}
	\label{eq:B31}
	\Tr[\bar{E}_\ideal \Gc_\jbf^\tp{2}(\bar{\rho}_\ideal)] = \Tr[B_2 \Gc_\jbf^\tp{2}(B_2)] =  \frac{1}{d^2-1} \sum_{\sigma,\tau \in \Pc^*} \Tr[\sigma \Gc_\jbf(\tau)]^2 \geq 0.
	\end{equation}
	The remaining three terms in \eqref{eq:B30} are bounded using \autoref{prop:inner_holder}, which yields (using $\alpha,\beta \geq 0$)
	\begin{equation}
	\label{eq:B32}
	\begin{split}
	\alpha |\Tr[\bar{E}_\spam \Gc_\jbf^\tp{2}(\bar{\rho}_\ideal)]| &\leq \alpha \|\bar{E}_\spam\|_\infty \| \bar{\rho}_\ideal \| = \alpha \|\bar{E}_\spam\|_\infty   \\
	\beta |\Tr[\bar{E}_\ideal \Gc_\jbf^\tp{2}(\bar{\rho}_\spam)]| &\leq  \beta \| \bar{E}_\ideal \|_\infty \|\bar{\rho}_\spam\|_1 = \beta \|\bar{\rho}_\spam\|_1 \\ 
	|\Tr[\bar{E}_\spam \Gc_\jbf^\tp{2}(\bar{\rho}_\spam)]| & \leq \| \bar{E}_\spam \|_\infty \|\bar{\rho}_\spam\|_1
	\end{split}
	\end{equation}
	So by combining \eqref{eq:B30}, \eqref{eq:B31} and \eqref{eq:B32}, we find that 
	\begin{equation}
	q_\jbf^{(2)} \geq 0 - \alpha\| \bar{E}_\spam \|_\infty -\beta \| \bar{\rho}_\spam \|_1  - \| \bar{\rho}_\spam \|_1\| \bar{E}_\spam \|_\infty. 
	\end{equation}
 	The above argument also holds in the single-copy implementation if we let $E = E_\mathrm{eff}$ and $\bar{\rho} = \bar{\rho}_\mathrm{eff}$ as defined in \eqref{eq:state_12}. 
 	However, now we use it to upper bound $q_\jbf^{(1)}$. It follows that
	\begin{equation}
	\begin{split}
	q_\jbf^{(1)} &= \alpha \beta \Tr[\bar{E}_\ideal \Gc_\jbf^\tp{2}(\bar{\rho}_\ideal)] + \alpha \Tr[\bar{E}_\spam \Gc_\jbf^\tp{2}(\bar{\rho}_\ideal)] + \beta \Tr[\bar{E}_\ideal \Gc_\jbf^\tp{2}(\bar{\rho}_\spam)] +  \Tr[\bar{E}_\spam \Gc_\jbf^\tp{2}(\bar{\rho}_\spam)] \\
	&\leq \alpha\beta +\beta \| \bar{\rho}_\spam \|_1 + \alpha\| \bar{E}_\spam \|_\infty + \| \bar{\rho}_\spam \|_1\| \bar{E}_\spam \|_\infty.
	\end{split}
	\end{equation}
	The lower bound $q_\jbf^{(1)} \geq 0$ follows directly from the fact that it is defined as the sum of real numbers squared.
\end{proof}
So far we have recaptured the essential definitions and notations, shown optimality of the ideal operators and proven a bound in the interval in which the sequence purity $q_\jbf$ lies. Next we will state our variance bound \eqref{eq:var_bound} and give the complete proof.

\begin{thm}[Variance bound]
	\label{thm:var_bound}
	Let $\Hc$ be a $d$-dimensional Hilbert space, with $d = 2^q$ for a $q$-qubit system. Let $E \in V = \Lc(\Hc \otimes \Hc)$ be the Hermitian observable associated with a two-valued measurement with outcomes $\pm 1$ and $\rho, \hat{\rho} \in V = \Lc(\Hc \otimes \Hc)$ be two quantum states on two copies of the system. Consider the URB experiment (using the states and measurement $\rho, \hat{\rho},E$) of the Clifford group $\Cliff{d}$, assuming that a noisy implementation of $\Gc \in \Cliff{d}$ is given by $\tilde{\Gc} = \Gc \Lambda$, where $\Lambda$ is a CPTP map. In this experiment the sequence purity is $q_\jbf = \bbraket{\bar{E} | (\Mcbf^\tp{2})^{m-1} | \bar{\rho}}$, with $\Mc$ defined in \eqref{eq:N_def}. 
	
	Under the assumption that $d = 2$ or $\Lambda$ is unital (that is, $\Lambda(I) = I$), the following bound on the variance $\Vbb[q_\jbf]$ holds
	\begin{equation}
	\label{eq:var_bound_appendix}
	\Vbb[q_\jbf] \leq  \sigma^2 = \frac{1-u^{2(m-1)}}{1-u^2} (1 - u)^2 \Big( \alpha^2 \beta^2 c_1(d) + \alpha^2 c_2(d) \| \bar{E}_\spam \|_\infty^2 + \beta^2 c_3(d) \| \bar{\rho}_\spam \|_1^2 \Big)    +    \| \bar{\rho}_\spam \|_1^2 \| \bar{E}_\spam \|_\infty^2,   
	\end{equation}
	where $u$ is the unitarity of $\Lambda$, $m$ is the length of the sequence indexed by $\jbf$, $c_i(d)$ are functions only of the dimension $d$ and $\alpha$, $\beta$, $\bar{\rho}_\spam$ and $\bar{E}_\spam$ are defined in \eqref{eq:alpha_app} and \eqref{eq:beta_app}. Precise definitions of the dimension-dependent functions $c_i(d)$ will be given in the proof, but closed form expressions are messy and therefore not written down explicitly. Asymptotically, these functions satisfy
	\begin{equation}
	c_1(d) = \Oc(1), \quad\quad c_2(d) = \Oc(d), \quad\quad c_3(d) = \Oc(d^2).
	\end{equation}  
\end{thm}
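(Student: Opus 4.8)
The plan is to start from the exact variance expression \eqref{eq:var_expr1}, $\Vbb[q_\jbf] = \bbraket{\bar{E}^\tp{2} | \Ncbf^{m-1} - (\Mcbf^\tp{2})^{m-1} | \bar{\rho}^\tp{2}}$, exploiting that $\Mcbf$ is completely understood---it is supported on the two-dimensional space $W=\Span\{B_1,B_2\}$ and acts as in \eqref{eq:M_entries_app}---whereas $\Ncbf$ is not. The first step is to substitute the orthogonal decompositions $\bar{\rho}=\alpha\bar{\rho}_\ideal+\bar{\rho}_\spam$ and $\bar{E}=\beta\bar{E}_\ideal+\bar{E}_\spam$ into the tensor squares $\bar{\rho}^\tp{2}$ and $\bar{E}^\tp{2}$. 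This produces sixteen terms, and the decisive simplification is that every term mixing an ideal and an error slot within a single copy vanishes: on the one hand $\Gavgbf{4}$ (contained in $\Ncbf$) projects onto $\Span\{A_i\}$, where each trivial-subrepresentation vector $A_i$ of \eqref{eq:A0}--\eqref{eq:Aadj} pairs one irreducible component of $V_{TS}$ with an identical copy of itself; on the other hand $\Mcbf^\tp{2}$ is supported on $W^\tp{2}$ while $\bar{\rho}_\spam,\bar{E}_\spam\perp W$. Only the four terms \eqref{eq:ideal_ideal}--\eqref{eq:spam_spam} survive, and isolating them requires the explicit irrep basis of \autoref{fig:V_TS} obtained from \autoref{lem:trivial_subrep_lemma}.

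The second step treats the three telescoping terms \eqref{eq:ideal_ideal}--\eqref{eq:ideal_spam} by inserting \eqref{eq:telescoping} for $\Ncbf^{m-1}-(\Mcbf^\tp{2})^{m-1}$; for the two terms carrying $\bar{E}_\ideal$ on the left I would use the second form of the telescoping so that the $\Mcbf^\tp{2}$-powers always act on a factor $\propto B_2^{\otimes 2}$. Because $\bar{\rho}_\ideal\propto B_2$ and $\Mcbf\kket{B_2}=u\kket{B_2}$ by \eqref{eq:unitarity_eigenvalue}, acting with $(\Mcbf^\tp{2})^{s-1}$ produces the scalar $u^{2(s-1)}$, so each term reduces to $\sum_{s}u^{2(s-1)}\bbraket{\,\cdot\,|\Ncbf^{m-s-1}(\Ncbf-\Mcbf^\tp{2})|\,\cdot\,}$. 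I would then expand $(\Ncbf-\Mcbf^\tp{2})\kket{B_2^{\otimes 2}}=\sum_i a_i\kket{A_i}$ and split the analysis into bounding the coefficients $a_i=\bbraket{A_i|\Ncbf-\Mcbf^\tp{2}|B_2^{\otimes 2}}$ and bounding the residual inner product $\bbraket{\,\cdot\,|\Ncbf^{m-s-1}|A_i}$. The latter is controlled, uniformly in $m$ and $s$, by the contractivity $\|\Ncbf\|_{1\to1}\le 1$ (via \autoref{lem:unital_part} and \autoref{lem:Perez-Garcia}) together with H\"older's inequality (\autoref{prop:inner_holder}), contributing a factor $\|\bar{E}_\spam\|_\infty$ or $\|\bar{\rho}_\spam\|_1$. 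Summing the geometric series $\sum_{s=1}^{m-1}u^{2(s-1)}=\frac{1-u^{2(m-1)}}{1-u^2}$ then yields the common prefactor.

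The technical heart, and the step I expect to be the main obstacle, is the sharp bound $|a_i|\le c(d)(1-u)^2$ that is responsible for the quadratic vanishing of the first bracket as $u\to1$. The diagonal contribution $\bbraket{A_i|\Ncbf|B_2^{\otimes2}}$ reduces, after unfolding the definitions of the $A_i$, to averages of squared diagonal entries of the unital block $\Lambdabf_\urm^\dagger\Lambdabf_\urm$; \autoref{lem:diagonals_squared} bounds these in terms of $1-u$, and squaring plus careful bookkeeping over the several irreducible pieces of $V_S$ and $V_{1,2}$ in \autoref{fig:V_TS} delivers the $(1-u)^2$ scaling. The off-diagonal contributions are handled by applying Sylvester's criterion (\autoref{lem:Sylvester}) to the positive-semidefinite matrix $\mathbf{I}-\pmb{\hat{\Lambda}}\pmb{\hat{\Lambda}}^\dagger$, bounding each off-diagonal entry by the geometric mean of two diagonals. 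The combinatorial multiplicities carried by the explicit vectors $A_{1,2}$ and $A_\adj$ are precisely what generate the dimension dependence $c_2(d)=\Oc(d)$ and $c_3(d)=\Oc(d^2)$, while $c_1(d)=\Oc(1)$.

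Finally, the pure error term \eqref{eq:spam_spam} needs no telescoping: since $\bar{\rho}_\spam,\bar{E}_\spam\perp B_2$ there is no surviving $\Mcbf^\tp{2}$ contribution, and $\bbraket{\bar{E}_\spam^{\otimes2}|\Ncbf^{m-1}|\bar{\rho}_\spam^{\otimes2}}\le\|\bar{E}_\spam\|_\infty^2\|\bar{\rho}_\spam\|_1^2$ follows at once from H\"older and $\|\Ncbf\|_{1\to1}\le1$. Collecting the four bounds and absorbing all dimensional and combinatorial factors into $c_1(d),c_2(d),c_3(d)$ produces \eqref{eq:var_bound_appendix}. The unitality hypothesis for $d\ge4$ enters exactly where \autoref{lem:unital_part} is invoked to guarantee that the auxiliary channels built from $\Lambdabf_\urm$ are CPTP, and hence that the norm contractions hold; for $d=2$ this is automatic.
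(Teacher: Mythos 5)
Your proposal takes essentially the same route as the paper's proof: the same decomposition of $\bar{\rho}$ and $\bar{E}$ into ideal and SPAM parts, the same representation-theoretic argument killing the twelve cross terms, the same pair of telescoping expansions with $\Ncbf-\Mcbf^\tp{2}$ expanded in the trivial-subrepresentation basis $\{A_i\}$, the same coefficient bounds via \autoref{lem:diagonals_squared}, Sylvester's criterion on $\mathbf{I}-\pmb{\Lambda}_\urm\pmb{\Lambda}_\urm^\dagger$ and \autoref{lem:unital_part}, the same H\"older-plus-contractivity control of the residual inner products, the same direct bound on the pure SPAM term, and the same geometric series.

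The one step you gloss over is the reduction of the two terms carrying $\bar{E}_\ideal$ on the bra side. The relation $\Mcbf\kket{B_2}=u\kket{B_2}$ concerns the ket action only; since $\Mcbf$ is lower triangular with nonunitality entry $\|\alpha(\Lambda)\|^2/\sqrt{d^2-1}$ (see \eqref{eq:M_entries_app}), one has $\bbra{B_2}\Mcbf = u\,\bbra{B_2} + \tfrac{\|\alpha(\Lambda)\|^2}{\sqrt{d^2-1}}\bbra{B_1}$, so acting with $(\Mcbf^\tp{2})^{s-1}$ on $\bbra{B_2^\tp{2}}$ does not by itself produce the scalar $u^{2(s-1)}$. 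The paper devotes an explicit argument to exactly this point (the passage from \eqref{eq:B50} to \eqref{eq:B51}): the spurious components along $\bbra{B_1B_1}$, $\bbra{B_1B_2}$, $\bbra{B_2B_1}$ are annihilated because $\bbraket{B_kB_l\,|\,[\Ncbf-\Mcbf^\tp{2}]\Ncbf^{m-s-1}\,|\,B_2^\tp{2}}=0$ whenever $k=1$ or $l=1$, by trace preservation of $\Mc$, $\Nc$ and tracelessness of $B_2$. Under your unitality hypothesis for $d\geq 4$ the off-diagonal entry vanishes and your shortcut is harmless, but in the single-qubit nonunital case (which the theorem covers) this vanishing argument is genuinely needed; with it supplied, your proof goes through exactly as the paper's.
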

\begin{proof}
	We start from the derived expression for the variance \eqref{eq:var_expr1_appendix}. First, let us decompose the state and measurement operators in ideal and error components as (see Eqs.~(\ref{eq:ideal_operators_app})-(\ref{eq:beta_app}))
	\begin{equation}
	\bar{\rho} = \alpha \bar{\rho}_\ideal + \bar{\rho}_\spam \quad\quad\mbox{and}\quad\quad \bar{E} = \beta \bar{E}_\ideal + \bar{E}_\spam.
	\end{equation}
	Define again $W = \Span\{ B_1, B_2 \} \subset V$, with $B_1$, $B_2$ defined in \eqref{def:def_B1} and \eqref{def:def_B2} respectively. Then the ideal components $\bar{\rho}_\ideal$ and $\bar{E}_\ideal$ are in $W$ and the error components $\bar{\rho}_\spam$ and $\bar{E}_\spam$ are in the orthogonal complement $W^\perp$. Plugging this expansion into \eqref{eq:var_expr1_appendix} in principle yields 16 terms. However, the 12 terms with an ideal component tensor error component (e.g., $\bar{\rho}_\spam \otimes \bar{\rho}_\ideal$) vanish, because both 
	\begin{align}
	\label{eq:B27}
	(\Gavg{2})^\tp{2}(W \otimes W^\perp) &= (\Gavg{2})^\tp{2}(W^\perp \otimes W) = \emptyset, \\
	\label{eq:B28}
	\Gavg{4}(W \otimes W^\perp) &= \Gavg{4}(W^\perp \otimes W) =  \emptyset.
	\end{align}
	\eqref{eq:B27} is easy to see because $\Gavg{2}$ is the orthogonal projection onto $W$. \eqref{eq:B28} follows from the fact that $W$ carries the trivial subrepresentations of the Liouville tensor-2 representation and $W^\perp$ carries all other necessarily nontrivial subrepresentations. By \autoref{lem:trivial_subrep_lemma} the spaces $W^\perp \otimes W$ and $W \otimes W^\perp$ (which are representations of the Liouville tensor-4 representation) do not carry trivial subrepresentations. Hence $\Gavg{4}$, the projector onto the trivial subrepresentations of the Liouville tensor-4 rep (by \autoref{lem:projection_onto_trivial_subreps}), does not project onto any subspace of $W^\perp \otimes W$ and $W \otimes W^\perp$. This justifies the following expression for the variance 
	\begin{align}
	\label{eq:ideal_ideal_app}
	\Vbb[q_\jbf ] =& \, \alpha^2 \beta^2 \bbraket{B_2^\tp{2} | \pmb{\Nc}^{m-1} - (\pmb{\Mc}^\tp{2})^{m-1} | B_2^\tp{2}} \\ 
	\label{eq:spam_ideal_app}
	&+ \frac{\alpha^2}{{d^2-1}}  \bbraket{\bar{E}_\spam^\tp{2} | \pmb{\Nc}^{m-1} - (\pmb{\Mc}^\tp{2})^{m-1} | B_2^\tp{2}} \\ 
	\label{eq:ideal_spam_app}
	&+ (d^2-1)\beta^2 \bbraket{B_2^\tp{2} | \pmb{\Nc}^{m-1} - (\pmb{\Mc}^\tp{2})^{m-1} | \bar{\rho}_\spam^\tp{2}} \\ 
	\label{eq:spam_spam_app}
	&+ \bbraket{\bar{E}_\spam^\tp{2} | \pmb{\Nc}^{m-1} - (\pmb{\Mc}^\tp{2})^{m-1} | \bar{\rho}_\spam^\tp{2}},
	\end{align} 
	where the expressions of \eqref{eq:ideal_operators_app} are used for the ideal operators $\bar{\rho}_\ideal, \bar{E}_\ideal$. We will analyze each of the four terms separately. 
	The term we start with is \eqref{eq:spam_ideal_app}, since this term most clearly conveys the idea of our analysis. Then the terms \eqref{eq:ideal_ideal_app} and \eqref{eq:ideal_spam_app} are treated in similar fashion, but with a small additional technicality. Finally the term \eqref{eq:spam_spam_app} is treated in a totally different fashion.
	
	The analysis of \eqref{eq:spam_ideal_app} starts by using \autoref{lem:telescoping} (telescoping series lemma), so that we can write this term as
	\begin{equation}
	\begin{split}
	\eqref{eq:spam_ideal_app} &= \frac{\alpha^2}{{d^2-1}} \sum_{s=1}^{m-1}  \bbraket{\bar{E}_\spam^\tp{2} | \Ncbf^{m-s-1}[\Ncbf - \Mcbf^\tp{2}](\Mcbf^\tp{2})^{s-1} | B_2^\tp{2}} \\
	&= \frac{\alpha^2}{{d^2-1}} \sum_{s=1}^{m-1} u^{2(s-1)}  \bbraket{\bar{E}_\spam^\tp{2} | \Ncbf^{m-s-1}[\Ncbf - \Mcbf^\tp{2}] | B_2^\tp{2}}. \\
	\end{split}
	\end{equation}
	In the second line we used that $\Mcbf \kket{B_2} = u \kket{B_2}$. The idea is to expand $\Ncbf - \Mcbf^\tp{2}  \kket{B_2^\tp{2}}$ in the basis $\{ A_i : i \in \Zc_{TS} \}$ of the subspace $V_{TS} \otimes V_{TS} \subset \Rge(\Gavg{4}) \subset V \otimes V$. $V_{TS}$ is the trace-preserving, symmetric subspace of $V = \Lc(\Hc^\tp{2})$, as defined in Appendix~\ref{app_sub:trivial_subreps_Liouville_4}. The restriction of $\Gavg{4}$ to $V_{TS} \otimes V_{TS}$ is justified by the fact that $\Lambda^\tp{2}(B_2) \in V_{TS}$. Hence we expand
	\begin{equation}
	\label{eq:def_ai}
	\Ncbf - \Mcbf^\tp{2} \kket{B_2^\tp{2}} = \sum_{i \in \Zc_{TS}} a_i \kket{A_i}, \qquad \mbox{where} \qquad 	a_i := \bbraket{A_i | \Ncbf - \Mcbf^\tp{2} | B_2^\tp{2}}.
	\end{equation} 
	Therefore \eqref{eq:spam_ideal_app} can be written as	
	\begin{equation}
	\eqref{eq:spam_ideal_app} = \frac{\alpha^2}{d^2 - 1}  \sum_{s=1}^{m-1} u^{2(s-1)}  \sum_{i \in \Zc_{TS}} a_i \bbraket{\bar{E}_\spam^\tp{2} | \Ncbf^{m-s-1} | A_i}.
	\end{equation}
	
	For the terms \eqref{eq:ideal_ideal_app} and \eqref{eq:ideal_spam_app}, something similar is done. The telescoping series (\autoref{lem:telescoping}) is now written in the other way. Therefore we can write \eqref{eq:ideal_ideal_app} as
	\begin{align}
	\label{eq:B50}
	\eqref{eq:ideal_ideal_app} &= \alpha^2 \beta^2 \sum_{s=1}^{m-1}  \bbraket{B_2^\tp{2} | (\Mcbf^\tp{2})^{s-1}[\Ncbf - \Mcbf^\tp{2}]\Ncbf^{m-s-1} | B_2^\tp{2}} \\
	\label{eq:B51}
	&= \alpha^2 \beta^2 \sum_{s=1}^{m-1} u^{2(s-1)}  \bbraket{B_2^\tp{2} | [\Ncbf - \Mcbf^\tp{2}]\Ncbf^{m-s-1} | B_2^\tp{2}} 
	\end{align}
	The step from \eqref{eq:B50} to \eqref{eq:B51} is not immediately clear, since
	\begin{equation}
	\bbra{B_2 B_2} (\Mcbf^\tp{2})^{s-1} = x_{11}^{(s)} \bbra{B_1 B_1} +  x_{12}^{(s)} \bbra{B_1 B_2} + x_{21}^{(s)} \bbra{B_2 B_1} + u^{2(s-1)} \bbra{B_2 B_2},
	\end{equation}
	for some coefficients $x_{11}^{(s)}, x_{12}^{(s)}, x_{21}^{(s)} \in \R$. However we show that \eqref{eq:B51} is justified, since
	\begin{equation}
	\bbraket{B_k B_l |  [\Ncbf - \Mcbf^\tp{2}]\Ncbf^{m-s-1} | B_2^\tp{2}} = 0, \qquad \mbox{if } k = 1 \mbox{ or } l = 1.
	\end{equation}
	This follows from the trace-preserving properties of $\Nc, \Mc$, the tracelessness of $B_2$ and the fact that $B_1 = \frac{I}{d}$. In particular,
	\begin{align}
	\bbraket{B_k B_l |  \Ncbf^{m-s} | B_2^\tp{2}} &= \frac{1}{|\Cliff{d}|^{m-s}} \sum_\jbf  \bbraket{B_k |  \Gcbf_\jbf^\tp{2} | B_2} \bbraket{B_l |  \Gcbf_\jbf^\tp{2} | B_2} = 0, \\
	\bbraket{B_k B_l |  \Mcbf^\tp{2} \Ncbf^{m-s-1} | B_2^\tp{2}} &= \frac{1}{|\Cliff{d}|^{m-s-1}} \sum_\jbf  \bbraket{B_k |  \Mcbf \Gcbf_\jbf^\tp{2} | B_2} \bbraket{B_l | \Mcbf \Gcbf_\jbf^\tp{2} | B_2} = 0,
	\end{align}
	if $l = 1$ or $k = 1$, since $\bbraket{B_1 | \Mcbf \Gcbf_\jbf^\tp{2} | B_2} = 0$ and $\bbraket{B_1 |  \Gcbf_\jbf^\tp{2} | B_2} = 0$. This justifies \eqref{eq:B51}. Next we use a similar expansion
	\begin{equation}
	\label{eq:def_bi}
	\bbra{B_2^\tp{2}} \Ncbf - \Mcbf^\tp{2} = \sum_{i \in \Zc_{TS}} b_i \bbra{A_i}, \qquad \mbox{where} \qquad 	
	b_i := \bbraket{B_2^\tp{2} | \Ncbf - \Mcbf^\tp{2} | A_i}.
	\end{equation}
	Therefore we arrive at  
	\begin{equation}
	\eqref{eq:ideal_ideal_app} 	= \alpha^2 \beta^2 \sum_{s=1}^{m-1} u^{2(s-1)}  \sum_{i \in \Zc_{TS}} b_i \bbraket{A_i | \Ncbf^{m-s-1} | B_2^\tp{2}}.
	\end{equation}
	Similarly to the analysis \eqref{eq:ideal_ideal_app}, we can write \eqref{eq:ideal_spam_app} as
	\begin{equation}
	\eqref{eq:ideal_spam_app} = (d^2-1) \beta^2 \sum_{s=1}^{m-1} u^{2(s-1)}  \sum_{i \in \Zc_{TS}} b_i \bbraket{A_i | \Ncbf^{m-s-1} | \bar{\rho}_\spam^\tp{2}}. \\
	\end{equation}
	Finally, we slightly rewrite \eqref{eq:spam_spam_app} by noting that $\eqref{eq:spam_spam_app} = \bbraket{\bar{E}_\spam^\tp{2} | \pmb{\Nc}^{m-1}  | \bar{\rho}_\spam^\tp{2}}$, because $\Mcbf \kket{\rho_\spam} = 0$. We therefore arrive at the following expression of the variance
	\begin{align}
	\label{eq:var_expr2_id_id}
	\Vbb[q_\jbf ] =& \, \alpha^2 \beta^2 \sum_{s=1}^{m-1} u^{2(s-1)}  \sum_{i \in \Zc_{TS}}  a_i  \bbraket{B_2^\tp{2} | \Ncbf^{m-s-1} | A_i}   \\ 
	\label{eq:var_expr2_sp_id}
	&+ \frac{1}{{d^2-1}} \alpha^2  \sum_{s=1}^{m-1} u^{2(s-1)}  \sum_{i \in \Zc_{TS}}  a_i  \bbraket{\bar{E}_\spam^\tp{2} | \Ncbf^{m-s-1} | A_i}   \\ 
	\label{eq:var_expr2_id_sp}
	&+ (d^2-1) \beta^2   \sum_{s=1}^{m-1} u^{2(s-1)}  \sum_{i \in \Zc_{TS}}  b_i  \bbraket{A_i | \Ncbf^{m-s-1} | \bar{\rho}_\spam^\tp{2}}   \\  
	\label{eq:var_expr2_sp_sp}
	&+ \bbraket{\bar{E}_\spam^\tp{2} | \pmb{\Nc}^{m-1} | \bar{\rho}_\spam^\tp{2}}.
	\end{align}
	This expression is still exact, as we have only expanded each term in the equation. 
	
	The variance bound is obtained by bounding the remaining inner products and the quantities $a_i, b_i$ in this expression. This technical task is delegated to Appendix~\ref{app:technical_lemmas}, with a number of technical propositions that compute bounds on the quantities above. We summarize the results here.	The bounds on $a_i$ and $b_i$ for $i \in \{0; \adj; S; 1,2 \}$ are obtained under the assumption that $d=2$ or that $\Lambda$ is unital in Propositions \ref{prop:bound_ai_bi_lower}, \ref{prop:bound_a0}, \ref{prop:bound_a12}, \ref{prop:bound_aS}, \ref{prop:bound_aadj}, and \ref{prop:bound_bi}. In summary 
	\begin{align}
	0 &= a_0 = b_0, \\
	0 &\leq a_{1,2}, b_{1,2} \leq \frac{\sqrt{d^2-2}}{d^2} (1-u)^2, \\
	0 &\leq a_S, b_S \leq \sqrt{\frac{d^2 - 2}{d^2-1}} \sqrt{2} (1-u)^2, \\
	0 &\leq a_\adj, b_\adj \leq \sqrt{d^2 - 1} (1-u)^2. 
	\end{align}	
	In the case of $d \geq 4$, bounds on $a_i$ are needed for $i \in \Zc_{1,2} \cup \Zc_S \setminus\{\adj\}$ in terms of the above bounds on $a_S$ and $a_{1,2}$. To do so, we use \eqref{eq:A_j_not_irrep}, which states
	\begin{align}
	\sqrt{|V_{1,2}|} A_{1,2} &= \sum_{i \in \Zc_{1,2}} \sqrt{|V_i|} A_i & \sqrt{|V_S|} A_S &= \sum_{i \in \Zc_{S}} \sqrt{|V_i|} A_i. 
	\end{align}
	From this it follows that
	\begin{align}
	\sqrt{|V_{1,2}|} a_{1,2} &= \sum_{i \in \Zc_{1,2}} \sqrt{|V_i|} a_i, & \sqrt{|V_S|} a_S &= \sum_{i \in \Zc_{S}} \sqrt{|V_i|} a_i, \\
	\sqrt{|V_{1,2}|} b_{1,2} &= \sum_{i \in \Zc_{1,2}} \sqrt{|V_i|} b_i & \sqrt{|V_S|} b_S &= \sum_{i \in \Zc_{S}} \sqrt{|V_i|} b_i.
	\end{align}
	Thus, since $a_i, b_i \geq 0$ by \autoref{prop:bound_ai_bi_lower}, these equations imply the following bounds
	\begin{align}
	a_i &\leq \sqrt{\frac{{|V_{1,2}|}}{|V_i|}} a_{1,2}, & b_i &\leq \sqrt{\frac{{|V_{1,2}|}}{|V_i|}} b_{1,2}, & & \forall i \in \Zc_{1,2} \\ 	
	a_i &\leq \sqrt{\frac{{|V_{S}|}}{|V_i|}} a_{S}, & b_i &\leq \sqrt{\frac{{|V_{S}|}}{|V_i|}} b_{S}, & & \forall i \in \Zc_{S} \setminus \{\adj\}.
	\end{align}
	The size of the relevant spaces (as derived in \cite{Helsen2018}) was summarized in \autoref{fig:V_TS}.
	The inner products in \eqref{eq:var_expr2_id_id}-\eqref{eq:var_expr2_sp_sp} are bounded using Propositions \ref{prop:inner_holder}, \ref{prop:bound_norms_Ai} and \ref{prop:ideal_inner}. \autoref{prop:inner_holder} is applicable since $\Nc^m$ is a CPTP map for any $m \in \N$, since CPTP maps are closed under composition. Now $\Nc$ is CPTP because $\Nc$ is the convex combination of the CPTP sequences $\Gc_\jbf$ and a convex combination of CPTP maps is CPTP. 
	The results of Propositions \ref{prop:inner_holder}, \ref{prop:bound_norms_Ai} and \ref{prop:ideal_inner} are summarized as follows:
	\begin{align}
	\bbraket{A_i | \Ncbf^{m-s-1} | B_2^\tp{2}} &\leq \frac{1}{\sqrt{|V_i|}}, \\
	\label{eq:B46}
	\bbraket{\bar{E}_\spam^\tp{2} | \pmb{\Nc}^{m-s-1}  | A_i} &\leq d^2 \| \bar{E}_\spam \|_\infty^2 , \\
	\label{eq:B47}
	\bbraket{A_i | \pmb{\Nc}^{m-s-1}  | \bar{\rho}_\spam^\tp{2}} &\leq \sqrt{\frac{6}{(d-2)(d-1)}} \| \bar{\rho}_\spam \|_1^2, \\
	\bbraket{\bar{E}_\spam^\tp{2} | \pmb{\Nc}^{m-1}  | \bar{\rho}_\spam^\tp{2}} &\leq \| \bar{E}_\spam \|_\infty^2 \| \bar{\rho}_\spam \|_1^2, 
	\end{align}
	where we have used that $\| A^\tp{k} \|_p = \| A \|_p^k$ for any $k\in \N$ and $p \in [1,\infty]$. \eqref{eq:B46} and \eqref{eq:B47} have single-qubit specific ($d=2$) improvements (derived in \autoref{prop:bound_norms_Ai}), using the fact that $V_{1,2}$ and $V_{S}$ actually are irreducible subrepresentations. Since we have explicit expressions for $A_{1,2}$ and $A_{S}$ (\eqref{eq:A12} and \eqref{eq:AS} respectively), their norms can be computed directly. Using this gives the improved single-qubit bounds,
	\begin{align}
	\bbraket{\bar{E}_\spam^\tp{2} | \pmb{\Nc}^{m-s-1}  | A_S} &\leq \frac{5}{\sqrt{3}} \| \bar{E}_\spam \|_\infty^2, & 
	\bbraket{\bar{E}_\spam^\tp{2} | \pmb{\Nc}^{m-s-1}  | A_{1,2}} &\leq 2 \sqrt{2} \| \bar{E}_\spam \|_\infty^2, \\
	\bbraket{A_S | \pmb{\Nc}^{m-s-1}  | \bar{\rho}_\spam^\tp{2}} &\leq \frac{1}{\sqrt{3}} \| \bar{\rho}_\spam \|_1^2, &
	\bbraket{A_{1,2} | \pmb{\Nc}^{m-s-1}  | \bar{\rho}_\spam^\tp{2}} &\leq \frac{\sqrt{2}}{3} \| \bar{\rho}_\spam \|_1^2.
	\end{align}
	Plugging all of these bounds into \eqref{eq:var_expr2_id_id}-\eqref{eq:var_expr2_sp_sp} and using the geometric series
	\begin{equation}
	\sum_{s=1}^{m-1} u^{2(s-1)} = \frac{1 - u^{2(m-1)}}{1-u^2}
	\end{equation}
	will yield the bound \eqref{eq:var_bound_appendix}
	\begin{equation}
	\Vbb[q_\jbf] \leq  \sigma^2 = \frac{1-u^{2(m-1)}}{1-u^2} (1 - u)^2 \Big( \alpha^2 \beta^2 c_1(d) + \alpha^2 c_2(d) \| \bar{E}_\spam \|_\infty^2 + \beta^2 c_3(d) \| \bar{\rho}_\spam \|_1^2 \Big)    +    \| \bar{\rho}_\spam \|_1^2 \| \bar{E}_\spam \|_\infty^2,   
	\end{equation}
	where 
	\begin{align*}
	c_1(d) &= \begin{dcases}
	\frac{\sqrt{2}}{4} \frac{1}{\sqrt{2}} + \sqrt{\frac{2}{3}} \sqrt{2} \frac{1}{\sqrt{3}} = \frac{11}{12}, &  \mbox{if } d=2, \\
	\mathrlap{\frac{\sqrt{d^2-2}}{d^2} \sum_{i \in \Zc_{1,2}} {\frac{\sqrt{|V_{1,2}|}}{|V_i|}} + \sqrt{2}\sqrt{\frac{d^2-2}{d^2-1}} \sum_{i \in \Zc_{S} \setminus \{\adj \}} \frac{\sqrt{|V_S|}}{|V_i|} + \frac{\sqrt{d^2-1}}{\sqrt{|V_\adj|}},}
	%spacing
	\hphantom{(d^2-1) \sqrt{\frac{6}{(d-2)(d-1)}} \left(\frac{\sqrt{d^2-2}}{d^2} \sum_{i \in \Zc_{1,2}} {\frac{\sqrt{|V_{1,2}|}}{\sqrt{|V_i|}}} + \sqrt{2}\sqrt{\frac{d^2-2}{d^2-1}} \sum_{i \in \Zc_{S} \setminus \{\adj \}} \frac{\sqrt{|V_S|}}{\sqrt{|V_i|}} + \sqrt{d^2-1}\right),}
	 & \mbox{if } d\geq 4, \\
	\end{dcases} \\
	c_2(d) &= \begin{dcases} 
	\frac{1}{3}\left(\frac{\sqrt{2}}{4} 2 \sqrt{2} + \sqrt{\frac{2}{3}} \sqrt{2} \frac{5}{\sqrt{3}}\right) = \frac{13}{9}, & \mbox{if } d=2, \\
	\mathrlap{\frac{d^2}{d^2-1}\left(\frac{\sqrt{d^2-2}}{d^2} \sum_{i \in \Zc_{1,2}} {\frac{\sqrt{|V_{1,2}|}}{\sqrt{|V_i|}}} + \sqrt{2}\sqrt{\frac{d^2-2}{d^2-1}} \sum_{i \in \Zc_{S} \setminus \{\adj \}} \frac{\sqrt{|V_S|}}{\sqrt{|V_i|}} + \sqrt{d^2-1}\right),} 
	%spacing
	\hphantom{(d^2-1) \sqrt{\frac{6}{(d-2)(d-1)}} \left(\frac{\sqrt{d^2-2}}{d^2} \sum_{i \in \Zc_{1,2}} {\frac{\sqrt{|V_{1,2}|}}{\sqrt{|V_i|}}} + \sqrt{2}\sqrt{\frac{d^2-2}{d^2-1}} \sum_{i \in \Zc_{S} \setminus \{\adj \}} \frac{\sqrt{|V_S|}}{\sqrt{|V_i|}} + \sqrt{d^2-1}\right),}
	& \mbox{if } d\geq 4, \\
	\end{dcases} \\
	c_3(d) &= \begin{dcases}
	3\left(\frac{\sqrt{2}}{4} \frac{\sqrt{2}}{3} + \sqrt{\frac{2}{3}} \sqrt{2} \frac{1}{\sqrt{3}}\right) = \frac{5}{2}, & \mbox{if } d=2, \\
	(d^2-1) \sqrt{\frac{6}{(d-2)(d-1)}} \left(\frac{\sqrt{d^2-2}}{d^2} \sum_{i \in \Zc_{1,2}} {\frac{\sqrt{|V_{1,2}|}}{\sqrt{|V_i|}}} + \sqrt{2}\sqrt{\frac{d^2-2}{d^2-1}} \sum_{i \in \Zc_{S} \setminus \{\adj \}} \frac{\sqrt{|V_S|}}{\sqrt{|V_i|}} + \sqrt{d^2-1}\right),  & \mbox{if } d\geq 4.
	\end{dcases}
	\end{align*}
	The size of the spaces $V_i$ in these equations are found in \autoref{fig:V_TS}. The asymptotic behavior of the dimension-dependent functions $c_i(d)$ can be found if all relevant dimensions of the spaces are plugged into the above equations.
	\end{proof}

\section{Bounds on individual quantities in the proof}
\label{app:technical_lemmas}
This section provides the technical lemmas and propositions referred to in the previous section. They are collected here together in an attempt not to clutter the main line of the proof. Most of these technical lemmas put a bound on quantities arising in the proof of \autoref{thm:var_bound}. 

We start by bounding the $a_i$. Only bounds on $a_0$, $a_S$, $a_{1,2}$ and $a_{\adj}$ are provided. In the multiqubit case where $V_S$ and $V_{1,2}$ are not irreducible representations, the quantities $a_i$ for $i \in \Zc_S, \Zc_{d}$  are bounded by $a_S$ and $a_{1,2}$. The only exception is $i = \adj$, for which we provide a separate bound. Let us start with showing that all $a_i$ and $b_i$ are nonnegative.
\begin{proposition}[Lower bound on $a_i$ and $b_i$]
	\label{prop:bound_ai_bi_lower}
	For all CPTP $\Lambda$ and all $i \in \Zc_{TS}$, one has
	\begin{align}
	a_i &= \bbraket{A_i | \Ncbf - \Mcbf^\tp{2} | B_2B_2} \geq 0,& b_i &= \bbraket{B_2B_2 | \Ncbf - \Mcbf^\tp{2} | A_i} \geq 0.
	\end{align}
\end{proposition}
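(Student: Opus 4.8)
The plan is to exploit three structural simplifications, after which nonnegativity follows from a sum-of-squares identity. First I would collapse the operator $\Ncbf$ to a plain $\Lambdabf^\tp{4}$. Both $\kket{A_i}$ (for $i \in \Zc_{TS}$) and $\kket{B_2 B_2} = \kket{A_0}$ lie in the trivial subspace of the tensor-4 representation, and by \autoref{lem:projection_onto_trivial_subreps} the map $\Gavgbf{4}$ is the orthogonal projection onto that subspace. Hence $\bbra{A_i}\Gavgbf{4} = \bbra{A_i}$, $\Gavgbf{4}\kket{B_2 B_2} = \kket{B_2 B_2}$, and symmetrically for the $b_i$ expression, so that $\bbraket{A_i|\Ncbf|B_2 B_2} = \bbraket{A_i|\Lambdabf^\tp{4}|B_2 B_2}$ and $\bbraket{B_2 B_2|\Ncbf|A_i} = \bbraket{B_2 B_2|\Lambdabf^\tp{4}|A_i}$.

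Second I would dispose of the $\Mcbf^\tp{2}$ term using the eigenvalue relation $\Mcbf\kket{B_2} = u\kket{B_2}$ of \eqref{eq:unitarity_eigenvalue}, which gives $\Mcbf^\tp{2}\kket{B_2 B_2} = u^2\kket{B_2 B_2} = u^2\kket{A_0}$. Since the $\{A_i\}$ are orthonormal, $\bbraket{A_i|\Mcbf^\tp{2}|B_2 B_2} = u^2\delta_{i,0}$. The same value arises in $b_i$: writing $\Mcbf^\tp{2}\kket{A_i} = \tfrac{1}{\sqrt{|V_i|}}\sum_{v\in\Bc_i}(\Mcbf\kket{v})\otimes(\Mcbf\kket{v})$ and noting that $\Gavgbf{2}$, and hence $\Mcbf$, annihilates every $V_i$ with $i\neq 0$ (because $V_i \perp V_0 = W \cap V_{TS}$) while fixing $B_2$ up to the factor $u$, one obtains $\bbraket{B_2 B_2|\Mcbf^\tp{2}|A_i} = u^2\delta_{i,0}$ as well.

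Third is the sum-of-squares step. Using $\Lambdabf^\tp{4} = \Lambdabf^\tp{2}\otimes\Lambdabf^\tp{2}$ and setting $\kket{w} := \Lambdabf^\tp{2}\kket{B_2}$, so that $\Lambdabf^\tp{4}\kket{B_2 B_2} = \kket{w}\otimes\kket{w}$, and expanding $A_i = \tfrac{1}{\sqrt{|V_i|}}\sum_{v\in\Bc_i} v\otimes v$ with $\Bc_i$ a \emph{Hermitian} orthonormal basis of $V_i$ (available since each $V_i$ is spanned by Hermitian symmetric Pauli combinations), I obtain
\[
\bbraket{A_i|\Lambdabf^\tp{4}|B_2 B_2} = \frac{1}{\sqrt{|V_i|}}\sum_{v\in\Bc_i}\bbraket{v|w}^2 \geq 0,
\]
because $w = \Lambda^\tp{2}(B_2)$ is Hermitian ($\Lambda$ being Hermiticity-preserving), so each $\bbraket{v|w} = \Tr[vw]$ is real. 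Combining with the second step, $a_i = \bbraket{A_i|\Lambdabf^\tp{4}|B_2 B_2} - u^2\delta_{i,0}$: for $i\neq 0$ this is a nonnegative sum of squares, while for $i=0$ the single term equals $\bbraket{B_2|w}^2 = u^2$, giving $a_0 = 0$. The bound $b_i\geq 0$ follows identically after replacing $\kket{w}$ by $\kket{w'} := (\Lambdabf^\tp{2})^\dagger\kket{B_2}$, which is again Hermitian since the adjoint channel is completely positive, yielding $b_i = \tfrac{1}{\sqrt{|V_i|}}\sum_{v\in\Bc_i}\bbraket{w'|v}^2 - u^2\delta_{i,0}$ and $b_0 = 0$.

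The computation is short, and the step I would check most carefully is the exact cancellation in the $i=0$ term, i.e.\ that the $\Ncbf$ and $\Mcbf^\tp{2}$ contributions both equal $u^2$ so that $a_0 = b_0 = 0$; this relies on $V_0 = \Span\{B_2\}$ with $|V_0| = 1$ and on the factorization $\Lambdabf^\tp{4}\kket{B_2 B_2} = \kket{w}\otimes\kket{w}$ together with $\bbraket{B_2|w} = u$. The only genuine subtlety is ensuring the basis vectors of each $V_i$ can be taken Hermitian so that the inner products $\bbraket{v|w}$ are real and their squares nonnegative, which is exactly where the real-orthogonal nature of the Clifford tensor-2 representation enters.
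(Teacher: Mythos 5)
Your proof is correct and is essentially the paper's own argument: both eliminate the $\Mcbf^{\otimes 2}$ term through the support of $\Mcbf$ on $W$ together with the eigenvalue relation $\Mcbf\kket{B_2} = u\kket{B_2}$, and both then recognize the remaining $\Ncbf$ matrix element as a sum of squares of real inner products, the realness coming exactly from where you place it — Hermiticity of the $V_i$ basis vectors and of the image of $B_2$ under $\Lambda^{\otimes 2}$ (or its adjoint, for $b_i$). The only deviations are cosmetic: the paper keeps $\Gavgbf{4}$ expanded as a double Clifford average, so its squares are $\bbraket{v_i | \Gcbf^{\otimes 2}\Lambdabf^{\otimes 2}\pmb{\Gc'}^{\otimes 2} | B_2}^2$ rather than your collapsed $\bbraket{v_i | \Lambdabf^{\otimes 2} | B_2}^2$ (your collapse via $\Gavgbf{4}$ fixing the trivial subspace is valid and slightly leaner), and the paper defers $i=0$ to \autoref{prop:bound_a0} instead of your inline $u^2 - u^2$ cancellation.
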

\begin{proof}
	If $i = 0$, then \autoref{prop:bound_a0} will show that $a_0 = 0$, which includes this lower bound. For all other $i \in \Zc_{TS} \setminus \{0\}$, we show that $\Mc^\tp{2} \kket{A_i} = 0$. This is because $\Mc^\tp{2}$ is supported on $W^\tp{2} = \Span\{B_1 B_1, B_1 B_2, B_2B_1, B_2B_2 \}$, where $A_0 = B_2B_2$. But $B_1B_1, B_1B_2, B_2B_1 \in (V_{TS}^\tp{2})^\perp$. Since $A_i \in V_{TS}^\tp{2} \setminus \Span\{A_0\}$ the claim follows. Therefore $a_i = \bbraket{B_2 B_2 | \Ncbf | A_i}$. Using the definitions of $\Nc$ (\eqref{eq:N_def}) and $A_i$ (\eqref{eq:A_i}), it follows that 
	\begin{equation}
	\begin{split}
	a_i &= \frac{1}{|V_i| |\Cliff{d}|^2} \sum_{\Gc, \Gc' \in \Cliff{d}} \sum_{v_i \in \Bc_i} \bbraket{v_i v_i | \Gcbf^\tp{4} \Lambdabf^\tp{4} \pmb{\Gc'}^\tp{4} | B_2 B_2} \\
	&= \frac{1}{|V_i| |\Cliff{d}|^2} \sum_{\Gc, \Gc' \in \Cliff{d}} \sum_{v_i \in \Bc_i} \bbraket{v_i | \Gcbf^\tp{2} \Lambdabf^\tp{2} \pmb{\Gc'}^\tp{2} | B_2}^2 \geq 0, \\
	\end{split}
	\end{equation}
	which is nonnegative as it is the sum of real numbers squared. Analogously,
	\begin{equation*}
	b_i = \frac{1}{|V_i| |\Cliff{d}|^2} \sum_{\Gc, \Gc' \in \Cliff{d}} \sum_{v_i \in \Bc_i} \bbraket{B_2 | \Gcbf^\tp{2} \Lambdabf^\tp{2} \pmb{\Gc'}^\tp{2} | v_i}^2 \geq 0. \qedhere
	\end{equation*}
\end{proof}
Next we show that $a_0$ vanishes.
\begin{proposition}[Bound on $a_0$]
	\label{prop:bound_a0}
	Let $a_0$ be defined by \eqref{eq:def_ai}. Then for all CPTP quantum channels $\Lambda$, $a_0 = 0$.
\end{proposition}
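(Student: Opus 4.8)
The plan is to show that the contributions of $\Ncbf$ and $\Mcbf^\tp{2}$ to $a_0$ are both exactly $u^2$, so that their difference vanishes. The starting observation is that $A_0 = B_2 B_2 = B_2^\tp{2}$ by \eqref{eq:A0}, hence $a_0 = \bbraket{B_2^\tp{2} | \Ncbf - \Mcbf^\tp{2} | B_2^\tp{2}}$. The real content is the remark that $B_2$ spans a \emph{trivial} subrepresentation of the Liouville tensor-2 representation $\Gc \mapsto \Gcbf^\tp{2}$: since $\Gavgbf{2}$ is the orthogonal projection onto $W = \Span\{B_1, B_2\}$ (\autoref{lem:projection_onto_trivial_subreps}), we have $\Gcbf^\tp{2} \kket{B_2} = \kket{B_2}$ for every $\Gc \in \Cliff{d}$. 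Consequently $\Gcbf^\tp{4} \kket{B_2^\tp{2}} = (\Gcbf^\tp{2}\kket{B_2}) \otimes (\Gcbf^\tp{2}\kket{B_2}) = \kket{B_2^\tp{2}}$, so $B_2^\tp{2}$ lies in the trivial subspace fixed by the projector $\Gavgbf{4}$ as well, i.e.\ $\Gavgbf{4} \kket{B_2^\tp{2}} = \kket{B_2^\tp{2}}$ and, by self-adjointness, $\bbra{B_2^\tp{2}} \Gavgbf{4} = \bbra{B_2^\tp{2}}$.

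Given this, I would first collapse the $\Ncbf$ term. Using $\Nc = \Gavg{4}\Lambda^\tp{4}\Gavg{4}$ from \eqref{eq:N_def}, the two surrounding projectors act trivially on $\kket{B_2^\tp{2}}$, and then the mixed-product property with $\Lambdabf^\tp{4} = \Lambdabf^\tp{2} \otimes \Lambdabf^\tp{2}$ factorizes the inner product:
\begin{equation}
\bbraket{B_2^\tp{2} | \Ncbf | B_2^\tp{2}} = \bbraket{B_2^\tp{2} | \Lambdabf^\tp{4} | B_2^\tp{2}} = \bbraket{B_2 | \Lambdabf^\tp{2} | B_2}^2 = u^2,
\end{equation}
where the last equality is the identity $u = \bbraket{B_2 | \Lambdabf^\tp{2} | B_2}$ recorded around \eqref{eq:unitarity_eigenvalue}. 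Next I would handle the $\Mcbf^\tp{2}$ term directly from the eigenvalue relation $\Mcbf \kket{B_2} = u \kket{B_2}$ of \eqref{eq:unitarity_eigenvalue}, which gives $\Mcbf^\tp{2} \kket{B_2^\tp{2}} = u^2 \kket{B_2^\tp{2}}$; since $B_2$ is Hilbert--Schmidt normalized, so is $B_2^\tp{2}$, and therefore $\bbraket{B_2^\tp{2} | \Mcbf^\tp{2} | B_2^\tp{2}} = u^2$.

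Subtracting the two evaluations yields $a_0 = u^2 - u^2 = 0$, which is the claim. I do not expect any genuine obstacle here: the only point requiring care is recognizing that $A_0$ is fixed by \emph{both} the $\Gavgbf{2}$-induced projection (through the eigenvector property of $B_2$) and the $\Gavgbf{4}$-induced projection onto the trivial subspace of the tensor-4 representation; once this is in place the computation is immediate. I also note that, unlike the bounds on the remaining $a_i$, this identity requires no unitality (or $d=2$) assumption on $\Lambda$, since it follows purely from the fixed-point structure of $B_2$.
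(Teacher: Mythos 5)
Your proof is correct and takes essentially the same route as the paper: the entire argument rests on the observation that $B_2^\tp{2}$ is fixed by $\Gavgbf{4}$ and by $(\Gavgbf{2})^\tp{2}$, which collapses both the $\Ncbf$ term and the $\Mcbf^\tp{2}$ term to the same quantity $\bbraket{B_2^\tp{2} | \Lambdabf^\tp{4} | B_2^\tp{2}}$. Your additional step of evaluating each term to $u^2$ (via tensor factorization and the eigenvalue relation \eqref{eq:unitarity_eigenvalue}) is a harmless refinement; the paper simply notes the two terms are equal without computing their common value.
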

\begin{proof}
	By definition of \eqref{eq:def_ai} it follows that (using that $A_0 = B_2 B_2$ by definition of \eqref{eq:A0})
	\begin{equation}
	\begin{split}
	a_0 &= \bbraket{A_0 | \pmb{\Nc} - \pmb{\Mc}^{\otimes 2} | B_2  B_2} = \bbraket{B_2  B_2 | \pmb{\Lambda}^{\otimes 4} - \pmb{\Lambda}^{\otimes 4} | B_2  B_2} = 0,
	\end{split} 
	\end{equation}
	since $\Gavgbf{4} \kket{B_2 B_2} = (\Gavgbf{2})^{\otimes 2} \kket{B_2 B_2} =  \kket{B_2 B_2}$.
\end{proof}
The next proposition gives a bound on $a_{1,2}$.
\begin{proposition}[Bound on $a_{1,2}$]
	\label{prop:bound_a12}
	Let $a_{1,2}$ be defined as in \eqref{eq:def_ai} and let $\Lambda$ be a CPTP map. If $\Lambda$ is a single-qubit channel (i.e., if $d = 2$) or if $\Lambda$ is unital [i.e., $\Lambda(I) = I$], then 
	\begin{equation}
	\label{eq:a12bound}
	a_{1,2} =  \frac{1}{\sqrt{d^2 - 2}} \left( \frac{1}{d^2-1} \sum_{\sigma \in \Pc^*} \bbraket{\sigma | \pmb{\Lambda}_\urm \pmb{\Lambda}_\urm^\dagger | \sigma }^2    -  u^2 \right)  \leq \frac{\sqrt{d^2-2}}{d^2} (1-u)^2.
	\end{equation}
\end{proposition}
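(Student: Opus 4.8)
The plan is to prove the stated equality by a direct computation exploiting the structure of $\Ncbf$ and $\Mcbf$, and then to deduce the inequality from \autoref{lem:diagonals_squared}. Starting from the definition $a_{1,2} = \bbraket{A_{1,2} | \Ncbf - \Mcbf^\tp{2} | B_2^\tp{2}}$ in \eqref{eq:def_ai}, I would first discard the $\Mcbf^\tp{2}$ contribution: since $\Mcbf\kket{B_2} = u\kket{B_2}$ by \eqref{eq:unitarity_eigenvalue}, we have $\Mcbf^\tp{2}\kket{B_2^\tp{2}} = u^2\kket{B_2^\tp{2}} = u^2\kket{A_0}$ (recall $A_0 = B_2 B_2$ from \eqref{eq:A0}), and because the trivial-subrepresentation vectors $\{A_i\}_{i\in\Zc_{TS}}$ are orthonormal with $A_{1,2}\perp A_0$, this term vanishes. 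Thus $a_{1,2} = \bbraket{A_{1,2} | \Ncbf | B_2^\tp{2}}$. Next, since both $A_{1,2}$ and $B_2^\tp{2}=A_0$ lie in the range of the orthogonal projector $\Gavgbf{4}$, the two outer projectors in $\Ncbf = \Gavgbf{4}\Lambdabf^\tp{4}\Gavgbf{4}$ act as the identity, leaving $a_{1,2} = \bbraket{A_{1,2} | \Lambdabf^\tp{4} | B_2^\tp{2}}$.

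I would then substitute the explicit form of $A_{1,2}$ from \eqref{eq:A12}, splitting it into the $\sum_{\sigma\in\Pc^*}\sigma^\tp{4}$ part and the $A_0$ part. Factorizing $\Lambdabf^\tp{4}=(\Lambdabf^\tp{2})^{\otimes 2}$ over the two symmetric copies, the $A_0$ part gives $\bbraket{A_0 | \Lambdabf^\tp{4} | B_2^\tp{2}} = \bbraket{B_2|\Lambdabf^\tp{2}|B_2}^2 = u^2$. For each $\sigma$-term, the same factorization together with $\kket{B_2} = \tfrac{1}{\sqrt{d^2-1}}\sum_{\tau\in\Pc^*}\kket{\tau\tau}$ yields $\bbraket{\sigma\sigma|\Lambdabf^\tp{2}|B_2} = \tfrac{1}{\sqrt{d^2-1}}\sum_{\tau\in\Pc^*}\bbraket{\sigma|\Lambdabf|\tau}^2 = \tfrac{1}{\sqrt{d^2-1}}\bbraket{\sigma|\Lambdabf_\urm\Lambdabf_\urm^\dagger|\sigma}$, so that $\bbraket{\sigma^\tp{4}|\Lambdabf^\tp{4}|B_2^\tp{2}} = \tfrac{1}{d^2-1}\bbraket{\sigma|\Lambdabf_\urm\Lambdabf_\urm^\dagger|\sigma}^2$. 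Summing over $\sigma\in\Pc^*$ and collecting the $\tfrac{1}{\sqrt{d^2-2}}$ prefactor reproduces exactly the bracketed expression, establishing the equality in \eqref{eq:a12bound}.

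For the inequality I would invoke \autoref{lem:diagonals_squared} applied to the channel $\Ec_1$ with $\pmb{\Ec_1} = 1\oplus\Lambdabf_\urm\Lambdabf_\urm^\dagger$ of \eqref{eq:unital_part}, which is genuinely CPTP precisely under the hypothesis $d=2$ or $\Lambda$ unital by \autoref{lem:unital_part}. Its randomized benchmarking parameter is $f(\Ec_1) = \tfrac{1}{d^2-1}\Tr[\Lambdabf_\urm\Lambdabf_\urm^\dagger] = u$ by \eqref{eq:def_unitarity_app}, so the lemma gives $\tfrac{1}{d^2-1}\sum_{\sigma\in\Pc^*}\bbraket{\sigma|\Lambdabf_\urm\Lambdabf_\urm^\dagger|\sigma}^2 - u^2 \leq \tfrac{d^2-2}{d^2}(1-u)^2$. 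Multiplying by $(d^2-2)^{-1/2}$ converts this into the claimed bound $\tfrac{\sqrt{d^2-2}}{d^2}(1-u)^2$.

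The main obstacle I anticipate is the bookkeeping in the middle step: correctly factorizing $\Lambdabf^\tp{4}$ across the two symmetric copies and verifying the index identity $\sum_{\tau\in\Pc^*}\bbraket{\sigma|\Lambdabf|\tau}^2 = \bbraket{\sigma|\Lambdabf_\urm\Lambdabf_\urm^\dagger|\sigma}$, which relies on the block structure \eqref{eq:Liouville_block_decomp} (so that restricting both indices to traceless Paulis picks out exactly the unital block $\Lambdabf_\urm$) and on the realness of $\Lambdabf$. The other delicate point is that the hypotheses enter \emph{only} through \autoref{lem:unital_part}, which guarantees $\Ec_1$ is CPTP and hence that \autoref{lem:diagonals_squared} applies; for $d\geq 4$ without unitality this is exactly the step that breaks, which is why the unital assumption appears in the theorem.
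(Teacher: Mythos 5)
Your proposal is correct and follows essentially the same route as the paper's proof: reduce $a_{1,2}$ to $\bbraket{A_{1,2}|\Lambdabf^{\tp{4}}|B_2^{\tp{2}}}$ (the paper kills the $\Mcbf^{\tp{2}}$ term via the same orthogonality, just carried implicitly through the computation), evaluate it with the identity $\sum_{\tau\in\Pc^*}\bbraket{\sigma|\Lambdabf|\tau}^2 = \bbraket{\sigma|\Lambdabf_\urm\Lambdabf_\urm^\dagger|\sigma}$ (the paper's \eqref{eq:LLdagger_trick}), and then apply \autoref{lem:diagonals_squared} to $1\oplus\Lambdabf_\urm\Lambdabf_\urm^\dagger$, whose CPTP-ness under the stated hypotheses is exactly what \autoref{lem:unital_part} provides. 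Your identification of where the $d=2$/unitality assumption enters matches the paper precisely.
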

\begin{proof}
	By the definition \eqref{eq:def_ai}, $a_{1,2} = \bbraket{A_{1,2} | \pmb{\Nc} - \pmb{\Mc}^{\otimes 2} | B_2  B_2}$, where
	\begin{equation}
	B_2  B_2 = \frac{1}{d^2 - 1} \sum_{\sigma,\tau \in \Pc^*} \sigma\sigma\tau\tau \qquad \mbox{and} \qquad
	A_{1,2} = \frac{1}{\sqrt{d^2-2}} \left( \sum_{\sigma \in \Pc^*} \sigma^{\otimes 4}    -    A_0 \right)
	\end{equation}
	were defined in \eqref{eq:A0} and \eqref{eq:A12}, respectively. Therefore $a_{1,2}$ is computed as (recalling that $A_0 = B_2 B_2$ and using \eqref{eq:unitarity_eigenvalue})  
	\begin{equation}
	\label{eq:a12def}
	\begin{split}
	a_{1,2} &= \bbraket{A_{1,2} | \pmb{\Nc} - \pmb{\Mc}^{\otimes 2} | B_2  B_2} \\
	&= \frac{1}{(d^2-1)  \sqrt{d^2-2} }  \left(\sum_{\sigma, \hat{\sigma}, \hat{\tau} \in \Pc^*} \bbraket{\sigma\sigma\sigma\sigma | \pmb{\Lambda}^{\otimes 4} | \hat{\sigma}\hat{\sigma}\hat{\tau}\hat{\tau}  } -  \bbraket{B_2 B_2 | \pmb{\Lambda}^{\otimes 4} | B_2 B_2}   \right) \\
	&= \frac{1}{  \sqrt{d^2-2} }  \left( \frac{1}{d^2-1} \sum_{\sigma, \hat{\sigma}, \hat{\tau} \in \Pc^*} \bbraket{\sigma | \pmb{\Lambda} | \hat{\sigma}}^2 \bbraket{\sigma | \pmb{\Lambda} | \hat{\tau}}^2  -  u^2   \right) \\
	&= \frac{1}{  \sqrt{d^2-2} }  \left( \frac{1}{d^2-1} \sum_{\sigma \in \Pc^*} \bbraket{\sigma | \pmb{\Lambda}_\urm \pmb{\Lambda}_\urm^\dagger | {\sigma}}^2  -  u^2   \right),
	\end{split}
	\end{equation}
	where in the last step, the following was used 
	\begin{equation}
	\label{eq:LLdagger_trick}
	\sum_{\hat{\sigma} \in \Pc^*} \bbraket{\sigma | \pmb{\Lambda} | \hat{\sigma}} \bbraket{\tau | \pmb{\Lambda} | \hat{\sigma}} = \sum_{\hat{\sigma} \in \Pc^*} \bbraket{\sigma | \pmb{\Lambda} | \hat{\sigma}} \bbraket{\hat{\sigma} | \pmb{\Lambda}^\dagger | \tau } = \bbraket{\sigma | \pmb{\Lambda}_\urm \pmb{\Lambda}_\urm^\dagger | {\tau}}, \quad\quad \forall \sigma,\tau \in \Pc^*,
	\end{equation}
	abusing notation slightly by writing $\pmb{\Lambda}_\urm$ instead of $1 \oplus \pmb{\Lambda}_\urm$ and using the fact that $\sum_{\hat{\sigma} \in \Pc^*} \kketbra{\hat{\sigma}}$ is the projection onto the unital block.
		
	The bound of \eqref{eq:a12bound} is then shown as follows. The idea is to apply \autoref{lem:diagonals_squared} to the map
	\begin{equation}
	\pmb{\Ec} := \begin{bmatrix}
	1 & 0 \\
	0 & \pmb{\Lambda}_\urm \pmb{\Lambda}_\urm^\dagger
	\end{bmatrix},
	\end{equation}
	since this map is constructed such that 
	\begin{equation}
	f(\Ec) = \frac{1}{d^2-1} \sum_{\sigma \in \Pc^*} \bbraket{\sigma | \pmb{\Ec} | {\sigma}} = \frac{1}{d^2-1} \sum_{\sigma \in \Pc^*} \bbraket{\sigma | \pmb{\Lambda}_\urm \pmb{\Lambda}_\urm^\dagger | {\sigma}} = u(\Lambda)
	\end{equation}  
	and
	\begin{equation}
	\frac{1}{d^2-1} \sum_{\sigma \in \Pc^*} \bbraket{\sigma | \pmb{\Lambda}_\urm \pmb{\Lambda}_\urm^\dagger | {\sigma}}^2  -  u(\Lambda)^2 = \frac{1}{d^2-1} \sum_{\sigma \in \Pc^*} \bbraket{\sigma | \pmb{\Ec} | {\sigma}}^2  -  f(\Ec)^2.
	\end{equation}
	Application of \autoref{lem:diagonals_squared} requires the map $\Ec$ to be CPTP. This is guaranteed by \autoref{lem:unital_part}, using the assumption that $\Lambda$ is a single-qubit or unital channel. Therefore \autoref{lem:diagonals_squared} applied to the channel $\Ec$ defined above, yields
	\begin{equation}
	\frac{1}{d^2-1}\sum_{\sigma \in \Pc^*} \bbraket{\sigma | \pmb{\Lambda}_\urm \pmb{\Lambda}_\urm^\dagger | {\sigma}}^2  -  u(\Lambda)^2 = \frac{1}{d^2-1} \sum_{\sigma \in \Pc^*} \bbraket{\sigma | \pmb{\Ec} | {\sigma}}^2  -  f(\Ec)^2 \leq  \frac{d^2 - 2}{d^2} (1-f(\Ec))^2 = \frac{d^2 - 2}{d^2} (1-u(\Lambda))^2.
	\end{equation}
	Plugging this into \eqref{eq:a12def} yields the result.
\end{proof}
The next proposition bounds the quantity $a_S$.
\begin{proposition}[Bound on $a_S$]
	\label{prop:bound_aS}
	Let $a_{S}$ be defined as in \eqref{eq:def_ai} and let $\Lambda$ be a CPTP map. If $\Lambda$ is a single-qubit channel (i.e., if $d = 2$) or if $\Lambda$ is unital [i.e., $\Lambda(I) = I$], then 
	\begin{equation}
	a_S = \frac{\sqrt{2}}{(d^2-1)^{\frac{3}{2}}(d^2-2)^\half} \sum_{\substack{\sigma, \tau  \in \Pc^* \\ \sigma \neq \tau}} \bbraket{\sigma | \pmb{\Lambda}_\urm \pmb{\Lambda}_\urm^\dagger | {\tau}}^2 \leq \sqrt{\frac{d^2 - 2}{d^2-1}} \sqrt{2} (1-u)^2.
	\end{equation}
\end{proposition}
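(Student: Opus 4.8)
The plan is to first evaluate $a_S$ explicitly as a sum of squared off-diagonal entries of $\Lambdabf_\urm\Lambdabf_\urm^\dagger$, and then bound that sum using the positive-semidefiniteness of $I-\Lambdabf_\urm\Lambdabf_\urm^\dagger$ together with Sylvester's criterion. First I would note that since $S\neq0$, the argument in the proof of \autoref{prop:bound_ai_bi_lower} gives $\Mcbf^\tp{2}\kket{A_S}=0$, so $a_S=\bbraket{A_S|\Ncbf|B_2B_2}$; moreover both $A_S$ and $B_2B_2$ lie in $V_{TS}\otimes V_{TS}\subseteq\Rge(\Gavgbf{4})$, so the two averaging projectors in $\Ncbf=\Gavgbf{4}\Lambdabf^\tp{4}\Gavgbf{4}$ act as the identity and $a_S=\bbraket{A_S|\Lambdabf^\tp{4}|B_2B_2}$.

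Next I would expand this using the explicit basis \eqref{eq:AS} for $A_S$ and $B_2B_2=\frac{1}{d^2-1}\sum_{\hat\sigma,\hat\tau\in\Pc^*}\hat\sigma\hat\sigma\hat\tau\hat\tau$. Since $\Lambdabf^\tp{4}$ factorizes over the four tensor legs, each term $\bbraket{\sigma\tau\sigma\tau|\Lambdabf^\tp{4}|\hat\sigma\hat\sigma\hat\tau\hat\tau}$ splits into a product of four scalar matrix elements; summing over $\hat\sigma$ and $\hat\tau$ and applying the identity \eqref{eq:LLdagger_trick} collapses each leg-pair into an entry of $\Lambdabf_\urm\Lambdabf_\urm^\dagger$. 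Both summands $\sigma\tau\sigma\tau$ and $\sigma\tau\tau\sigma$ in $A_S$ produce exactly $\bbraket{\sigma|\Lambdabf_\urm\Lambdabf_\urm^\dagger|\tau}^2$ (using that $\Lambdabf_\urm\Lambdabf_\urm^\dagger$ is real and Hermitian, so $\bbraket{\tau|\Lambdabf_\urm\Lambdabf_\urm^\dagger|\sigma}=\bbraket{\sigma|\Lambdabf_\urm\Lambdabf_\urm^\dagger|\tau}$), giving the factor of $2$ and, after collecting the normalization constants, the stated closed form for $a_S$.

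The bound then reduces to controlling $\sum_{\sigma\neq\tau}\bbraket{\sigma|M|\tau}^2$ with $M:=\Lambdabf_\urm\Lambdabf_\urm^\dagger$. Under the $d=2$ or unital hypothesis, \autoref{lem:unital_part} shows $1\oplus M$ is CPTP with $\|\cdot\|_{2\to2}\leq1$, hence $0\leq M\leq I$ and $I-M\geq0$. Applying \autoref{lem:Sylvester} to the $2\times2$ principal minors of $I-M$ yields $\bbraket{\sigma|M|\tau}^2\leq(1-M_{\sigma\sigma})(1-M_{\tau\tau})$ for each $\sigma\neq\tau$. Setting $g_\sigma=1-M_{\sigma\sigma}\geq0$, with $\sum_\sigma g_\sigma=(d^2-1)(1-u)$ since the average diagonal entry of $M$ equals $u$ by \eqref{eq:def_unitarity_app}, I would sum and use $\sum_{\sigma\neq\tau}g_\sigma g_\tau=(\sum_\sigma g_\sigma)^2-\sum_\sigma g_\sigma^2$ together with the Cauchy--Schwarz lower bound $\sum_\sigma g_\sigma^2\geq(\sum_\sigma g_\sigma)^2/(d^2-1)$ to get $\sum_{\sigma\neq\tau}\bbraket{\sigma|M|\tau}^2\leq(d^2-1)(d^2-2)(1-u)^2$. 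Substituting into the closed form and simplifying the prefactor gives precisely $a_S\leq\sqrt{2}\sqrt{(d^2-2)/(d^2-1)}(1-u)^2$.

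I expect the main obstacle to be obtaining the tight dimension-dependent constant. The crude estimate $\sum_{\sigma\neq\tau}g_\sigma g_\tau\leq(\sum_\sigma g_\sigma)^2$ only yields the looser factor $\sqrt{(d^2-1)/(d^2-2)}$, whereas the sharper factor $\sqrt{(d^2-2)/(d^2-1)}$ genuinely requires retaining the $-\sum_\sigma g_\sigma^2$ term and lower-bounding it by Cauchy--Schwarz. A secondary point of care is the bookkeeping of the two term types in $A_S$ and of all normalization factors, which must be tracked carefully to land on the exact prefactor $\sqrt{2}\,(d^2-1)^{-3/2}(d^2-2)^{-1/2}$.
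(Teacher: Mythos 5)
Your proposal is correct and follows essentially the same route as the paper's own proof: the same explicit evaluation of $a_S$ via the bases \eqref{eq:A0}, \eqref{eq:AS} and the identity \eqref{eq:LLdagger_trick}, then \autoref{lem:unital_part} to get $\mathbf{I}-\pmb{\Lambda}_\urm\pmb{\Lambda}_\urm^\dagger\geq 0$, Sylvester's criterion on the $2\times 2$ principal minors, and finally the mean-of-squares inequality (your Cauchy--Schwarz step is exactly the paper's \autoref{lem:meansquares}) to retain the sharp factor $\sqrt{(d^2-2)/(d^2-1)}$. Your treatment is if anything slightly more explicit than the paper's, e.g.\ in justifying why the $\Mcbf^{\otimes 2}$ term drops and why both summands of $A_S$ contribute equal squares via the realness and symmetry of $\pmb{\Lambda}_\urm\pmb{\Lambda}_\urm^\dagger$.
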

\begin{proof}
	First, let us show the evaluation of $a_S$. 
	By the definition \eqref{eq:def_ai}, $a_{S} = \bbraket{A_{S} | \pmb{\Nc} - \pmb{\Mc}^{\otimes 2} | B_2  B_2}$, where
	\begin{equation}
	B_2  B_2 = \frac{1}{d^2 - 1} \sum_{\sigma,\tau \in \Pc^*} \sigma\sigma\tau\tau \qquad \mbox{and} \qquad
	A_S = \sqrt{\frac{1}{2 (d^2-1)(d^2-2)}}  \sum_{\substack{\sigma, \tau  \in \Pc^* \\ \sigma \neq \tau}} \sigma\tau\sigma\tau + \sigma\tau\tau\sigma
	\end{equation}
	were defined in \eqref{eq:A0} and \eqref{eq:A12} respectively. 
	Therefore $a_S$ is computed as
	\begin{equation}
	\label{eq:319}
	\begin{split}
	a_S &= \bbraket{A_{S} | \pmb{\Nc} - \pmb{\Mc}^{\otimes 2} | B_2  B_2} \\
	&= \frac{1}{\sqrt{2} (d^2-1)^{\frac{3}{2}}(d^2-2)^\half}  \sum_{\substack{\sigma, \tau, \hat{\sigma}, \hat{\tau}  \in \Pc^* \\ \sigma \neq \tau}} \bbraket{\sigma\tau\sigma\tau + \sigma\tau\tau\sigma | \pmb{\Lambda}^{\otimes 4} | \hat{\sigma}\hat{\sigma}\hat{\tau}\hat{\tau}} \\
	&= \frac{1}{\sqrt{2} (d^2-1)^{\frac{3}{2}}(d^2-2)^\half}   \sum_{\substack{\sigma, \tau, \hat{\sigma}, \hat{\tau}  \in \Pc^* \\ \sigma \neq \tau}} 2\bbraket{\sigma | \pmb{\Lambda} | \hat{\sigma}} \bbraket{\tau | \pmb{\Lambda} | \hat{\sigma}} \bbraket{\sigma | \pmb{\Lambda} | \hat{\tau}} \bbraket{\tau | \pmb{\Lambda} | \hat{\tau}}   \\
	&=  \frac{\sqrt{2}}{(d^2-1)^{\frac{3}{2}}(d^2-2)^\half} \sum_{\substack{\sigma, \tau  \in \Pc^* \\ \sigma \neq \tau}} \bbraket{\sigma | \pmb{\Lambda}_\urm \pmb{\Lambda}_\urm^\dagger | {\tau}}^2 \\
	&= \frac{\sqrt{2}}{(d^2-1)^{\frac{3}{2}}(d^2-2)^\half} \sum_{\substack{\sigma, \tau  \in \Pc^* \\ \sigma \neq \tau}} \bbraket{\sigma | \mathbf{I} - \pmb{\Lambda}_\urm \pmb{\Lambda}_\urm^\dagger | {\tau}}^2.
	\end{split}
	\end{equation}
	In the fourth step, the trick of \eqref{eq:LLdagger_trick} was again used. In the final step, it is used that $\bbraket{\sigma | \pmb{\Lambda}_\urm \pmb{\Lambda}_\urm^\dagger | {\tau}}^2$ is the square of off-diagonal matrix elements of $\pmb{\Lambda}_\urm \pmb{\Lambda}_\urm^\dagger$, so that $\bbraket{\sigma | \pmb{\Lambda}_\urm \pmb{\Lambda}_\urm^\dagger | {\tau}}^2 = \bbraket{\sigma | \mathbf{I} - \pmb{\Lambda}_\urm \pmb{\Lambda}_\urm^\dagger | {\tau}}^2$.
	
	The bound is derived as follows. Under the stated assumption that $\Lambda$ is a single-qubit or unital channel, \autoref{lem:unital_part} guarantees that $\| {\Lambda}_\urm {\Lambda}_\urm^\dagger \|_{2\rightarrow 2} \leq 1$. Here $\| \cdot \|_{2\rightarrow 2}$ is the induced Schatten 2-norm (see \eqref{eq:induced_schatten_norms}). Since $\bbraket{A | B} = \Tr[A^\dagger B]$ for any $A,B \in \Lc(\Hc)$ (and therefore $\| A \|_2 = \| \kket{A} \|_2$ for all $A \in \Lc(\Hc)$), it follows that $\| \pmb{\Lambda}_\urm \pmb{\Lambda}_\urm^\dagger \|_{2\rightarrow 2} = \| {\Lambda}_\urm {\Lambda}_\urm^\dagger \|_{2\rightarrow 2}$. But the operator norm (Schatten $\infty$-norm) on matrices is just the induced $2 \rightarrow 2$ norm, so that it can be concluded that 
	$\| \pmb{\Lambda}_\urm \pmb{\Lambda}_\urm^\dagger \|_\infty =\| \pmb{\Lambda}_\urm \pmb{\Lambda}_\urm^\dagger \|_{2\rightarrow 2} = \| {\Lambda}_\urm {\Lambda}_\urm^\dagger \|_{2\rightarrow 2} \leq 1.$ 
	Together with the fact that a matrix of the form $\Lambdabf_\urm \Lambdabf_\urm^\dagger$ is itself positive semidefinite, this implies that the matrix $\mathbf{I} - \pmb{\Lambda}_\urm \pmb{\Lambda}_\urm^\dagger \geq 0$ is also positive semidefinite as a matrix (not to be confused with being a positive superoperator). 
	Now the key idea is to bound the off-diagonal elements of the symmetric positive semidefinite matrix $\mathbf{I} - \pmb{\Lambda}_\urm \pmb{\Lambda}_\urm^\dagger$ by the diagonal elements using the Sylvester's Criterion for positive semidefinite matrices (\autoref{lem:Sylvester}). This criterion states that a Hermitian matrix is positive semidefinite if and only if all of its principal minors are nonnegative. Here we use the only if part, since it has been established that $\mathbf{I} - \pmb{\Lambda}_\urm \pmb{\Lambda}_\urm^\dagger$ is positive semidefinite. In particular we use that the positive semidefiniteness of $\mathbf{I} - \pmb{\Lambda}_\urm \pmb{\Lambda}_\urm^\dagger$ implies that all of its second order minors are nonnegative. This means that
	\begin{equation}
	\bbraket{\sigma |\mathbf{I} - \pmb{\Lambda}_\urm \pmb{\Lambda}_\urm^\dagger | {\sigma}}\bbraket{\tau | \mathbf{I} - \pmb{\Lambda}_\urm \pmb{\Lambda}_\urm^\dagger | {\tau}} - \bbraket{\sigma | \mathbf{I} - \pmb{\Lambda}_\urm \pmb{\Lambda}_\urm^\dagger | {\tau}}^2 \geq 0, \quad \quad \forall \sigma, \tau \in \Pc^*, \, \sigma \neq \tau.
	\end{equation} 
	Plugging this into \eqref{eq:319} yields 
	\begin{equation}
	\label{eq:321}
	\begin{split}
	a_S &\leq \frac{\sqrt{2}}{(d^2-1)^{\frac{3}{2}}(d^2-2)^\half} \sum_{\substack{\sigma, \tau  \in \Pc^* \\ \sigma \neq \tau}} \bbraket{\sigma |\mathbf{I} - \pmb{\Lambda}_\urm \pmb{\Lambda}_\urm^\dagger | {\sigma}}\bbraket{\tau | \mathbf{I} - \pmb{\Lambda}_\urm \pmb{\Lambda}_\urm^\dagger | {\tau}} \\
	&= \frac{\sqrt{2}}{(d^2-1)^{\frac{3}{2}}(d^2-2)^\half} \left( \left(   \sum_{\sigma  \in \Pc^*} \bbraket{\sigma |\mathbf{I} - \pmb{\Lambda}_\urm \pmb{\Lambda}_\urm^\dagger | {\sigma}} \right)^2 - \sum_{\sigma  \in \Pc^*} \bbraket{\sigma |\mathbf{I} - \pmb{\Lambda}_\urm \pmb{\Lambda}_\urm^\dagger | {\sigma}}^2   \right). \\
	\end{split}
	\end{equation}
	The final step is to use that the mean of squares is larger than the square of the mean (\autoref{lem:meansquares}). This means in our setting that 
	\begin{equation}
	\frac{1}{d^2-1}\sum_{\sigma  \in \Pc^*} \bbraket{\sigma |\mathbf{I} - \pmb{\Lambda}_\urm \pmb{\Lambda}_\urm^\dagger | {\sigma}}^2 \geq \left(\frac{1}{d^2-1}\sum_{\sigma  \in \Pc^*} \bbraket{\sigma |\mathbf{I} - \pmb{\Lambda}_\urm \pmb{\Lambda}_\urm^\dagger | {\sigma}}\right)^2.
	\end{equation}
	Multiplying by $-(d^2-1)$ and plugging into \eqref{eq:321} yields the bound:
	\begin{align}
	a_S &\leq \frac{\sqrt{2}}{(d^2-1)^{\frac{3}{2}}(d^2-2)^\half} \left(1 - \frac{1}{d^2-1}\right)\left(   \sum_{\sigma  \in \Pc^*} \bbraket{\sigma |\mathbf{I} - \pmb{\Lambda}_\urm \pmb{\Lambda}_\urm^\dagger | {\sigma}} \right)^2  \\
	&= \frac{\sqrt{2}}{(d^2-1)^{\frac{3}{2}}(d^2-2)^\half} \frac{d^2-2}{d^2-1}\left(   (d^2-1)(1-u) \right)^2 \\
	&= \sqrt{\frac{d^2 - 2}{d^2-1}} \sqrt{2} (1-u)^2,
	\end{align}
	using the definition of $u$ (\eqref{eq:def_unitarity_app}) and the fact that $u(\Ic) = 1$.
\end{proof}
Finally, a bound on $a_\adj$ is presented.
\begin{proposition}[Bound on $a_\adj$]
	\label{prop:bound_aadj}
	Let $a_{\adj}$ be defined as in \eqref{eq:def_ai} and let $\Lambda$ be a CPTP map. If $\Lambda$ is a single-qubit channel (i.e., if $d = 2$) or if $\Lambda$ is unital [i.e., $\Lambda(I) = I$], then 
	\begin{equation}
	a_\adj = \frac{2}{(d^2-4)(d^2-1)^{\frac{3}{2}}} \sum_{\tau \in \Pc^*} \left( \sum_{\sigma \in C_\tau} \bbraket{\sigma \cdot \tau | \pmb{\Lambda}_\urm \pmb{\Lambda}_\urm^\dagger |\sigma} \right)^2 \leq \sqrt{d^2 - 1} (1-u)^2,
	\end{equation}
	where $C_\tau$ is the set of all normalized Pauli's that commute with $\tau$ (except for $\tau$ and $\sigma_0$), as defined in \eqref{eq:def_Ctau}.
\end{proposition}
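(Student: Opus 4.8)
The plan is to follow the template of \autoref{prop:bound_aS}: first establish the stated equality for $a_\adj$ by a direct computation (valid for any CPTP $\Lambda$), and then obtain the inequality by exploiting positive semidefiniteness of $\mathbf{I} - \Lambdabf_\urm \Lambdabf_\urm^\dagger$, which holds under the hypothesis $d=2$ or $\Lambda$ unital.

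For the equality I would start from $a_\adj = \bbraket{A_\adj | \Ncbf - \Mcbf^\tp{2} | B_2 B_2}$ as in \eqref{eq:def_ai}. Since $\Mcbf^\tp{2}\kket{B_2 B_2} = u^2 \kket{B_2 B_2} = u^2 \kket{A_0}$ and $\bbraket{A_\adj | A_0} = 0$ by orthonormality of the trivial-subrepresentation vectors, the $\Mcbf^\tp{2}$ term drops out, exactly as in \autoref{prop:bound_a0}; using that $\Gavgbf{4}$ fixes both $A_\adj$ and $B_2 B_2$ this leaves $a_\adj = \bbraket{A_\adj | \Lambdabf^\tp{4} | B_2 B_2}$. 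I would then insert the explicit forms \eqref{eq:Aadj} and \eqref{eq:A0}, writing $A_\adj = \frac{1}{2(d^2-4)\sqrt{d^2-1}}\sum_\tau w_\tau \otimes w_\tau$ with $w_\tau = \sum_{\sigma \in C_\tau} (\sigma\cdot\tau)\sigma + \sigma(\sigma\cdot\tau)$, and expand $w_\tau \otimes w_\tau$ over the four factors of $\Lc(\Hc)^{\otimes 4}$. Because $B_2 B_2 = \frac{1}{d^2-1}\sum_{\hat\sigma,\hat\tau} \hat\sigma\hat\sigma\hat\tau\hat\tau$ factorizes (the pair $\hat\sigma$ sits in slots $1,2$ and $\hat\tau$ in slots $3,4$), each sum over $\hat\sigma$ and over $\hat\tau$ collapses via the identity \eqref{eq:LLdagger_trick}, $\sum_{\hat\sigma}\bbraket{\mu|\Lambdabf|\hat\sigma}\bbraket{\nu|\Lambdabf|\hat\sigma} = \bbraket{\mu|\Lambdabf_\urm\Lambdabf_\urm^\dagger|\nu}$. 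Using that the Pauli transfer matrix of $\Lambdabf_\urm\Lambdabf_\urm^\dagger$ is real symmetric, all four cross-terms of $w_\tau \otimes w_\tau$ contribute the identical factor $\big(\sum_{\sigma\in C_\tau}\bbraket{\sigma\cdot\tau|\Lambdabf_\urm\Lambdabf_\urm^\dagger|\sigma}\big)^2$, so the prefactor $4/2 = 2$ appears and yields the claimed identity.

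For the inequality I would set $M := \mathbf{I} - \Lambdabf_\urm\Lambdabf_\urm^\dagger$, which is positive semidefinite by \autoref{lem:unital_part}. Since $\sigma\cdot\tau \neq \sigma$ for $\sigma\in C_\tau$, these are off-diagonal elements, so $\bbraket{\sigma\cdot\tau|\Lambdabf_\urm\Lambdabf_\urm^\dagger|\sigma} = -M_{(\sigma\cdot\tau),\sigma}$. The nonnegativity of the $2\times 2$ principal minors (Sylvester's criterion, \autoref{lem:Sylvester}) gives $M_{ij}^2 \leq M_{ii}M_{jj}$ and hence $|M_{ij}| \leq \frac{1}{2}(M_{ii}+M_{jj})$. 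Combining this with the triangle inequality and the fact that $\sigma \mapsto \sigma\cdot\tau$ is an involution of $C_\tau$ (so $\sum_{\sigma\in C_\tau}M_{(\sigma\cdot\tau),(\sigma\cdot\tau)} = \sum_{\sigma\in C_\tau}M_{\sigma\sigma}$), I obtain $\big(\sum_{\sigma\in C_\tau}\bbraket{\sigma\cdot\tau|\Lambdabf_\urm\Lambdabf_\urm^\dagger|\sigma}\big)^2 \leq D_\tau^2$, where $D_\tau := \sum_{\sigma\in C_\tau}M_{\sigma\sigma} \geq 0$. Finally I use two diagonal facts: $\sum_{\sigma\in\Pc^*}M_{\sigma\sigma} = (d^2-1)(1-u)$ from \eqref{eq:def_unitarity_app}, which bounds $D_\tau \leq (d^2-1)(1-u)$, and the double counting $|\{\tau\in\Pc^* : \sigma\in C_\tau\}| = |C_\tau| = \frac{d^2-4}{2}$ (from the symmetry of Pauli commutation), which gives $\sum_\tau D_\tau = \frac{d^2-4}{2}(d^2-1)(1-u)$. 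Then $\sum_\tau D_\tau^2 \leq (\max_\tau D_\tau)\sum_\tau D_\tau \leq \frac{d^2-4}{2}(d^2-1)^2(1-u)^2$, and inserting this into the prefactor $\frac{2}{(d^2-4)(d^2-1)^{3/2}}$ collapses exactly to $\sqrt{d^2-1}(1-u)^2$.

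I expect the main obstacle to be the bookkeeping in the equality step: one must carefully track which tensor slots the indices $\hat\sigma,\hat\tau$ contract against and verify that all four cross-terms of $w_\tau\otimes w_\tau$ contribute identically, so that the factor $2$ (rather than $1$ or $4$) emerges. The inequality step is comparatively routine once the clean expression is in hand, its only nontrivial ingredients being the Pauli-commutation counting and the involution property of $\sigma\mapsto\sigma\cdot\tau$ on $C_\tau$, both of which are standard facts about the Pauli group.
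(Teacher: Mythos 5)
Your proposal is correct, and both halves land exactly on the paper's statement; the difference is only in how the final estimate is organized. The equality step is the same computation the paper performs: you are somewhat more explicit about discarding the $\Mcbf^{\tp{2}}$ contribution (via $\Mcbf\kket{B_2}=u\kket{B_2}$ and $\bbraket{A_\adj|A_0}=0$, where the paper silently relies on its earlier observation that this term vanishes), and your accounting of the four identical cross-terms producing the prefactor $4/2=2$ reproduces the paper's chain of equalities ending in $\frac{2}{(d^2-4)(d^2-1)^{3/2}}\sum_\tau\bigl(\sum_{\sigma\in C_\tau}\bbraket{\sigma\cdot\tau|\Lambdabf_\urm\Lambdabf_\urm^\dagger|\sigma}\bigr)^2$. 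For the inequality you invoke the same two key ingredients as the paper --- positive semidefiniteness of $\mathbf{I}-\Lambdabf_\urm\Lambdabf_\urm^\dagger$ from \autoref{lem:unital_part} and Sylvester's criterion (\autoref{lem:Sylvester}) --- but finish along a different path. The paper first applies the mean-of-squares inequality (\autoref{lem:meansquares}) to pull the square inside the sum over $C_\tau$, then bounds each squared off-diagonal entry by the product of diagonal entries, and completes both sums over $\Pc^*$. You instead bound each off-diagonal entry in absolute value by $\frac{1}{2}\bigl(M_{\sigma\sigma}+M_{(\sigma\cdot\tau)(\sigma\cdot\tau)}\bigr)$ (Sylvester plus AM--GM), collapse the inner sum to $D_\tau=\sum_{\sigma\in C_\tau}M_{\sigma\sigma}$ using the involution $\sigma\mapsto\sigma\cdot\tau$ of $C_\tau$ (valid up to sign, which is harmless since diagonal entries are sign-insensitive), and then estimate $\sum_\tau D_\tau^2\leq(\max_\tau D_\tau)\sum_\tau D_\tau$ with the double-counting identity $\sum_\tau D_\tau=\frac{d^2-4}{2}(d^2-1)(1-u)$ and the diagonal sum $\max_\tau D_\tau\leq(d^2-1)(1-u)$. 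In effect you trade the paper's Jensen step for the commutant combinatorics of the Pauli group; the two are equally sharp, since the factor $\frac{d^2-4}{2}$ that the paper picks up from Jensen is exactly what your double count produces, and both routes collapse to $\sqrt{d^2-1}\,(1-u)^2$.
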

\begin{proof}
	By the definition \eqref{eq:def_ai}, $a_{\adj} = \bbraket{A_{S} | \pmb{\Nc} - \pmb{\Mc}^{\otimes 2} | B_2  B_2}$, where
	\begin{equation}
	B_2  B_2 = \frac{1}{d^2 - 1} \left(\sum_{\hat{\sigma} \in \Pc^*} \hat{\sigma}\hat{\sigma}\right)^\tp{2} \qquad \mbox{and} \qquad
	A_\adj = \frac{1}{2(d^2-4)\sqrt{d^2 - 1}} \sum_{\tau \in \Pc^*} \left(\sum_{\sigma \in C_\tau} (\sigma \cdot \tau) \sigma  + \sigma (\sigma \cdot \tau)\right)^\tp{2}
	\end{equation}
	were defined in \eqref{eq:A0} and \eqref{eq:Aadj} respectively. 
	Therefore $a_\adj$ is computed as
	\begin{equation}
	\begin{split}
	a_\adj 	&= \frac{1}{2(d^2-4)(d^2-1)^{\frac{3}{2}}} \sum_{\tau \in \Pc^*}   \bbraket{ \left(\sum_{\sigma \in C_\tau} (\sigma \cdot \tau) \sigma  + \sigma (\sigma \cdot \tau)\right)^\tp{2} | \pmb{\Lambda}^\tp{4} | \left(\sum_{\hat{\sigma} \in \Pc^*} \hat{\sigma}\hat{\sigma} \right)^\tp{2}} \\
	&= \frac{1}{2(d^2-4)(d^2-1)^{\frac{3}{2}}} \sum_{\tau \in \Pc^*}   \left( \sum_{\sigma \in C_\tau} \sum_{\hat{\sigma} \in \Pc^*}  \bbraket{  (\sigma \cdot \tau) \sigma  + \sigma (\sigma \cdot \tau) | \pmb{\Lambda}^\tp{2} |  \hat{\sigma}\hat{\sigma} }\right)^2 \\
	&= \frac{1}{2(d^2-4)(d^2-1)^{\frac{3}{2}}} \sum_{\tau \in \Pc^*}   \left( \sum_{\sigma \in C_\tau} \sum_{\hat{\sigma} \in \Pc^*}  2 \bbraket{  \sigma \cdot \tau   | \pmb{\Lambda} |  \hat{\sigma} }   \bbraket{  \sigma    | \pmb{\Lambda} |  \hat{\sigma} }   \right)^2 \\
	&= \frac{2}{(d^2-4)(d^2-1)^{\frac{3}{2}}} \sum_{\tau \in \Pc^*}   \left( \sum_{\sigma \in C_\tau} \bbraket{  \sigma \cdot \tau   | \pmb{\Lambda}_\urm \Lambdabf_\urm^\dagger |  \sigma }  \right)^2, \\
	\end{split}
	\end{equation}
	where in the final line we used again the trick of \eqref{eq:LLdagger_trick}. Our bound on this quantity again starts with using the fact that the mean of the squares is larger than the square of the mean (\autoref{lem:meansquares}), yielding for all $\tau \in \Pc^*$
	\begin{equation}
	 \left( \frac{2}{d^2-4} \sum_{\sigma \in C_\tau} \bbraket{  \sigma \cdot \tau   | \pmb{\Lambda}_\urm \Lambdabf_\urm^\dagger |  \sigma } \right)^2 \leq  \frac{2}{d^2-4} \sum_{\sigma \in C_\tau} \bbraket{  \sigma \cdot \tau   | \pmb{\Lambda}_\urm \Lambdabf_\urm^\dagger |  \sigma }^2. 
	\end{equation}
	Multiplying with $\frac{d^2-4}{2}$ and plugging into the above yields 
	\begin{equation}
	a_S \leq \frac{1}{(d^2-1)^{\frac{3}{2}}} \sum_{\tau \in \Pc^*} \sum_{\sigma \in C_\tau} \bbraket{  \sigma \cdot \tau   | \pmb{\Lambda}_\urm \Lambdabf_\urm^\dagger |  \sigma }^2.
	\end{equation}
	Now we use the facts that $\sigma \cdot \tau \neq \sigma$  to write this as 
	\begin{equation}
	a_S \leq \frac{1}{(d^2-1)^{\frac{3}{2}}} \sum_{\tau \in \Pc^*} \sum_{\sigma \in C_\tau} \bbraket{  \sigma \cdot \tau   | \mathbf{I} - \pmb{\Lambda}_\urm \Lambdabf_\urm^\dagger |  \sigma }^2,
	\end{equation}
	where $\mathbf{I} - \pmb{\Lambda}_\urm \Lambdabf_\urm^\dagger$ is a positive semidefinite matrix, since $\| \Lambdabf_\urm \Lambdabf_\urm^\dagger \|_\infty = \| \Lambda_\urm \Lambda_\urm^\dagger \|_{2\rightarrow 2} \leq 1$ under the stated assumptions on $\Lambda$ by \autoref{lem:unital_part} and the fact that a matrix of the form $\Lambdabf_\urm \Lambdabf_\urm^\dagger$ is itself positive semidefinite. This allows us again to use Sylvester's criterion (\autoref{lem:Sylvester}) to bound off-diagonal terms by diagonal terms by using the fact that all minors of degree 2 of $\mathbf{I} - \pmb{\Lambda}_\urm \Lambdabf_\urm^\dagger$ must be nonnegative:
	\begin{equation}
	\bbraket{  \sigma   | \mathbf{I} - \pmb{\Lambda}_\urm \Lambdabf_\urm^\dagger |  \sigma }\bbraket{  \sigma \cdot \tau   | \mathbf{I} - \pmb{\Lambda}_\urm \Lambdabf_\urm^\dagger |  \sigma \cdot \tau  } - \bbraket{  \sigma \cdot \tau   | \mathbf{I} - \pmb{\Lambda}_\urm \Lambdabf_\urm^\dagger |  \sigma }^2 \geq 0, \quad\quad \forall \tau \in \Pc^*, \; \forall \sigma \in C_\tau.
	\end{equation}
	Therefore, we arrive at
	\begin{equation}
	\begin{split}
	a_S &\leq \frac{1}{(d^2-1)^{\frac{3}{2}}} \sum_{\tau \in \Pc^*} \sum_{\sigma \in C_\tau} \bbraket{  \sigma   | \mathbf{I} - \pmb{\Lambda}_\urm \Lambdabf_\urm^\dagger |  \sigma }\bbraket{  \sigma \cdot \tau   | \mathbf{I} - \pmb{\Lambda}_\urm \Lambdabf_\urm^\dagger |  \sigma \cdot \tau} \\
	&\leq \frac{1}{(d^2-1)^{\frac{3}{2}}} \sum_{\tau, \sigma \in \Pc^*} \bbraket{  \sigma   | \mathbf{I} - \pmb{\Lambda}_\urm \Lambdabf_\urm^\dagger |  \sigma }\bbraket{  \tau   | \mathbf{I} - \pmb{\Lambda}_\urm \Lambdabf_\urm^\dagger |   \tau} \\
	&= \sqrt{d^2-1}(1-u)^2,
	\end{split}
	\end{equation}
	where in the second line the sum over $\sigma \in C_\tau$ was completed to the sum over $\sigma \in \Pc^*$ by adding all the nonnegative terms $\bbraket{  \sigma   | \mathbf{I} - \pmb{\Lambda}_\urm \Lambdabf_\urm^\dagger |  \sigma }\bbraket{  \sigma \cdot \tau   | \mathbf{I} - \pmb{\Lambda}_\urm \Lambdabf_\urm^\dagger |  \sigma \cdot \tau}$ with $\sigma \in \Pc^* \setminus C_\tau$ for each $\tau \in \Pc^*$. All these terms are nonnegative because they are the product of diagonal elements of positive-semidefinite matrices, which must be nonnegative.
\end{proof}
This completes the set of propositions to bound the quantities $a_i$. The quantities $b_i$ are strongly related to the quantities $a_i$, and we will show that they satisfy the same upper bounds. More precisely, the next proposition establishes that all bounds on $a_i$ also hold for $b_i$, for $i \in \{  1,2 ; S ; 0 ; \adj  \}$.
\begin{proposition}[Bounds on $b_i$]
	\label{prop:bound_bi}
	Let $\Lambda$ be a CPTP map. Assume that $d = 2$ or that $\Lambda$ is unital. Let $a_i = \bbraket{A_i | \Ncbf - \Mcbf^\tp{2} | B_2 B_2}$ and $b_i = \bbraket{B_2B_2 | \Ncbf - \Mcbf^\tp{2} | A_i}$ as above. Then
	\begin{align}
	b_0 &= a_0 = 0, \\
	\label{eq:claim2}
	b_{1,2} &= a_{1,2} \leq \frac{\sqrt{d^2-2}}{d^2} (1-u)^2, \\
	\label{eq:claim3}
	b_S &= a_S \leq \sqrt{\frac{d^2 - 2}{d^2-1}} \sqrt{2} (1-u)^2, \\
	b_\adj &\leq \sqrt{d^2 - 1} (1-u)^2. 
	\end{align}
\end{proposition}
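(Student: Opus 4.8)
The plan is to avoid recomputing each $b_i$ from scratch and instead derive every bound from the corresponding bound on $a_i$ already established in Propositions~\ref{prop:bound_a0}--\ref{prop:bound_aadj}, by exploiting a self-adjointness symmetry of the Clifford twirl. The starting observation is that $b_i$ is literally $a_i$ with the bra and ket interchanged, and that the vectors $\kket{A_i}$ and $\kket{B_2B_2}$ are real. Since $\Gavgbf{n}=\frac{1}{|\Cliff{d}|}\sum_{\Gc}\Gcbf^\tp{n}$ is a real orthogonal projection (\autoref{lem:projection_onto_trivial_subreps}) it is self-adjoint; equivalently, each $\Gcbf$ is a real orthogonal matrix, so $(\Gcbf^\tp{n})^\dagger=(\Gcbf^{-1})^\tp{n}$ and $\Gcbf^{-1}$ ranges over the group as $\Gcbf$ does. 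Consequently $\Ncbf^\dagger=\Gavgbf{4}(\Lambdabf^\dagger)^\tp{4}\Gavgbf{4}$ is exactly $\Ncbf$ rebuilt from the adjoint channel $\Lambda^\dagger$, and likewise for $\Mcbf^\tp{2}$. Taking adjoints inside the (real, hence symmetric) inner product then gives
\begin{equation}
b_i(\Lambda)=\bbraket{B_2B_2 | \Ncbf-\Mcbf^\tp{2} | A_i}=\bbraket{A_i | (\Ncbf-\Mcbf^\tp{2})^\dagger | B_2B_2}=a_i(\Lambda^\dagger),
\end{equation}
i.e. $b_i$ is $a_i$ evaluated at the adjoint map.

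Next I would check that the hypotheses of Propositions~\ref{prop:bound_a0}--\ref{prop:bound_aadj} transfer to the adjoint. The clean route is to pass first to the unital part $\hat{\Lambda}$: as already used in those proofs, both $a_i$ and $b_i$ depend on $\Lambda$ only through its unital block $\pmb{\Lambda}_\urm$ (the vectors $B_2$ and $A_i$ are built solely from traceless Paulis in $\Pc^*$), so $a_i(\Lambda)=a_i(\hat{\Lambda})$ and $b_i(\Lambda)=a_i(\hat{\Lambda}^\dagger)$, where $\hat{\Lambda}^\dagger$ is the unital part of $\Lambda^\dagger$. Under the standing hypothesis $d=2$ or $\Lambda$ unital, \autoref{lem:unital_part} guarantees that $\hat{\Lambda}$ is CPTP and unital, and the adjoint of a CPTP unital map is again CPTP and unital (the fact invoked in \autoref{lem:unital_part}). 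Finally the unitarity is adjoint-invariant: the symmetric trace formula $u(\Lambda)=\Tr[\pmb{\Lambda}_\urm^\dagger\pmb{\Lambda}_\urm]/(d^2-1)=\Tr[\pmb{\Lambda}_\urm\pmb{\Lambda}_\urm^\dagger]/(d^2-1)$ in \eqref{eq:def_unitarity_app} shows $u(\hat{\Lambda}^\dagger)=u(\hat{\Lambda})=u(\Lambda)$. Thus $\hat{\Lambda}^\dagger$ meets every requirement of the $a_i$-propositions with the same value of $u$.

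Combining the two steps, $b_i(\Lambda)=a_i(\hat{\Lambda}^\dagger)$ with $\hat{\Lambda}^\dagger$ admissible and of unitarity $u$, so each inequality $a_i\le(\cdots)(1-u)^2$ applies verbatim to $b_i$, yielding the four displayed bounds. In particular $b_0=a_0(\hat{\Lambda}^\dagger)=0$ by \autoref{prop:bound_a0} (a genuine equality, both sides vanishing), while for $i\in\{1,2;\,S\}$ the expression for $b_i$ reproduces the $a_i$-formula with $\pmb{\Lambda}_\urm\pmb{\Lambda}_\urm^\dagger$ replaced by $\pmb{\Lambda}_\urm^\dagger\pmb{\Lambda}_\urm$, leaving the numerical upper bound unchanged — this is what the displayed ``$b_i=a_i$'' should be read to assert.

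The main obstacle, and the only place the argument can fail if handled naively, is the $d=2$ non-unital case: there $\Lambda^\dagger$ itself need not be trace preserving, so one cannot feed $\Lambda^\dagger$ directly into the $a_i$-propositions. The reduction to the unital part is therefore essential, since it is $\hat{\Lambda}$ (not $\Lambda$) whose adjoint is guaranteed CPTP by \autoref{lem:unital_part}. An equivalent, adjoint-free fallback that I would mention for robustness is to recompute each $b_i$ directly via the swapped version of the identity \eqref{eq:LLdagger_trick} and then apply \autoref{lem:diagonals_squared} and \autoref{lem:Sylvester} to the companion channel $\Ec_2$ (with unital block $\pmb{\Lambda}_\urm^\dagger\pmb{\Lambda}_\urm$), whose CPTP-ness and norm bound are again furnished by \autoref{lem:unital_part}; since $f(\Ec_2)=u$ as well, this produces the identical estimates.
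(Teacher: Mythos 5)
Your proof is correct for every inequality in the statement, and its core step coincides with the paper's: both hinge on the self-adjointness of the projectors $\Gavgbf{2}$ and $\Gavgbf{4}$, giving $\Ncbf^\dagger=\Gavgbf{4}(\Lambdabf^\dagger)^{\tp{4}}\Gavgbf{4}$ and hence $b_i(\Lambda)=a_i(\Lambda^\dagger)$. Where you part ways is in how the transfer is finished. The paper stays with the explicit formulas: it rewrites each $b_i$ as the corresponding $a_i$-expression with $\Lambdabf_\urm\Lambdabf_\urm^\dagger$ replaced by $\Lambdabf_\urm^\dagger\Lambdabf_\urm$, asserts the exact equalities $b_{1,2}=a_{1,2}$ and $b_S=a_S$ through the identities \eqref{eq:C28} and \eqref{eq:C29}, and then reruns the mean-of-squares and Sylvester argument on $\mathbf{I}-\Lambdabf_\urm^\dagger\Lambdabf_\urm$ to bound $b_\adj$. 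You instead argue at the channel level: $a_i$ and $b_i$ see only the unital block, so $b_i(\Lambda)=a_i(\hat{\Lambda}^\dagger)$, and $\hat{\Lambda}^\dagger$ is CPTP, unital, and has the same unitarity, so Propositions~\ref{prop:bound_a0}, \ref{prop:bound_a12}, \ref{prop:bound_aS} and \ref{prop:bound_aadj} apply as black boxes. This buys modularity (no bound is re-derived), and it makes explicit the one point where naivety would fail: in the $d=2$ nonunital case $\Lambda^\dagger$ need not be trace preserving, so one must route through $\hat{\Lambda}$ --- a point the paper handles only implicitly via \autoref{lem:unital_part}.

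The only content of the proposition you do not establish is the pair of literal equalities $b_{1,2}=a_{1,2}$ and $b_S=a_S$, which you propose to read merely as ``same numerical bound.'' Your hedge turns out to be the defensible reading: the paper's justification of these equalities, \eqref{eq:C28}, asserts that the sum of squared diagonal entries of $\Lambdabf_\urm\Lambdabf_\urm^\dagger$ equals that of $\Lambdabf_\urm^\dagger\Lambdabf_\urm$, and this is false in general, even under the proposition's hypotheses. For $d=2$, take $\Lambda$ to be the unital channel given by complete dephasing in the $X$ basis followed by a rotation sending $X\mapsto(X+Y)/\sqrt{2}$; then $\Lambdabf_\urm\Lambdabf_\urm^\dagger$ has diagonal $(\tfrac{1}{2},\tfrac{1}{2},0)$ while $\Lambdabf_\urm^\dagger\Lambdabf_\urm$ has diagonal $(1,0,0)$, so that $a_{1,2}=\tfrac{1}{18\sqrt{2}}$ but $b_{1,2}=\tfrac{2}{9\sqrt{2}}$; combining this with the (true) identity \eqref{eq:C29}, namely $\Tr[(\Lambdabf_\urm\Lambdabf_\urm^\dagger)^2]=\Tr[(\Lambdabf_\urm^\dagger\Lambdabf_\urm)^2]$, the same example gives $a_S\neq b_S$. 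Thus only the inequalities of \autoref{prop:bound_bi} hold as stated; these are exactly what \autoref{thm:var_bound} consumes, and exactly what your proof delivers.
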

\begin{proof}
	The equality $b_0 = a_0 = \bbraket{B_2B_2 | \Ncbf - \Mcbf^\tp{2} | B_2 B_2}$ immediately follows from the fact that $A_0 = B_2B_2$. Thus $b_0 = 0$ by \autoref{prop:bound_a0}. In general, $b_i$ can be written as
	\begin{equation}
	b_i = \bbraket{B_2B_2 | \Ncbf - \Mcbf^\tp{2} | A_i} = \bbraket{A_i | \Ncbf^\dagger - (\Mcbf^\tp{2})^\dagger | B_2B_2}.
	\end{equation}
	Now since $\Gavg{n}$ are orthogonal projections, $(\Gavg{n})^\dagger = \Gavg{n}$. Therefore $\Nc^\dagger = \Gavg{4} (\Lambda^\dagger)^\tp{4} \Gavg{4}$ and $\Mc^\dagger = \Gavg{2} (\Lambda^\dagger)^\tp{2} \Gavg{2}$. Thus, $b_i$ and $a_i$ are related by $b_i(\Lambda) = a_i(\Lambda^\dagger)$. That is, $b_i$ can be obtained from $a_i$ by replacing $\Lambda$ with $\Lambda^\dagger$ in the exact expressions.
	
	We first show that this implies $b_{1,2} = a_{1,2}$ and $b_S = a_S$. This follows from the two identities (using only the trick of \eqref{eq:LLdagger_trick} over and over again)
	\begin{align}
	\label{eq:C28}
	\sum_{\sigma \in \Pc^*} \bbraket{\sigma | \pmb{\Lambda}_\urm \pmb{\Lambda}_\urm^\dagger | \sigma }^2 &=  \sum_{\sigma \in \Pc^*} \bbraket{\sigma | \pmb{\Lambda}_\urm | \hat{\sigma} }^2 \bbraket{\sigma| \pmb{\Lambda}_\urm | \hat{\sigma} }^2  = \sum_{\sigma \in \Pc^*} \bbraket{\hat{\sigma} | \pmb{\Lambda}_\urm^\dagger \pmb{\Lambda}_\urm | \hat{\sigma} }^2, \\
	\label{eq:C29}
	\sum_{\sigma,\tau \in \Pc^*} \bbraket{\sigma | \pmb{\Lambda}_\urm \pmb{\Lambda}_\urm^\dagger | \tau }^2 &=  \sum_{\sigma,\tau,\hat{\sigma},\hat{\tau} \in \Pc^*} \bbraket{\sigma | \pmb{\Lambda}_\urm | \hat{\sigma} } \bbraket{\tau| \pmb{\Lambda}_\urm | \hat{\sigma} } \bbraket{\sigma | \pmb{\Lambda}_\urm | \hat{\tau} } \bbraket{\tau| \pmb{\Lambda}_\urm | \hat{\tau} }  = \sum_{\hat{\sigma},\hat{\tau} \in \Pc^*} \bbraket{\hat{\sigma} | \pmb{\Lambda}_\urm^\dagger \pmb{\Lambda}_\urm | \hat{\tau} }^2.
	\end{align} 
	Now \eqref{eq:C28} implies that $a_{1,2}(\Lambda) = a_{1,2}(\Lambda^\dagger) = b_{1,2}(\Lambda)$. Subtracting \eqref{eq:C28} from \eqref{eq:C29} implies that $a_S(\Lambda) = a_S(\Lambda^\dagger) = b_S(\Lambda)$. This shows the second and third claim of this proposition (\eqref{eq:claim2} and \eqref{eq:claim3}), using the bounds and expressions for $a_{1,2}$ and $a_S$ from \autoref{prop:bound_a12} and \autoref{prop:bound_aS}
	
	For $b_\adj$ it is not clear that $b_\adj$ equals $a_\adj$. However, by copying the technique of the proof of \autoref{prop:bound_aadj} we show that the same bounds hold. Since $b_\adj(\Lambda) = a_\adj(\Lambda^\dagger)$, \autoref{prop:bound_aadj} implies that
	\begin{equation}
	b_\adj = \frac{2}{(d^2-4)(d^2-1)^{\frac{3}{2}}} \sum_{\tau \in \Pc^*}   \left( \sum_{\sigma \in C_\tau} \bbraket{  \sigma \cdot \tau   | \pmb{\Lambda}_\urm^\dagger \Lambdabf_\urm |  \sigma }  \right)^2.
	\end{equation}
	The bound is proven in exactly the same spirit as \autoref{prop:bound_aadj}. We first bound the square of the mean by the mean of the squares (\autoref{lem:meansquares}) and then use that  $\bbraket{  \sigma \cdot \tau   | \pmb{\Lambda}_\urm^\dagger \Lambdabf_\urm |  \sigma }^2 = \bbraket{  \sigma \cdot \tau   | \mathbf{I} - \pmb{\Lambda}_\urm^\dagger \Lambdabf_\urm |  \sigma }^2$ (since $\sigma \cdot \tau \neq \pm \sigma$). The matrix $\mathbf{I} - \pmb{\Lambda}_\urm^\dagger \Lambdabf_\urm$ is then shown to be positive semidefinite using  $\| \pmb{\Lambda}_\urm^\dagger \Lambdabf_\urm \|_\infty \leq 1$ (by the assumptions on $\Lambda$ and \autoref{lem:unital_part}) together with the fact that $\pmb{\Lambda}_\urm^\dagger \Lambdabf_\urm \geq 0$ is positive semidefinite. Thus Sylvester's criterion can be applied (\autoref{lem:Sylvester})Therefore
	\begin{equation}
	\begin{split}
	b_\adj &= \frac{2}{(d^2-4)(d^2-1)^{\frac{3}{2}}} \sum_{\tau \in \Pc^*}   \left( \sum_{\sigma \in C_\tau} \bbraket{  \sigma \cdot \tau   | \pmb{\Lambda}_\urm^\dagger \Lambdabf_\urm |  \sigma }  \right)^2 \\
	&\leq  \frac{1}{(d^2-1)^{\frac{3}{2}}} \sum_{\tau \in \Pc^*} \sum_{\sigma \in C_\tau} \bbraket{  \sigma \cdot \tau   | \pmb{\Lambda}_\urm^\dagger \Lambdabf_\urm |  \sigma }^2 \\
	&=  \frac{1}{(d^2-1)^{\frac{3}{2}}} \sum_{\tau \in \Pc^*} \sum_{\sigma \in C_\tau} \bbraket{  \sigma \cdot \tau   | \mathbf{I} - \pmb{\Lambda}_\urm^\dagger \Lambdabf_\urm |  \sigma }^2 \\
	&\leq  \frac{1}{(d^2-1)^{\frac{3}{2}}} \sum_{\tau \in \Pc^*} \sum_{\sigma \in C_\tau} \bbraket{  \sigma  | \mathbf{I} - \pmb{\Lambda}_\urm^\dagger \Lambdabf_\urm |  \sigma } \bbraket{  \sigma \cdot \tau   | \mathbf{I} - \pmb{\Lambda}_\urm^\dagger \Lambdabf_\urm |  \sigma \cdot \tau } \\
	&\leq  \frac{1}{(d^2-1)^{\frac{3}{2}}} \sum_{\tau, \sigma \in \Pc^*} \bbraket{  \sigma  | \mathbf{I} - \pmb{\Lambda}_\urm^\dagger \Lambdabf_\urm |  \sigma } \bbraket{  \tau   | \mathbf{I} - \pmb{\Lambda}_\urm^\dagger \Lambdabf_\urm |  \tau } \\
	&= \sqrt{d^2 - 1} (1-u)^2,
	\end{split}
	\end{equation}
	where in the last inequality the sum is completed by adding the nonnegative terms $\bbraket{  \sigma  | \mathbf{I} - \pmb{\Lambda}_\urm^\dagger \Lambdabf_\urm |  \sigma } \bbraket{  \tau   | \mathbf{I} - \pmb{\Lambda}_\urm^\dagger \Lambdabf_\urm |  \tau }$ for all $\tau \in \Pc^*$ and $\sigma \in \Pc^* \setminus C_\tau$. Note that this is the same bound as on $a_\adj$.
\end{proof}

Finally two more propositions are needed to bound the inner products in the expanded variance expression. The tool for this is the following. This proposition is formulated for any general CPTP map $\Ec$ and Hermitian operators $X,Y \in \Lc(\Hc)$. This theorem is applicable to inner products involving the map $\Nc^{m-s-1}$, since this is a CPTP map.
\begin{proposition}
	\label{prop:inner_holder}
	Let $\Ec$ be a CPTP map on a general Hilbert space $\Hc$. Then for any pair of Hermitian operators $X,Y \in \Lc(\Hc)$
	\begin{equation}
	\bbraket{X | \pmb{\Ec} | Y} \leq \| X \|_\infty \| Y \|_1.
	\end{equation}
\end{proposition}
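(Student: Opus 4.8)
The plan is to reduce the claim to two ingredients already available in the excerpt: Hölder's inequality for the trace and the trace-norm contractivity of CPTP maps. First I would rewrite the bracket in operator language. By the definition of the Liouville inner product, $\bbraket{X | \pmb{\Ec} | Y} = \Tr[X^\dagger \Ec(Y)]$, and since $X$ is Hermitian this is simply $\Tr[X \Ec(Y)]$.

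Next I would record that $\Ec(Y)$ is again Hermitian: because $\Ec$ is completely positive (in particular positive and $\dagger$-preserving), one has $\Ec(Y)^\dagger = \Ec(Y^\dagger) = \Ec(Y)$. Hence both $X$ and $\Ec(Y)$ are Hermitian, so the corollary to \autoref{lem:NeumannHolder} (Hölder's inequality specialized to Hermitian operators, where the trace is automatically real) applies with the conjugate exponents $p=\infty$, $q=1$, yielding
\begin{equation*}
\Tr[X \Ec(Y)] \leq \| X \|_\infty \| \Ec(Y) \|_1.
\end{equation*}

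It then remains to dominate $\| \Ec(Y) \|_1$ by $\| Y \|_1$, which is precisely the statement that $\Ec$ is contractive in the induced trace norm. For this I would invoke \autoref{lem:Perez-Garcia} with $p=1$, which gives $\| \Ec \|_{1 \rightarrow 1} \leq d^{\,1 - 1} = 1$, and therefore $\| \Ec(Y) \|_1 \leq \| Y \|_1$ for every $Y \in \Lc(\Hc)$. Chaining the two inequalities produces $\bbraket{X | \pmb{\Ec} | Y} \leq \| X \|_\infty \| Y \|_1$, as desired.

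Since every step is either a definition or a previously established lemma, I do not expect a genuine obstacle here; the only point deserving a moment's attention is checking that the Hölder corollary is legitimately applicable, which requires \emph{both} arguments to be Hermitian — and this holds precisely because the positive map $\Ec$ carries the Hermitian operator $Y$ to a Hermitian operator.
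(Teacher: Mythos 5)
Your proof is correct and follows essentially the same route as the paper's: both apply the Hermitian-operator corollary of \autoref{lem:NeumannHolder} to get $\Tr[X\Ec(Y)] \leq \|X\|_\infty \|\Ec(Y)\|_1$ and then invoke \autoref{lem:Perez-Garcia} (at $p=1$) for the trace-norm contractivity $\|\Ec\|_{1\to 1}\leq 1$. The only difference is that you spell out two details the paper leaves implicit — that $\Ec(Y)$ is Hermitian because a CP map is $\dagger$-preserving, and that the $1\to 1$ bound is the $p=1$ case of the lemma — which is a sound bit of extra care, not a deviation.
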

\begin{proof}
	By Von Neumann's trace inequality and H\"olders inequality (\autoref{lem:NeumannHolder}) it follows that
	\begin{equation}
	\bbraket{X | \pmb{\Ec} | Y} = \Tr[X \Ec(Y) ] \leq \| X \|_\infty \| \Ec(Y) \|_1,
	\end{equation}
	using that $X$ and $\Ec(Y)$ are Hermitian. We then use the induced trace norm (the $1\rightarrow 1$ norm) and the fact that the map $\Ec$ is a CPTP map so that $\| \Ec \|_{1 \rightarrow 1} \leq 1$  (\autoref{lem:Perez-Garcia}). Therefore	
	\begin{equation}
	\| X \|_\infty \| \Ec(Y) \|_1 \leq \| X \|_\infty \| \Ec \|_{1 \rightarrow 1}  \| Y \|_1 \leq  \| X \|_\infty \| Y \|_1.
	\end{equation}
	Putting this together proves the bound.
\end{proof}
In order to apply the above proposition to the inner products occurring in the variance proof, a bound on the norms of the operators $A_i$ with $i \in \Zc_{TS}$ is needed.
\begin{proposition}[Norm bounds on $A_i$]
	\label{prop:bound_norms_Ai}
	Let $\{A_i: i \in \Zc_{TS} \}$ be defined as in \eqref{eq:A_i}. Then for $d \geq 4$ the following bounds hold
	\begin{equation}
	\| A_i \|_1 \leq d^2 \quad\quad\mbox{and}\quad\quad \|A_i \|_\infty \leq \sqrt{\frac{6}{(d-2)(d-1)}}, \quad\quad \forall i \in \Zc_{TS}.
	\end{equation}
	If $d = 2$, then $\Zc_{TS} = \{ S;  \, 1,2 \}$, and
	\begin{align}
	\| A_S \|_1     &= \frac{5}{\sqrt{3}}, & \| A_S \|_\infty     &= \frac{1}{\sqrt{3}}, \\
	\| A_{1,2} \|_1 &= 2\sqrt{2},          & \| A_{1,2} \|_\infty &= \frac{\sqrt{2}}{3}. 
	\end{align}
\end{proposition}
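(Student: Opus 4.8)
The plan is to treat the two norm bounds separately, and to split off the single-qubit case $d=2$ (for which the quoted $d\ge4$ formula is in any case singular) as an explicit finite computation.

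For the trace norm the bound is essentially free. Each $A_i$ is an operator on $\Hc\tp{4}$, so it has rank at most $d^4$, and by construction $\|A_i\|_2 = 1$ because the products $v\otimes v$ with $v$ ranging over the orthonormal basis $\Bc_i$ are themselves Hilbert--Schmidt orthonormal. Cauchy--Schwarz on the singular values then gives $\|A_i\|_1 = \sum_k s_k(A_i)\le\sqrt{\Rank(A_i)}\,\|A_i\|_2\le\sqrt{d^4} = d^2$, which is exactly the first claim; this is all that \autoref{prop:inner_holder} needs on the $\|A_i\|_1$ side.

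The operator-norm bound is the substantive part. Since $A_i$ is Hermitian I would bound $\|A_i\|_\infty = \max_{\|\psi\|=1}|\bbraket{\psi|A_i|\psi}|$. The trivial estimate $\|A_i\|_\infty\le\|A_i\|_2 = 1$ already proves the claim at $d=4$, where $\sqrt{6/((d-2)(d-1))} = 1$, so the real work is the $d\ge8$ improvement, i.e.\ showing that the top eigenvalue decays like $1/d$. The key structural fact I would exploit is that $F_i := \sqrt{|V_i|}\,A_i = \sum_{v\in\Bc_i}v\otimes v$ is independent of the chosen basis of $V_i$ and is nothing but a reshuffling of the orthogonal projector onto $V_i\subseteq\Lc(\Hc\tp{2})$; its spectrum is therefore governed entirely by the representation-theoretic data of \autoref{fig:V_TS}. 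For the spaces with closed forms---$V_d$ (where $F_d$ reduces to $\sum_{\sigma\in\Pc^*}\sigma\tp{4}$), $V_S$ via \eqref{eq:AS}, and $V_\adj$ via \eqref{eq:Aadj}---I would compute $\|F_i\|_\infty$ directly by rewriting these sums in terms of the swap operator $S$ (using $S = \frac1d\sum_\sigma\sigma\otimes\sigma$ on each relevant pair of legs), which makes their eigenvalues explicit. For the finer irreducible pieces of $V_S$ and $V_{1,2}$ that appear only when $d\ge8$, I would either invoke the explicit bases of \cite{Helsen2018} or control $\|A_i\|_\infty$ through a higher Schatten norm, $\|A_i\|_\infty\le\|A_i\|_4 = (\Tr[A_i^4])^{1/4}$, and evaluate the fourth moment $\Tr[A_i^4] = \frac{1}{|V_i|^2}\sum_{jklm}\Tr[v_jv_kv_lv_m]^2$ using the commutation combinatorics of $\Pc^*$ (the sets $C_\tau$ with $|C_\tau| = (d^2-4)/2$). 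The constant $6/((d-2)(d-1))$ is precisely what this combinatorial and spectral bookkeeping should produce.

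I expect this last point---obtaining the single sharp constant $\sqrt{6/((d-2)(d-1))}$ uniformly over all $i\in\Zc_{TS}$, including the sub-irreducibles of $V_S$ and $V_{1,2}$ for which no convenient closed-form basis is available---to be the main obstacle; the named spaces $V_d$, $V_S$, $V_\adj$ are comparatively routine once rewritten through $S$. Finally, for $d=2$ one has $\Zc_{TS} = \{S;\,1,2\}$ together with the explicit expressions \eqref{eq:AS} and \eqref{eq:A12} for $A_S$ and $A_{1,2}$ as $16\times16$ Hermitian matrices; here I would simply diagonalize them, read off $\|A_S\|_\infty$ and $\|A_{1,2}\|_\infty$ as the largest eigenvalue magnitudes and $\|A_S\|_1$, $\|A_{1,2}\|_1$ as the sums of eigenvalue magnitudes, and verify the stated values $1/\sqrt3$, $\sqrt2/3$, $5/\sqrt3$, and $2\sqrt2$.
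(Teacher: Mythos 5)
Your trace-norm argument (Cauchy--Schwarz on singular values with $\Rank(A_i)\le d^4$ and $\|A_i\|_2=1$) is exactly the paper's step, your $d=2$ treatment (diagonalize the explicit $16\times16$ matrices from \eqref{eq:A12} and \eqref{eq:AS}) is also the paper's, and your observation that $\|A_i\|_\infty\le\|A_i\|_2=1$ already settles $d=4$ is correct. The gap is in the case $d\ge 8$, and it sits precisely where you flagged it. The structural fact that actually controls the spectrum is \emph{not} the representation data of \autoref{fig:V_TS} (dimensions of the $V_i\subset\Lc(\Hc^{\tp{2}})$), and it is not that $F_i$ is a reshuffling of the projector onto $V_i$ --- reshuffling preserves only the Hilbert--Schmidt norm, not the spectrum, so that remark buys nothing for $\|\cdot\|_\infty$. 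The paper's mechanism is different: each $A_i$ lies in the commutant $(\Lc(\Hc^{\tp{4}}))^{\Cliff{d}}$ of the Clifford action on $\Hc^{\tp{4}}$ itself. Decomposing $\Hc^{\tp{4}}=\bigoplus_k W_k\otimes\C^{d_k}$ into Clifford irreps $W_k$ with multiplicities $d_k$ \cite{Zhu2016}, Schur's lemma forces any Hilbert--Schmidt-normalized element of this commutant into the block form $\sum_k |W_k|^{-1/2}\,P_{W_k}\otimes M_k$ with $\sum_k\|M_k\|_2^2=1$, whence $\|A_i\|_\infty\le\max_k|W_k|^{-1/2}$ uniformly over all $i$; the constant $\sqrt{6/((d-1)(d-2))}$ is exactly $1/\sqrt{\min_k|W_k|}$, read off from Lemma~1 of \cite{Zhu2016}. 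This single argument covers all the sub-irreducibles of $V_S$ and $V_{1,2}$ at once, with no bases needed.

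Your prong for those sub-irreducibles --- the fourth-moment bound $\|A_i\|_\infty\le\|A_i\|_4$ --- cannot deliver the stated constant. In the block structure above, every eigenvalue of $A_i$ coming from block $k$ is an eigenvalue of $M_k/\sqrt{|W_k|}$ repeated $|W_k|$ times, so $\Tr[A_i^4]=\sum_k\Tr[M_k^4]/|W_k|$, and the best generic estimate is $\Tr[A_i^4]\le 1/\min_k|W_k|$, attained by an element sitting in a single block with rank-one $M_k$ (flat spectrum of rank $|W_k|$). For such spectra $\|A_i\|_4=|W_k|^{-1/4}$ while $\|A_i\|_\infty=|W_k|^{-1/2}$: the fourth moment can only certify $\bigl(6/((d-1)(d-2))\bigr)^{1/4}=O(d^{-1/2})$, the square root of the claim, not the $O(d^{-1})$ scaling the proposition asserts. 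Worse, evaluating $\Tr[A_i^4]$ for the specific $A_i$ would require exactly the explicit bases you say are unavailable, and your closed-form prong is also not routine for $A_\adj$ (a naive triangle inequality over $C_\tau$ in \eqref{eq:Aadj} gives a bound growing with $d$). So the $d\ge8$ operator-norm bound is a genuine gap in the proposal; the missing idea is the switch from the irrep decomposition of $\Lc(\Hc^{\tp{2}})$ to that of $\Hc^{\tp{4}}$ combined with Schur's lemma.
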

\begin{proof}
	For the $d=2$ case,  the norms can be computed directly, since $A_S$ and $A_{1,2}$ are explicitly defined in \eqref{eq:A12}-\eqref{eq:AS}. By direct computation the result follows. For $d \geq 4$, the trace norm bound is trivial, since
	\begin{equation}
	\| A_i \|_1 \leq \sqrt{d^4} \| A_i \|_2 = d^2,
	\end{equation}
	by H\"older's inequality. The last equality uses the fact that $A_i$ are Hilbert-Schmidt normalized ($\| A_i \|_2 = 1$). The effort of the proof is in the bound on $\| A_i \|_\infty$. 
	
	The proof of this statement uses the description of the tensor-2 Liouville representation of \cite{Zhu2016} over \cite{Helsen2018}, since their description is basis-free. Ref. \cite{Zhu2016} considers the action of the Clifford group $\Cliff{d}$ on $\Hc^{\otimes 4}$. The representation $\Hc^{\otimes 4}$ of the Clifford group $\Cliff{d}$ decomposes as
	\begin{equation}
	\label{eq:C45}
	\Hc^{\otimes 4} = \bigoplus_{k} W_k \otimes \C^{d_k}
	\end{equation}
	where $ W_k $ are irreducible, pairwise inequivalent representations of the Clifford group that occur with multiplicity $d_k$. Here $k$ is just an index for the irreducible, inequivalent representations. Descriptions of these spaces and explicit expressions for their dimensions are given in \cite{Zhu2016} (there the index $k$ runs over Young Diagrams $\lambda$ and signs $s$). We will show that 
	\begin{equation}
	\| A_i \|_\infty \leq \max_k \frac{1}{\sqrt{| W_k |}}.
	\end{equation}
	Since the dimensions of all $W_k$ are given, the maximization can easily be done.
	
	Using the intertwining isomorphism $\Lc(\Hc) \simeq \Hc \otimes \Hc^*$ the tensor-4 Liouville representation on $\Lc(\Hc^\tp{4})$ can be written in terms of the decomposition \eqref{eq:C45}:
	\begin{equation}
	\Lc(\Hc^{\otimes 4}) = \bigoplus_{k,l} \Lc(W_l, W_k) \otimes \Lc(\C^{d_l}, \C^{d_k}).
	\end{equation}
	In principle $\Lc(W_l, W_k)$ are not irreducible representations. However, only the trivial subrepresentations of $\Lc(W_l, W_k)$ (denoted $(\Lc(W_l, W_k))^\Cliff{d}$) are relevant, since
	\begin{equation}
	(\Lc(\Hc^{\otimes 4}))^\Cliff{d} = \bigoplus_{k,l} (\Lc(W_l, W_k))^\Cliff{d} \otimes \Lc(\C^{d_l}, \C^{d_k}).
	\end{equation}
	The key point is that every element $\varphi \in (\Lc(W_l, W_k))^\Cliff{d}$ is an intertwining operator between the representations $W_k$ and $W_l$ \cite{Fulton2004}. By Schur's Lemma \cite{Fulton2004} and the fact that $W_k$ are mutually inequivalent irreducible representations it follows that $\varphi \propto \delta_{k,l} I_{W_k}$. Therefore
	\begin{equation}
	\label{eq:45}
	(\Lc(\Hc^{\otimes 4}))^\Cliff{d} = \bigoplus_{k} \Span\{ I_{W_k} \} \otimes \Lc(\C^{d_k}).
	\end{equation}
	This description provides a simple orthogonal basis for the space $(\Lc(\Hc^{\otimes 4}))^\Cliff{d}$, namely
	\begin{equation}
	\mathcal{A} = \{ P_{W_k} \otimes E_{m,n} | k;  m,n = 1,...,d_k  \},
	\end{equation}
	where $P_{W_k}$ is the orthogonal projection onto ${W_k}$ and $\{E_{m,n} | m,n = 1,...,d_k\}$ is the canonical (or any other) orthonormal basis of $\Lc(\C^{d_k})$. Normalizing with respect to the Hilbert-Schmidt norm yields the orthonormal basis operators
	\begin{equation}
	A_{k,m,n} = \frac{P_{W_k}}{\sqrt{|W_k|}}  \otimes E_{m,n}.
	\end{equation}
	Note that our basis operators $\{A_i: i \in \Zc_{TS} \}$ might be different than these $A_{k,m,n}$. However, these $A_i$ also span trivial subrepresentations of $\Lc(\Hc^\tp{4})$, so $A_i \in \Lc(\Hc^\tp{4})^\Cliff{d}$. We now show that $\| A \| \leq \max_k |W_k|^{-\half}$ for all $A \in \Lc(\Hc^\tp{4})^\Cliff{d}$ such that $\|A\|_2 = 1$. Therefore this bound holds in particular for our $A_i$ of interest. To do so, $A$ is written in the basis $\mathcal{A}$ as
	\begin{equation}
	A = \sum_{k} \sum_{m,n = 1}^{d_k} \alpha_{k,m,n}  A_{k,m,n}, \qquad \mbox{s.t.} \qquad
	\sum_{k} \sum_{m,n = 1}^{d_k} |\alpha_{k,m,n}|^2  = 1.
	\end{equation}
	Now we use that the operator $A \in (\Lc(\Hc^{\otimes 4}))^\Cliff{d}$ is block diagonal with respect to the spaces $\Span\{ I_{W_k} \} \otimes \Lc(\C^{d_k})$ (see \eqref{eq:45}). Therefore the infinity norm can be computed as the maximum over $k$ of the infinity norm of $A$ restricted to $\Span\{ I_{W_k} \} \otimes \Lc(\C^{d_k})$, yielding
	\begin{equation}
	\| A \|_\infty = \left\| \sum_{k} \sum_{m,n = 1}^{d_k} \alpha_{k,m,n}  A_{k,m,n} \right\|_\infty =\max_{k}   \left\| \sum_{m,n = 1}^{d_k} \alpha_{k,m,n}  A_{k,m,n} \right\|_\infty = \max_{k} \left\| \frac{P_{W_k}}{\sqrt{|W_k|}}    \otimes \sum_{m,n = 1}^{d_k} \alpha_{k,m,n}   E_{m,n} \right\|_\infty.
	\end{equation}
	Using some basic properties of the Schatten $p$-norms, this is bounded as follows
	\begin{equation}
	\| A \|_\infty = \max_{k} \left\| \frac{P_{W_k}}{\sqrt{|W_k|}} \right\|_\infty \left\| \sum_{m,n = 1}^{d_k} \alpha_{k,m,n}   E_{m,n} \right\|_\infty = \max_{k} \frac{\left\| P_{W_k} \right\|_\infty}{\sqrt{|W_k|}}  \left\| \sum_{m,n = 1}^{d_k} \alpha_{k,m,n}   E_{m,n} \right\|_\infty \leq  \max_k \frac{1}{\sqrt{|W_k|}},  
	\end{equation}
	using that $\| P_{W_k} \|_\infty = 1$ and 
	\begin{equation}
	\left\| \sum_{m,n = 1}^{d_k} \alpha_{k,m,n}   E_{m,n} \right\|_\infty \leq  \left\| \sum_{m,n = 1}^{d_k} \alpha_{k,m,n}   E_{m,n} \right\|_2 \leq \sum_{m,n = 1}^{d_k} |\alpha_{k,m,n}|^2   \left\| E_{m,n} \right\|_2 = 1.
	\end{equation}
	By Lemma 1 of \cite{Zhu2016}, which gives all dimensions $|W_k|$, it follows that \begin{equation}
	\| A \|_\infty \leq \max_{k} \frac{1}{\sqrt{|W_k|}}  = \sqrt{\frac{6}{(d-1)(d-2)}},
	\end{equation}
	provided that $d = 2^q \geq 4$, $q \in \N$. This proves the last bound.
\end{proof}
Finally, there is one inner product in the proof of \autoref{thm:var_bound} for which a sharper bound can be found than using \autoref{prop:inner_holder} and \autoref{prop:bound_norms_Ai}. This sharper bound is given in the following proposition.

\begin{proposition}
	\label{prop:ideal_inner}
	Let $\Nc$ be defined as in \eqref{eq:N_def}, with $\Lambda$ a single-qubit or unital quantum channel. Then for any $m \in \N$ the following bound holds
	\begin{equation}
	\bbraket{A_i | \Ncbf^m | B_2 B_2} \leq \frac{1}{\sqrt{|V_i|}}, \quad\quad \forall i \in \Zc_{TS}.
	\end{equation}
\end{proposition}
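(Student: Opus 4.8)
The plan is to expand $\Ncbf^m$ into an average over Clifford sequences and exploit the tensor-power structure so that the four-copy inner product collapses into a square of two-copy inner products, which is manifestly nonnegative and controllable by a single Hilbert--Schmidt-norm contraction. First I would use that $\Gavgbf{4}$ is a projection to write $\Ncbf^m = \Gavgbf{4}\bigl(\Lambdabf^\tp{4}\Gavgbf{4}\bigr)^m$, and then expand each of the $m+1$ averages as $\Gavgbf{4} = |\Cliff{d}|^{-1}\sum_{\Gc}\Gcbf^\tp{4}$. Under the identification $\Lc(\Hc^\tp{4}) = \Lc(\Hc^\tp{2})\otimes\Lc(\Hc^\tp{2})$ one has $\Gcbf^\tp{4} = \Gcbf^\tp{2}\otimes\Gcbf^\tp{2}$ and $\Lambdabf^\tp{4} = \Lambdabf^\tp{2}\otimes\Lambdabf^\tp{2}$, while $A_i = |V_i|^{-1/2}\sum_{v\in\Bc_i} v\otimes v$ (from \eqref{eq:A_i}) and $B_2 B_2 = B_2\otimes B_2$ are product vectors. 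Applying the mixed-product property to each term and writing $\pmb{T}_{\vec{\Gc}} := \Gcbf_0^\tp{2}\Lambdabf^\tp{2}\Gcbf_1^\tp{2}\cdots\Lambdabf^\tp{2}\Gcbf_m^\tp{2}$ (the Liouville matrix of a CPTP map on $\Hc^\tp{2}$, being a composition of the unitary channels $\Gc_j^\tp{2}$ and the noise maps $\Lambda^\tp{2}$), I obtain
\begin{equation}
\bbraket{A_i | \Ncbf^m | B_2 B_2} = \frac{1}{\sqrt{|V_i|}}\,\frac{1}{|\Cliff{d}|^{m+1}} \sum_{\vec{\Gc}} \sum_{v\in\Bc_i} \bbraket{v | \pmb{T}_{\vec{\Gc}} | B_2}^2 .
\end{equation}

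The second step is to recognize, for each fixed sequence $\vec{\Gc}$, that since $\Bc_i$ is a real orthonormal basis of $V_i$ and $T_{\vec{\Gc}}(B_2)$ is Hermitian, the inner sum is the squared norm of an orthogonal projection: $\sum_{v\in\Bc_i}\bbraket{v | \pmb{T}_{\vec{\Gc}} | B_2}^2 = \|P_{V_i} T_{\vec{\Gc}}(B_2)\|_2^2 \le \|T_{\vec{\Gc}}(B_2)\|_2^2$, where $P_{V_i}$ is the orthogonal projection onto $V_i\subset\Lc(\Hc^\tp{2})$. If I can show $\|T_{\vec{\Gc}}(B_2)\|_2 \le 1$ for every sequence, then each inner sum is at most $1$, and averaging over $\vec{\Gc}$ immediately yields $\bbraket{A_i | \Ncbf^m | B_2 B_2} \le |V_i|^{-1/2}$, which is the claim.

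The crux, and the step I expect to be the main obstacle, is precisely this Hilbert--Schmidt contraction $\|T_{\vec{\Gc}}(B_2)\|_2\le 1$, because CPTP maps are in general \emph{not} $2$-norm contractive. Each Clifford channel $\Gc_j^\tp{2}$ is a $2$-norm isometry, so the only concern is the action of $\Lambda^\tp{2}$. When $\Lambda$ is unital the composite $T_{\vec{\Gc}}$ is itself unital and CPTP, so $\|T_{\vec{\Gc}}\|_{2\rightarrow2}\le1$ by the last statement of \autoref{lem:Perez-Garcia} and hence $\|T_{\vec{\Gc}}(B_2)\|_2\le\|B_2\|_2=1$. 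For the single-qubit case $d=2$, where $\Lambda$ need not be unital, the key observation is that the nonunitality never couples in: $B_2\in V_{TS}$ is a combination of products $\sigma\sigma$ of two traceless Paulis, and by trace preservation $\Lambda$ maps any traceless operator to a traceless operator, so on this ``doubly traceless'' subspace $\Lambda^\tp{2}$ acts as $M\otimes M$, where $M$ is the single-qubit unital block with $\|M\|_\infty\le1$ (the Bloch-ball contraction). Since Cliffords preserve this subspace, the whole orbit stays inside it, $\|M\otimes M\|_\infty = \|M\|_\infty^2\le1$, and the norm never increases.

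Finally I would assemble the pieces: combining the contraction with the projection bound gives $\sum_{v\in\Bc_i}\bbraket{v|\pmb{T}_{\vec{\Gc}}|B_2}^2 \le 1$ for every $\vec{\Gc}$, and substituting into the displayed average collapses the $\vec{\Gc}$-sum against its normalization, leaving exactly $|V_i|^{-1/2}$. I would remark that the sharpness relative to the cruder Hölder estimate of \autoref{prop:inner_holder} comes entirely from exploiting the normalized ``$\sum_v v\otimes v$'' form of $A_i$, which turns the bound into a purity-type sum of squares rather than a product of (large) individual norms.
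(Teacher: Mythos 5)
Your proposal is correct and takes essentially the same route as the paper's own proof: expand $\Ncbf^m$ as an average over noisy Clifford sequences, use the product structure $\Gcbf^\tp{4} = \Gcbf^\tp{2}\otimes\Gcbf^\tp{2}$ together with $A_i \propto \sum_v v\otimes v$ to collapse the four-copy inner product into an average of sums of squares $\sum_{v\in\Bc_i} \bbraket{v|\pmb{T}_{\vec{\Gc}}|B_2}^2$, and bound each such sum by $1$ via the Hilbert--Schmidt contraction $\|T_{\vec{\Gc}}(B_2)\|_2 \le 1$, treating the unital and single-qubit cases separately. The remaining differences are cosmetic: your projection-norm step $\|P_{V_i}T_{\vec{\Gc}}(B_2)\|_2^2 \le \|T_{\vec{\Gc}}(B_2)\|_2^2$ is the paper's maximization over unit-ball operators $Q$ in disguise, and in the qubit case you justify the contraction by the Bloch-ball argument ($\|\Lambdabf_\urm\|_\infty \le 1$ on the doubly traceless subspace), whereas the paper invokes the Hermitian-traceless bound \eqref{eq:A25} of \autoref{lem:Perez-Garcia}, which for $d=2$ amounts to the same statement.
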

\begin{proof}
	Slightly rewriting the inner product yields 
	\begin{equation}
	\bbraket{A_i | \Ncbf^m | B_2 B_2} = \bbraket{A_i | \Nc^m(B_2 B_2)}.
	\end{equation}
	From the definition of $\Nc$ \eqref{eq:N_def} it follows that 
	\begin{equation}
	\Nc^m(B_2 B_2) = \frac{1}{|\Cliff{d}|^{m}} \sum_{\jbf} \Gc_\jbf^\tp{4}  (B_2 B_2) = \frac{1}{|\Cliff{d}|^{m}} \sum_{\jbf} [\Gc_\jbf^\tp{2}  (B_2) ]^\tp{2},
	\end{equation}
	where the sum is over all noisy sequences of length $m$ indexed by $\jbf$ (such that $\jbf$ is a multi-index of length $m$).
	We will show that $\| \Gc_\jbf^\tp{2}  (B_2) \|_2 \leq 1$.  we treat the multiqubit and single-qubit case separately. In the multiqubit case, we have
	\begin{equation}
	\label{eq:C58}
	\| \Gc_\jbf^\tp{2}  (B_2) \|_2 \leq \| \Gc_\jbf^\tp{2}\|_{2 \rightarrow 2}  \| B_2 \|_2 = \| \Gc_\jbf \|_{2 \rightarrow 2}^2.
	\end{equation}
	The inequality follows from the definition of the induced Schatten norms (see \eqref{eq:induced_schatten_norms}). The equality is due to the fact that $\| B_2 \|_2 = 1$ is normalized. Under the assumption that $\Lambda$ is unital, the entire sequence $\Gc_\jbf$ is unital. Therefore by \autoref{lem:Perez-Garcia} (Perez-Garcia), $\| \Gc_\jbf \|_{2 \rightarrow 2}^2 \leq 1$. This shows that $\| \Gc_\jbf^\tp{2}  (B_2) \|_2 \leq 1$.
		
 	In case of a single-qubit, nonunital error channels $\Lambda$, some extra care must be taken. Let us denote $\Lc(\Hc)^H := \{ A \in \Lc(\Hc) :  \Tr[A] = 0, A = A^\dagger \} = \Span_{\R} \{ \sigma : \sigma \in \Pc^* \}$ as the traceless Hermitian subspace of $\Lc(\Hc)$. 
	This space is a vector space over $\R$, with an orthonormal basis $\Pc^*$. Since $\Gc_\jbf$ is positive (and thus maps Hermitian operators to Hermitian operators) and trace-preserving, it maps the traceless Hermitian subspace $\Lc(\Hc)^H$ to itself. Observe that $B_2 \in (\Lc(\Hc)^H)^\tp{2}$. Therefore restrict $\Gc_\jbf^\tp{2}$ to $(\Lc(\Hc)^H)^\tp{2}$. This results in
	\begin{equation}
	\| \Gc_\jbf^\tp{2}  (B_2) \|_2 = \| \Gc_\jbf^\tp{2} \big|_{(\Lc(\Hc)^H)^\tp{2}}  (B_2) \|_2 \leq \| \Gc_\jbf^\tp{2} \big|_{(\Lc(\Hc)^H)^\tp{2}}  \|_{2\rightarrow 2}  \|  B_2 \|_2 =  \| \Gc_\jbf \big|_{\Lc(\Hc)^H}  \|_{2\rightarrow 2}^2.
	\end{equation}
	The first equality is the restriction of $\Gc_\jbf$ to the traceless Hermitian subspace. The inequality follows from the definition of the induced Schatten norm (\eqref{eq:induced_schatten_norms}). The final equality is due to the fact that $\| B_2 \|_2 = 1$. The key point of restricting to the traceless Hermitian subspace $\| \Gc_\jbf \big|_{\Lc(\Hc)^H}  \|_{2\rightarrow 2}$ allows for the application of statement \eqref{eq:A25} of \autoref{lem:Perez-Garcia} (Perez-Garcia). By the lemma (where $\| \Gc_\jbf \big|_{\Lc(\Hc)^H}  \|_{2\rightarrow 2}$ is denoted $\| \Gc_\jbf \|_{2\rightarrow 2}^H$), we have
	\begin{equation}
	\| \Gc_\jbf \big|_{\Lc(\Hc)^H}  \|_{2\rightarrow 2} \leq \sqrt{\frac{d}{2}},
	\end{equation}
	which in the single-qubit case means $\| \Gc_\jbf \big|_{\Lc(\Hc)^H}  \|_{2\rightarrow 2} \leq 1$. Therefore, we also have $\| \Gc_\jbf^\tp{2}  (B_2) \|_2 \leq 1$ in the single-qubit, nonunital case.
	
	We have thus established that $\| \Gc_\jbf^\tp{2}  (B_2) \|_2 \leq 1$ for single-qubit or unital noise maps $\Lambda$. Therefore, the following upper bound is valid
	\begin{equation}
	\begin{split}
	\bbraket{A_i | \Ncbf^m | B_2 B_2} &=  \frac{1}{|\Cliff{d}|^{m}} \sum_{\jbf  } \bbraket{ A_i | [\Gc_\jbf^\tp{2}  (B_2) ]^\tp{2}} \\
	&\leq \frac{1}{|\Cliff{d}|^{m}} \sum_{\jbf  } \max_{\substack{Q \in (\Lc(\Hc)^H)^\tp{2} \\ \| Q \|_2 \leq 1}} \bbraket{A_i | Q^\tp{2}} \\ 
	&= \max_{\substack{Q \in (\Lc(\Hc)^H)^\tp{2} \\ \| Q \|_2 \leq 1}} \bbraket{A_i | Q^\tp{2}}.
	\end{split}
	\end{equation}
	In the second line, we have replaced the particular operator $\Gc_\jbf^\tp{2}  (B_2) \in (\Lc(\Hc)^H)^\tp{2}$ which satisfies $\| \Gc_\jbf^\tp{2}  (B_2) \|_2 \leq 1$ with the maximization over all operators $Q \in (\Lc(\Hc)^H)^\tp{2}$ that satisfy $\| Q \|_2 \leq 1$.
	To continue, we use the definition of $A_i$ (\eqref{eq:A_i}), which is given by
	\begin{equation}
	A_i = \frac{1}{\sqrt{|V_i|}}\sum_{s=1}^{|V_i|} v^{(i)}_s v^{(i)}_s, \quad \forall i \in \Zc_{TS},
	\end{equation}
	where $\{ v^{(i)}_s \}$ is an orthonormal basis of $V_i \subset (\Lc(\Hc)^H)^\tp{2}$. Let us expand $Q$ in this basis,
	\begin{equation}
	\label{eq:C63}
	Q = q_\perp v_\perp^{(i)} + \sum_{s=1}^{|V_i|} q_s v^{(i)}_s  \quad\quad \mbox{s.t.} \quad\quad |q_\perp|^2 + \sum_{s=1}^{|V_i|} |q_{s}|^2 \leq 1, \quad q_\perp, q_s \in \C,  \; \forall s=1,...,|V_i|.
	\end{equation}
	Here $q_\perp v_\perp^{(i)}$ is the component of $Q$ in the space orthogonal to $V_i$, i.e., $q_\perp v_\perp^{(i)} \in (\Lc(\Hc)^H)^\tp{2} \setminus V_i$. The condition on $q_\perp$ and the $q_s$ follow from the requirement that $\| Q \|_2 \leq 1$. Actually, there are additional constraints on $q_\perp$ and the $q_s$ needed to ensure that $Q$ is traceless and Hermitian, but these constraints are not necessary to prove the result.
	Using the expansion \eqref{eq:C63} it follows that 
	\begin{equation}
	\max_{\substack{Q \in (\Lc(\Hc)^H)^\tp{2} \\ \| Q \|_2 \leq 1}} \bbraket{A_i | Q^\tp{2}} 
	\leq \max_{ \substack{\{ q_s \} \\ \sum_{s} |q_{s}|^2 \leq 1} } \frac{1}{\sqrt{|V_i|}}  \sum_{s,t,k = 1}^{|V_i|} |q_s q_t| |\bbraket{v^{(i)}_k  v^{(i)}_k | v^{(i)}_s v^{(i)}_t}| = \max_{ \substack{\{ q_s \} \\ \sum_{s} |q_{s}|^2 \leq 1} }  \frac{1}{\sqrt{|V_i|}}  \sum_{k=1}^{|V_i|} |q_k|^2 \leq \frac{1}{\sqrt{|V_i|}},
	\end{equation}
	using the fact that $\bbraket{v^{(i)}_k  v^{(i)}_k | v^{(i)}_s v^{(i)}_t} = \delta_{sk}\delta_{tk}$ by orthonormality of the basis. This completes the proof.
\end{proof}

%----------------------------------------------------------------------------------------
%	REFERENCE LIST
%----------------------------------------------------------------------------------------
\twocolumngrid
\bibliography{RB}

\end{document}